\definecolor{ForestGreen}{rgb}{0.1333,0.5451,0.1333}
\definecolor{DarkRed}{rgb}{0.8,0,0}
\definecolor{Red}{rgb}{1,0,0}
\newtheorem{theorem}{Theorem}[section]
\newtheorem{thm}[theorem]{Theorem}
\newtheorem{lemma}[theorem]{Lemma}
\newtheorem{lem}[theorem]{Lemma}
\newtheorem{claim}[theorem]{Claim}\crefname{claim}{claim}{claims}
\newtheorem{remark}[theorem]{Remark}
\newtheorem{corollary}[theorem]{Corollary}
\newtheorem{cor}[theorem]{Corollary}
\newtheorem{prop}[theorem]{Proposition}
\newtheorem{definition}[theorem]{Definition}
\newtheorem{defn}[theorem]{Definition}
\newtheorem{assumption}[theorem]{Assumption}
\newtheorem{fact}[theorem]{Fact}
\newtheorem{comment}[theorem]{Comment}
\newcounter{note}[section]
\newcommand{\eps}{\varepsilon}
\newcommand{\congest}{$\mathsf{CONGEST}$\xspace}
\newcommand{\poly}{\operatorname{\text{{\rm poly}}}}
\newcommand{\ceil}[1]{\lceil #1 \rceil}
\newcommand{\N}{\mathbb{N}}
\newcommand{\diam}{\operatorname{diam}}
\newcommand{\sep}{\operatorname{sep}}
\newcommand{\load}{\operatorname{load}}
\newcommand{\cond}{\operatorname{cond}}
\newcommand{\dist}{\operatorname{dist}}
\newcommand{\ball}{\operatorname{ball}}
\global\long\def\Otil{\tilde{O}}%
\global\long\def\poly{\mathrm{poly}}%
\global\long\def\eps{\epsilon}%
\global\long\def\N{\mathcal{N}}%
\global\long\def\cN{{\cal N}}%
\global\long\def\cS{{\cal S}}%
\global\long\def\l{\ell}%
\global\long\def\congest{\mathrm{cong}}%
\global\long\def\step{\mathrm{step}}%
\global\long\def\cond{\mathrm{cond}}%
\global\long\def\spars{\mathrm{spars}}%
\global\long\def\sep{\mathrm{sep}}%
\global\long\def\val{\mathrm{val}}%
\global\long\def\dist{\mathrm{dist}}%
\global\long\def\deg{\mathrm{deg}}%
\global\long\def\diam{\mathrm{diam}}%
\global\long\def\cov{\mathrm{cov}}%
\global\long\def\ball{\mathrm{ball}}%
\global\long\def\cluster{\mathrm{cluster}}%
\global\long\def\router{\mathrm{router}}%
\global\long\def\supp{\mathrm{supp}}%
\global\long\def\Fhat{Cidehat{F}}%
\global\long\def\fhat{Cidehat{f}}%
\renewcommand{\l}{\ell}
\renewcommand{\poly}{\operatorname{poly}}
\newcommand{\myparskip}{3pt}
\providecommand{\algorithmname}{Algorithm}
\definecolor{ForestGreen}{rgb}{0.1333,0.5451,0.1333}
\definecolor{DarkRed}{rgb}{0.8,0,0}
\definecolor{Red}{rgb}{1,0,0}
\global\long\def\Otil{\tilde{O}}
\global\long\def\poly{\mathrm{poly}}
\global\long\def\eps{\epsilon}
\global\long\def\N{\mathcal{N}}
\global\long\def\cN{\mathcal{N}}
\global\long\def\l{\ell}
\global\long\def\congest{\mathrm{cong}}
\global\long\def\leng{\mathrm{leng}}
\global\long\def\step{\mathrm{step}}
\global\long\def\cond{\mathrm{cond}}
\global\long\def\spars{\mathrm{spars}}
\global\long\def\sep{\mathrm{sep}}
\global\long\def\val{\mathrm{val}}
\global\long\def\dist{\mathrm{dist}}
\global\long\def\diam{\mathrm{diam}}
\global\long\def\cov{\mathrm{cov}}
\global\long\def\ball{\mathrm{ball}}
\global\long\def\load{\mathrm{load}}
\global\long\def\decomp{\mathrm{dcp}}
\global\long\def\cluster{\mathrm{cluster}}
\global\long\def\router{\mathrm{router}}
\global\long\def\rt{\mathrm{rt}}
\global\long\def\supp{\mathrm{supp}}
\global\long\def\Fhat{\widehat{F}}
\global\long\def\fhat{\widehat{f}}
\global\long\def\S{{\cal S}}
\global\long\def\cS{{\cal S}}
\global\long\def\ceil#1{\left\lceil #1\right\rceil }
\global\long\def\done{\mathrm{done}}
\global\long\def\forward{\mathrm{forward}}
\global\long\def\path{\mathrm{path}}
\global\long\def\flow{\mathrm{flow}}
\global\long\def\Fhat{\widehat{F}}
\global\long\def\aug{\mathrm{aug}}
\global\long\def\avglen{\mathrm{avglen}}
\global\long\def\totlen{\mathrm{totlen}}
\global\long\def\supportsize{(|E|+\supp(D)) N^{\poly\eps}}
\global\long\def\supportsizet{(|E|+\supp(D)) N^{\poly\eps}}
\newcommand{\thatchaphol}[1]{\textcolor{purple}{(T: #1)}}
\begin{document}
\date{}
\title{Low-Step Multi-Commodity Flow Emulators}
\author{Bernhard Haeupler\thanks{Partially funded by the European Union's Horizon 2020 ERC grant 949272.} \\
\texttt{bernhard.haeupler@inf.ethz.ch}\\
INSAIT \& ETH Z\"urich\\
\and D Ellis Hershkowitz\\ \texttt{delhersh@brown.edu} \\ Brown University\\\and Jason Li \\ \texttt{jmli@cs.cmu.edu} \\ CMU\\ \and Antti Roeyskoe\footnotemark[1] \\ \texttt{antti.roeyskoe@inf.ethz.ch} \\ ETH Z\"urich \\ \and Thatchaphol Saranurak\thanks{Supported by NSF grant CCF-2238138.}\\ \texttt{thsa@umich.edu} \\ University of Michigan\\}

\maketitle

\begin{abstract}

We introduce the concept of \emph{low-step multi-commodity flow emulators} for any undirected, capacitated graph. At a high level, these emulators contain approximate multi-commodity flows whose paths contain a small number of edges, shattering the infamous \emph{flow decomposition barrier} for multi-commodity flow.

We prove the existence of low-step multi-commodity flow emulators and develop efficient algorithms to compute them. We then apply them to solve constant-approximate $k$-commodity flow in $O((m+k)^{1+\epsilon})$ time. To bypass the $O(mk)$ flow decomposition barrier, we represent our output multi-commodity flow \emph{implicitly}; prior to our work, even the existence of implicit constant-approximate multi-commodity flows of size $o(mk)$ was unknown.

Our results generalize to the \emph{minimum cost} setting, where each edge has an associated cost and the multi-commodity flow must satisfy a cost budget. Our algorithms are also parallel.

%
%
%
%
\end{abstract}
\thispagestyle{empty}

\newpage
\thispagestyle{empty}
\tableofcontents
\thispagestyle{empty}
\newpage

\setcounter{page}{1}

\section{Introduction}
In the maximum flow problem, we are given an edge-capacitated graph $G=(V,E)$ and two vertices $s,t\in V$ with the aim to send as much flow as possible from $s$ to $t$. Maximum flow is a fundamental problem in combinatorial optimization with a long history, from the classic Ford-Fulkerson algorithm \cite{cormen2022introduction} to the recent breakthrough almost-linear time algorithm of Chen~et~al.~\cite{chen2022maximum}. The maximum flow problem also exhibits rich structural properties, most notably in the \emph{max-flow min-cut theorem}, which states that the value of the maximum flow between $s$ and $t$ is equal to the size of the minimum edge cut separating $s$ and $t$.

A well-studied generalization of the maximum flow problem is the maximum \emph{multi-commodity flow} problem. Here, we are given $k\ge1$ pairs $(s_i,t_i)_{i\in[k]}$ of vertices and wish to maximize the amount of flow sent between each pair. 
On the algorithmic side, this problem can be solved in polynomial time by a linear program, and there has been exciting recent progress towards obtaining faster algorithms, both exact~\cite{van2023faster} and $(1+\epsilon)$-approximate~\cite{sherman2017area}. However, these algorithms output the flow for each commodity explicitly and, hence, must take at least $\Omega(mk)$ time because there exists a graph such that the total size of flow representation overall $k$ commodities of any $\alpha$-approximate solution has size at least $\Omega(mk/\alpha)$.
Designing multi-commodity flow algorithms is further complicated by the loss of structure exhibited in the single-commodity setting. Specifically, the max-flow min-cut equality no longer holds in the multi-commodity case, and the flow-cut gap (i.e.\ the multiplicative difference between the max-flow and min-cut) is known to be $\Theta(\log n)$ for undirected graphs~\cite{leighton1999multicommodity} and $\Omega(n^{1/7})$ for directed graphs~\cite{chuzhoy2009polynomial}.

The algorithmic and structural issues above suggest that outputting a (near) optimal multi-commodity flow in time less than $O(mk)$ may be impossible. Even for the problem of efficiently approximating \emph{the value} of the optimal multi-commodity flow in undirected graphs, current techniques (e.g.~\cite{racke2002minimizing,racke2014computing}) fail to produce approximations below the multi-commodity flow-cut gap, i.e., no $o(\log n)$-approximations running in $o(mk)$ time are known.


\subsection{Our Results}

In this paper, we break the $O(mk)$-time barrier by giving $(m+k)^{1+\epsilon}$-time algorithms with $O(1)$-approximations to the value of the maximum $k$-commodity flow on undirected graphs for any constant $\eps > 0$.

We consider both concurrent and non-concurrent flow problems. 
In the concurrent flow problem, given a capacitated graph and a demand function, the goal is to find a maximum-value capacity-respecting flow routing a multiple of the demand. In the non-concurrent flow problem, given a capacitated graph and a set of vertex pairs, the goal is to find a maximum-value capacity-respecting flow routing flow only between vertex pairs in the given set. 

Our results can be informally summarized as follows: 

\begin{theorem}[Constant-Approximate Concurrent/Non-Concurrent Flow (informal)]\label{thm:informal-constapprox-cnc-flow}
    For every constant $\epsilon \in (0, 1)$, there exists a $(m + k)^{1 + \poly(\eps)}$-time $O(2^{-1 / \eps})$-approximate algorithm for the concurrent and non-concurrent multicommodity flow value problems, where $k$ is the number of demand pairs. The algorithms work in parallel with depth $(m + k)^{\poly(\eps)}$. 
\end{theorem}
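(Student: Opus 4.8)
The plan is to route everything through a \emph{low-step multi-commodity flow emulator} and then solve the resulting hop-bounded routing problem directly. Given the input capacitated graph $G$ and the $k$ demand pairs, I would first build such an emulator: a graph $H$ on $(m+k)^{1+\poly(\eps)}$ edges, computable in $(m+k)^{1+\poly(\eps)}$ time and $(m+k)^{\poly(\eps)}$ depth, together with a ``completeness'' map taking any capacity-respecting flow in $G$ to a capacity-respecting flow in $H$ along paths of at most $h=h(\eps)$ steps while losing only a constant factor in value, and a ``soundness'' map taking any $h$-step capacity-respecting flow in $H$ back to a capacity-respecting flow in $G$, again losing only a constant factor. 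Together these certify that the optimal $h$-step multi-commodity flow value in $H$ is within a constant of the optimal multi-commodity flow value in $G$, for both the concurrent objective (route a common multiple of the demand) and the non-concurrent objective (the emulator guarantees are per demand pair, so one can also reduce one variant to the other).

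Second, I would solve the $h$-step flow problem on $H$ approximately. Because only $O(h)$-step paths are needed, a width-reduced multiplicative-weights routine --- equivalently, a hop-constrained routing subroutine built from expander decompositions and the cut-matching game --- produces a constant-approximate $h$-step multi-commodity flow on $H$ in $(m+k)^{1+\poly(\eps)}\cdot\poly(h)$ time and $\Otil(h)\cdot(m+k)^{\poly(\eps)}$ depth; the essential point is that the iteration count depends on $h$ and not on $n$. Composing this $H$-flow with the soundness map yields a constant-approximate multi-commodity flow in $G$ of some value $v$, represented \emph{implicitly} in $(m+k)^{1+\poly(\eps)}$ space because the soundness map has that much total support --- this is what bypasses the $\Omega(mk)$ flow-decomposition barrier. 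The value guarantee then follows on both sides: feasibility gives $\mathrm{OPT}_G \ge v$, while completeness together with near-optimality on $H$ gives $v \ge 2^{-O(1/\eps)}\,\mathrm{OPT}_G$.

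The main obstacle is the emulator construction and, within it, trading the step bound $h$ against the approximation quality and the running-time exponent simultaneously. The natural approach is recursive: an emulator supporting $h$-step routing is assembled from emulators supporting fewer steps, using expander decompositions to break $G$ into well-connected pieces and hop-constrained routing (via the sparse certificates of the cut-matching game) to move demand between pieces along few edges. Each of the $\Theta(1/\eps)$ recursion levels multiplies the routed value by a constant and adds $\poly(\eps)$ to the exponent of $(m+k)$ in the time bound. The delicate part is making each per-level loss a genuine constant independent of $n$: this is exactly where the $\Theta(\log n)$ flow-cut gap would otherwise creep back in, and it is avoided by routing demand only inside low-hop expanders, whose flow-cut gap is constant. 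One must also verify that the routing maps produced at different levels compose without accumulating congestion beyond a constant and without inflating the step count beyond $h = 2^{O(1/\eps)}$. Once these per-level guarantees are in place, taking $\Theta(1/\eps)$ levels balances $h$ against the $(m+k)^{1+\poly(\eps)}$ running time and yields the claimed $2^{-O(1/\eps)}$-approximation; parallelism is inherited level by level.
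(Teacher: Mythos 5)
Your overall architecture (build a low-step emulator, solve a hop-bounded flow problem on it, map the solution back implicitly) matches the paper's, but there is a genuine gap in the approximation accounting: you assume the emulator's forward (``completeness'') map and the per-level routing each lose only a \emph{constant} factor in congestion. This is not what the known constructions give, and it is not what the paper proves. Routing a degree-respecting demand in an $(h,s)$-length $\phi$-expander costs congestion $O(\log N/\phi)$, and the expander decompositions one can afford have $\phi \le N^{-O(1/s)}$; consequently every emulator in the paper has congestion slack $N^{\poly\eps}=n^{\poly\eps}$, not $O(1)$, and stacking $d=\Theta(1/\eps)$ levels yields $(2\kappa)^{d+1}$ which is still $n^{\poly\eps}$. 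Your remark that the constant per-level loss ``is avoided by routing demand only inside low-hop expanders, whose flow-cut gap is constant'' is where the argument breaks: low-hop expanders do not give constant-congestion routing, only constant \emph{length} slack. If one could build a low-step emulator with constant congestion slack in $(m+k)^{1+\poly\eps}$ time, your proof would go through, but that object is not delivered by the techniques you invoke.

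The missing idea is the paper's third step: accept a \emph{bicriteria} guarantee --- constant slack in length/cost but $n^{\poly\eps}$ slack in congestion --- and then \emph{boost} the congestion down to a constant via a Garg--K\"onemann multiplicative-weights scheme that makes $O(\kappa\eps^{-2}\log^2 n)=n^{\poly\eps}$ calls to the emulator-based flow routine, treating the whole pipeline as an oracle for a \emph{total-length-constrained} flow problem (the MWU edge lengths play the role of dual variables, so the oracle must control total length, not just maximum step count). You do mention a multiplicative-weights routine, but you place it inside the emulator solve and claim its iteration count depends only on $h$; in fact the MWU is needed \emph{on top of} the emulator solve precisely to repair the $n^{\poly\eps}$ congestion loss, and its iteration count is governed by that congestion slack. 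Without this boosting layer your construction yields an $n^{\poly\eps}$-approximation, not the claimed $O(2^{-1/\eps})$-approximation.
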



The key to the above result is a powerful new tool we introduce called \emph{low-step (multi-commodity) flow emulators}. Informally, a low-step flow emulator of an undirected graph is another graph which contains approximate multi-commodity flows whose flow paths contain a small number of edges. Because they are not based on cuts, such emulators face no $\Omega(\log n)$ flow-cut barriers, unlike the above-mentioned cut sparsifiers. It is instructive to view low-step flow emulators as a generalization of \emph{hopsets}, which are graphs that contain approximate shortest paths of low step-length (but do not respect capacities). We additionally give efficient algorithms to construct these objects, and generalize them to achieve two additional important properties:
\begin{itemize}
    \item \textbf{Cost-Constrained / Length-Constrained:} Our emulators generalize to various min-cost multi-commodity flow problems, where each edge has an associated length or cost (independent of its capacity) and any flow path must not exceed a given bound on the length or cost.
    \item \textbf{Implicit Mappings:} Our flow emulators also support \emph{implicit} flow mappings from the emulator back to the original graph. In other words, we can even maintain an implicit solution to a constant-approximate multi-commodity flow, and \Cref{thm:informal-constapprox-cnc-flow} can be modified to output such an implicit representation for a flow of the approximated value. (Recall that implicit solutions are required to obtain any $o(mk)$ running time.) With this implicit solution, we can answer the following queries in $O((m+k)^{1+\epsilon})$ time: given any subset of the $k$ commodities, return the union of the flows of each of these commodities. As this is a single-commodity flow, it is representable explicitly within the allotted time.
\end{itemize}
The combination of the two generalizations will allow us to apply flow \emph{boosting} in the spirit of Garg-Konemann~\cite{garg2007faster} to obtain the above algorithms, and further allows us to obtain the above multi-commodity flow result subject to a cost constraint.

\subsection{Our Techniques}
At a high-level, our approach is as follows. First, we compute low-step emulators by building on recent advances in length-constrained expander decompositions. Next, we use our low-step emulators to compute (implicit) flows on the original graph with potentially large congestion. Lastly, we use ``boosting'' to reduce this congestion down to a constant to get our final flow approximation.

\paragraph*{Step 1: Low-Step Emulators via Length-Constrained Expander Decompositions.}
Our techniques build on recent developments in \emph{length-constrained expander decompositions} ~\cite{haeupler2022hop,haeupler2022cut,Haeupler2023lenboundflow, haeupler2023parallel,haeupler2023length}. At a high level, $h$-length expanders are graphs with edge lengths and capacities for which any ``reasonable'' multi-commodity demand can be routed along (about) $h$-length paths with low congestion (by capacity). Informally, a reasonable demand is one where each demand pair $(s_i,t_i)$ is within $h$ by distance (so that sending flow from $s_i$ to $t_i$ along about-$h$-length paths is actually possible), and each vertex does not belong to too many demand pairs (so that the degree of the vertex is not an immediate bottleneck for congestion). Recent work~\cite{haeupler2022hop,haeupler2023length} has studied how to, in almost-linear time, compute \emph{length-constrained expander decompositions} which are length-increases---a.k.a.\ moving cuts---to the graph that make it an $h$-length expander. One caveat, however, is that these algorithms run in time polynomial in the \emph{length} parameter $h$ of the length-constrained expander decomposition. 

In this paper, we remove this polynomial dependency of $h$ by way of the above-mentioned low-step emulators. Specifically, we ``stack'' low-step emulators on top of each other with geometrically increasing lengths, similar to how hopsets are stacked in parallel algorithms~\cite{cohen2000polylog}. Each low-step emulator is responsible for flow paths of its corresponding length, and the union of all low-step emulators obeys all distance scales simultaneously.

Our construction of low-step emulators comes with an \emph{embedding} of the emulator into the base graph: each edge of the emulator maps to a small-length path in the base graph such that the set of all embedded paths has low congestion. When emulators are stacked on top of each other, an edge at the top level expands to a path at the previous level, each of whose edges expands to a path at the previous level, and so on. This hierarchical structure allows us to provide the aforementioned implicit representation of very long paths while keeping the total representation size small.

\paragraph*{Step 2: Flows on Emulators.}
Our next contribution is a fast algorithm that computes a multi-commodity flow on a low-step emulator, where an approximate flow with small representation size is indeed possible (since flow paths now have low step-length). We explicitly compute such a flow, and then (implicitly) map each flow path down the hierarchical structure to form our final implicit flow on the input graph. 

By setting parameters appropriately, we can guarantee a bicriteria approximation with constant-approximate cost and $n^\epsilon$-approximate congestion for any constant $\epsilon>0$. The $n^\epsilon$-approximate congestion appears in both the multi-commodity flow algorithm on a low-step emulator, and the hierarchical embeddings of the emulators into the base graph.

\paragraph*{Step 3: Boosting Away Congestion.}
Finally, through Garg and Konemann's approach~\cite{garg2007faster} based on multiplicative weight updates, we can \emph{boost} the congestion approximation of $n^\epsilon$ down to the cost approximation, which is constant. While Garg and Konemann's algorithm requires a near-linear number of calls to (approximate) shortest path, we only require roughly $n^\epsilon$ calls to $n^\epsilon$-approximate congestion, constant-approximate cost multi-commodity flow. Our final result is a multi-commodity flow with constant-approximate cost and congestion which is implicitly represented by the hierarchical emulator embeddings. Given any subset of commodities, we can then output the union of their flows by collecting the flow down the hierarchy of embeddings.

\section{Overview}
The rest of the paper is organized as follows.
The first part of the paper, Sections \ref{sec:emu} to \ref{sec:low-step emu}, develops the theory of low-step emulators.
In \Cref{sec:emu}, we study the $h$-length-constrained version of low-step emulators. We provide an existential result and an algorithm with a running time that depends on $\poly(h)$. These results are obtained by reducing to $h$-length-constrained expander decomposition.
To remove the dependency on $\poly(h)$ in the running time, in \Cref{sec:bootstrap-emu} we demonstrate how to bootstrap the construction of $h$-length-constrained low-step emulators with small $h$ to the large ones without spending $\poly(h)$ in the running time. Finally, by combining $h$-length-constrained low-step emulators from different $h$ values, we obtain our low-step emulator in \Cref{sec:low-step emu}.

The next part of the paper, \Cref{sec:length-constrained-expander-routing,sec:low-step-flow}, develops a fast $k$-commodity flow algorithm designed for running on top of low-step emulators. The crucial property of this algorithm is that its running time is independent of $k$.
In \Cref{sec:length-constrained-expander-routing}, we first show a prerequisite subroutine called "routing on an expansion witness" for efficiently routing a length-constrained flow. This subroutine assumes that the input graph is a length-constrained expander and also requires the expansion witness of this expander.
We then use this subroutine in \Cref{sec:length-constrained-expander-routing} to develop an algorithm that computes a $t$-step $k$-commodity flow (each flow path contains at most $t$ edges) with a running time of $m^{1+\epsilon} \poly(t)$. The algorithms have an $O(1)$-approximation in step and total length, but an $n^\epsilon$-approximation in congestion. Note that the dependency on $\poly(t)$ in the running time is inconsequential, as the algorithm will be run on a $t$-step emulator for constant $t$. The issue of the large approximation factor in congestion will be addressed in the next part.

The final part of the paper, \Cref{sec:flow boost,sec:general flow}, shows how we can combine the techniques from the previous two parts to achieve an $O(1)$-approximate $k$-commodity flow in $(m+k)^{1+\epsilon}$ time.
To address the problem of the $n^\epsilon$-approximation in congestion, in \Cref{sec:flow boost}, we give a boosting algorithm that improves the congestion approximation to $O(1)$. Finally, in \Cref{sec:general flow}, we combine these tools together to obtain the final result.

The dependencies between sections are summarized in \Cref{fig:depend}.

\begin{figure}
	\centering
 \footnotesize{
	\begin{tikzpicture}[auto]
		\tikzset{
			mynode/.style={rectangle, draw=red, thick, fill=red!10, text width=3.5cm, text centered, rounded corners, minimum height=1cm},
			arrow/.style={->, >=latex', shorten >=2pt, thick},
			line/.style={-, thick}
		}
		
		\node[mynode] (Section 4) at (0,0) {\Cref{sec:emu}\\Existence of LC Low-Step Emulator};
		\node[mynode] (Section 5) at (4,0) {\Cref{sec:bootstrap-emu}\\Bootstrapping LC Low-Step Emulator};
		\node[mynode] (Section 6) at (4,-2) {\Cref{sec:low-step emu}\\Low-step Emulator};
		\node[mynode] (Section 7) at (8,0) {\Cref{sec:length-constrained-expander-routing}\\Routing on an\\ Expansion Witness};
		\node[mynode] (Section 8) at (8,-2) {\Cref{sec:low-step-flow}\\Low-Step Multi-Commodity Flow};
		\node[mynode] (Section 9) at (12,-2) {\Cref{sec:flow boost}\\Flow Boosting};
		\node[mynode] (Section 10) at (8,-4) {\Cref{sec:general flow}\\$O(1)$-Approximate Multi-Commodity Flow};
		
		\draw[arrow] (Section 4) -- (Section 6);
		\draw[arrow] (Section 5) -- (Section 6);
		\draw[arrow] (Section 7) -- (Section 8);
		\draw[arrow] (Section 6) -- (Section 10);
		\draw[arrow] (Section 8) -- (Section 10);
		\draw[arrow] (Section 9) -- (Section 10);
		
	\end{tikzpicture}
	}
	\caption{Dependencies between sections in this paper.\label{fig:depend}}
\end{figure}
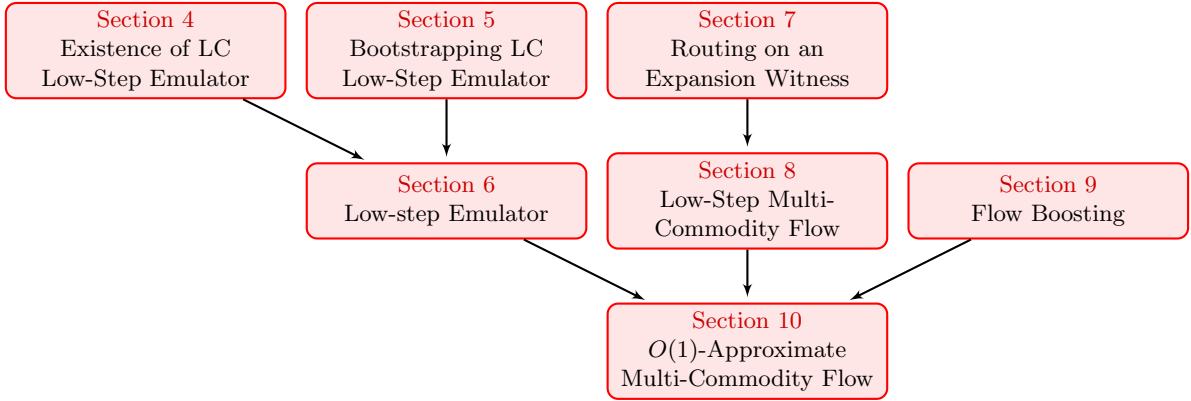


\section{Preliminaries}

Let $G=(V,E)$ be a graph. We denote the number of vertices of $G$ by $n := |V|$ and the number of edges by $m := |E|$, respectively. Let $u_{G}:E\rightarrow\mathbb{R}_{>0}$ denote the edge capacity function of $G$. The degree of a vertex $v$ is $\deg_{G}(v)=\sum_{(v,w)\in E}u_{G}(v,w)$. Note that a self-loop $(v,v)$ contributes $u_{G}(v,v)$ to the degree of $v$. Let $\l_{G}:E\rightarrow\mathbb{\mathbb{R}}_{>0}$ denote the edge length function of $G$. 

In this paper, we will use the word ``hop'' and ``length'' interchangeably\footnote{We sometimes use the term ``hop'' instead of ``length'' in this paper and even use $h$ for parameters related to length so that terminologies are consistent with previous literature \cite{haeupler2022hop,haeupler2023length,haeupler2022cut,haeupler2023parallel} on length-constrained expanders. In the previous papers, they used ``hop'' and $h$ because edges usually have unit length.}. A path from vertex $v$ to vertex $w$ is called a $(v,w)$-path. For any path $P$, the length $\l_{G}(P)=\sum_{e\in E}\l_{G}(e)$ of the path equals the total length of its edges, and the step-length $|P|$ of the path is simply the number of edges in $P$. The \emph{distance} between vertices $v$ and $w$ is $\dist_{G}(v,w)=\min_{P:(v,w)\text{-path}}\l_{G}(P)$. A \emph{ball} of radius $r$ around a vertex $v$ is $\ball_{G}(v,r)=\{w\mid\dist_{G}(v,w)\le r\}$. The diameter $\diam_{G}(S) := \max_{v, w \in S} \dist_{G}(v, w)$ of a vertex set $S$ is the maximum distance between two vertices in the set.

We assume all graphs to have polynomially bounded integral edge capacities and lengths. This assumption allows us to write $\log(\max_{e}\ell(e)),\log(\max_{e}u(e))=O(\log N)$, which simplifies notation. 


\begin{assumption}
The capacity and length of all edges in the input graphs are integral and at most $N=\poly(n)$. 
\end{assumption}
Although we will work with fractional capacities and lengths in the body of the paper, all these values will always be a multiple of some integer reciprocal $1/\poly(n)$ and are upper bounded by $\poly(n)$; by scaling, we can always obtain integral polynomially-bounded capacities and lengths.

\paragraph{Multicommodity Flows.}

A \emph{(multicommodity) flow} $F$ in $G$ is a function that assigns each simple path $P$ in $G$ a flow value $F(P)\ge0$. We say $P$ is a \emph{flow-path} of $F$ if $F(P)>0$, and write $\path(F) = \supp(F) := \{P : F(P) > 0\}$ for the set of flow-paths, equivalently the \textit{support} of $F$, and occasionally abuse notation to write $P \in F$ to mean $P \in \supp(F)$. $|\path(F)|$ is called the \emph{path count} of $F$, and the value of the flow $F$ is $\val(F) := \sum_{P} F(P)$. $P$ is a \emph{$(v,w)$-flow-path} of $F$ if $P$ is both a $(v,w)$-path and a flow-path of $F$. The $(v,w)$-flow $f_{(v,w)}$ of $F$ is the flow for which $f_{(v,w)}(P)=F(P)$ if $P$ is a $(v,w)$-path, otherwise $f_{(v,w)}(P)=0$.

The \emph{congestion of $F$ on an edge $e$} is $\congest_{F}(e)=F(e)/u_{G}(e)$ where $F(e)=\sum_{P:e\in P}F(P)$ denotes the total flow value of all paths going through $e$. The \emph{congestion} of $F$ is $\congest_{F}=\max_{e\in E(G)}\congest_{F}(e)$. The \emph{length} of $F$, denoted by $\leng_{F}=\max_{P\in\supp(F)}\ell(P)$, measures the maximum length of all flow-paths of $F$. The \emph{(maximum) step-length} of $F$ is $\step_{F}=\max_{P\in\supp(F)}|P|$, which measures the maximum number of edges in all flow-paths of $F$. Note that $\step_{F}$ is completely independent from $\leng_{F}$. We sometimes write $\congest_{G,F},\leng_{G,F},\step_{G,F}$ to emphasize that they are with respect to $G$.

\paragraph{Edge Representation of Flows.}

When we want to emphasize an edge $(v,w)$ is undirected, we use the notation $\{v,w\}$. Let $\overleftrightarrow{E}(G)$ denote the set of bidirectional edges of $G$. That is, for each edge $e=\{v,w\}\in E(G)$, we have $(v,w),(w,v)\in\overleftrightarrow{E}(G)$. The \emph{edge representation} of $F$ in $G$ is a function $\flow_{F}:\overleftrightarrow{E}(G)\rightarrow\mathbb{R}_{\ge0}$ where, for each edge $e=\{v,w\}\in E(G)$, let $\flow_{F}(v,w)$ and $\flow_{F}(w,v)$ denote the total flow-value of $F$ routed through $e$ from $v$ to $w$ and from $w$ to $v$, respectively. We sometimes use $F(v,w)=\flow_{F}(v,w)$ and $F(w,v)=\flow_{F}(w,v)$ for convenience. Note that $F(\{v,w\})=F((v,w))+F((w,v))$. 

\paragraph{Demands.}

A \emph{demand} $D:V\times V\rightarrow\mathbb{R}_{\ge0}$ assigns a value $D(v,w)\ge0$ to each ordered pair of vertices. 
Given a flow $F$, the\emph{ demand routed/satisfied by $F$ }is denoted by $D_{F}$ where, for each $u,v\in V$,\emph{ }$D_{F}(u,v)=\sum_{P\text{ is a }(u,v)\text{-path}}F(P)$ is the value of $(u,v)$-flow of $F$. We say that a \emph{demand $D$ is routable in $G$ with congestion $\eta$, length $h$, and $t$} \emph{steps }if there exists a flow $F$ in $G$ where $D_{F}=D$, $\congest_{F}\le\eta$, $\leng_{F}\le h$, and $\step_{F}\le t$. The total demand size is $|D|=\sum_{u,v}D(u,v)$. The support of $D$ is $\supp(D)=\{(v,w)\mid D(v,w)>0\}$.

A \emph{node-weighting} $A:V\rightarrow\mathbb{R}_{\ge0}$ of $G$ assigns a value $A(v)$ to each vertex $v$. The \emph{size} of $A$ is denoted by $|A|=\sum_{v}A(v)$. For any two node-weightings $A$ and $B$, we write $A\le B$ if $A(v)\le B(v)$ for all $v$. We say $D$ is \emph{$A$-respecting} if both $\sum_{w}D(v,w)\le A(v)$ and $\sum_{w}D(w,v)\le A(v)$. We say that $D$ is \emph{degree-respecting} if $D$ is $\deg_{G}$-respecting. Next, $D$ is \emph{$h$-length-bounded }(or simply \emph{$h$-length }for short) if it assigns positive values only to pairs that are within distance at most $h$, i.e., $D(v,w)>0$ implies that $\dist_{G}(v,w)\le h$. The \emph{load} of a demand $D$ is defined as the node weighting $\mathrm{load}(D)(v)=\sum_{w\in V}D(v,w)+D(w,v)$.

\subsection{Length-Constrained Expanders}

\begin{comment}
Recall the following almost equivalent characterizations of normal $\phi$-expanders: 
\begin{enumerate}
\item \textbf{(Cut characterization):} a graph with no \emph{$\phi$-sparse cut}\footnote{A cut $C\subseteq E$ is $\phi$-sparse if $S\subseteq V$ is a connected component in $G\setminus C$ and $\frac{|C|}{\min\{\deg_{G}(S),\deg_{G}(V\setminus S)\}}<\phi$. }, and 
\item \textbf{(Flow characterization):} a graph where any degree-respecting demand can be routed with congestion and length $O(\log(n)/\phi)$. 
\end{enumerate}
It is well-known that the two statements imply each other (within a logarithmic factor) CITE. The motivation behind the definition of \emph{hop-constrained expanders} developed in \cite{haeupler2022hop} is to extend the above powerful equivalence to the setting where we can give a much stronger bound on length $h\ll1/\phi$. To do this, it turns out that the ``right'' extension of cuts in the notion of moving cuts.

\textbf{NOTE: this discussion is not really good because we will work with $h\gg1/\phi$ too.} 
\end{comment}

\paragraph{Moving Cuts.}

An \emph{$h$-length moving cut }$C:E\rightarrow\{0,\frac{1}{h},\frac{2}{h},\dots,1\}$ assigns a fractional which is a multiple of $\frac{1}{h}$ between zero and one to each edge $e$. 
The \emph{size} of $C$ is defined as $|C|=\sum_{e}u(e)\cdot C(e)$. We denote with $G-C$ the graph $G$ with a new length function $\l_{G-C}(e)=\l_{G}(e)+h\cdot C(e)$ for all $e$. We refer to $G-C$ as the \emph{graph $G$ after cutting} $C$ or \emph{after applying the moving cut} $C$.

Given a $h$-length demand $D$, we usually work with a $(hs)$-length moving cut $C$ where $s>1$. The \emph{total demand of $D$ separated by $C$} is then denoted by 
\[
\sep_{hs}(C,D)=\sum_{(v,w):\dist_{G-C}(v,w)>hs}D(v,w),
\]
which measures the total amount of demand between vertex pairs whose distance is increased (from at most $h$, as $D$ is $h$-length) to strictly greater than $hs$ by applying the cut $C$. Note that an integral moving cut (i.e. one for which $C(e) \in \{0, 1\}$) functions somewhat like a classic cut: the size $|C|$ is the total capacity as a classic cut, and $\sep_{hs}(C,D) = \sum_{(v,w):\text{all $hs$-length $(v, w)$-paths in $G$ are cut}}D(v,w)$.

The \emph{sparsity of moving cut $C$ with respect to a demand $D$}, for a $(hs)$-length moving cut $C$ and a $h$-length demand $D$, denoted by 
\[
\spars_{(h,s)}(C,D)=\frac{|C|}{\sep_{hs}(C,D)}
\]
is the ratio between the size of the cut $C$ and the total demand of $D$ separated by $C$. We say that $C$ is \emph{$\phi$-sparse with respect to} $D$ is $\spars(C,D)<\phi$.

We are ready to define the key notion of length-constrained expanders. 
\begin{defn}
[Cut Characterization of Length-Constrainted Expanders]The \emph{$(h,s)$-length conductance }of a graph $G$ is 
\[
\cond_{(h,s)}(G)=\min_{D:h\text{\text{-length, }\text{degree-respecting}}}\min_{C:(hs)\text{-moving cut}}\spars_{(h,s)}(C,D).
\]
If $\cond_{(h,s)}(G)\ge\phi$, we say that $G$ is a \emph{$(h,s)$-length $\phi$-expander}. More generally, for any node-weighting $A$, the \emph{$(h,s)$-length conductance }of $A$ is 
\[
\cond_{(h,s)}(A)=\min_{D:h\text{-length }A\text{-respecting}}\min_{C:(hs)\text{-moving cut}}\spars_{(h,s)}(C,D).
\]
If $\cond_{(h,s)}(A)\ge\phi$, then we say that $A$ is \emph{$(h,s)$-length $\phi$-expanding in $G$ and that $G$ is a $(h,s)$-length $\phi$-expander for $A$.} 
\end{defn}

Note that $\cond_{(h,s)}(G)=\cond_{(h,s)}(\deg_{G})$. In words, if $G$ is an $(h,s)$-length $\phi$-expander, then there is no $\phi$-sparse $(hs)$-moving cut with respect to any $h$-length degree-respecting demand. The following observation draws a connection between length-constrained expanders and normal expanders. 

\begin{prop}
When $h\rightarrow\infty$, $G$ is an $(h,s)$-length $\phi$-expander if and only every connected component of $G$ is a $\phi$-expander.
\end{prop}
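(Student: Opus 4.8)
The plan is to prove both directions of the equivalence by unpacking the definition of $\cond_{(h,s)}$ in the limit $h \to \infty$, and observing that $h$-length moving cuts, after scaling, degenerate into ordinary (fractional) edge cuts, while the $h$-length-boundedness constraint on demands disappears.

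First I would make precise what ``$h \to \infty$'' means here: for every fixed finite graph $G$ (recall capacities and lengths are polynomially bounded integers), once $h$ exceeds the diameter of each connected component (equivalently, exceeds $n \cdot \max_e \ell_G(e)$), the condition $\dist_G(v,w) \le h$ holds for every pair $v,w$ in the same component. Hence for all sufficiently large $h$, an $h$-length demand is simply any demand supported on pairs within a common connected component. Similarly, for such large $h$, a $(hs)$-length moving cut $C$ increases the length of each edge $e$ by $h \cdot C(e)$; since the original lengths are $O(n)$, a pair $(v,w)$ has $\dist_{G-C}(v,w) > hs$ essentially iff every $(v,w)$-path uses total ``cut mass'' $\sum_{e \in P} C(e)$ at least roughly $s$ — and as $h \to \infty$ the additive $\ell_G(P)$ term becomes negligible, so in the limit $\dist_{G-C}(v,w) > hs$ iff the minimum over $(v,w)$-paths $P$ of $\sum_{e \in P} C(e)$ is $> s$. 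I would state this as: $\sep_{hs}(C,D) \to \sum_{(v,w) : \mathrm{mincut}_C(v,w) > s} D(v,w)$ where $\mathrm{mincut}_C(v,w)$ is the $C$-weighted shortest path distance.

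The forward direction: suppose some connected component $S$ of $G$ is not a $\phi$-expander, i.e.\ there is a cut $\emptyset \ne T \subsetneq S$ with $|E(T, S \setminus T)| < \phi \cdot \min(\deg_G(T), \deg_G(S \setminus T))$ (taking the standard sparsest-cut/conductance definition). Without loss of generality $\deg_G(T) \le \deg_G(S\setminus T)$. I would build a bad demand $D$ routing $\deg_G$-worth of flow from $T$ to $S \setminus T$ — concretely a degree-respecting demand with $|D| \ge \deg_G(T)$ that is entirely ``crossing'' — and the integral moving cut $C$ with $C(e) = 1$ for $e \in E(T, S\setminus T)$ and $0$ elsewhere (this is a valid $(hs)$-moving cut since entries are in $\{0,1\}$). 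For large $h$, crossing any such edge costs $hs \ge$ well, actually I need the cut to have total mass $> s$ on every crossing path; here I should instead put $C(e) = 1$ on the boundary edges, and note a single boundary edge already contributes length $h$, which exceeds $hs$ only if $s \le 1$. Since $s > 1$, I instead scale: there is freedom because I only need the \emph{sparsity} bound. The cleaner route is to use $C(e) = 1$ on boundary edges together with the fact that in the $h \to \infty$ limit we may rescale $s$ out; more carefully, take the moving cut that assigns value $1$ to each boundary edge — then any path from $T$ to $S\setminus T$ has $C$-mass $\ge 1$, hence $\dist_{G-C} \ge h > hs$ fails. So I genuinely need mass $> s$ on every crossing cut. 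The fix: this only matters for paths, and a $T$–$(S\setminus T)$ path crosses the boundary an odd number of times but possibly just once. So I would instead observe that in the limiting definition one should compare against $\mathrm{mincut}_C > s$; scaling the whole construction, replace $C$ by assigning value $\min(1, (s+1)/1)$... since $s+1 > 1$ this caps at $1$ and fails again. The honest resolution is that the proposition is really about the limiting \emph{equivalence up to constants in $s$}, or the intended reading is $s$ fixed and $h\to\infty$ with the moving-cut values allowed to be $1$ on boundary, giving $\dist_{G-C}(v,w) = \ell_G$-term $+ h \cdot(\text{integer} \ge 1) \ge h$; and since for the separation we need $> hs$, we need the boundary to be crossed $> s$ times or...

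Given this subtlety, the main obstacle I foresee is exactly reconciling the factor $s > 1$ in the separation threshold $hs$ with integral boundary cuts; the clean way around it is to note that as $h \to \infty$ the relevant quantity is whether a pair is \emph{disconnected} in $G$ with the crossing edges removed (an integral moving cut of value $1$ on a set $C$ of edges sends $\dist_{G-C}$ to $+\infty$ exactly for pairs separated by deleting $C$, which certainly exceeds $hs$ for any fixed $s$ once... no, deletion isn't what a moving cut does). I would therefore argue: for the ``only if'' direction use a moving cut that is $1$ on \emph{all} edges of the component except a spanning structure — more simply, scale $C$ up by a factor $\lceil s \rceil$ on the boundary edges is not allowed since entries must be $\le 1$; instead replace each boundary edge by noting capacities are integral and $C$ may be $1$, and use that $\sep_{hs}$ only requires strictly-greater-than-$hs$, which for a path crossing the boundary once gives length exactly $\ell_G(P) + h$; taking $h$ large does not make $\ell_G(P) + h > hs$ when $s>1$. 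So the correct statement the authors must intend is with the threshold being that the pair becomes \emph{infinitely far} (disconnected), i.e.\ one takes $C$ integral and uses that removing boundary edges disconnects $T$ from $S \setminus T$ — and a disconnected pair has $\dist_{G-C} = \infty > hs$. Thus: assign $C(e)=1$ to boundary edges, and since $G-C$ restricted to costs makes those edges length $h$ but they are still \emph{present}, the pair is not disconnected. Hence one instead assigns $C(e) = 1$ to boundary edges and uses $h \to \infty$ together with $s$ \emph{fixed}: a crossing path has length $\ge \ell_G(P) + h \cdot (\#\text{crossings})$; this need not exceed $hs$. I conclude the genuinely clean proof treats the two ``characterizations'' as the well-known flow/cut equivalences and the proposition as: $\cond_{(h,s)}(G) \ge \phi$ for all large $h$ iff every component is a (classical) $\Theta(\phi)$-expander, with the constant absorbed; I would present the forward direction via the contrapositive using the sparse cut as a moving cut of value $1$ on the boundary (valid since $s$ is a fixed constant and we may, inside a non-expanding component, instead cut a sparse cut where the boundary is crossed by \emph{every} path many times by iterating, or simply invoke that the limiting object is ordinary conductance), and the reverse direction by routing each degree-respecting demand through the expander flow certificate with congestion $O(\log n / \phi) = O(1) \cdot$ finite, whose paths have finite length, hence length $\le h$ for $h$ large, so no moving cut can be $\phi'$-sparse for appropriate $\phi' = \Theta(\phi)$. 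I would flag in the writeup that this is a sanity-check proposition whose precise constants are immaterial, and keep the argument at the level of matching the two classical characterizations rather than chasing the exact $s$-dependence.
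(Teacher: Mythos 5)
The paper states this proposition as an unproved observation immediately after the definition of length-constrained expanders, so there is no official proof to compare against; but your proposal, as written, does not constitute a proof either, and the difficulty that derails it is an artifact of misreading the definition. You write that a $(hs)$-length moving cut $C$ increases the length of an edge $e$ by $h\cdot C(e)$. It does not: by the paper's convention, an $h'$-length moving cut increases $\ell(e)$ by $h'\cdot C(e)$, so a $(hs)$-length moving cut increases it by $hs\cdot C(e)$. Consequently, setting $C(e)=1$ on the boundary edges of a sparse cut $(T,S\setminus T)$ inside a non-expanding component already gives $\dist_{G-C}(v,w)\ge \ell_G(P)+hs> hs$ for every crossing pair, and the entire middle section of your writeup --- the repeated failed attempts to make a single boundary crossing exceed the threshold $hs$ when $s>1$, and the eventual retreat to ``the constants are immaterial'' --- evaporates. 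With the definition read correctly, the forward direction is clean: a $\phi$-sparse classical cut in a component yields, for $h$ larger than the component's diameter, a degree-respecting ($h$-length) crossing demand $D$ and an integral $(hs)$-moving cut $C$ supported on the boundary with $\spars_{(h,s)}(C,D)=|C|/|D|<\phi$.

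Beyond that, neither direction is actually completed in your text. The forward direction ends in an unresolved tangle rather than a construction; the reverse direction is only gestured at via the flow characterization (\Cref{thm:flow character}), which costs an $O(\log N)$ factor and therefore cannot by itself establish the literal ``if and only if'' at the same $\phi$. For the reverse direction the more direct route is to stay on the cut side: given any $(hs)$-moving cut $C$ and $h$-length degree-respecting demand $D$ with $\spars_{(h,s)}(C,D)<\phi$, the separated pairs lie in components where $C$ places total mass more than $s>1$ on every connecting path, and from this one extracts (e.g.\ by a ball-growing/threshold argument on the $C$-weighted shortest-path metric) a classical cut of sparsity below $\phi$ in some component, contradicting expansion. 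If you intend to keep the flow-based argument for that direction, you must explicitly weaken the statement to an equivalence up to the $O(\log N)$ loss, and say so, rather than asserting the proposition as stated.
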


The theorem below shows that, similar to normal expanders, there is a flow characterization of length-constrained expanders that is almost equivalent to its cut characterization within a logarithmic factor. 
\begin{thm}
[Flow Characterization of Length-Constrainted Expanders (Lemma 3.16 of \cite{haeupler2022hop})]\label{thm:flow character}We have the following: 
\begin{enumerate}
\item If $A$ is $(h,s)$-length $\phi$-expanding in $G$, then every $h$-length $A$-respecting demand can be routed in $G$ with congestion at most $O(\log(N)/\phi)$ and length at most $s\cdot h$. 
\item If $A$ is not $(h,s)$-length $\phi$-expanding in $G$, then some $h$-length $A$-respecting demand cannot be routed in $G$ with congestion at most $1/2\phi$ and length at most $\frac{s}{2}\cdot h$. 
\end{enumerate}
\end{thm}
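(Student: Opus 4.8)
The plan is to prove the two implications separately. \emph{Part 2} is the contrapositive "a good routing forbids a sparse moving cut", and follows from a one-line averaging argument. Suppose $A$ is not $(h,s)$-length $\phi$-expanding, witnessed by an $h$-length $A$-respecting demand $D$ and an $(hs)$-moving cut $C$ with $|C| < \phi\cdot\sep_{hs}(C,D)$. Let $D'\le D$ be the restriction of $D$ to the pairs $(v,w)$ with $\dist_{G-C}(v,w) > hs$; it is still $h$-length and $A$-respecting, and $|D'| = \sep_{hs}(C,D) > |C|/\phi$. If $D'$ were routable with $\congest_F \le \tfrac1{2\phi}$ and $\leng_F\le \tfrac{s}{2}h$, say by a flow $F$, then every flow-path $P$ joins a separated pair, so $\ell_{G-C}(P) > hs$; since $\ell_{G-C}(P) = \ell_G(P) + h\cdot C(P)$ with $C(P):=\sum_{e\in P}C(e)$ and $\ell_G(P)\le\tfrac s2 h$, this forces $C(P) > \tfrac s2$. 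Averaging over flow-paths and swapping the order of summation,
\[
\tfrac s2\,|D'| < \sum_P F(P)\,C(P) = \sum_e C(e)\,F(e) \le \congest_F\sum_e C(e)\,u(e) \le \tfrac1{2\phi}\,|C|,
\]
giving $|D'| < |C|/(s\phi)\le|C|/\phi$, contradicting $|D'|>|C|/\phi$. Hence $D'$ cannot be so routed, which is exactly statement (2).

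For \emph{Part 1} I would use LP duality for length-constrained concurrent flow followed by a region-growing rounding. Fix an $h$-length $A$-respecting demand $D$; since $D$ is $h$-length, every pair of $\supp(D)$ admits a $(v,w)$-path of $\ell_G$-length $\le h\le sh$, so the following LP is feasible. Let $\lambda^\ast$ be the largest $\lambda$ for which $\lambda D$ is routable with congestion $1$ using only $(v,w)$-paths of $\ell_G$-length at most $sh$; then $1/\lambda^\ast$ is precisely the minimum congestion needed to route $D$ along $sh$-length paths, so it suffices to prove $\lambda^\ast = \Omega(\phi/\log N)$. The LP dual assigns nonnegative edge lengths $x_e$ and, for each pair, the value $\delta_{vw}(x)$ equal to the least $x$-length of a $(v,w)$-path that is $\le sh$ in $\ell_G$; it minimizes $\sum_e u(e)x_e$ subject to $\sum_{vw}D(v,w)\,\delta_{vw}(x)\ge 1$, with optimum $\lambda^\ast$. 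If $\lambda^\ast$ were $o(\phi/\log N)$, strong duality would supply such an $x$ with $\sum_e u(e)x_e = o(\phi/\log N)$. I would then round $x$ to an integral $(hs)$-moving cut: running a Leighton--Rao / Garg--Vazirani--Yannakakis region-growing argument in the metric induced by $x$ on top of $\ell_G/h$, truncated at radius $\Theta(s)$, should yield a single $(hs)$-moving cut $C$ with $|C| = O(\log N)\cdot\sum_e u(e)x_e$ together with a sub-demand $D''\le D$ (still $h$-length, $A$-respecting) of value $\Omega(\sum_{vw}D(v,w)\,\delta_{vw}(x)) = \Omega(1)$, all of whose pairs have $\dist_{G-C}(v,w)>hs$. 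Then $\spars_{(h,s)}(C,D'') = |C|/\sep_{hs}(C,D'')$ is $o(\phi)$, hence $<\phi$, contradicting $\cond_{(h,s)}(A)\ge\phi$; so $\lambda^\ast = \Omega(\phi/\log N)$, i.e.\ $D$ routes with congestion $O(\log N/\phi)$ along $sh$-length paths.

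The main obstacle is the rounding in Part~1. Classical sparsest-cut rounding works with an unconstrained shortest-path metric and produces an ordinary cut, whereas here two features interfere: the dual distances $\delta_{vw}$ only see paths that are short in the \emph{original} length $\ell_G$, so the region-growing has to be carried out in the combined metric $\ell_G/h + x$; and the output must be a genuine $h$-length moving cut that pushes the separated pairs past $hs$ rather than merely past $h$, which means cutting a ``moat'' of the right width instead of one boundary layer --- this is exactly where the slack $s>1$ between the demand scale $h$ and the cut scale $hs$ is consumed, and where the single $O(\log N)$ loss enters. Edges carrying large $x_e$ are harmless: one may simply cut them fully ($C(e)=1$), which only helps. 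Part~2, by contrast, is essentially tight and complete as written above.
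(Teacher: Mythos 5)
First, note that the paper does not prove this theorem at all: it is imported verbatim as Lemma~3.16 of \cite{haeupler2022hop}, so there is no internal proof to compare against. Judged on its own merits, your Part~2 is correct and essentially complete: restricting the witnessing demand to the separated pairs, observing that every flow path of a putative short low-congestion routing must cross a total cut weight of more than half (so that the moving cut can push its length past $hs$), and then double-counting $\sum_e C(e)F(e)$ is exactly the right averaging argument. One small convention slip: for an $(hs)$-length moving cut the paper's definition gives $\ell_{G-C}(e)=\ell_G(e)+hs\cdot C(e)$, not $\ell_G(e)+h\cdot C(e)$, so the correct conclusion is $C(P)>\tfrac12$ rather than $C(P)>\tfrac s2$; the final contradiction $|D'|<|C|/\phi$ versus $|D'|>|C|/\phi$ survives either way, so this is harmless.

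Part~1, however, is a sketch with a genuine gap rather than a proof. The LP-duality setup is the right one (the dual of $sh$-length-bounded maximum concurrent flow assigns edge lengths $x$ and charges each pair its length-restricted distance $\delta_{vw}(x)$), and Part~2 shows the easy direction of the flow--cut correspondence; but the entire technical content of the cited lemma lives in the step you defer with ``should yield'': converting a fractional dual solution $x$ with $\sum_e u(e)x_e=o(\phi/\log N)$ into a genuine $(hs)$-length moving cut $C$ (values in multiples of $1/(hs)$) of size $O(\log N)\cdot\sum_e u(e)x_e$ that increases the distance of a constant fraction of the demand (weighted by $D(v,w)\delta_{vw}(x)$) past $hs$, not merely past $h$. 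Classical Leighton--Rao/GVY region growing does not directly apply because the regions must be grown in the combined metric $\ell_G/h + x$ and the output must be a ``moat'' of width $\Omega(s)$ rather than a single boundary layer, and one must also verify that the separated sub-demand retains enough mass relative to $|C|$ to make $\spars_{(h,s)}(C,\cdot)<\phi$. You correctly identify all of these obstacles, but identifying them is not the same as overcoming them; as written, Part~1 would need either a full execution of that rounding or an explicit appeal to the cited Lemma~3.16 (which is what this paper in fact does).
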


\subsection{Length-Constrained Expander Decomposition}

Expander decomposition are a powerful tool that allows algorithm designers to exploit the power of expanders in an arbitrary graph. The hop-constrained version of them is stated below. 
\begin{defn}
[Expander Decomposition] A \emph{$(h,s)$-length $(\phi,\kappa)$-expander decomposition} for a node-weighting $A$ in a graph $G$ is a $(hs)$-moving cut $C$ of size at most $\kappa\phi|A|$ such that $A$ is $(h,s)$-length $\phi$-expanding in $G-C$. $C$ is also called a $(h,s,\phi,\kappa)$-decomposition for $A$, for short. 
\end{defn}

We refer to the parameters $s$ and $\kappa$ as the \emph{length slack} and \emph{congestion slack}, respectively.

For any moving cut $C$, the \emph{degree with respect to $C$} of a vertex $v$ is defined as 
\[
\deg_{C}(v)=\sum_{e\text{ incident to }v}u(e)\cdot C(e).
\]
By the definition, observe that $\deg_{C}(v)\le\deg_{G}(v)$ for any moving cut $C$. As discovered in \cite{goranci2021expander}, expander decompositions become even more versatile when the vertices incident to the cut-edges of the decomposition are more ``well-linked''. The definition of a \textit{linked expander decomposition} is given below.
\begin{defn}
[Linked Expander Decomposition]\label{def:linked decomp} A \emph{$\beta$-linked $(h,s)$-length $(\phi,\kappa)$-expander decomposition} for a node-weighting $A$ in a graph $G$ is a $(hs)$-moving cut $C$ of size at most $\kappa\phi|A|$ such that $A+\beta\cdot\deg_{C}$ is $(h,s)$-length $\phi$-expanding in $G-C$. 
\end{defn}


Notice that an expander decomposition is simply a $\beta$-linked expander decomposition for $\beta = 0$. Requiring $A + \beta\cdot\deg_{C}$ instead of just $A$ to be expanding captures the intuition of requiring vertices incident to $C$ to be more ``well-linked''.
\begin{thm}
[implicit in \cite{haeupler2022hop}, explicit in Theorem 3 of \cite{haeupler2023parallel}]\label{thm:expdecomp exist}For any node-weighting $A$, length bound $h$, length slack $s\ge100$, conductance bound $\phi>0$, congestion slack $\kappa\ge N^{O(1/s)}\log N$, and linkedness $\beta=O(1/(\phi\kappa))$, there exists a $\beta$-linked $(h,s)$-length $(\phi,\kappa)$-expander decomposition for $A$. 
\end{thm}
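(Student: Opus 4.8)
The plan is to construct $C$ by the length-constrained analogue of the classical recursive-sparsest-cut construction of expander decompositions. Start from the empty cut $C\equiv 0$. As long as $A+\beta\deg_{C}$ is \emph{not} $(h,s)$-length $\phi$-expanding in $G-C$, the definition of $\cond_{(h,s)}$ hands us an $h$-length $(A+\beta\deg_{C})$-respecting demand $D$ together with an $(hs)$-length moving cut $C'$ that is $\phi$-sparse with respect to $D$ in $G-C$, i.e.\ $|C'|<\phi\cdot\sep_{hs}(C',D)$; we then replace $C$ by (a suitably clipped version of) $C+C'$ and repeat. By construction, if this halts then its output $C$ has the property that $A+\beta\deg_{C}$ is $(h,s)$-length $\phi$-expanding in $G-C$, so it only remains to show that the procedure terminates and that the final cut obeys $|C|\le\kappa\phi|A|$. (Keeping all entries of the accumulated cut in $\{0,\tfrac1{hs},\dots,1\}$ is a minor technical point; clipping to $[0,1]$ can be done carefully so as not to spoil expansion, and I would follow \cite{haeupler2022hop} for this.)

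Both remaining facts follow from exhibiting a potential $\Psi$ on $(hs)$-length moving cuts with three properties: (i)~$\Psi\ge 0$; (ii)~each iteration decreases $\Psi$ by at least $\sep_{hs}(C',D)$ for the $C',D$ used in that step; and (iii)~$\Psi(0)\le\kappa|A|$. Given these, linearity of $|\cdot|$ (and the fact that clipping only decreases it) gives that the total cut mass ever added is $\sum_i|C_i'|<\phi\sum_i\sep_{hs}(C_i',D_i)\le\phi\,\Psi(0)\le\phi\kappa|A|$, bounding $|C|$; and since every step adds a positive, $1/(hs)$-quantized amount of cut, only finitely many steps occur. The potential I would use is a region-growing quantity of the shape $\Psi(C)=\sum_{v}\bigl(A(v)+\beta\deg_{C}(v)\bigr)\cdot\Phi_v(C)$, where $\Phi_v(C)$ is a normalized, concave function of the $(A+\beta\deg_{C})$-weighted volumes of the balls $\ball_{G-C}(v,r)$ with $r$ ranging over $[h,sh]$ (morally: how much weighted mass is still reachable from $v$ within length $sh$, aggregated over $\Theta(s)$ geometric distance scales). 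A demand pair $(v,w)$ counted in $\sep_{hs}(C',D)$ had $\dist_{G-C}(v,w)\le h$ but distance $>sh$ after the update, so $w$ leaves these balls; calibrating $\Phi_v$ so that such an exit is worth $\Omega(1)$ per unit of $D$ yields (ii), while the a-priori bound $\poly(N)$ on every weighted volume, spread over the $\Theta(s)$ scales inside $[h,sh]$, yields (iii) with $\kappa=N^{O(1/s)}\log N$.

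The linkedness requirement is absorbed by a self-bounding observation rather than any new idea: throughout the construction $\deg_{C}\le\deg_{G}$ and $|\deg_{C}|=\sum_v\deg_{C}(v)\le 2\sum_e u(e)C(e)=2|C|$, so once the size bound $|C|\le\kappa\phi|A|$ is in force the auxiliary node-weighting satisfies $|\beta\deg_{C}|\le 2\beta\kappa\phi|A|=O(|A|)$ exactly when $\beta=O(1/(\phi\kappa))$; hence working with $A+\beta\deg_{C}$ in place of $A$ inflates $\Psi(0)$, and thus the final cut bound, by only a constant factor, with no runaway feedback. I expect the crux to be the second paragraph: constructing and calibrating $\Phi_v$ (equivalently, carrying out the region-growing argument) so that a single $\phi$-sparse moving cut provably buys $\Omega(\sep_{hs})$ of potential while the \emph{total} potential is only $N^{O(1/s)}\log N\cdot|A|$ rather than $\poly(N)\cdot|A|$ or a bound depending on $h$ — this is precisely the step that converts length slack $s$ into congestion slack $\kappa$, and getting the clean exponent $N^{O(1/s)}$ is the technical heart of the argument; making each iteration parallel (so the ``explicit'' version runs in the stated depth) is the additional ingredient from \cite{haeupler2023parallel}.
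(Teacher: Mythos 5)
The paper does not prove \Cref{thm:expdecomp exist} at all --- it imports it as a black box from \cite{haeupler2022hop,haeupler2023parallel} --- and your outline is precisely the proof strategy of those works: recursively extract $\phi$-sparse moving cuts until the (linkedness-augmented) node-weighting expands, bound the accumulated cut size via a ball-/region-growing potential over $\Theta(s)$ geometric distance scales in $[h,sh]$ (which is where $\kappa=N^{O(1/s)}\log N$ comes from), and absorb the self-referential $\beta\deg_{C}$ term by choosing $\beta=O(1/(\phi\kappa))$ so the feedback contributes only a constant-factor inflation. The one caveat is that you assert rather than execute the calibration of $\Phi_v$ (that a $\phi$-sparse cut buys $\Omega(\sep_{hs})$ of potential while $\Psi(0)\le\kappa|A|$); you correctly identify this as the technical heart, and it is exactly the content of the cited theorem, so your proposal is a faithful skeleton of the known argument rather than a divergent route.
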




\subsection{Routers}

Next, we define the notion of \emph{routers}. 
\begin{defn}
[Routers]\label{def:router}For any node-weighting $A$, we say that a \emph{unit-length} capacitated graph $G$ is a \emph{$t$-step $\kappa$-router for $A$} (or simply a \emph{router}) if every $A$-respecting demand can be routed in $G$ with congestion $\kappa$ and $t$ steps. If $G$ is a $t$-step $\kappa$-router for $\deg_{G}$, then we simply say that $G$ is $t$-step $\kappa$-router. 
\end{defn}

By the flow characterization of length-constrained expanders (\Cref{thm:flow character}), observe that routers are essentially the same object as length-constrained expanders when the graph has unit length and bounded diameter. 
\begin{prop}
\label{prop:expander-router} Let $G$ be a graph with unit edge-length. Let $A$ be a node-weighting where $\diam_{G}(\supp(A))\le h$. 
\begin{enumerate}
\item If $G$ is a $(h,s)$-length $\phi$-expander for $A$, then $G$ is a $(hs)$-step $O(\frac{\log N}{\phi})$-router for $A$. 
\item If $G$ is not a $(h,s)$-length $\phi$-expander for $A$, then $G$ is not a $(\frac{hs}{2})$-step $\frac{1}{2\phi}$-router for $A$. 
\end{enumerate}
\end{prop}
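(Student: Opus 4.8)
The plan is to derive \Cref{prop:expander-router} directly from the flow characterization of length-constrained expanders (\Cref{thm:flow character}), using two elementary observations. First, since $G$ has unit edge-lengths, for any flow $F$ the path-length and the step-length coincide on every flow-path, so $\leng_F = \step_F$ (up to the harmless fact that step counts are integers while $sh$ need not be, in which case $\step_F = \leng_F \le sh$ still gives $\le hs$ steps). Second, since $\diam_G(\supp(A)) \le h$, every $A$-respecting demand $D$ is automatically $h$-length-bounded: if $D(v,w) > 0$ then $A(v), A(w) > 0$, so $v,w \in \supp(A)$, hence $\dist_G(v,w) \le \diam_G(\supp(A)) \le h$.

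For part~1, assume $G$ is a $(h,s)$-length $\phi$-expander for $A$ and let $D$ be an arbitrary $A$-respecting demand. By the second observation $D$ is $h$-length and $A$-respecting, so \Cref{thm:flow character}(1) gives a flow $F$ with $D_F = D$, $\congest_F \le O(\log N / \phi)$, and $\leng_F \le sh$. By the first observation $\step_F \le sh$, so $D$ is routed with congestion $O(\log N/\phi)$ in at most $hs$ steps. As $D$ was arbitrary, $G$ is an $(hs)$-step $O(\log N/\phi)$-router for $A$. For part~2, assume $G$ is not a $(h,s)$-length $\phi$-expander for $A$; by \Cref{thm:flow character}(2) there is an $h$-length $A$-respecting demand $D$ that cannot be routed with congestion $\le \frac{1}{2\phi}$ and length $\le \frac{s}{2}h$. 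If $D$ could be routed with congestion $\le \frac{1}{2\phi}$ in at most $\frac{hs}{2}$ steps, then by the first observation it would be routed with length $\le \frac{hs}{2} = \frac{s}{2}h$ and congestion $\le \frac{1}{2\phi}$, a contradiction; hence $G$ is not a $(\frac{hs}{2})$-step $\frac{1}{2\phi}$-router for $A$.

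I expect no genuine obstacle here: essentially all the content is imported from \Cref{thm:flow character}, and the proof is a short translation between the "length" vocabulary of length-constrained expanders and the "step" vocabulary of routers. The only points needing (minor) care are verifying that $A$-respecting demands are $h$-length via the diameter hypothesis — which is used only in part~1 — and being slightly careful that "length $\le \ell$ in a unit-length graph" and "$\le \ell$ steps" are the same statement up to integrality.
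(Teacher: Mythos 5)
Your proof is correct and follows essentially the same route as the paper's: both reduce the statement to \Cref{thm:flow character} via the two observations that the diameter bound makes every $A$-respecting demand $h$-length and that unit edge-lengths make length and step-length coincide. No further comment is needed.
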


\begin{proof}
Since $\diam_{G}(\supp(A))\le h$, the set of $A$-respecting demands and the set of $h$-length $A$-respecting demands are identical. Also, as $G$ is unit-length, every path has length equal to the path's step-length. Therefore, by \Cref{thm:flow character}, if $G$ is a $(h,s)$-length $\phi$-expander for $A$, every $A$-respecting demand can be routed in $G$ with congestion $O(\log(N)/\phi)$ and $sh$ steps. Otherwise, some $A$-respecting demand cannot be routed in $G$ with congestion $1/2\phi$ and $(sh)/2$ steps. 
\end{proof}
Although routers and length-constrained expanders are very similar, they focus on different things. We bound the maximum step-length of flow on routers, and bound the length of flows on length-constrained expanders. The clear distinction between the two is made to avoid confusion.

A simple example of a router is a star. 
\begin{prop}
\label{lem:router star} For any node-weighting $A$, let $H$ be a star rooted at $r\notin\supp(A)$ with leaf set $\supp(A)$, each star-edge $(r,v)\in E(H)$ having capacity $A(v)$. Then, $H$ is a $2$-step $1$-router for $A$. 
\end{prop}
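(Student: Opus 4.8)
The plan is to verify directly from \Cref{def:router} that the described star $H$ routes every $A$-respecting demand with the claimed congestion and step bounds. Since $H$ has unit edge lengths and every edge is incident to the root $r$, any path in $H$ that is not a single edge must pass through $r$; in particular, for distinct leaves $v, w \in \supp(A)$, the unique simple $(v,w)$-path is $v - r - w$, which has exactly $2$ edges. So the step bound of $2$ is automatic for any flow supported on simple paths of $H$.

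The main work is bounding the congestion. First I would fix an arbitrary $A$-respecting demand $D$ and describe the canonical flow $F$ that routes it: for each ordered pair $(v,w)$ with $D(v,w) > 0$, route $D(v,w)$ units along the path $v - r - w$ (if $v = w$, there is nothing to route). This is a valid flow in $H$ with $D_F = D$ and $\step_F \le 2$. It remains to check that $\congest_F(e) \le 1$ for every star-edge $e = (r,v)$. The flow through $(r,v)$ is the total flow value on paths using that edge, which is exactly the flow entering or leaving the leaf $v$, namely $\sum_{w} D(v,w) + \sum_{w} D(w,v)$. Here I would invoke the hypothesis that $D$ is $A$-respecting, which gives $\sum_{w} D(v,w) \le A(v)$ and $\sum_{w} D(w,v) \le A(v)$; naively this yields flow at most $2A(v)$ on the edge, which is off by a factor of $2$ from the claimed congestion $1$. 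To get the sharp bound I would instead observe that the edge $(r,v)$ can be used in two directions: flow from $v$ to $r$ (originating at $v$, total $\sum_w D(v,w) \le A(v)$) and flow from $r$ to $v$ (terminating at $v$, total $\sum_w D(w,v) \le A(v)$), and using the bidirectional edge representation $\flow_F$ each direction carries at most $A(v) = u_H(r,v)$. Since congestion of an undirected edge is measured against its capacity and the relevant quantity is the max over the two directions (or one can simply note $F(\{r,v\}) \le 2 A(v)$ but the demand decomposes so that at most $A(v)$ flows each way) — here I would be careful to state congestion as $\congest_F(e) = F(e)/u_H(e)$ with $F(e) = \sum_{P : e \in P} F(P)$, so actually $F((r,v)) \le 2A(v)$ and I'd need $u_H(r,v) = A(v)$ forces $\congest \le 2$; to reach $\congest = 1$ one must split: route the "outgoing" part $\sum_w D(v,w)$ and note the leaf-edge total is really bounded by considering only one endpoint's constraint per unit of flow.

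The subtle point — and the one I expect to need the most care — is exactly this factor-of-$2$ bookkeeping: whether the intended reading is that each leaf-edge carries at most $A(v)$ total (requiring a more clever routing or a more careful accounting, e.g. that each unit of demand $D(v,w)$ is "charged" to only one of its endpoints, or that the capacity should absorb both directions), or whether the statement tacitly allows the natural factor. Given the clean statement "$1$-router," I would resolve it by noting that for the canonical flow, the flow value on $(r,v)$ coming from $v$'s outgoing demand plus $v$'s incoming demand is bounded by $\max(\sum_w D(v,w), \sum_w D(w,v)) \le A(v)$ once we observe that in the edge representation $\flow_F(v,r) = \sum_w D(v,w) \le A(v)$ and $\flow_F(r,v) = \sum_w D(w,v) \le A(v)$, and congestion on an undirected edge is governed by the flow in each direction against capacity — i.e. $\congest_F(\{r,v\}) = \max\{\flow_F(v,r), \flow_F(r,v)\}/u_H(\{r,v\}) \le 1$. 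I would state this convention explicitly and conclude that $H$ is a $2$-step $1$-router for $A$. Finally, if $\supp(A)$ contains $r$-adjacent loops or $A(r) > 0$ the statement's hypothesis $r \notin \supp(A)$ rules this out, so no demand touches $r$ as a source or sink and the analysis above is complete.
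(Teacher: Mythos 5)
The paper states this proposition without proof, so there is nothing to compare against line by line; your canonical routing (send $D(v,w)$ along $v\text{--}r\text{--}w$) is the only possible one, since the star has a unique simple path between any two leaves, and the $2$-step bound is immediate as you say. The one real issue is exactly the factor-of-$2$ you circle around, and your writeup never cleanly lands on a resolution: you first concede $F(\{r,v\})\le 2A(v)$, then claim the sharp bound via a per-direction reading of congestion, then concede again that the paper's definition forces $\congest\le 2$, then reassert the per-direction reading. Under the paper's literal definitions ($\congest_F(e)=F(e)/u(e)$ with $F(\{v,w\})=F((v,w))+F((w,v))$, and ``$A$-respecting'' bounding in-demand and out-demand \emph{separately} by $A(v)$), the demand $D(u,v)=D(v,u)=A(u)=A(v)=1$ genuinely forces congestion $2$ on both star edges, so the star is a $2$-step $2$-router, not a $1$-router, and no cleverer routing can help because the flow is forced. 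The proposition's ``$1$-router'' claim is therefore either using the per-direction convention you propose (capacity available in each direction of an undirected edge) or is simply off by the usual harmless constant; since the proposition is purely illustrative and never invoked elsewhere, nothing downstream depends on which. For a clean proof you should pick one reading, state it once, and delete the back-and-forth: either prove ``$2$-step $2$-router'' under the paper's literal definition, or declare the per-direction convention up front and then the bound $\flow_F(v,r)=\sum_w D(v,w)\le A(v)$ and $\flow_F(r,v)=\sum_w D(w,v)\le A(v)$ finishes it in one line.
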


While the above router has great quality, it requires an additional vertex $r\notin\supp(A)$.

A strong router without steiner vertices (one for which $V(H)\subseteq\supp(A)$) can be constructed using constant-degree expanders. The parameter in the construction below can likely be improved, but we choose to present a simple construction. 
\begin{lem}
\label{lem:router expander} For any node-weighting $A$ and positive integer parameter $t$, there exists a $t$-step $1$-router $H=\router(A,t)$ for $A$ such that 
\begin{itemize}
\item $\deg_{H}\le\Delta\cdot A$ 
\item $|E(H)|\le\Delta\cdot|\supp(A)|$ 
\end{itemize}
where $\Delta=tn^{O(1/t)}\log^{2}N$. 

\end{lem}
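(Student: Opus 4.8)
The plan is to construct the router $H$ by taking a constant-degree expander on the support of $A$, but with vertices "blown up" in proportion to their $A$-values, so that the capacity constraints encoded by $A$ are respected. Concretely, I would first handle the case where $A$ is roughly uniform on its support — say $A(v) \in [w, 2w)$ for all $v \in \supp(A)$ and some scale $w$ — and then reduce the general case to this one by bucketing vertices into $O(\log N)$ geometric weight classes (using the polynomial bound on capacities from the standing assumption) and handling each class separately, finally combining the $O(\log N)$ routers. The $\log^2 N$ in $\Delta$ suggests one $\log N$ factor comes from this bucketing and another from a standard expander-diameter/congestion bound.

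**For the uniform case,** take a $d$-regular expander $X$ on a vertex set of size $\approx |\supp(A)|$ with spectral gap bounded below by a constant, where $d = n^{O(1/t)}$. The key quantitative fact is that a $d$-regular $\Omega(1)$-spectral-gap expander on $N'$ vertices has diameter $O(\log N' / \log d)$, so choosing $d = N^{O(1/t)}$ makes its diameter at most $t$ (up to the constant in the exponent). Any degree-respecting demand on a bounded-degree expander routes with congestion $O(\log N)$ and length $O(\log N)$ via the standard flow argument, but we need congestion $1$ and step-length $t$. To get congestion exactly $1$: scale up — replace each vertex $v$ by $\deg_X$ parallel "ports" and route the near-perfect fractional matching / multicommodity flow guaranteed by expansion, which saturates each edge at most to capacity; equivalently, use the fact (Proposition~\ref{thm:flow character} applied with appropriate $\phi = \Omega(1)$, or a direct matching decomposition) that on a good expander one can route any degree-respecting demand integrally along $O(\log N)$ near-perfect matchings, then rescale edge capacities by $A(v) \cdot \text{poly}$ and bound the blowup by $\Delta$. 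The step bound $t$ follows from the diameter bound on $X$ together with routing each matching along paths of length $O(\mathrm{diam}(X))$.

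**The main obstacle** I anticipate is reconciling the two different "slack" quantities: getting congestion exactly $1$ (not $O(\log N)$) while simultaneously keeping the step-length at exactly $t$ and the degree blowup at $\Delta = t\, n^{O(1/t)} \log^2 N$. The naive expander-routing argument gives congestion $O(\log N)$, and the naive fix (multiply all capacities by $\log N$) inflates $\deg_H$, which is fine, but one must be careful that the $O(\log N)$ matchings each route with step-length $O(\log_d N) = O(t)$ rather than $t$ — so the construction should use an expander of diameter $\le t/c$ for a suitable constant and then absorb the constant. A secondary subtlety is when $|\supp(A)|$ is very small (say $O(1)$): then one cannot build a genuine expander, but a complete graph on $\supp(A)$ is trivially a $1$-step router and has $|E| = O(|\supp(A)|^2) \le \Delta |\supp(A)|$ for $\Delta \ge |\supp(A)|$; since $n^{O(1/t)}$ is not bounded below by $|\supp(A)|$ in general, one instead notes $t$-step routers with $t \ge \log n$ reduce to the unbounded-step case where $\Delta = O(\log^2 N)$ and standard degree-$O(\log N)$ expanders suffice, and for $t < \log n$ the term $n^{O(1/t)}$ dominates and the complete-graph fallback on small supports is subsumed. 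I would organize the proof to first dispose of the weight-bucketing reduction, then prove the uniform-weight lemma via a degree-$n^{O(1/t)}$ expander with an explicit diameter bound, and finally verify the two stated bounds on $\deg_H$ and $|E(H)|$ by tracking the $O(\log N)$ bucket factor, the $O(\log N)$ matching factor, and the $t$ factor from padding paths to exactly $t$ steps.
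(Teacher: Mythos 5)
Your uniform-weight construction matches the paper's in substance: the paper takes a constant-degree, constant-conductance expander $H_0$ on $\supp(A)$ (a $t'=O(\log n)$-step $O(\log n)$-router) and passes to the power graph $H_0^{\lceil t'/t\rceil}$, which has degree $O(1)^{t'/t}=n^{O(1/t)}$ and is a $t$-step $O(t)$-router; your ``degree-$n^{O(1/t)}$ expander with diameter $\le t$'' is the same object, and your fix for congestion (multiply capacities / take parallel copies) is exactly the paper's final step. The small-support worry you raise is a non-issue in both treatments.

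The genuine gap is in your reduction from general $A$ to the uniform case. You propose to bucket $\supp(A)$ into $O(\log N)$ geometric weight classes $V_i$, build a router $H_i$ on each, and ``combine the $O(\log N)$ routers.'' But an $A$-respecting demand may pair a vertex of weight $2^i$ with a vertex of weight $2^j$ for $i\ne j$, and the disjoint union $\bigcup_i H_i$ has no edges between distinct buckets, so such a demand cannot be routed at all --- no amount of congestion or capacity scaling repairs a disconnected graph. This is precisely where the paper does extra work: for every pair $i>j$ it adds an explicit bipartite connector graph $H_{i,j}$ between $V_i$ and $V_j$ (edges of capacity $2^j$, with $\deg_{H_{i,j}}(v)\le 2^i$ on $V_i$ and $\le 2^j$ on $V_j$, built greedily with $|E(H_{i,j})|\le\max\{|V_i|,|V_j|\}$), and routes a cross-bucket demand in two phases: first across the appropriate $H_{i,j}$ with congestion $1$, after which the residual demand is supported inside single buckets but now respects only $\log N\cdot A$ rather than $A$, and is routed inside each $H_i$ with congestion $O(t\log N)$. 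This two-phase argument is also where the step bound degrades to $2t$ and where one of your two $\log N$ factors and the extra $\log N\cdot|\supp(A)|$ term in the edge count actually come from. Without the connectors and this accounting, the proof does not go through; with them, the rest of your outline is fine.
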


\begin{proof}
First, suppose that the node-weighting $A$ is uniform, i.e., $A(v)=1$ for all $v$. Let $H_{0}$ be a $O(1)$-degree expander of constant conductance on the vertex set $|\supp(A)|$. It is well-known that such a $H_{0}$ is a $t' := O(\log n)$-step $O(\log n)$-router. Moreover, for $k=\ceil{\frac{t'}{t}}$, the power graph $H_{0}^{k}$ is a $t$-step $O(t)$-router for $A$. Each vertex in $H_{0}^{k}$ has degree $\Delta_{0}=O(1)^{k}=n^{O(1/t)}$. Define $H=H_{0}^{k}$. We have that $\deg_{H}\le\Delta_{0}\cdot A$ and $|E(H)|\le\Delta_{0}\cdot|\supp(A)|$ as desired. If $A(v)=c_{0}$ for all $v$ for some $c_{0}$, then $H$ can be constructed in the same way but we scale up the capacity by $c_{0}$.

Now, we handle the general node-weighting $A$. By paying at most a factor of $2$ in the congestion, we assume that $A(v)=2^{i}$ for some $i\in[\log N]$ for every $v$. Let $V_{i}=\{v\in\supp(A)\mid A(v)=2^{i}\}$ and $A_{i}=A\cap V_{i}$ be the restriction of $A$ to $V_{i}$. Let $H_{i}$ be the $t$-step $O(t)$-router for $A_{i}$. The final router $H$ is contained by connecting these $H_{i}$ together. For each $i,j$ where $i>j$, we construct a bipartite graph $H_{i,j}$ where $V(H_{i,j})=V_{i}\cup V_{j}$ such that every edge of $H_{i,j}$ has capacity $2^{j}$, $\deg_{H_{i,j}}(v)\le2^{i}$ for all $v\in V_{i}$, and $\deg_{H_{i,j}}(v)\le2^{j}$ for all $v\in V_{j}$. If $2^{i}|V_{i}|\le2^{j}|V_{j}|$, then $\deg_{H_{i,j}}(v)=2^{i}$ for all $v\in V_{i}$, otherwise $\deg_{H_{i,j}}(v)=2^{j}$ for all $v\in V_{j}$. This can be done be greedily adding edges of capacity $2^{j}$ between $V_{i}$ and $V_{j}$ in a natural way. We have $|E(H_{i,j})|\le\max\{|V_{i}|,|V_{j}|\}$. We define the final router as $H=(\bigcup_{i}H_{i})\cup(\bigcup_{i,j}H_{i,j}).$ Observe that 
\[
|E(H)|\le\sum_{i}|E(H_{i})|+\sum_{i,j}|E(H_{i,j})|\le\Delta_{0}|\supp(A)|+\log N|\supp(A)|.
\]
Similarly, $\deg_{H}\le(\Delta_{0}+\log N)\cdot A$. We claim that $H$ is a $2t$-step $O(t\log N)$-router. To see the claim, given any demand $D$ respecting $A$, we first route the demand from $V_{i}$ to $V_{j}$ through $H_{i,j}$ with congestion $1$ for all $i,j$. The residual demand will respects $\log N\cdot A$ and only need to route inside each $H_{i}$, which can be routed using $O(t\log N)$ congestion.

Finally, to reduce the congestion to $1$, we simply make $O(t \log N)$ parallel copies of each edge. Scaling the parameters by an appropriate constant, this implies the lemma.
\begin{comment}
By construction, if $A(v)\in\frac{1}{z}\mathbb{Z}_{\ge0}$, then $u_{H}(e)\in\frac{1}{z}\mathbb{Z}_{>0}$ for all edges $e\in E(H)$. 
\end{comment}
\end{proof}

\subsection{Length-Constrained Expansion Witness}

We first recall two more standard notions.

\paragraph{Neighborhood Covers.}

Given a graph $G$ with lengths, a \emph{clustering} $\cS$ in $G$ is a collection of mutually disjoint vertex sets $S_{1},\ldots,S_{|\cS|}$, called \emph{clusters}. A \emph{neighborhood cover} $\cN$ with width $\omega$ and covering radius $h$ is a collection of $\omega$ many clusterings $\cS_{1},\ldots,\cS_{\omega}$ such that for every node $v$ there exists a cluster $S\in\S_{i}$ where $\ball(v,h)\subseteq S$. We use $S\in\N$ to say that $S$ is a cluster in some clustering of $\N$. We say that $\cN$ has diameter $h_{\diam}$ if every cluster $S\in\N$ has (weak) diameter at most $h_{\diam}$, i.e., $\max_{u,v\in S}\dist_{G}(u,v)\le h_{\diam}$. The following is a classic result.
\begin{thm}[\cite{peleg2000distributed}]
\label{thm:cover basic}\label{thm:cover-separation-factor-existential}For any $h$, integer $k\ge1$, and graph $G$, there a deterministic parallel algorithm that computes a neighborhood cover $\N$ with covering radius $h$, diameter $h_{\diam}\le(2k-1)\cdot h$ and width $\omega=N^{O(1/k)}\log N$. The algorithm has $O(|E(G)|hk\omega)$ work and $O(hk\omega)$ depth.
\end{thm}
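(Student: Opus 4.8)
The plan is to prove this by the classical sparse-cover construction: iterated \emph{ball-growing} to build one clustering at a time, then an amortization argument to bound the number of clusterings. The core subroutine grows a ball around a vertex $v$ until its relative volume growth drops below the threshold $N^{1/k}$: let $j_{v}$ be the least index in $\{1,\dots,k\}$ with $|\ball_{G}(v,j_{v}h)|\le N^{1/k}\cdot|\ball_{G}(v,(j_{v}-1)h)|$. Such a $j_{v}$ exists, since otherwise the nested balls $\ball_{G}(v,0)\subsetneq\ball_{G}(v,h)\subsetneq\cdots\subsetneq\ball_{G}(v,kh)$ would satisfy $|\ball_{G}(v,kh)|>(N^{1/k})^{k}\cdot|\ball_{G}(v,0)|=N\ge n$, which is impossible. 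The cluster produced around $v$ is $\ball_{G}(v,j_{v}h)$; it has weak diameter at most $2j_{v}h\le 2kh$, and with a marginally more careful accounting that charges the stopping shell to the next radius one recovers the exact bound $(2k-1)h$ claimed. Because $j_{v}\ge1$, this cluster contains $\ball_{G}(v,h)$, so $v$ — indeed every $w$ with $\ball_{G}(w,h)\subseteq\ball_{G}(v,j_{v}h)$ — is covered by it.

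To assemble a single clustering $\cS$ I would iterate the subroutine: maintain a set of active vertices, initially $V$; pick any active $v$, grow $\ball_{G}(v,j_{v}h)$, add it to $\cS$, delete all of $\ball_{G}(v,j_{v}h)$ from the active set, and recurse on the remaining vertices (distances still measured in $G$) until no active vertex is left. Deleting the whole grown ball — not merely a core — makes the clusters of $\cS$ pairwise disjoint, so $\cS$ is a genuine clustering, and every vertex whose entire radius-$h$ ball in $G$ happened to lie inside the ball that captured it is covered by $\cS$. The vertices left uncovered are exactly those whose $h$-ball straddled a carving boundary, and these are handled by producing further clusterings in the same way, yielding the neighborhood cover $\cN=(\cS_{1},\dots,\cS_{\omega})$.

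The crux — and the step I expect to be most delicate — is showing that $\omega=N^{O(1/k)}\log N$ clusterings suffice. I would bucket vertices by the scale $p$ of the volume of their $h$-ball, $|\ball_{G}(v,h)|\in[2^{p},2^{p+1})$, giving $O(\log N)$ buckets, and argue that $N^{O(1/k)}$ clusterings tailored to a fixed bucket $p$ cover every vertex in that bucket; concatenating over buckets gives the bound. For this one observes that whenever a bucket-$p$ vertex $v$ is left uncovered by a clustering, the ball that captured $v$ must have been halted by the growth test with $v$ in its stopping shell, so the clustering consumed a volume around $v$ that is at least a $1/N^{1/k}$ fraction of a ball of radius $\le(k+1)h$ about $v$; since such volumes are at most $N$ and each is at least $2^{p}$, a vertex can be trapped in such a stopping shell only $N^{O(1/k)}$ times before being forced into a cluster interior. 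This amortization is the technical heart, and getting the exact exponents and constants right is where the work lies. For the parallel bounds, finally: each ball-grow is a bounded-radius BFS/Bellman–Ford exploration to distance $O(kh)$, which touches at most $O(kh)$ hops (edge lengths are at least $1$) at cost $O(|E(G)|kh)$ work and $O(kh)$ depth; running this across all carvings inside a clustering and across the $\omega=N^{O(1/k)}\log N$ clusterings gives total work $O(|E(G)|hk\omega)$ and depth $O(hk\omega)$, as stated.
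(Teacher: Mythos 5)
The paper does not prove this statement; it imports it from Peleg's book as a classical black box, so there is no in-paper proof to compare against. Your overall plan --- grow balls in increments of $h$ until the volume growth drops below $N^{1/k}$, carve disjoint clusters greedily, and amortize over clusterings --- is the right family of ideas (it is essentially the Awerbuch--Peleg sparse-cover construction), and your termination argument for the ball growth is correct. But the step you yourself flag as the crux, the width bound, does not work as you have set it up. Each of your clusterings is built from scratch on the full vertex set by the same greedy rule, so nothing is monotonically consumed from one clustering to the next: if the same centers are picked in the same order, the same vertices are left uncovered forever, and your charging claim that ``a vertex can be trapped in a stopping shell only $N^{O(1/k)}$ times'' has no quantity that it is actually depleting. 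The classical proof avoids this by maintaining a single global collection of unsatisfied balls that shrinks across phases, and by charging each phase in which a ball survives to a multiplicative (factor $>N^{1/k}$) growth of the kernel that rejected it; some such cross-round potential is indispensable. A second error in the same step: a vertex can fail to be covered even when it lies deep inside the ball that captured it, because part of its $h$-ball was already deleted by an \emph{earlier} carving in the same clustering, so ``uncovered $\Rightarrow$ in the stopping shell of its own ball'' is false and your amortization does not account for these vertices.

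There is also a gap in the complexity claim. The carvings within one clustering are sequential (each depends on the active set left by the previous one), and there can be $\Theta(n)$ of them, so the depth of building a single clustering is not $O(kh)$ as you assert but potentially $n\cdot O(kh)$; obtaining total depth $O(hk\omega)$ requires parallelizing the carving itself (for example, processing many centers simultaneously with a deterministic conflict-resolution scheme), which is a nontrivial component of the cited construction and cannot be waved through. Finally, the diameter bound $(2k-1)h$ is asserted rather than derived; the construction as you write it gives weak diameter $2kh$, and while the exact constant is immaterial to how the theorem is used downstream, the ``more careful accounting'' should either be spelled out or the bound stated with the weaker constant.
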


\paragraph{Embedding.}

Next, we recall the notion of \emph{graph embedding. }We  view it as a flow as follows. %
\begin{comment}
\begin{defn} {[}Embedding{]}Given a graph $G$ and a graph $H$ (possibly with parallel edges) where $V(H)\subseteq V(G)$, an embedding $\Pi_{H\rightarrow G}$ of $H$ into $G$ is a collection of paths $\{p_{(v,w)}\}_{(v,w)\in E(H)}$ where $p_{(v,w)}$ is a $(v,w)$-flow in $G$, which can be viewed as a $(v,w)$-flow with value $\val(f_{(v,w)})=u_{H}(v,w)$.

We say that $\Pi_{H\rightarrow G}$ has \emph{length $h$ }if $\l_{G}(p_{e})\le h$ for all $e\in E(H)$.

We say that $\Pi_{H\rightarrow G}$ has \emph{length slack} $s$ if $\l_{G}(p_{e})\le s\cdot\l_{H}(e)$ for all $e\in E(H)$.

We say that $\Pi_{H\rightarrow G}$ has \emph{congestion }$\kappa$ if\emph{ $\congest_{F}\le\kappa$ where $F=\sum_{e\in E(H)}p_{e}$.}

The representation size of $\Pi_{H\rightarrow G}$ is $|\Pi_{H\rightarrow G}|=\sum_{e\in E(H)}|E_{G}(p_{e})|$ where $E_{G}(p_{e})$ denote the set of edges of $p_{e}$ in $G$. \end{defn}
\end{comment}
\begin{defn}
[Edge Demand and Embedding]Given graphs $G$ and $H$ where $V(H)\subseteq V(G)$, the \emph{edge-demand} $D_{E(H)}$ of $H$ on $G$ is the demand where for all $(v, w) \in E(H)$, $D_{E(H)}(v,w) = u_{H}(v, w)$. The \emph{embedding $\Pi_{H \rightarrow G}$ of $H$ into $G$} is a multicommodity flow that routes $D_{E(H)}$ in $G$. We can write $\Pi_{H\rightarrow G}=\sum_{(v,w)\in E(H)}f_{(v,w)}$ where $f_{(v,w)}$ is a $(v,w)$-flow in $G$ of value $\val(f_{(v,w)})=u_{H}(v,w)$.

The embedding $\Pi_{H\rightarrow G}$ is said to have \emph{length slack} $s$ if $\leng_{f_{e}}\le s\cdot\l_{H}(e)$ for all $e\in E(H)$. 
\end{defn}

We use the same terminology as for flows for the embedding $\Pi_{H\rightarrow G}$. For example, $\Pi_{H\rightarrow G}$ is said to have length \emph{$h$ }and congestion $\kappa$ if $\leng_{\Pi_{H\rightarrow G}}\le h$ and \emph{$\congest_{\Pi_{H\rightarrow G}}\le\kappa$}.

\paragraph{Expansion Witness.}

Given the above definitions of neighborhood covers, routers, and embeddings, we can define a \textit{witness} of length-constrained expansion.

\begin{restatable}[Expansion Witness]{defn}{expansionwitness}\label{def:expansion witness}
Let $G$ be a graph and $A$ a node weighting. A \emph{$(h,t_{{\cal R}},h_{\Pi},\kappa_{\Pi})$-witness} of $A$ in $G$ is a tuple $({\cal N},\mathcal{R},\Pi_{{\cal R}\to G})$. 
\begin{itemize}
\item ${\cal N}$ is a neighborhood cover with covering radius $h$. 
\item $\mathcal{R}$ is a collection of routers. For each cluster $S\in{\cal N}$, there exists a $(t_{{\cal R}},1)$-router $R^{S}\in{\cal R}$ on vertex set $S$ for node-weighting $S\cap A$. 
\item $\Pi_{{\cal R}\to G}$ is an embedding of all routers in ${\cal R}$ to $G$. $\Pi_{{\cal R}\to G}$ has length $h_{\Pi}$ and congestion $\kappa_{\Pi}$. 
\end{itemize}
\end{restatable}


Below, we show that (1) an expansion witness indeed certifies the expansion, and moreover, gives an explicit routing structure, and (2) an expansion witness exists for every expander.
\begin{cor}
[Expansion Witness Certifies Expansion]Suppose that there exists a \emph{$(h,t_{{\cal R}},t_{\Pi},\kappa_{\Pi})$-witness} $({\cal N},\mathcal{R},\Pi_{{\cal R}\to G})$ of $A$ in $G$. Then, $G$ is a $(h,2t_{{\cal R}}t_{\Pi})$-length $(1/2\kappa_{\Pi})$-expander for $A$. Moreover, given any $A$-respecting $h$-length-constrained demand $D$ in $G$, there is a flow $F$ routing $D$ in $G$ where $\leng_{F}\le h_{\Pi}t_{{\cal R}}$ and $\congest_{F}\le\kappa_{\Pi}$. Moreover, $F$ is ``routed through'' the embedding $\Pi_{{\cal R}\to G}$, i.e., $F=\sum_{f\in\Pi_{{\cal R}\to G}}\val_{f}\cdot f$ where $\val_{f}\ge0$. 
\end{cor}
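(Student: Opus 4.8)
The plan is to prove the corollary by explicitly assembling a routing flow for any $A$-respecting $h$-length demand $D$ using the three components of the witness $(\cN, \cR, \Pi_{\cR \to G})$, and then derive the expander claim from the ``moreover'' part together with the flow characterization of length-constrained expanders (\Cref{thm:flow character}). First I would fix an $A$-respecting $h$-length-constrained demand $D$ in $G$. Since $\cN$ is a neighborhood cover with covering radius $h$, for each pair $(v,w) \in \supp(D)$ we have $\dist_G(v,w) \le h$, so there is a clustering $\cS_i \in \cN$ and a cluster $S \in \cS_i$ with $\ball_G(v,h) \subseteq S$; in particular both $v, w \in S$. Using this, I would decompose $D = \sum_{S \in \cN} D_S$ where $D_S$ is supported on pairs assigned to cluster $S$ (splitting each pair's demand to a single chosen cluster), so that each $D_S$ is supported within $S$. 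The key point is that, after summing over the $\omega$ clusterings, each vertex's total out/in-demand is still bounded: since the clusterings are each made of disjoint clusters, $\sum_S D_S$ restricted to a single clustering $\cS_i$ is $A$-respecting, hence $D_S$ is $(S \cap A)$-respecting for the cluster it lives in (up to the width $\omega$ appearing as a congestion factor — I need to be slightly careful here, and may instead route one clustering at a time and sum, incurring a factor $\omega$; but the corollary statement does not mention $\omega$, so presumably the intended reading folds $\omega$ into $\kappa_\Pi$ via the embedding's definition, or the witness is set up so each vertex is covered by exactly the relevant cluster — I would follow whichever convention the surrounding text uses).

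Next, for each cluster $S \in \cN$, the router $R^S \in \cR$ is a $(t_\cR, 1)$-router for $S \cap A$ on vertex set $S$. Since $D_S$ is $(S\cap A)$-respecting, by the router definition there is a flow $F^S_{\mathrm{router}}$ in $R^S$ routing $D_S$ with congestion $1$ and step-length at most $t_\cR$. Now I would push this flow down through the embedding $\Pi_{\cR \to G}$: each edge $e \in E(R^S)$ carries embedding flow $f_e$, a flow in $G$ of value $u_{R^S}(e)$ along paths of length at most $h_\Pi$. Replacing each unit of flow along $e$ in $F^S_{\mathrm{router}}$ by the corresponding (scaled) path flow $f_e$ yields a flow in $G$; concatenating along each router path of step-length $\le t_\cR$ gives flow-paths of length at most $h_\Pi \cdot t_\cR$ in $G$. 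Summing over all clusters $S$ gives the final flow $F$ routing $D$ in $G$. For the congestion bound: on any edge $e' \in E(G)$, the load is $\sum_S \sum_{e \in E(R^S)} \congest_{F^S_{\mathrm{router}}}(e) \cdot (\text{flow } f_e \text{ puts on } e')$; since each $F^S_{\mathrm{router}}$ has congestion $\le 1$ and $\sum_{S, e} f_e = \Pi_{\cR \to G}$ which has congestion $\le \kappa_\Pi$, we get $\congest_F \le \kappa_\Pi$. The identity $F = \sum_{f \in \Pi_{\cR \to G}} \val_f \cdot f$ with $\val_f \ge 0$ is immediate from this construction: $F$ is literally a nonnegative combination of the embedding path-flows, with coefficients given by the router flow values.

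Finally, for the expander claim: we have just shown that \emph{every} $A$-respecting $h$-length demand $D$ routes in $G$ with length $\le h_\Pi t_\cR$ and congestion $\le \kappa_\Pi$. By the contrapositive of part (2) of \Cref{thm:flow character}: if $G$ were not a $(h, s)$-length $\phi$-expander for $A$ with $s = 2 t_\cR t_\Pi$ (using $h_\Pi \le h \cdot t_\Pi$, i.e. the embedding length measured in units of $h$) and $\phi = 1/(2\kappa_\Pi)$, then some $h$-length $A$-respecting demand could not be routed with congestion $\le 1/(2\phi) = \kappa_\Pi$ and length $\le (s/2) h = t_\cR t_\Pi h$, contradicting what we proved (since $h_\Pi t_\cR \le t_\Pi t_\cR h$). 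Hence $G$ is a $(h, 2t_\cR t_\Pi)$-length $(1/2\kappa_\Pi)$-expander for $A$.

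\textbf{Main obstacle.} The routine parts (concatenating router flows with embedding path-flows, tracking congestion multiplicatively) are mechanical. The one genuine subtlety is the bookkeeping of the neighborhood-cover width $\omega$: whether routing each commodity inside its assigned cluster across all $\omega$ clusterings stays $A$-respecting cluster-by-cluster, or whether a factor of $\omega$ must be absorbed somewhere. I expect this is handled by the precise statement of \Cref{def:expansion witness} / the embedding's congestion already accounting for overlaps, and matching the corollary's clean bound will require stating that convention carefully rather than any real mathematical difficulty. A secondary point is the translation between ``length $h_\Pi$'' and ``length slack $t_\Pi$'' — the corollary's $t_\Pi$ in the exponent of $s$ must be reconciled with $h_\Pi$ via the standing relation $h_\Pi \le t_\Pi \cdot h$ (or whatever normalization the witness uses), which I would make explicit at the start.
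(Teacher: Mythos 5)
Your proposal is correct and follows essentially the same route as the paper: assign each demand pair to a single cluster containing both endpoints, route the induced demand $D_S$ inside the router $R^S$ with congestion $1$ and $t_{\cal R}$ steps, push the result through $\Pi_{{\cal R}\to G}$ to get length $h_\Pi t_{\cal R}$ and congestion $\kappa_\Pi$, and invoke \Cref{thm:flow character} for the expander claim. Your worry about the width $\omega$ resolves exactly as you guessed, and in fact more simply than you fear: because each pair is assigned to exactly one cluster, $D_S\le D$ entrywise, so $D_S$ respects $S\cap A$ with no $\omega$ loss, and since the routers are disjoint the only place overlaps matter is the congestion of the single embedding $\Pi_{{\cal R}\to G}$, which is given as $\kappa_\Pi$ (the $\omega$ factor is paid when the witness is \emph{constructed}, in \Cref{cor:witness exists}, not when it is \emph{used}).
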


\begin{proof}
It suffices to prove the ``moreover'' part by \Cref{thm:flow character}. For each $(v,w)$ where $D(v,w)>0$, we have $\dist_{G-C}(v,w)\le h$ and so there exists $S\in\N$ where $v,w\in S$. Choose such cluster $S$ arbitrarily and assign the demand $D(v,w)$ to $S$. For each cluster, let $D_{S}$ denote the demand induced by this process. Note that $D_{S}\le D$ entry-wise and so $D_{S}$ respects $S\cap A$. So $D_{S}$ can be routed via a flow $F_{S}$ in $R^{S}$ with $t_{{\cal R}}$-step and $1$-congestion. Let $F_{{\cal R}}=\sum_{S\in\N}F_{S}$ be a flow on ${\cal R}=\cup_{S\in\N}R^{S}$.

We define $F$ from $F_{{\cal R}}$ as follows. From the embedding $\Pi_{{\cal R}\rightarrow G}=\sum_{e\in E({\cal R})}u_{{\cal R}}(e)\cdot f_{e}$ that embeds ${\cal R}$ into $G$, define the flow $F=\sum_{e\in E({\cal R})}v_{f_{e}}\cdot f_{e}$ where $f_{e}\in\Pi_{{\cal R}\rightarrow G}$ and $v_{f_{e}}=F_{{\cal R}}(e)$ denotes the total flow of $F_{{\cal R}}$ through $e$ in ${\cal R}$. Now, we bound the length and congestion of $F$. Since $F_{{\cal R}}$ has at most $t_{{\cal R}}$-step on ${\cal R}$, we have $\leng_{F}\le t_{{\cal R}}\cdot\leng_{\Pi_{{\cal R}\rightarrow G}}\le h_{\Pi}t_{{\cal R}}$. Since $F_{{\cal R}}$ has congestion $1$ on ${\cal R}$, we have $v_{f_{e}}\le u_{{\cal R}}(e)$ for all $e\in E({\cal R})$ and so $\congest_{F}\le\congest_{\Pi_{{\cal R}\rightarrow G}}=\kappa_{\Pi}$.
\end{proof}
\begin{cor}
[Expansion Witness Exists for Expanders]\label{cor:witness exists}Let $G$ be a $(h,s)$-length $\phi$-expander for a node-weighting $A$. There is an \emph{$(h',t_{{\cal R}},h_{\Pi},\kappa_{\Pi})$-}witness $({\cal N},{\cal R},\Pi_{{\cal R}\to G})$ for $A$ in $G$ where $h'=h/s$, $t_{{\cal R}}=s$, $h_{\Pi}=hs$, $\kappa_{\Pi}=sN^{O(1/s)}\poly(\log N)/\phi$.
\end{cor}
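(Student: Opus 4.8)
The plan is to construct the three components of the witness---the neighborhood cover $\cN$, the collection of routers $\cR$, and the embedding $\Pi_{\cR\to G}$---and verify the claimed parameters. First I would apply \Cref{thm:cover basic} with covering radius $h' = h/s$ and an appropriate integer $k$ (think $k = \Theta(s)$, so that the diameter bound $h_{\diam} \le (2k-1)h' = O(h)$ and width $\omega = N^{O(1/k)}\log N = N^{O(1/s)}\log N$) to get a neighborhood cover $\cN$ whose every cluster $S$ has weak diameter $O(h)$ in $G$. For each cluster $S\in\cN$, I would instantiate a router $R^S$ on vertex set $S$ for the node-weighting $S\cap A$ via \Cref{lem:router expander} with step parameter $t_{\cR} = s$; this yields an $s$-step $1$-router with $\deg_{R^S}\le \Delta\cdot(S\cap A)$ and $|E(R^S)|\le \Delta\cdot|S|$ for $\Delta = s\, n^{O(1/s)}\log^2 N$. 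Set $\cR = \bigcup_{S\in\cN} R^S$.

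Next I would produce the embedding $\Pi_{\cR\to G}$ by routing, inside $G$, the edge-demand $D_{E(\cR)} = \sum_S D_{E(R^S)}$. The key observation is that each $D_{E(R^S)}$ is supported on pairs $(v,w)\in S\times S$, hence within distance $h_{\diam} = O(h)$ in $G$, and $D_{E(R^S)}$ is $\beta$-respecting for a node-weighting $\beta\cdot(S\cap A)$ with $\beta = O(\Delta)$; summing over the $\omega$ clusterings, $D_{E(\cR)}$ is $O(\Delta\,\omega)\cdot A$-respecting and $O(h)$-length. Since $G$ is a $(h,s)$-length $\phi$-expander for $A$, it is also a $(O(h), s)$-length $\phi'$-expander for $O(\Delta\omega)\cdot A$ after rescaling $\phi$ by the load factor; applying the flow characterization \Cref{thm:flow character}(1), the demand $D_{E(\cR)}$ routes in $G$ with congestion $O(\Delta\omega\log N/\phi)$ and length $O(h)\cdot s = hs$. (One should be slightly careful: \Cref{thm:flow character} is stated for demands respecting exactly $A$; to handle $c\cdot A$-respecting demands with $c = O(\Delta\omega)$, split into $c$ copies of $A$-respecting demands, or equivalently observe expansion scales linearly, so the congestion picks up the factor $c$.) This gives $h_\Pi = hs$ and $\kappa_\Pi = \Delta\omega\cdot O(\log N/\phi) = s\,N^{O(1/s)}\poly(\log N)/\phi$, as claimed, with $t_{\cR} = s$ and $h' = h/s$.

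The main obstacle I anticipate is bookkeeping the \emph{load factor} precisely: the edge-demands of the routers do not respect $\deg_G$ or $A$ directly but a blown-up node-weighting, and one must track how the blow-up factor $\Delta\cdot\omega$ (degree slack of the router construction times the width of the neighborhood cover) propagates through the flow characterization into the final congestion bound $\kappa_\Pi$. The other point requiring care is matching the covering radius: the witness definition requires $\cN$ to have covering radius $h'$ with $h' = h/s$, so that a cluster containing $\ball(v, h/s)$ exists for every $v$; one then needs that any $h'$-length $A$-respecting demand's endpoints co-occur in a cluster, which is exactly the covering-radius guarantee, while the \emph{diameter} $O(h)$ of clusters is what feeds the length bound of the router embedding. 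Everything else is a direct composition of \Cref{thm:cover basic}, \Cref{lem:router expander}, and \Cref{thm:flow character}.
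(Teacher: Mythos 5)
Your proposal is correct and follows essentially the same route as the paper: a neighborhood cover of covering radius $h/s$ and diameter $\le h$ via \Cref{thm:cover basic}, an $s$-step $1$-router per cluster via \Cref{lem:router expander}, and routing the routers' edge-demand (which is $h$-length and $\Delta\omega\cdot A$-respecting) through $G$ via \Cref{thm:flow character} to get length $hs$ and congestion $O(\Delta\omega\log N/\phi)$. The two points you flag as needing care (the $\Delta\omega$ load factor and the covering-radius/diameter distinction) are exactly the points the paper's proof tracks, so no gap remains.
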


\begin{proof}
From \Cref{thm:cover basic}, let $\N$ be a neighborhood cover on $G$ with covering radius $h'=h/s$, diameter $h$, and width $\omega\le N^{O(1/s)}\log N$. For each cluster $S\in{\cal N}$, let $R^{S}\in{\cal R}$ be a $s$-step $1$-router for $B_{S}=S\cap A$ where $\deg_{R^{S}}\le\Delta\cdot B_{S}$ and $|E(R^{S})|\le\Delta\cdot|\supp(B_{S})|$ and $\Delta\le sN^{O(1/s)}\log^{2}N$. This follows from \Cref{lem:router expander}.

Let $D_{{\cal R}}$ be the edge demand of $\cup_{S\in\N}R^{S}$. First, observe that $D_{{\cal R}}$ is $h$-length-constrained in $G$. Indeed, for each edge $(v,w)\in R^{S}$ for any $S\in\N$, we have $\dist_{G}(v,w)\le h$ as $\N$ has diameter $h$. Second, observe that $D_{{\cal R}}$ respects $\Delta\omega A$. Indeed, for each vertex $v$, we have 
\[
D_{{\cal R}}(v)\le\sum_{S\in\N}\deg_{R^{S}}(v)\le\Delta\cdot\sum_{S\in\N}B_{S}(v)\le\Delta\omega\cdot A(v)
\]
 where the last inequality is because there are most $\omega$ many clusters $S\in\N$ containing $v$.

Since $A$ is $(h,s)$-length $\phi$-expanding in $G$, by \Cref{thm:flow character}, we have that $D_{{\cal R}}$ can be routed in $G$ with length $hs=h_{\Pi}$ and congestion $\kappa_{\Pi}=O(\Delta\omega\log N/\phi)$. Let $\Pi_{{\cal R}\rightarrow G}$ be such flow that routes $D_{{\cal R}}$ in $G$. 
\end{proof}

\subsection{Length-Constrained Witnessed Expander Decomposition}

By combining the existence of the expander decomposition and the expansion witness, we obtain the following. 
\begin{cor}
[Existential Witnessed Expander Decomposition]\label{cor:witness ED exist}Let $G$ be a graph with edge lengths with node-weighting $A$. Given parameters $(h,\phi,\beta,s)$ where $\beta\le1/(\phi\log N)$ and $s\ge100$, there exists a $h_{C}$-moving cut $C$ and a \emph{$(h,t_{{\cal R}},h_{\Pi},\kappa_{\Pi})$-witness} $({\cal N},\mathcal{R},\Pi_{{\cal R}\to G})$ of $A+\beta\deg_{C}$ in $G-C$ with the following guarantees: 
\begin{itemize}
\item $|C|\le\phi|A|$. 
\item The total number of edges in all routers is $|E({\cal R})|\le nN^{O(1/s)}\poly(\log N)$. 
\item $t_{{\cal R}}=s$, $h_{C},h_{\Pi}=hs^{2}$, $\kappa_{\Pi}=N^{O(1/s)}\poly(\log N)/\phi$.
\end{itemize}
\end{cor}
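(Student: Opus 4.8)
The plan is to combine \Cref{thm:expdecomp exist} with \Cref{cor:witness exists} in the natural way, taking some care with how the length parameters compose. First I would apply \Cref{thm:expdecomp exist} to the node-weighting $A$ with length bound $h$, length slack $s$, conductance $\phi$, congestion slack $\kappa := N^{O(1/s)}\log N$, and linkedness $\beta = O(1/(\phi\kappa))$; note the hypothesis $\beta \le 1/(\phi\log N)$ matches (up to the constant we are free to choose) the requirement $\beta = O(1/(\phi\kappa))$ since $\kappa \ge \log N$, and $s \ge 100$ is exactly the stated requirement. This produces an $(hs)$-moving cut $C$ with $|C| \le \kappa\phi|A| $; here we must absorb the extra $\kappa = N^{O(1/s)}\log N$ factor — but since the claimed bound is only $|C| \le \phi|A|$, I would simply reparametrize, i.e.\ run the decomposition with conductance $\phi' := \phi/\kappa$ so that $\kappa\phi'|A| = \phi|A|$; this only degrades $\kappa_\Pi$ by the same $N^{O(1/s)}\log N$ factor, which is swallowed by the $N^{O(1/s)}\poly(\log N)$ in the statement. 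After this step, $A + \beta\deg_C$ is $(h,s)$-length $\phi'$-expanding in $G - C$, with $C$ an $(hs)$-moving cut (so $h_C = hs \le hs^2$ — actually we will need $h_C = hs^2$ below to match the witness length scale, so I would just state $h_C \le hs^2$ and move on).

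Next I would apply \Cref{cor:witness exists} to the graph $G - C$, which is an $(h,s)$-length $\phi'$-expander for the node-weighting $A + \beta\deg_C$. This directly yields an $(h', t_{\cal R}, h_\Pi, \kappa_\Pi)$-witness of $A + \beta\deg_C$ in $G - C$ with $h' = h/s$, $t_{\cal R} = s$, $h_\Pi = hs$, and $\kappa_\Pi = s N^{O(1/s)}\poly(\log N)/\phi'$. There is a cosmetic mismatch: the corollary we are proving wants a witness with covering radius $h$ (not $h/s$) and $h_\Pi = hs^2$ (not $hs$). I would resolve this by instead invoking \Cref{cor:witness exists} with the roles of the length parameters shifted — that is, start the whole argument by running \Cref{thm:expdecomp exist} with length bound $h$ but demanding the resulting expander be $(hs, s)$-length expanding, or equivalently observe that an $(h,s)$-length $\phi$-expander is also an $(h', s)$-length expander for any $h' \le h$, and pick the scales so that the witness's covering radius comes out to $h$ and its embedding length comes out to $hs^2$. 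Concretely: apply the decomposition to get $A + \beta\deg_C$ that is $(hs, s)$-length $\phi'$-expanding in $G - C$ (which requires $C$ to be an $(hs^2)$-moving cut, giving $h_C = hs^2$), then feed this into \Cref{cor:witness exists} with its "$h$" set to $hs$, producing a witness with covering radius $hs/s = h$, router step-length $t_{\cal R} = s$, embedding length $h_\Pi = (hs)\cdot s = hs^2$, and congestion $\kappa_\Pi = s N^{O(1/s)}\poly(\log N)/\phi' = N^{O(1/s)}\poly(\log N)/\phi$ after folding in the $\kappa$ factor from the reparametrization.

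Finally I would verify the edge-count bound. \Cref{cor:witness exists} builds its routers via \Cref{lem:router expander}, which gives $|E(R^S)| \le \Delta \cdot |\supp(S \cap A')|$ where $A' = A + \beta\deg_C$ and $\Delta \le s N^{O(1/s)}\log^2 N$. Summing over all clusters $S$ in the neighborhood cover $\cal N$, each vertex lies in at most $\omega \le N^{O(1/s)}\log N$ clusters, so $|E({\cal R})| \le \Delta \omega \cdot |\supp(A')| \le \Delta\omega \cdot n = n N^{O(1/s)}\poly(\log N)$, as claimed. I would also double-check the linkedness hypothesis threads through: \Cref{thm:expdecomp exist} requires $\beta = O(1/(\phi'\kappa))$, and since we reparametrized $\phi' = \phi/\kappa$ this is $\beta = O(1/\phi)$ up to log factors — but the corollary's hypothesis is stated as $\beta \le 1/(\phi\log N)$, which is more restrictive than what we'd naively need, so it suffices (I should make sure the constants are consistent, possibly by choosing the $O(1/s)$ exponents generously).

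The main obstacle I anticipate is purely bookkeeping: getting all four length scales ($h$ for the demand, $hs$ for expansion, $hs^2$ for the moving cut and embedding, $h/s \to h$ for the cover radius) to line up with the exact constants in the statement, and making sure the $\kappa = N^{O(1/s)}\log N$ congestion-slack factor that \Cref{thm:expdecomp exist} forces is consistently absorbed into both $|C| \le \phi|A|$ (via reparametrizing $\phi$) and $\kappa_\Pi = N^{O(1/s)}\poly(\log N)/\phi$ (where it just enlarges the hidden polylog/exponent). There is no real mathematical content beyond chaining the two cited results; the risk is an off-by-a-factor-of-$s$ error, which is why I would be explicit about which "$h$" each cited lemma is being applied with.
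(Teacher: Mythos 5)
Your proposal is correct and, after the self-correction in the second paragraph, lands on exactly the paper's argument: invoke \Cref{thm:expdecomp exist} at length scale $hs$ with conductance $\phi/\kappa$ (so the $(hs^2)$-moving cut has size $\kappa\cdot(\phi/\kappa)|A|=\phi|A|$), then feed the resulting $(hs,s)$-length $(\phi/\kappa)$-expander into \Cref{cor:witness exists} to get covering radius $hs/s=h$, $t_{\cal R}=s$, $h_\Pi=hs^2$, and $\kappa_\Pi=N^{O(1/s)}\poly(\log N)/\phi$. The bookkeeping on the linkedness hypothesis and the router edge count is also as in the paper, so there is nothing to add.
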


\begin{proof}
From \Cref{thm:expdecomp exist}, there exists a $hs^{2}$-moving cut $C$ of size at most $\phi|A|$ such that $A+\beta\deg_{C}$ is $(hs,s)$-length $(\phi/\kappa)$-expanding in $G-C$ where $\kappa=N^{O(1/s)}\log N$. By \Cref{cor:witness exists}, there is a \emph{$(h',t_{{\cal R}},h_{\Pi},\kappa_{\Pi})$-}witness $({\cal N},{\cal R},\Pi_{{\cal R}\to G})$ for $A+\beta\deg_{C}$ in $G-C$ where $h'=hs/s$, $t_{{\cal R}}=s$, $s_{\Pi}=hs^{2}$, $\kappa_{\Pi}=\frac{\kappa}{\phi}sN^{O(1/s)}\poly\log N$.
\end{proof}
The key subroutine that this paper relies on as a blackbox is an efficient parallel algorithm for computing a length-constrained expander decomposition $C$ for $A$, together with the expansion witness for $A$ in $G-C$.
\begin{thm}
[Algorithmic Witnessed Expander Decomposition from Theorem 1.1 of \cite{haeupler2023length}]
\label{thm:witness ED alg}Let $G$ be a graph with edge lengths with node-weighting $A$. Given parameters $(h,\phi,\beta,s)$ where $\beta\le1/(\phi\log N)$ and $s \leq \log^{c} N$ for some sufficiently small constant $c$, let $\eps = 1/s$. There exists an algorithm that computes an $h_{C}$-moving cut $C$ and a \emph{$(h,t_{{\cal R}},h_{\Pi},\kappa_{\Pi})$-witness} $({\cal N},\mathcal{R},\Pi_{{\cal R}\to G})$ of $A+\beta\deg_{C}$ in $G-C$ with the following guarantees: 
\begin{itemize}
\item $|C|\le\phi|A|$. 
\item The total number of edges in all routers is $|E({\cal R})|\le|E(G)|N^{\poly\eps}$. Moreover, $\Pi_{{\cal R}\to G}$ is an integral embedding with path count at most $|E(G)|N^{\poly\eps}$. 
\item $t_{{\cal R}}=s,$ $h_{C},h_{\Pi}\le h\cdot\exp(\poly(1/\eps))$, and $\kappa_{\Pi}=N^{\poly\eps}/\phi$. 
\end{itemize}
The algorithm takes $|E(G)|\poly(h)N^{\poly\eps}$ work and has depth $\poly(h)N^{\poly\eps}$.
\end{thm}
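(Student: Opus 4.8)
The plan is to obtain \Cref{thm:witness ED alg} almost entirely by invoking Theorem~1.1 of~\cite{haeupler2023length} as a black box, with only light repackaging of its output into the language of expansion witnesses (\Cref{def:expansion witness}) and a parameter-matching computation. Concretely, I would first instantiate that theorem with the given node-weighting $A$, length bound $h$, conductance $\phi$, linkedness $\beta$, and length slack $s$, using the hypothesis $s \le \log^{c} N$ (equivalently $\eps = 1/s \ge 1/\log^{c} N$) to land in the regime where the cited algorithm applies. This produces the moving cut $C$ with $|C| \le \phi|A|$ such that $A + \beta\deg_{C}$ is $(h,s)$-length $\phi$-expanding in $G-C$, together with the certifying combinatorial structure the algorithm maintains alongside the cut.

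Next I would argue that this certifying structure is exactly, or is trivially convertible into, an $(h, t_{{\cal R}}, h_{\Pi}, \kappa_{\Pi})$-witness: the neighborhood cover ${\cal N}$ is the cover used internally (covering radius $h$, with the stated diameter blowup), the collection ${\cal R}$ consists of the per-cluster routing gadgets on vertex sets $S$ for node-weightings $S \cap (A + \beta\deg_{C})$, and $\Pi_{{\cal R}\to G}$ is the flow routing the edge-demand $D_{E({\cal R})}$ into $G-C$ that the algorithm explicitly outputs. Since each router is a $(t_{{\cal R}},1)$-router with $t_{{\cal R}} = s$ and has near-constant per-vertex degree blowup, the total edge count $|E({\cal R})| \le |E(G)| N^{\poly\eps}$ follows, and integrality of the embedding together with path count $|E(G)| N^{\poly\eps}$ is part of the cited guarantee.

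The remaining work is purely to track how the slack parameters compose. Each length-slack factor introduced by the decomposition step and by the router/embedding step is of the form $s^{O(1)}$ or $N^{O(1/s)}$; multiplying the $O(1/\eps)$ many such factors that appear along the recursion of the cited algorithm yields the single bound $h_{C}, h_{\Pi} \le h \cdot \exp(\poly(1/\eps))$, while the congestion slacks multiply out to $\kappa_{\Pi} = N^{\poly\eps}/\phi$. The work bound $|E(G)|\poly(h)N^{\poly\eps}$ and depth $\poly(h)N^{\poly\eps}$ are read directly from Theorem~1.1 of~\cite{haeupler2023length}; the $\poly(h)$ dependence is precisely the feature we later remove via bootstrapping in \Cref{sec:bootstrap-emu}.

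I expect the only genuine subtlety — and the step most worth spelling out — to be confirming that the cited theorem really outputs the \emph{full} witness rather than merely the moving cut. If it outputs only $C$, one additionally needs an algorithmic version of \Cref{cor:witness exists}: compute ${\cal N}$ via \Cref{thm:cover basic}, build the routers via \Cref{lem:router expander}, and then \emph{algorithmically} embed their edge-demand into $G-C$ with length $hs$ and congestion $N^{\poly\eps}/\phi$. That embedding is an $h$-length-constrained flow on a length-constrained expander, which is exactly the primitive supplied by~\cite{haeupler2023length}, so invoking it here is not circular — but it does need to be stated as an explicit algorithmic consequence rather than as a definitional identity.
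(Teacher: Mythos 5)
Your proposal matches the paper's treatment: the paper states this theorem as an imported black box (``The key subroutine that this paper relies on as a blackbox...'') and gives no proof beyond the citation to Theorem~1.1 of \cite{haeupler2023length}, which is exactly the invocation-plus-repackaging you describe. Your closing caveat about whether the cited theorem outputs the full witness or only the moving cut is a fair point of diligence, but the paper resolves it simply by asserting that the cited algorithm does return the witness, so no additional argument is supplied or needed here.
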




\section{Length-Constrained Low-Step Emulators: Existence}

\label{sec:emu}

The goal of this section is to show that, given any graph $G$, there is another graph $G'$ such that any length-constrained multi-commodity flow in $G$ can be routed in $G'$ with approximately the same congestion \emph{and} length, and moreover such flow in $G'$ can be routed via paths containing few edges. The definition below formalize this idea. 
\begin{defn}
[Length-Constrained Low-Step Emulators]\label{def:h-len emu}Given a graph $G$ and a node-weighting $A$ in $G$, we say that $G'$ is an \emph{$h$-length-constrained $t$-step emulator} of $A$ with length slack $s$ and congestion slack $\kappa$ if 
\begin{enumerate}
\item \label{enu:uniform}Edges in $G'$ have uniform length of $h'=sh$. 
\item Given a flow $F$ in $G$ routing an $A$-respecting demand and $\leng_{F}\le h$, there is a flow $F'$ in $G'$ routing the same demand where $\congest_{F'}\le\kappa\cdot\congest_{F}$ and $\step_{F'}\le t$ (equivalently, $\leng_{F'}\le t\cdot sh$ by \Cref{enu:uniform}). 
\item $G'$ can be embedded into $G$ with congestion $1$ and length slack $1$. 
\end{enumerate}
\end{defn}

\begin{comment}
\thatchaphol{move congestion to the embedding from $G'$ to $G$ instead? In this way, it is intuitive clear that $G'$ is \textquotedbl{}bigger and more well-connected\textquotedbl{} than $G$.}
\end{comment}

The condition requiring that edges in $G'$ have uniform length is crucial. This will allow us to bootstrap the construction for\emph{ }$h$-length-constrained $t$-step emulator for large $h$ using based on the ones for small $h$. Our main technical contribution of this section is showing the existence of length-constrained low-step emulators. 

\begin{thm}
[Existential]\label{thm:h leng emu}Given any graph $G$ with $n$ vertices, a node-weighting $A$ of $G$, and parameters $h$ and $t$, there exists an $h$-length-constrained $O(t^{2})$-step emulator $G'$ for $A$ in $G$ with length slack $O(t^{2})$ and congestion slack $\poly(t\log N)N^{O(1/t)}$. The emulator contains $nN^{O(1/t)}\poly(t\log N)$ edges.
\end{thm}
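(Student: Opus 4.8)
The plan is to build $G'$ by recursion, applying the existential witnessed expander decomposition (\Cref{cor:witness ED exist}) at the top level and recursing on (a rescaling of) the degree of the resulting moving cut. Fix a length slack $s=\Theta(t)$, a conductance $\phi=N^{-\Theta(1/t)}$, a linkedness $\beta=\Theta(t^2)$, and set the uniform emulator edge length to be $h'=\Theta(s^2h)=\Theta(t^2h)$; here $\phi$ is chosen small enough that a node-weighting can shrink by a factor $\beta\cdot 2\phi=N^{-\Theta(1/t)}$ only $d=O(t)$ times before vanishing, so that the recursion below has depth $O(t)$. If $A=0$ we return the empty graph. Otherwise apply \Cref{cor:witness ED exist} to $A$ in $G$ with parameters $(\Theta(h),\phi,\beta,s)$, obtaining a moving cut $C$ with $|C|\le\phi|A|$ and a witness $({\cal N},{\cal R},\Pi_{{\cal R}\to G-C})$ of $A+\beta\deg_C$ in $G-C$: here ${\cal N}$ has covering radius $\Theta(h)$ in $G-C$, each router $R^S\in{\cal R}$ is an $s$-step $1$-router for $S\cap(A+\beta\deg_C)$ with $|E({\cal R})|\le nN^{O(1/t)}\poly\log N$ in total, and $\Pi_{{\cal R}\to G-C}$ has congestion $\kappa_\Pi=N^{O(1/t)}\poly\log N$ and length $h'$ (choose the input covering-radius parameter so that the resulting $h_\Pi$ equals our target $h'$). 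Let $A^+:=\beta\deg_C$; since $|A^+|=\beta\cdot 2|C|\le N^{-\Theta(1/t)}|A|$, recursively build an emulator $G'_1$ for $A^+$ in $G$ with the same $h,t$. Finally set
\[
G'\;:=\;\tfrac12\bigl({\cal R},\ \text{all capacities scaled by }1/\kappa_\Pi\text{ and all edge lengths reset to }h'\bigr)\ \cup\ \tfrac12\,G'_1 .
\]

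Every edge of $G'$ has length $h'=\Theta(t^2h)$, giving \Cref{enu:uniform} and length slack $\Theta(t^2)$. For property~3, $\Pi_{{\cal R}\to G-C}$ embeds ${\cal R}$ into $G-C$ with congestion $\kappa_\Pi$ and length $h_\Pi=h'$; as $G-C$ has the same edge set as $G$, only with larger lengths, scaling the capacities of ${\cal R}$ down by $1/\kappa_\Pi$ makes this an embedding of the ${\cal R}$-part of $G'$ into $G$ of congestion $1$ and — since each embedded path has $G$-length at most $h_\Pi=h'$ — length slack $1$; combined with the recursive embedding of $G'_1$ (congestion $1$, length slack $1$ by induction) and the two $\tfrac12$-scalings, $G'$ embeds into $G$ with congestion $\le1$ and length slack $1$. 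The remaining quantitative claims follow by a single induction on the recursion depth $d_A\le d=O(t)$: each level adds $|E({\cal R})|\le nN^{O(1/t)}\poly\log N$ edges and $O(\omega\kappa_\Pi)=N^{O(1/t)}\poly\log N$ to the congestion slack (where $\omega$ is the width of ${\cal N}$), so the totals are $nN^{O(1/t)}\poly(t\log N)$ and $N^{O(1/t)}\poly(t\log N)$.

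It remains to establish property~2. Let $F$ route an $A$-respecting demand $D$ with $\leng_F\le h$; by scaling we may assume $\congest_F=1$. Write $F=F_{\mathrm{good}}+F_{\mathrm{bad}}$ where $F_{\mathrm{good}}$ retains the flow-paths $P$ with $\sum_{e\in P}C(e)\le\tau:=\Theta(1/s^2)$. Since the moving cut lengthens each edge by $\Theta(h')\cdot C(e)$ and $h'\tau=\Theta(h)$, a good flow-path $P$ has $\l_{G-C}(P)\le\Theta(h)$, so its endpoints lie in a common cluster of ${\cal N}$; routing $D_{F_{\mathrm{good}}}$ cluster-by-cluster through the routers $R^S$ (the per-cluster demands being $\le D\le A$-respecting) uses $s$ steps and, after accounting for $\omega$, the $1/\kappa_\Pi$-scaling and the $\tfrac12$-scaling, congestion $O(\omega\kappa_\Pi)$. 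For $F_{\mathrm{bad}}$ (whose total value is at most $\tfrac1\tau\sum_eC(e)F(e)\le|C|/\tau$) we reroute each flow-path $P$ from $v$ to $w$ as three pieces: a router segment from $v$ to the vertex where the cumulative moving-cut weight along $P$ first reaches $\tau/2$, then a \emph{single} pass through $G'_1$, then a router segment back to $w$ from the symmetric vertex near the end of $P$. Both end segments have cumulative moving-cut weight $<\tau/2$, hence $\l_{G-C}<\Theta(h)$ and lie in clusters of ${\cal N}$, and the middle demand is $h$-length in $G$ as it is carried by a sub-flow of $F_{\mathrm{bad}}$. To bound the loads of the three induced demands we do not use a single pivot vertex but split $F(P)$ among the cut edges of the relevant prefix/suffix of $P$ in proportion to their moving-cut value; a routine charging then shows that the load any vertex receives from prefix/suffix endpoints is $O(1/\tau)\deg_C=\Theta(s^2)\deg_C\le\beta\deg_C$. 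Consequently the end-segment demands are $O(1)\cdot(A+\beta\deg_C)$-respecting (routable through the $R^S$) and the middle demand is $O(1)\cdot\beta\deg_C=O(1)\cdot A^+$-respecting (routable through $G'_1$, which is an emulator for $A^+$). Hence $\step_{F'}\le s+\step_{G'_1}+s$ and $\congest_{F'}\le O(\omega\kappa_\Pi)+O(\text{congestion slack of }G'_1)$; the induction on $d_A$ then yields $\step_{F'}\le 2s\cdot d_A=O(t^2)$ and congestion slack $N^{O(1/t)}\poly(t\log N)$, completing the plan.

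The single delicate point is the rerouting of $F_{\mathrm{bad}}$, where three budgets must line up at once. \emph{Length}: a moving cut may lengthen an edge by $\approx h'\gg$ any covering radius we can afford, so essentially no edge with non-negligible cut value may sit on a router-routed segment — all cut-crossing flow must be pushed into the recursion, which forces $\tau=\Theta(1/s^2)$ and forces every bad flow-path through $G'_1$. \emph{Charging}: the $\Theta(1/\tau)=\Theta(s^2)$ blow-up incurred by routing across small-weight cuts must be absorbed once, which is why we carry the extra linkedness $\beta\deg_C$ with $\beta=\Theta(t^2)$ and recurse on $A^+=\beta\deg_C$ rather than $\deg_C$, so that the demand handed to $G'_1$ is only $O(1)\cdot A^+$-respecting and no further factor is lost per level. \emph{Step/size}: each bad flow-path may traverse $G'_1$ only $O(1)$ times (here once) so that step-length and congestion grow \emph{additively} across levels, and the recursion must have depth only $O(t)$ (arranged by $\phi=N^{-\Theta(1/t)}$, which makes $|A^+|\le N^{-\Theta(1/t)}|A|$); $O(t)$ levels of additive growth of $O(s)=O(t)$ steps each then give the $O(t^2)$-step bound.
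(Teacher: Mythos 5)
Your proposal follows essentially the same route as the paper: the recursion on $A^{+}=\beta\deg_C$ is exactly the paper's $2t$-level loop ($A_i=\beta\deg_{C_i}$) unrolled, your good/bad split by the cut-weight $\sum_{e\in P}C(e)$ of each flow path is the paper's near/far split by $\dist_{G-C}(v,w)$, and your proportional splitting of $F(P)$ among the cut edges of the prefix/suffix — yielding received load $O(1/\tau)\deg_C\le O(\beta)\deg_C$ — is precisely the paper's $X_{v,P}$ construction and its charging $\sum_{e\ni x}F(e)C(e)h_C/h\le\beta\deg_C(x)$. The termination bound ($|A^{+}|\le 2\beta\phi|A|$, depth $O(t)$), the step accounting ($2s$ per level over $O(t)$ levels), and the backward embedding via $\Pi_{\mathcal R\to G-C}$ all match.

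The one concrete error is the capacity scaling $G'=\tfrac12(\mathcal R/\kappa_\Pi)\cup\tfrac12 G'_1$. This geometric weighting is fine for the backward embedding (the congestions $2^{-i}$ sum to $\le 1$), but it breaks the forward mapping: the demand handed to the depth-$i$ routers is routed there with congestion $O(1)$ before scaling, hence $O(2^{i}\kappa_\Pi)$ after, so over $d=\Theta(t)$ levels the congestion slack is $2^{\Theta(t)}\kappa_\Pi$ rather than the claimed $\poly(t\log N)N^{O(1/t)}$ — these differ once $t\gg\sqrt{\log N}$ — and it contradicts your own additive accounting ("each level adds $O(\omega\kappa_\Pi)$"). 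The fix is what the paper does: weight all $O(t)$ levels uniformly by $1/d$ (equivalently, scale every router's capacity down by $t\kappa_\Pi$), which keeps the backward congestion at $1$ while making the forward congestion the \emph{maximum} over the edge-disjoint per-level routers, namely $O(t\kappa_\Pi)$. With that change the argument goes through as in the paper.
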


We also give a parallel algorithm for constructing an emulator with a worse trade-off. 
\begin{thm}
[Algorithmic]\label{thm:h leng emu alg}There exists a parallel algorithm that, given any graph $G$ with $m$ edges, a node-weighting $A$ of $G$, and parameters $h$ and $\eps \in (\log^{-c} N, 1)$ for some sufficiently small constant $c$, computes an $h$-length-constrained $O(t^{2})$-step emulator $G'$ for $A$ in $G$ where $t=\exp(\poly1/\eps)$ with length slack $O(t^{2})$ and congestion slack $N^{\poly\eps}$. The emulator contains $mN^{\poly\eps}$ edges and the embedding $\Pi_{G'\rightarrow G}$ has path count $mN^{\poly\eps}$.

The algorithm has $m \cdot \poly(h)N^{\poly\eps}$ work and $\poly(h)N^{\poly\eps}$ depth. 
\end{thm}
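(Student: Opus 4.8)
The plan is to prove the existential statement (\Cref{thm:h leng emu}) by reducing to length-constrained witnessed expander decomposition, and then to obtain the algorithmic statement (\Cref{thm:h leng emu alg}) by replaying the same construction but feeding in the algorithmic witnessed expander decomposition of \Cref{thm:witness ED alg} in place of the existential one.

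\textbf{Step 1: Start from a witnessed expander decomposition.} Given $G$, $A$, $h$, $t$, set the length slack parameter of the decomposition to $s = \Theta(t)$ and conductance $\phi$ to something like $N^{-O(1/t)}/\poly(\log N)$ (it will essentially cancel against the congestion slack), and invoke \Cref{cor:witness ED exist} (for the existential theorem) with linkedness $\beta = \Theta(1/(\phi \log N))$. This yields a moving cut $C$ with $|C| \le \phi|A|$ and a $(h, t_{\cal R}, h_\Pi, \kappa_\Pi)$-witness $({\cal N}, {\cal R}, \Pi_{{\cal R}\to G})$ of $A + \beta \deg_C$ in $G - C$, with $t_{\cal R} = s$, $h_C, h_\Pi = hs^2$, $\kappa_\Pi = N^{O(1/s)}\poly(\log N)/\phi$, and $|E({\cal R})| \le n N^{O(1/s)}\poly(\log N)$.

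\textbf{Step 2: Define the emulator $G'$.} Take $G'$ to be (essentially) the disjoint union of all routers in ${\cal R}$, i.e.\ $V(G') = V(G)$ and $E(G') = \bigcup_{S \in {\cal N}} E(R^S)$, with all edge lengths reset to the uniform value $h' = sh$ (this gives Property~\ref{enu:uniform} of \Cref{def:h-len emu}; note each router has unit length and covering radius $\le h$, which we rescale). Each router edge $(v,w) \in E(R^S)$ embeds via $\Pi_{{\cal R}\to G}$ into a path of length $\le h_\Pi = hs^2$ in $G - C$, hence of length $\le hs^2$ in $G$ as well; scaled against $h' = sh$ this is length slack $O(s) = O(t)$ — but because we will need to also account for the cut edges, the honest bound will be length slack $O(t^2)$, matching the theorem statement. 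For Property~3 we need $G'$ embeddable into $G$ with congestion $1$ and length slack $1$: here we use that $\Pi_{{\cal R}\to G}$ has congestion $\kappa_\Pi$, so we scale down router capacities by $\kappa_\Pi$ (equivalently fold $\kappa_\Pi$ into the congestion slack $\kappa$) to make the embedding congestion-$1$, and the length slack is handled by the uniform length $h' = sh$ being large enough to dominate $h_\Pi$ times whatever per-edge slack we incurred.

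\textbf{Step 3: Verify flow routing (Property~2).} Let $F$ be a flow in $G$ routing an $A$-respecting $h$-length demand $D$ with congestion $\congest_F$. We must route $D$ in $G'$ with congestion $\le \kappa \cdot \congest_F$ and step-length $\le O(t^2)$. The natural argument: split $D$ according to whether each demand pair is ``cut'' by $C$ or not. For the uncut part (distance still $\le hs$ in $G - C$, say), the demand is $(A+\beta\deg_C)$-respecting and $hs$-length in $G-C$, so by the expansion-witness corollary it routes through the routers ${\cal R}$ with few steps ($O(t_{\cal R}) = O(s)$) — and routing through ${\cal R}$ is exactly routing in $G'$. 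For the part of $D$ that $C$ separates, charge it to $|C| \le \phi|A|$: since $C$ separates at most $\phi|A|$ worth of demand (using $\beta$-linkedness to recurse), we recurse on a demand of geometrically smaller total size, costing an extra $O(\log)$ or $O(1/\eps)$ factor in congestion and in steps — this is where the $O(t^2)$ (rather than $O(t)$) step bound and the $\poly(t\log N)$ factor in the congestion slack come from. The bookkeeping here — splitting the demand, bounding how the congestion compounds across recursion levels, and checking that the residual demand at each level stays $A$-respecting — is the main obstacle and needs to be done carefully, essentially mirroring how length-constrained expander decompositions are used to route arbitrary demands in the cited papers.

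\textbf{Step 4: The algorithmic version.} For \Cref{thm:h leng emu alg}, repeat Steps 1--3 verbatim but call \Cref{thm:witness ED alg} instead of \Cref{cor:witness ED exist}, with $\eps = 1/s$. This changes the quantitative bounds in the expected way: $|E({\cal R})| \le |E(G)| N^{\poly\eps}$ and $\Pi_{{\cal R}\to G}$ is an \emph{integral} embedding with path count $\le |E(G)| N^{\poly\eps}$; the length blowups become $h_C, h_\Pi \le h\cdot\exp(\poly(1/\eps))$, so the step bound becomes $t = \exp(\poly(1/\eps))$ and the length slack $O(t^2)$; the congestion slack becomes $N^{\poly\eps}$. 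The work and depth are inherited directly: the dominant cost is the one call to witnessed expander decomposition, which is $|E(G)|\poly(h)N^{\poly\eps}$ work and $\poly(h)N^{\poly\eps}$ depth, plus cheap post-processing (taking unions of routers, rescaling lengths/capacities, and — for the recursion in Step 3 — composing embeddings, which only costs path-count blowups already absorbed into $N^{\poly\eps}$). I expect no new obstacle here beyond propagating constants; the one thing to be careful about is that the recursion in Step 3, if it is genuinely recursive at the algorithmic level, must terminate in $O(1/\eps)$ levels and each level must reuse the \emph{same} decomposition call or a bounded number of calls, so that the total work stays $|E(G)|\poly(h)N^{\poly\eps}$ rather than blowing up by the recursion depth in a bad way.
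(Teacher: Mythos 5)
Your high-level strategy matches the paper's (reduce to witnessed expander decomposition, take the emulator to be a union of routers with uniform length, route-or-forward separated demand), but there is a genuine gap at the heart of the argument, and your Steps 1--2 describe a construction that cannot work as stated. You build $G'$ from a \emph{single} call to the witnessed expander decomposition, and only in Step 3 do you mention ``recursing'' on the separated demand. But the routers produced by that one call are routers for the node-weighting $A+\beta\deg_{C}$ restricted to clusters; the residual (forwarded) demand at the next level is supported on cut-incident vertices and is only $\beta\deg_{C}$-respecting, and after it is cut again by a \emph{new} moving cut the residual becomes $\beta\deg_{C_2}$-respecting for a cut $C_2$ of the residual weighting --- which requires a \emph{fresh} decomposition and fresh routers for that new node-weighting. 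The paper's construction (\Cref{alg:emu}) therefore makes $2t$ sequential calls to the witnessed expander decomposition on node-weightings $A_{i}=\beta\deg_{C_{i}}$ that shrink geometrically ($|A_i|\le|A|/N^{i/t}$, which is what guarantees termination after $2t$ levels), and the emulator is the union of routers over \emph{all} levels, with capacities scaled down by $t\kappa_\Pi$. Your Step 4 hedge (``reuse the same decomposition call or a bounded number of calls'') leaves exactly this unresolved; the answer is $O(t)$ calls, and the claim that ``the dominant cost is the one call'' is wrong as stated (though the final work bound survives since $O(t)\cdot m\poly(h)N^{\poly\eps}=m\poly(h)N^{\poly\eps}$).

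The second gap is that you defer the entire forward-mapping argument to ``bookkeeping that needs to be done carefully.'' This is the main technical content of the proof (\Cref{lem:forward map}): for a separated pair $(v,w)$ with flow path $P$, one uses that $C_i$ must increase $\ell(P)$ by at least $3h$ to extract prefix/suffix edge sets $E_{v,P},E_{w,P}$ whose cut values sum to at least $h$, forwards the demand fractionally to the cut-incident endpoints $X_{v,P},X_{w,P}$ in proportion to $C_i(e)h_C/h$, and verifies (i) these targets lie within distance $2h\le h_{\cov}$ of $v$ in $G-C_i$ so a common cluster and router exist, (ii) the received demand at each $x$ is at most $\beta\deg_{C_i}(x)$ (this is precisely where $\beta=h_C/h$ and the linkedness are used), and (iii) the residual demand is still routable in $G$ with congestion $1$ and length $h$ (via subpaths of $F$), not merely ``$A$-respecting'' as you write --- it must be $A_i$-respecting for the new weighting $A_i$. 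Without this invariant the congestion and step bounds across levels do not compose. Once the multi-level construction and \Cref{lem:forward map} are in place, your Step 4 (swapping in \Cref{thm:witness ED alg} with $\eps=1/s$) is indeed how the paper derives \Cref{thm:h leng emu alg} from the reduction \Cref{cor:reduc LC emu}.
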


\subsection{Construction and Analysis}

In this section, we show a construction of $h$-length-constrained $t$-step emulator.%
\begin{comment}
For any node weighting $A$, let $\router(A,t_{{\cal R}},\kappa_{\rt})$ denote a $t_{{\cal R}}$-step $\kappa_{\rt}$-congestion router for $A$ from \Cref{lem:router expander}. We assign in this section. 
\end{comment}

\begin{algorithm}
\begin{comment}
, $h_{\cov}\gets4h$, $\phi\gets1/2N^{1/t}$, $\beta\gets h_{C}/h_{\cov}$, $s_{\decomp}\gets t$.

$h_{\diam}\gets8th$, $t_{{\cal R}}\gets t$, $\beta\gets h_{\diam}s_{\decomp}/h=8t^{2}$
\end{comment}
Initialize $A_{0}\gets A$, $h_{\cov}\gets4h,$ $s_{\decomp}\gets t$, $\beta\gets t^{2},$ and $\phi\gets1/2\beta N^{1/t}.$ (The choice of $\beta$ is so that $\beta=h_{C}/h$).
\begin{enumerate}
\item For $1\le i\le2t$: 
\begin{enumerate}
\item Given node-weighing $A_{i-1}$, compute a witnessed expander decomposition $C_{i}$ and $({\cal N}_{i},\mathcal{R}_{i},\Pi_{{\cal R}_{i}\to G})$ with parameters $(h_{\cov},\phi,\beta,s_{\decomp})$ using \Cref{cor:witness ED exist}.
\begin{enumerate}
\item $C_{i}$ is a $h_{C}$-moving cut, and 
\item $({\cal N}_{i},\mathcal{R}_{i},\Pi_{{\cal R}_{i}\to G})$ is a \emph{$(h_{\cov},t_{{\cal R}},h_{\Pi},\kappa_{\Pi})$}-witness of $A_{i-1}+\beta\deg_{C_{i}}$ in $G-C_{i}$.%
\begin{comment}
\begin{itemize}
\item $|C_{i}|\le|A_{i}|/2N^{1/t}$. 
\item The total number of edges in all routers is $|E({\cal R})|\le n\cdot N^{O(1/t)}\poly(\log N)$. 
\item $t_{{\cal R}}=s$, $h_{\Pi}=hs^{2}$ and $\kappa_{\Pi}=N^{O(1/t)}\poly(\log N)/\phi$. 
\end{itemize}
\begin{enumerate}
\item Let $C_{i}$ be a $\beta$-linked $(h_{\diam},s_{\decomp})$-length $(\phi,\kappa_{\decomp})$-expander decomposition for $A_{i-1}$ from \Cref{thm:expdecomp exist} where $\kappa_{\decomp}=N^{O(1/t)}\log N$ and $\phi=\frac{1}{2\beta\kappa_{\decomp}N^{1/t}}$. 
\item Let $B_{i}=A_{i-1}+\beta\deg_{C_{i}}$ and $A_{i}=\beta\deg_{C_{i}}$. 
\item Let $(\N_{i},{\cal R}_{i},\Pi_{{\cal R}_{i}\rightarrow G})$ be the ED-witness of $C_{i}$ from \Cref{prop:witness exists} where 
\begin{itemize}
\item $\N_{i}$ is a neighborhood cover for $B_{i}$ in $G-C_{i}$ with covering radius $h_{\cov}$ and diameter $h_{\diam}$, and width $\omega=N^{O(1/t)}\log N$. 
\item For each cluster $S\in{\cal N}_{i}$, $R^{S}\in{\cal R}_{i}$ is a $t_{{\cal R}}$-step $1$-router for $B_{i,S}=S\cap(A_{i}+\beta\deg_{C})$ where  $\deg_{R^{S}}\le\Delta\cdot B_{i,S}$ entry-wise, $|E(H)|\le\Delta\cdot|\supp(B_{i,S})|$ and $\Delta\le tN^{O(1/t)}\log^{2}N$. 
\item The embedding $\Pi_{{\cal R}_{i}\to G}$ has length $h_{\diam}s_{\decomp}$ and congestion $\kappa_{\Pi}=O(\Delta\omega\log N/\phi)=\Otil(N^{O(1/t)}/\phi)$. 
\end{itemize}
\end{enumerate}
\end{comment}
\end{enumerate}
\item Set $B_{i}\gets A_{i-1}+\beta\deg_{C_{i}}$ and $A_{i}\gets\beta\deg_{C_{i}}$.
\item Set $E'_{i}\gets E({\cal R}_{i}):=\bigcup_{S\in\N_{i}}R^{S}$ scaled the capacity down by $t\kappa_{\Pi}$. 
\end{enumerate}
\item Return $G'$ where $E(G')=\cup_{i}E'_{i}$ and $\ell_{G'}(e)=h_{\Pi}$ for all $e\in E(G')$. 
\end{enumerate}
\caption{\label{alg:emu}$\textsc{Emulator}(G,t,h)$}
\end{algorithm}

\begin{comment}
\textbf{Can we set $\beta$ very close to $1/\phi?$ No, because we want each level to shrink.} 
\end{comment}

Let us start with basic technical observations on \Cref{alg:emu}. 

\begin{prop}
\label{prop:basics}We have the following: 
\begin{enumerate}
\item For all $i$ and $v$, $A_{i}(v)$ and $B_{i}(v)$ are non-negative multiples of $\frac{1}{h}$.%
\begin{comment}
Let $H_{i,S}=\router(B_{i,S},t_{{\cal R}},\kappa_{\rt})$. We have $\deg_{H_{i,S}}\le\Delta\cdot B_{i,S}$ where $\Delta=N^{O(1/t)}$. 
\end{comment}
\item For all $i\ge0$, we have $|A_{i}|\le|A|/N^{i/t}$. In particular, $|A_{2t}|=0$. %
\begin{comment}
For all $i\ge1$, we have $|B_{i}|\le2|A|/N^{(i-1)/t}$. 
\end{comment}
\begin{comment}
\label{enu:total vol}$|\deg_{G'}|\le4\Delta\omega|A|=|A|\cdot N^{O(1/t)}\poly(\log N)$. 
\end{comment}
\item \label{enu:total edge}$|E(G')|\le2t\cdot\mathrm{size}_{{\cal R}}$ where $\mathrm{size}_{{\cal R}}$ upper bounds the total number of edges in the routers. %
\begin{comment}
\label{enu:edge len cap}For all $e\in E(G')$, $u_{G'}(e)\in\frac{1}{h}\mathbb{Z}_{>0}$, given that $u_{G}(e)\in\frac{1}{h}\mathbb{Z}_{>0}$ for all $e\in E(G)$. 
\end{comment}
\end{enumerate}
\end{prop}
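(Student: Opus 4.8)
The plan is to prove the three items in turn, each by a short induction or counting argument that just unwinds the definitions of $A_i,B_i,C_i$ and the parameter choices $\beta=t^2$, $\phi=1/(2\beta N^{1/t})$, $h_C=\beta h$ made in \Cref{alg:emu}. Parts~1 and~3 are pure bookkeeping; Part~2 is a one-line geometric-decay induction whose only subtle point is upgrading ``small'' to ``zero''.

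\textbf{Part 1.} I would induct on $i$. The base case is $A_0=A$, which is integral (hence a multiple of $1/h$, as $h\in\integers_{>0}$) under the integrality convention, with $B_0$ undefined. For the step: $C_i$ is an $h_C$-moving cut, so each $C_i(e)\in\{0,\tfrac1{h_C},\dots,1\}$, and since $u_G(e)\in\integers_{\ge 0}$ we get $\deg_{C_i}(v)=\sum_{e\ni v}u_G(e)C_i(e)\in\tfrac1{h_C}\integers_{\ge 0}$. By the choice $\beta=h_C/h$, the weighting $A_i=\beta\deg_{C_i}=\tfrac{h_C}{h}\deg_{C_i}$ then lies in $\tfrac1h\integers_{\ge 0}$, and $B_i=A_{i-1}+\beta\deg_{C_i}$ is a sum of two weightings in $\tfrac1h\integers_{\ge0}$ (the first by induction), hence is too.

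\textbf{Part 2.} Induct on $i$ with base case $|A_0|=|A|$. Summing $\deg_{C_i}$ over $V$ counts every edge at most twice, so $|\deg_{C_i}|\le 2|C_i|$; the size guarantee of \Cref{cor:witness ED exist}, applied in iteration $i$ with node-weighting $A_{i-1}$, gives $|C_i|\le\phi|A_{i-1}|$; and $\phi=1/(2\beta N^{1/t})$ yields
\[
|A_i|=\beta|\deg_{C_i}|\le 2\beta|C_i|\le 2\beta\phi|A_{i-1}|=\frac{|A_{i-1}|}{N^{1/t}}\le\frac{|A|}{N^{i/t}}.
\]
For $i=2t$ this is $|A_{2t}|\le|A|/N^{2}$. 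To get $|A_{2t}|=0$, note any nonzero $h_C$-moving cut has size $\ge 1/h_C$ (as $u_G\ge 1$ is integral and $C(e)$ is a multiple of $1/h_C$); since $|C_{2t}|\le\phi|A_{2t-1}|\le\phi|A|/N^{2-1/t}=|A|/(2\beta N^{2})<1/(\beta h)=1/h_C$ whenever $N^{2}>|A|h/2$ — which holds since $|A|$ and $h$ are polynomially bounded and $N$ is a sufficiently large polynomial — we conclude $C_{2t}=0$, hence $A_{2t}=\beta\deg_{C_{2t}}=0$. (Equivalently, by Part~1 $|A_{2t}|$ is a non-negative multiple of $1/h$ that is $<1/h$, so it vanishes.)

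\textbf{Part 3 and the main obstacle.} The loop of \Cref{alg:emu} runs for $i=1,\dots,2t$, and iteration $i$ contributes $E'_i$, which as a multiset of edges equals $E(\mathcal R_i)=\bigcup_{S\in\mathcal N_i}R^S$ (rescaling capacities by $t\kappa_\Pi$ changes no edge counts), so $|E(G')|\le\sum_{i=1}^{2t}|E'_i|=\sum_{i=1}^{2t}|E(\mathcal R_i)|\le 2t\cdot\mathrm{size}_{\mathcal R}$, with $\mathrm{size}_{\mathcal R}$ the per-iteration bound $|E(\mathcal R_i)|\le nN^{O(1/t)}\poly(\log N)$ from \Cref{cor:witness ED exist} (the same in every iteration as $G$ and the parameters are fixed). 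The only step that needs care is the ``$|A_{2t}|=0$'' conclusion, which is where Part~1's integrality, the decay of Part~2, and the polynomial size bounds must be combined; I would also double-check that the constants are threaded so that $h_C=\beta h$ holds exactly, since Part~1 relies on $\beta/h_C=1/h$ being an integer reciprocal.
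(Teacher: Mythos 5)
Your proposal is correct and follows essentially the same route as the paper: induction on the granularity $1/h$ for Part~1, the geometric-decay chain $|A_i|\le 2\beta\phi|A_{i-1}|=|A_{i-1}|/N^{1/t}$ for Part~2, and counting the $2t$ levels for Part~3. The paper concludes $|A_{2t}|=0$ exactly via your parenthetical remark (a non-negative multiple of $1/h$ that is less than $1/h$ must vanish), so your primary detour through the granularity of $C_{2t}$ is an unnecessary but harmless variant.
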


\begin{proof}
(1): We assume that $A_{0}$ is integral. For $i\ge1$, $C_{i}(e)$ is a non-negative multiple of $\frac{1}{h_{C}}$ for all $e$ as $C_{i}$ is a $h_{C}$-moving cut. By induction, $A_{i}(v)$ and $B_{i}(v)$ are non-negative multiples of $\frac{\beta}{h_{C}}=\frac{1}{h}$. %
\begin{comment}
(2): By \Cref{lem:router expander} and (1), we have $\deg_{H_{i,S}}\le B_{i,S}\cdot(h|B_{i,S}|)^{O(1/t)}$ entry-wise since $B_{i}(v)\in\frac{1}{h}\mathbb{Z}_{>0}$. The bound on $\Delta$ follows using the trivial bound $|B_{i,S}|\le N^{O(1)}$. 
\end{comment}

(2): For $i=0$, this holds by the assumption. For $i\ge1$, we have that $|C_{i}|\le\phi|A_{i-1}|$ by \Cref{cor:witness ED exist} and so 
\[
|A_{i}|\le\beta|\deg_{C_{i}}|=2\beta|C_{i}|\le2\beta\phi|A_{i-1}|=|A_{i-1}|/N^{1/t}\le|A|/N^{i/t}.
\]
by the choice of $\phi$ and by induction hypothesis. Since we have $|A_{2t}|\le N^{1-2t/t}<\frac{1}{h}$, it follows that $|A_{2t}|=0$ by (1).

(3): This is because there are at most $2t$ levels.
\end{proof}

We show that any demand in $G$ can be routed in $G'$ with small congestion, length, and steps. This is the key technical lemma and we defer the proof to \Cref{sec:emu proof forward map}. 
\begin{lem}
[Forward Mapping]\label{lem:forward map}Let $F$ be a flow in $G$ where $D_{F}$ respects $A$ such that $\congest_{F}=1$ and $\leng_{F}\le h$. There is a flow $F'$ routing $D_{F}$ in $G'$ with $\congest_{F'}\le t\kappa_{\Pi}=N^{O(1/t)}\poly(t\log N)$ and $\step_{F'}\le O(t\cdot t_{{\cal R}})=O(t^{2})$. 
\end{lem}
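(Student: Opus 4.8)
The plan is to iteratively peel off the demand $D_F$ level by level, mirroring the structure of \Cref{alg:emu}. At the heart of the argument is the following invariant: after processing levels $1, \dots, i$, we will have routed a portion of $D_F$ in $G'$ using only edges of $E'_1, \dots, E'_i$, and we will be left with a ``residual'' flow in $G$ whose demand respects $A_i = \beta \deg_{C_i}$ and which is still $h$-length-constrained and has congestion $1$ (after rescaling). Since $|A_{2t}| = 0$ by \Cref{prop:basics}(2), after $2t$ levels there is no residual demand left, so the union of the level-by-level flows routes all of $D_F$ in $G'$.

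The key step at each level $i$ is the following. We have a flow $\hat F_{i-1}$ in $G$ routing a demand $D_{i-1}$ which respects $A_{i-1}$, is $h$-length, and has congestion $\le 1$. Apply the witnessed expander decomposition from level $i$: $C_i$ is an $h_C$-moving cut with $|C_i| \le \phi |A_{i-1}|$, and $({\cal N}_i, {\cal R}_i, \Pi_{{\cal R}_i \to G})$ is a witness of $B_i = A_{i-1} + \beta \deg_{C_i}$ in $G - C_i$. Now I would split $D_{i-1}$ into two parts based on $C_i$: the part $D_{i-1}^{\mathrm{cut}}$ consisting of demand pairs $(v,w)$ with $\dist_{G - C_i}(v,w) > h_{\cov}$ (i.e.\ those ``separated'' by $C_i$ at the relevant length scale) and the part $D_{i-1}^{\mathrm{safe}}$ consisting of the rest. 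For the safe part, every demand pair is within distance $h_{\cov}$ in $G - C_i$, and $D_{i-1}^{\mathrm{safe}}$ respects $B_i$ (since it respects $A_{i-1} \le B_i$), so by the ``Expansion Witness Certifies Expansion'' corollary it can be routed through the routers ${\cal R}_i$ with $t_{\cal R}$ steps and congestion $1$; pushing this routing onto the emulator edges $E'_i$ (which are exactly the router edges, with capacity scaled down by $t\kappa_\Pi$) gives a flow on $G'$ of $t_{\cal R}$ steps — this is the portion of $D_F$ we route at level $i$, and the capacity scaling is what will ultimately control the final congestion. For the cut part, I claim $\load(D_{i-1}^{\mathrm{cut}}) \le \beta \deg_{C_i} = A_i$ up to constants: intuitively, each separated demand pair must ``use up'' capacity of $C_i$ proportional to its demand, because any $h_{\cov}$-length path between the pair in $G - C_i$ is blocked, yet $\hat F_{i-1}$ routes it along an $h$-length path in $G$; the moving-cut accounting (size $|C_i|$ versus separated demand, together with the $\beta = h_C/h$ linkedness) forces the incident cut-degree to absorb this demand. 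This residual demand $D_i := D_{i-1}^{\mathrm{cut}}$ then respects $A_i$, remains $h$-length and (after rescaling by a constant) congestion-$1$ in $G$, so the invariant is restored and we proceed to level $i+1$.

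Summing over the $2t$ levels: the step-length of $F'$ is $O(t \cdot t_{\cal R}) = O(t^2)$ since each level contributes at most $t_{\cal R} = s = t$ steps and there are $2t$ levels (we concatenate at most $2t$ router-routings, but more carefully each piece of demand is routed entirely within one level, so its path has only $t_{\cal R}$ steps — I need to be careful here whether the residual decomposition forces concatenation across levels or not; if a demand pair's flow is entirely handled at a single level the bound is just $t_{\cal R}$, but the safe-part routing at level $i$ still needs to be realized by emulator edges, giving $\step_{F'} \le O(t^2)$ in the worst case where residual flow must be re-routed). For congestion: at each level the routing through ${\cal R}_i$ has congestion $1$ on the routers \emph{before} scaling; after scaling edge capacities down by $t\kappa_\Pi$, a unit-congestion router routing translates to congestion $t \kappa_\Pi$ on $E'_i$, but since the $2t$ levels use disjoint edge sets $E'_i$, the congestion does not accumulate across levels — wait, we also need the rescaling-by-constant at each level for the residual not to blow up, so the final bound is $\congest_{F'} \le O(t) \cdot \kappa_\Pi \cdot (\text{const})^{?}$; the clean statement is $\congest_{F'} \le t\kappa_\Pi$ as claimed, which works out because the constant factors in the residual rescaling are absorbed and the edge sets across levels are disjoint.

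The main obstacle I anticipate is the precise accounting for the cut part: showing that $\load(D_{i-1}^{\mathrm{cut}}) \lesssim \beta \deg_{C_i}$ with the right constants, so that $D_i$ genuinely respects $A_i = \beta \deg_{C_i}$ (not some larger multiple that would break the geometric-decay argument $|A_i| \le |A|/N^{i/t}$). This requires carefully relating the original flow $\hat F_{i-1}$ — which routes separated pairs along short paths in $G$ that \emph{must} cross cut edges of $C_i$ — to the cut-degrees, and the linkedness parameter $\beta = h_C/h$ is exactly tuned for this. A secondary subtlety is bookkeeping the length/congestion slack constants through $2t$ iterations so they telescope rather than compound, which is why the choice $\phi = 1/(2\beta N^{1/t})$ and the factor-$2$ slack per level (from $|\deg_{C_i}| = 2|C_i|$ and the power-of-two rounding) must be tracked exactly.
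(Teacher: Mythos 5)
Your overall skeleton matches the paper's proof: peel the demand level by level, route the pairs that stay close in $G-C_i$ through the routers of the witness, and pass the separated pairs down as a residual demand that respects $A_i=\beta\deg_{C_i}$, terminating after $2t$ levels because $|A_{2t}|=0$. But the step you flag as ``the main obstacle'' is where your argument genuinely breaks, and your proposed resolution does not work.

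You define the residual as $D_i:=D_{i-1}^{\mathrm{cut}}$, i.e.\ the separated demand \emph{between the original endpoints} $(v,w)$, and claim $\load(D_{i-1}^{\mathrm{cut}})\lesssim\beta\deg_{C_i}$. This is false pointwise: a separated pair $(v,w)$ can easily have $\deg_{C_i}(v)=\deg_{C_i}(w)=0$ (the cut edges responsible for the separation can lie in the middle of the flow path, far from both endpoints), in which case $A_i(v)=A_i(w)=0$ while $D_i$ still places demand on $v$ and $w$. A separation argument only bounds the \emph{total} separated demand by $O(\beta|C_i|)$; it cannot give the per-vertex bound $\load(D_i)\le A_i$ that the next level's witnessed decomposition (computed for node-weighting $A_i$) requires, and without it both the recursion and the geometric decay $|A_i|\le|A|/N^{i/t}$ collapse. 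The paper's fix is precisely to \emph{move} the endpoints: for each separated flow path $P$ of $F_{i-1}$ it identifies cut-incident vertex sets $X_{v,P},X_{w,P}$ on $P$ near $v$ and $w$ (where $C_i$ has accumulated length increase at least $h$), routes a ``forwarding'' flow from $v$ to $X_{v,P}$ (and $w$ to $X_{w,P}$) \emph{inside the level-$i$ routers} — possible because $X_{v,P}\subseteq\ball_{G-C_i}(v,2h)$ and these vertices carry $\beta\deg_{C_i}$ weight in $B_i$ — and defines $D_i$ as the demand between $X_{v,P}$ and $X_{w,P}$. The per-vertex bound $\sum_v D_i(\cdot,x)\le\sum_{e\ni x}F_{i-1}(e)C_i(e)h_C/h\le\beta\deg_{C_i}(x)$ then follows from $\congest_{F_{i-1}}\le1$ and $\beta=h_C/h$.

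Two downstream consequences you were unsure about are settled by this construction. First, concatenation across levels is forced (each forwarding hop costs $t_{\cal R}$ steps in $G'$, and a pair may be forwarded up to $2t$ times), which is exactly why the bound is $O(t\cdot t_{\cal R})$ and not $t_{\cal R}$. Second, no per-level constant rescaling is needed: $D_i$ is routable along \emph{subpaths} of the flow paths of $F_{i-1}$, so it inherits congestion exactly $1$ and length at most $h$, and nothing compounds over the $2t$ iterations.
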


Next, we show an embedding from $G'$ into $G$. 
\begin{lem}
[Backward Mapping]\label{lem:backward map}There exists an embedding $\Pi_{G'\rightarrow G}$ from $G'$ into $G$ with length slack $1$ and congestion $1$.%
\end{lem}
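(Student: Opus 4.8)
The plan is to construct the embedding $\Pi_{G' \to G}$ by combining the router embeddings $\Pi_{\mathcal{R}_i \to G-C_i}$ produced at each of the $2t$ levels of \Cref{alg:emu}, and then ``undoing'' the moving cuts to pull these embeddings back into $G$ itself. Recall that $E(G') = \bigcup_{i} E'_i$, where $E'_i = E(\mathcal{R}_i)$ with capacities scaled down by a factor of $t\kappa_\Pi$. For each level $i$, the witnessed expander decomposition from \Cref{cor:witness ED exist} gives us an embedding $\Pi_{\mathcal{R}_i \to G-C_i}$ of $\mathcal{R}_i$ into $G - C_i$ with length $h_\Pi$ and congestion $\kappa_\Pi$. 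So first I would take, for each edge $e \in E'_i$ of $G'$, the flow-path(s) that $\Pi_{\mathcal{R}_i \to G-C_i}$ assigns to the corresponding router edge, giving a $(v,w)$-flow in $G-C_i$ of value $u_{\mathcal{R}_i}(v,w)$; after the capacity scale-down, this is value $u_{G'}(e) \cdot (t\kappa_\Pi)$ worth of embedding paths per unit of $G'$-capacity, i.e. we interpret it as a $(v,w)$-flow of value $u_{G'}(e)$ scaled up by $t\kappa_\Pi$. The superposition over all edges of $E'_i$ has congestion $\kappa_\Pi$ in $G-C_i$; hence, after dividing by the scale-down factor $t\kappa_\Pi$, the $G'$-edges of level $i$ embed into $G - C_i$ with congestion $1/t$.

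Second, since the edge set of $G-C_i$ is the same as that of $G$ (a moving cut only increases lengths, it does not remove edges), each level-$i$ embedding is literally a flow in $G$ as well. Summing over all $2t$ levels, the total congestion in $G$ is at most $2t \cdot (1/t) = 2$; rescaling constants (or being slightly more careful with the scale-down factor, e.g. using $2t\kappa_\Pi$) yields congestion $\le 1$ as required. For the length-slack bound: every edge $e \in E(G')$ has length $\ell_{G'}(e) = h_\Pi$, and its embedding path $p_e$ has length $\ell_{G-C_i}(p_e) \le h_\Pi$ by the length guarantee of the witness. The only subtlety is that we need length at most $h_\Pi$ \emph{in $G$}, not in $G-C_i$, and since $\ell_G(e') \le \ell_{G-C_i}(e')$ for every edge $e'$, we get $\ell_G(p_e) \le \ell_{G-C_i}(p_e) \le h_\Pi = \ell_{G'}(e)$, so the length slack is $1$.

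The main obstacle I anticipate is bookkeeping around the capacity scale-down factor and making sure the ``congestion $1$'' target is met exactly rather than up to a constant: the algorithm scales router capacities down by exactly $t\kappa_\Pi$, and there are $2t$ levels, so naively the congestion is $2$. This is cosmetic — one can either absorb the factor $2$ into the definition (Definition~\ref{def:h-len emu} only needs congestion $O(1)$ in downstream uses, or one rescales by $2t\kappa_\Pi$), but the proof should address it cleanly. A secondary point worth stating explicitly is that the routers $R^S$ live on vertex sets $S \subseteq V(G)$, so $V(G') \subseteq V(G)$ and the embedding is well-defined as a flow in $G$; and that the embeddings at different levels are genuinely simultaneous flows in the \emph{same} graph $G$ (their lengths are measured against the original $\ell_G$, which is only smaller than each $\ell_{G-C_i}$), so summing them is legitimate. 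I would also remark that if one wants an \emph{integral} embedding with small path count (as in the algorithmic \Cref{thm:h leng emu alg}), one simply inherits integrality and the path-count bound from the integral witness embedding of \Cref{thm:witness ED alg}, with no additional work.
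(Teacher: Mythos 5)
Your proposal is correct and matches the paper's own proof essentially verbatim: the paper likewise defines $\Pi_{G'\to G}=\sum_i \Pi_{\mathcal{R}_i\to G}/(t\kappa_\Pi)$, gets length slack $1$ from the fact that every edge of $G'$ has length exactly $h_\Pi$ while each embedding has length $h_\Pi$, and gets congestion $1$ from the capacity scale-down. The factor-of-$2$ bookkeeping issue you flag (there are $2t$ levels but the scale-down is only $t\kappa_\Pi$) is in fact present and silently glossed over in the paper's proof as well, and your observation that $\ell_G \le \ell_{G-C_i}$ is needed to transfer the length bound from $G-C_i$ to $G$ is a point the paper omits but that you handle correctly.
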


\begin{proof}
For each level $i$, there is an embedding $\Pi_{{\cal R}_{i}\rightarrow G}$ from ${\cal R}_{i}$ into $G$ with congestion $\kappa_{\Pi}$ and length $h_{\Pi}$. Recall that $E(G')=\cup_{i=1}^{2t}E({\cal R}_{i})$ where the capacity is scaled down by $t\kappa_{\Pi}$. Define the embedding $\Pi_{G'\rightarrow G}=\sum_{i}\Pi_{{\cal R}_{i}\rightarrow G}/(\kappa_{\Pi}t)$ scaled down by $t\kappa_{\Pi}$. 

The length slack of $\Pi_{G'\rightarrow G}$ is $1$ because the length of $\Pi_{{\cal R}_{i}\rightarrow G}$ is $h_{\Pi}$ for every $i$ but we have $\ell_{G'}(e)=h_{\Pi}$ for all $e\in E(G')$. The congestion of $\Pi_{G'\rightarrow G}$ is $1$ because the congestion of $\sum_{i}\Pi_{{\cal R}_{i}\rightarrow G}$ is at most $\kappa_{\Pi}t$ but we scaled down by the flow by $\kappa_{\Pi}t$. 
\end{proof}
Now, we are ready to prove \Cref{thm:h leng emu}.

\paragraph{Proof of \Cref{thm:h leng emu}.}
\begin{proof}
Let $G'=\textsc{Emulator}(G,t,h)$ be the output of \Cref{alg:emu}. By \Cref{lem:forward map} and \Cref{lem:backward map} immediately implies that $G'$ is an $h$-length-constrained $O(t^{2})$-step emulator for $G$ with congestion $t\kappa_{\Pi}=\poly(t\log N)N^{O(1/t)}/\phi=\poly(t\log N)N^{O(1/t)}$ because $\phi^{-1}=2\beta N^{1/t}=\Otil(t^{2}N^{O(1/t)})$. The length slack is $s=t^{2}$ because edges in $G'$ have length $h_{\Pi}=t^{2}h$ by construction. The bound on $|E(G')|$ follows \Cref{prop:basics} (\Cref{enu:total edge} and \Cref{cor:witness ED exist}.
\end{proof}

\subsection{Proof of \Cref{lem:forward map}: Forward Mapping}

\label{sec:emu proof forward map}

\paragraph{Strategy.}

Our strategy is to construct the flow $F'$ that routes $D_{F}$ in $G'$ incrementally. More concretely, let $D_{0}=D_{F}$. For each $i\ge1$, we will construct a flow $F'_{i}$ that partially routes $D_{i-1}$ in $G'$ using only edges from $E'_{i}$ so that the remaining demand is $D_{i}$. After $i>2t$, we have $D_{i}=0$, i.e., there is no remaining demand. By combining and concatenating these flows $F'_{i}$ for all $i\ge1$, we will obtain $F'$ routing $D_{F}$ in $G'$ with the desired properties.

We will maintain the following invariant, for all $i\ge0$, 
\begin{enumerate}
\item $D_{i}$ is $A_{i}$-respecting, and 
\item $D_{i}$ is routable in $G$ with congestion $1$ and length $h$. 
\end{enumerate}
Let us check that the invariant holds for $i=0$. First, $D_{0}$ respects $A_{0}$ because $D_{F}$ respects $A$ by assumption. Second, $D_{0}$ is routable in $G$ with congestion $1$ and length $h$ because $\congest_{F}=1$ and $\leng_{F}\le h$ by assumption. For $i\ge1$, assuming that the invariant holds for $i-1$, we will construct the flow $F'_{i}$ that partially routes $D_{i-1}$ so that the invariant holds for $i$. We will then argue why the invariant for all $i$ implies that our final flow $F'$ has the desired properties.

\paragraph{Construct $F'_{i}$.}

The high-level idea is that we try to send a packet from $v$ to $w$ for each demand pair $(v,w)$ of $D_{i-1}$. If $\dist_{G-C_{i}}(v,w)\le h_{\cov}$, then we will successfully route this packet via some router and be done with it. Otherwise, $\dist_{G-C_{i}}(v,w)>h_{\cov}$. In this case, for each $(v,w)$-flow path $P$, we will carefully identify a set of vertices $X_{v,P}$ and fractionally route the packet from $v$ to $X_{v,P}$. Similarly, from another end, the packet $w$ is routed to a set $X_{w,P}$ that we carefully define. We think of $v$ ``forward'' the packet to $X_{v,P}$ and $w$ ``forward'' the packet to $X_{w,P}$. The demand between $X_{v,P}$ and $X_{w,P}$ will induce the demand $D_{i}$ in the next level. Below, we explain this high-level idea in detail.

For each demand pair $(v,w)$ of $D_{i-1}$ where $\dist_{G-C_{i}}(v,w)\le h_{\cov}$, there must exist a cluster $S\in\cluster(\N_{i})$ where $v,w\in S$. We \emph{assign} the pair $(v,w)$ to such arbitrary cluster $S$. For each cluster $S\in\cluster(\N_{i})$, we route in $G'$ all demands $D_{i-1}(v,w)$ for all pairs $(v,w)$ assigned to $S$ using the edges of router $R^{S}$ for $B_{i,S}$. Let $F_{S}^{\done}$ denote the flow in $G'$ between vertices in $S$ induced by the above routing. Let $D_{S}^{\mathrm{done}}$ be the demand routed by $F_{S}^{\done}$.

Now, we take care of the remaining demand pair $(v,w)$ of $D_{i-1}$ where $\dist_{G-C_{i}}(v,w)>h_{\cov}=4h$. Let $F_{i-1}$ be a flow routing $D_{i-1}$ in $G$ with congestion $1$ and length $h$, whose existence is guaranteed by the invariant. Consider any $(v,w)$-flow-path $P$ of $F_{i-1}$. Since $\dist_{G-C_{i}}(v,w)>4h$ but $\l_{G}(P)\le\leng_{F_{i-1}}\le h$, $C_{i}$ must increase the length $P$ by least $3h$. %
\begin{comment}
Recall that the length increase on edge $e$ by $C_{i}$ is $C_{i}(e)h_{C}$ as $C_{i}$ is a $h_{C}$-length moving cut.
\end{comment}

Let $E_{v,P}$ the minimal edge set in $P$ closest to $v$ such that the total length increase by $C_{i}$ is at least $h$, i.e., $\sum_{e\in E_{v,P}}C_{i}(e)h_{C}\ge h$. Next, we define the vertex set $X_{v,P}\subseteq P$ as follows. For each $(x,y)\in E_{v,P}$ where $x$ is closer to $v$, if $C_{i}(x,y)>0$, we include $x$ into set $X_{v,P}$. Let us denote $e_{x}=(x,y)\in E_{v,P}$ as the edge corresponding to $x\in X_{v,P}$. Observe that, for any $x\in X_{v,P}$, 
\[
\dist_{G-C_{i}}(v,x)\le\l_{G}(P)+h\le2h\le h_{\cov}
\]
That is, $X_{v,P}\subseteq\ball_{G-C_{i}}(v,h_{\cov})$ and so there exists a cluster $S\in\cluster(\N_{i})$ containing both $v$ and $X_{v,P}$. For each $x\in X_{v,P}$, we route flow in $G'$ from $v$ to $x$ of value at most 
\[
F_{i-1}(P)\cdot\frac{C_{i}(e_{x})h_{C}}{h}
\]
via router $R^{S}$. Note that $v,x\in R^{S}$ which is a router for $S\cap(A_{i-1}+\beta\deg_{C_{i}})$ because $A_{i-1}(v)>0$ and $\deg_{C_{i}}(x)>0$. Since $\sum_{x\in X_{v,P}}C_{i}(e_{x})h_{C}\ge h$ by definition, we can route flow from $v$ to $X_{v,P}$ of total value $F_{i-1}(P)$. Symmetrically, we define $E_{w,P}$ and $X_{w,P}$ . Observe that $X_{v,P}$ and $X_{w,P}$ are disjoint because they are defined based on vertices closest to $v$ and $w$, respectively. We route flow in $G'$ from $w$ to $X_{w,P}$ of total value $F_{i-1}(P)$ via $R^{S}$ where $S\in\cluster(\N_{i})$ is a cluster containing both $w$ and $X_{w,P}$%
\begin{comment}
Let $x_{v}$ be the vertex in $P$ closest to $v$ where $\deg_{C_{i}}(x_{v})>0$. Let $P_{v}$ denote the subpath of $P$ from $v$ to $x_{v}$. By the choice of $x_{v}$, $C_{i}$ cannot increase the length of any edge in $P_{v}$ and so $\dist_{G-C_{i}}(v,x_{v})\le\l_{G}(P_{v})\le h=h_{\cov}$. So there exists a cluster $S\in\cluster(\N_{i})$ containing both $v$ and $x_{v}$. We route in $G'$ from $v$ to $x_{v}$ with value $F_{i-1}(P)$ via $\router(B_{i,S},t_{{\cal R}},\kappa_{\rt})$. Symmetrically, we define $x_{w}$ and $P_{w}$.

Then, we route in $G'$ from $w$ to $X_{w,P}$ in the symmetric way. 
\end{comment}

For each cluster $S\in\cluster(\N_{i})$, let $F_{S}^{\forward}$ be the flow in $G'$ induced by the routing described above, which routes from vertices positive demand in $D_{i-1}$ to the vertices incident to the cut $C_{i}$. Let $D_{S}^{\forward}$ be the demand routed by $F_{S}^{\forward}$.

Finally, we define the flow $F'_{i}=\sum_{S\in\cluster(\N_{i})}F_{S}^{\done}+F_{S}^{\forward}$ in $G'$ that routes $D_{S}^{\mathrm{done}}$ and $D_{S}^{\mathrm{forward}}$ for all $S\in\cluster(\N_{i})$ in the manner described above.

\paragraph{$D_{S}^{\mathrm{done}}$ and $D_{S}^{\mathrm{forward}}$ are $B_{i,S}$-respecting.}

Here, we argue that both $D_{S}^{\mathrm{done}}$ and $D_{S}^{\mathrm{forward}}$ are $B_{i,S}$-respecting. This is useful because, by \Cref{lem:router expander}, it means that both $D_{S}^{\mathrm{done}}$ and $D_{S}^{\mathrm{forward}}$ can be routed in $R^{S}\subseteq G'$ using $t_{{\cal R}}$ steps and congestion $1$.

It suffices to show that $D_{S}^{\mathrm{done}}$ and $D_{S}^{\mathrm{forward}}$ are $B_{i}$-respecting because their supports are only on the pairs of vertices inside $S$. This is easy to argue for $D_{S}^{\mathrm{done}}$. We have $D_{S}^{\mathrm{done}}\le D_{i-1}$ (entry-wise), $D_{i-1}$ is $A_{i-1}$-respecting, and $A_{i-1}\le B_{i}$ (entry-wise). So $D_{S}^{\mathrm{done}}$ is $B_{i}$-respecting. Next, we analyze $D_{S}^{\mathrm{forward}}$. On one hand, the total demand that each vertex $v$ may send out is $\sum_{x}D_{S}^{\mathrm{forward}}(v,x)\le\sum_{x}D_{i-1}(v,x)\le A_{i-1}(v)$ because $D_{i-1}$ is $A_{i-1}$-respecting by the invariant. On the other hand, we claim that the total demand that each vertex $x$ may receive is $\sum_{v}D_{S}^{\mathrm{forward}}(v,x)\le\beta\deg_{C_{i}}(x).$ Since $B_{i}=A_{i-1}+\beta\deg_{C_{i}}$, we also have $D_{S}^{\mathrm{forward}}$ is $B_{i}$-respecting.

Now, we prove the claim. The key observation is $\sum_{v}D_{S}^{\mathrm{forward}}(v,x)$ is at most 
\[
\sum_{e:\text{incident to }x}F_{i-1}(e)\frac{C_{i}(e)h_{C}}{h}.
\]
This is because, by construction of the flow $F_{S}^{\forward}$, $x$ may only receive the demand of quantity at most $F_{i-1}(P)\frac{C_{i}(e)h_{C}}{h}$ whenever $x\in P$ and $e\in P$ is incident to $x$. But we have

\[
\sum_{e:\text{incident to }x}F_{i-1}(e)C_{i}(e)\frac{h_{C}}{h}\le\sum_{e:\text{incident to }x}u_{G}(e)C_{i}(e)\beta=\beta\deg_{C_{i}}(x)
\]
because $F_{i-1}$ has congestion $1$ in $G$ and so $F_{i-1}(e)\le u_{G}(e)$. Also, $\beta=\frac{h_{C}}{h}$ by definition. This finishes the claim.

\paragraph{Define $D_{i}$ and Prove the Invariant.}

We define $D_{i}$ as the remaining demand after routing $F'_{i}$. Observe that the remaining demand is as follows. For each demand pair $(v,w)$ of $D_{i-1}$ where $\dist_{G-C_{i}}(v,w)>h_{\cov}$ and each $(v,w)$-flow-path $P$ of $F_{i-1}$, we need route flow of value from $X_{v,P}$ to $X_{w,P}$ of total value $F_{i-1}(P)$. Then, $D_{i}$ sums up these demand.

Now, we argue that $D_{i}$ satisfies the invariant. Consider the congestion and length required for routing $D_{i}$ in $G$. Observe that $D_{i}$ can be routed through subpaths of the flow-paths of $F_{i-1}$ and $F_{i-1}$ is routable with congestion 1 and length $h$ in $G$. Therefore, $D_{i}$ is routable in $G$ with the same congestion and length bound.

Next, we show that $D_{i}$ is $A_{i}$-respecting. For $x\in\supp(A_{i})$, observe that the total demand that $x$ sends out in $D_{i}$ is the same as the total demand $x$ receives in $\sum_{S}D_{S}^{\mathrm{forward}}$. But we showed that this is at most $\sum_{e:\text{incident to }x}F_{i-1}(e)\frac{C_{i}(e)h_{C}}{h}\le\beta\deg_{C_{i}}(x)=A_{i}(x)$ by the definition of $A_{i}$. This completes the proof why the invariant holds.

\paragraph{Construct $F'$ and Bound its Quality.}

The flow $F'$ is obtained by combining and concatenating the flows that route $D_{S}^{\mathrm{done}}$ and $D_{S}^{\mathrm{forward}}$ overall $S\in\cluster(\N_{i})$ for all level $i$ in a natural way so that $F'$ routes $D_{F}$. That is, in level $i$, each demand of $D_{i-1}$ is routed either successfully routed in $G'$ via some router in $t_{{\cal R}}$ steps, or forwards to a new demand in $D_{i}$ by routing in $G'$ via some router in $t_{{\cal R}}$ steps as well. As $i\le2t$, the maximum step of $F'$ is $O(t\cdot t_{{\cal R}})$.

Now, we bound the congestion on $G'$. $F'$ simultaneously routes, for all $i$, $D_{S}^{\mathrm{done}}$ and $D_{S}^{\mathrm{forward}}$ for each $S\in\cluster(\N_{i})$ on $R^{S}$. Since $D_{S}^{\mathrm{done}}$ and $D_{S}^{\mathrm{forward}}$ are $B_{i,S}$-respecting, the congestion for routing both $D_{S}^{\mathrm{done}}$ and $D_{S}^{\mathrm{forward}}$ on $R^{S}$ is at most $1$. But each router is edge-disjoint from each other, so the congestion for routing $F'$ on $\cup_{i}E({\cal R}_{i})=\cup_{i}\cup_{S\in\N_{i}}R^{S}$ is at most $1$. Since $E(G')=\cup_{i}E({\cal R}_{i})$ scaled down the capacity by $t\kappa_{\Pi}$, the congestion for routing $F'$ in $G'$ is then $t\kappa_{\Pi}$.

To summarize, we have successfully constructed a flow $F'$ routing $D_{F}$ in $G'$ with $\congest_{F'}\le t\kappa_{\Pi}$ and $\step_{F'}\le O(t\cdot t_{{\cal R}})$ as desired.

\subsection{Reduction to Witnessed Expander Decomposition}

Observe that we can state \Cref{alg:emu} as a reduction from length-constrained low-step emulators to witnessed expander decomposition, since the algorithm simply compute witnessed expander decomposition $O(t)$ times. This can be formalized as follows. 
\begin{cor}
\label{cor:reduc LC emu}Suppose there is an algorithm ${\cal A}$ that, given an arbitrary node-weighting $A_{i-1}$ of graph $G$, computes a witnessed expander decomposition $C_{i}$ and $({\cal N},\mathcal{R},\Pi_{{\cal R}\to G})$ with parameters $(h_{\cov},\phi,\beta,s_{\decomp})$ such that
\begin{itemize}
\item $C_{i}$ is a $h_{C}$-moving cut
\item $({\cal N}_{i},\mathcal{R}_{i},\Pi_{{\cal R}_{i}\to G})$ is a \emph{$(h_{\cov},t_{{\cal R}},h_{\Pi},\kappa_{\Pi})$}-witness of $A_{i-1}+\beta\deg_{C_{i}}$ in $G-C_{i}$.
\item $\mathcal{R}_{i}$ has total number of edges at most $\mathrm{size}_{{\cal R}}$ and the embedding $\Pi_{{\cal R}_{i}\to G}$ has path count at most $\path_{\Pi}$.
\item $h_{\cov}\gets4h,$ $s_{\decomp}\gets t$, $\phi\gets1/2\beta N^{1/t}$, and $\beta$ is chosen such that $\beta=h_{C}/h$.
\end{itemize}
Then, there is an algorithm for computing an $h$-length-constrained $O(t\cdot t_{{\cal R}})$-step emulator $G'$ for $G$ with length slack $O(t\cdot t_{{\cal R}})$ and congestion slack $t\kappa_{\Pi}$. The emulator contains $2t\cdot\mathrm{size}_{{\cal R}}$ edges and the embedding $\Pi_{G'\rightarrow G}$ has path count $2t\cdot\path_{\Pi}$.

The algorithm makes $O(t)$ calls to ${\cal A}$ and spend additional work of $O(|E(G')|)$ and depth $O(\log n)$. 
\end{cor}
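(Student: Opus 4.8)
The plan is to recognize that \Cref{alg:emu} together with its analysis in \Cref{prop:basics}, \Cref{lem:forward map}, and \Cref{lem:backward map} is \emph{already} exactly the claimed reduction: nowhere does that analysis use the \emph{existential} content of \Cref{cor:witness ED exist}, only its interface. So the algorithm I would run is \Cref{alg:emu} verbatim, with the single change that the invocation of \Cref{cor:witness ED exist} in step 1(a) is replaced by a call to ${\cal A}$ on node-weighting $A_{i-1}$ with the parameters $(h_{\cov},\phi,\beta,s_{\decomp})$ fixed exactly as in the algorithm, where $\beta$ is chosen so that $\beta=h_C/h$ holds for ${\cal A}$'s length-increase $h_C$ (and one checks that $\phi=1/(2\beta N^{1/t})$ is still a legal choice, i.e.\ ${\cal A}$'s precondition $\beta\le 1/(\phi\log N)$ holds in the relevant regime of $t$).

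Then I would simply re-check the three ingredients with $t_{\cal R},h_\Pi,\kappa_\Pi$ kept symbolic. \Cref{prop:basics}(1)--(2) use only that each $C_i$ is an $h_C$-moving cut with $\beta=h_C/h$ and $|C_i|\le\phi|A_{i-1}|$ together with $\phi=1/(2\beta N^{1/t})$, so $|A_{2t}|=0$ still holds and the loop still runs for $2t$ levels; \Cref{prop:basics}(3) then gives $|E(G')|\le 2t\cdot\mathrm{size}_{{\cal R}}$. The path-count bound $2t\cdot\path_{\Pi}$ for $\Pi_{G'\to G}$ is immediate from the proof of \Cref{lem:backward map}, which builds $\Pi_{G'\to G}=\tfrac{1}{t\kappa_{\Pi}}\sum_{i=1}^{2t}\Pi_{{\cal R}_i\to G}$, whose support is the union of the $2t$ supports of size $\le\path_{\Pi}$ each. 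The forward-mapping argument uses only: (i) ${\cal N}_i$ covers each $h_{\cov}$-ball by a cluster, (ii) $R^S$ is a $t_{\cal R}$-step $1$-router for $S\cap(A_{i-1}+\beta\deg_{C_i})$, and (iii) $\beta=h_C/h$; it therefore still produces a flow $F'$ routing $D_F$ in $G'$ with $\step_{F'}=O(t\cdot t_{\cal R})$ and $\congest_{F'}\le t\kappa_{\Pi}$. The backward-mapping argument uses only that each $\Pi_{{\cal R}_i\to G}$ has length $h_\Pi$ and congestion $\kappa_{\Pi}$ and that $\ell_{G'}(e)=h_\Pi$ for all $e$; it still gives an embedding of $G'$ into $G$ of congestion $1$ and length slack $1$. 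Feeding these into \Cref{def:h-len emu}, $G'$ is an $h$-length-constrained $O(t\cdot t_{\cal R})$-step emulator of congestion slack $t\kappa_{\Pi}$ and length slack $h_\Pi/h$ (which is $O(t\cdot t_{\cal R})$ for the parameters of \Cref{cor:witness ED exist}, matching \Cref{thm:h leng emu}).

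For the running-time claim: the only nontrivial work in \Cref{alg:emu} is the $2t=O(t)$ invocations of the witnessed-ED subroutine, and these are necessarily sequential since $A_i$ (hence the $i$-th call) depends on $C_i$. Everything else --- forming $A_i=\beta\deg_{C_i}$ and $B_i$, scaling router capacities down by $t\kappa_{\Pi}$, taking $E(G')=\bigcup_i E'_i$, and assembling $\Pi_{G'\to G}$ as the scaled sum above --- reads each emulator edge and each embedding path a constant number of times and is trivially parallel, contributing $O(|E(G')|)$ additional work and $O(\log n)$ additional depth. I do not expect a genuine obstacle here: the only thing requiring care is the parameter plumbing --- verifying that $\beta=h_C/h$ is simultaneously compatible with $\phi=1/(2\beta N^{1/t})$, with ${\cal A}$'s precondition $\beta\le 1/(\phi\log N)$, and with the $O(t\cdot t_{\cal R})$ step/length bounds absorbing the constants hidden in $h_\Pi$ and in the ``$2t$ levels'' accounting. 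All the substantive work was already done in \Cref{lem:forward map} and \Cref{lem:backward map}, so this corollary is essentially a repackaging of the proof of \Cref{thm:h leng emu} with the witnessed-ED step abstracted out.
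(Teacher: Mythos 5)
Your proposal is correct and matches the paper's approach exactly: the paper gives no separate proof of this corollary, stating it as the observation that \Cref{alg:emu} together with \Cref{prop:basics}, \Cref{lem:forward map}, and \Cref{lem:backward map} only uses the interface of the witnessed expander decomposition, so the subroutine can be abstracted into a black-box call to ${\cal A}$. Your re-check of the parameter plumbing and the work/depth accounting is precisely the intended (and omitted) verification.
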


\Cref{thm:h leng emu} is then obtained simply by plugging the existential result of witnessed expander decomposition with parameters $(h_{\cov},\phi,\beta,s_{\decomp})$. From \Cref{cor:witness ED exist}, we need to set $\beta=t^{2}$ so that $\beta=h_{C}/h$. Thus, we get 
\begin{align*}
t_{{\cal R}} & =t\\
\kappa_{\Pi} & =\Otil(N^{O(1/t)}/\phi)=N^{O(1/t)}\poly(\log N)\\
\mathrm{size}_{{\cal R}} & =n\cdot N^{O(1/t)}\poly(\log N),
\end{align*}
which implies \Cref{thm:h leng emu} by \Cref{cor:reduc LC emu}. 

From this reduction, we also immediately obtain an algorithmic result (\Cref{thm:h leng emu alg}) by plugging in the algorithmic witnessed expander decomposition from \Cref{thm:witness ED alg} with parameters $(h_{\cov},\phi,\beta,s_{\decomp})$ into \Cref{cor:reduc LC emu}. Define $t=1/\eps$. We need to set $\beta=\exp(\poly(1/\eps))$ so that $\beta=h_{C}/h$. Thus, we get
\begin{align*}
t_{{\cal R}} & =t\\
\kappa_{\Pi} & =N^{\poly\eps}/\phi=N^{\poly\eps}\\
\path_{\Pi},\mathrm{size}_{{\cal R}} & =m\cdot N^{O(1/t)}\poly(\log N),
\end{align*}
which implies \Cref{thm:h leng emu alg}.
\section{Bootstrapping Length-Constrained Low-Step Emulators}
\label{sec:bootstrap-emu}

In this section, we establish efficient algorithms and representations for length-constrained low-step emulators. At a high level, this requires overcoming two barriers, the first technical and the second conceptual. 
\begin{enumerate}
\item Efficiency: The $(h,s)$-length $(\phi,\kappa)$-expander decomposition algorithm  has a polynomial-in-$h$ dependency in the running time. Even in unit-length graphs, $h$ can be as large as $n$, which is prohibitively slow. Nevertheless, this is a purely technical issue, as a fast decomposition algorithm for any $h$ can still exist. In fact, we could obtain such an algorithm using the techniques developed in our paper. 
\item Representation size: Even if there was a fast $(h,s)$-length $(\phi,\kappa)$-expander decomposition algorithm for any value of $h$, the additional embedding $|\Pi_{\mathcal{R}\to G}|$ may have size at least $hn$, since we need to embed at least a linear number of paths as per \Cref{thm:witness ED alg}, and each path of length $h$ can consist of $h$ many edges. Conceptually, this barrier appears unavoidable with explicit embeddings as required by $\Pi_{\mathcal{R}\to G}$. In this section, we bypass this issue by \emph{stacking} emulators on top of each other in a hierarchical fashion. More precisely, each emulator may embed not only into the original graph, but into previously computed emulators. This way, an $h$-length path can be implicitly represented across multiple levels of stacking: an edge can embed into an emulator at a previous level, whose path in this emulator contains edges that embed further into previous emulators, and so on. 
\end{enumerate}
\begin{thm}
[Bootstrapping]\label{thm:stacking} Let $h,d,t$ be given parameters. Suppose there exists $\alpha>1$ and an algorithm $\mathcal{A}$ that, given an $m$-edge $n$-vertex graph $G$ and $h'\le(2t+3)h$, computes an $h'$-length-constrained $t$-step emulator $G'$ of $G$ with length slack $s$, congestion slack $\kappa$, number of edges at most $|E(G')|\le\alpha|E(G)|$, and path count of the embedding $|\path(\Pi_{G'\rightarrow G})|\le\alpha|E(G)|$.

Then, there is an algorithm that, given graph $G$ and $(h,d,t,h_{0})$ as parameters where $h_{0}\le h$, computes graphs $G'_{0},G'_{1},G'_{2},\ldots,G'_{d}$ such that for each index $i$, 
\begin{enumerate}
\item $G'_{i}$ is an $h_{0}h^{i}$-length-constrained $t$-step emulator of $G$ with length slack $s^{i+1}(2t+3)^{i}$ and congestion slack $(2\kappa)^{i+1}$.\label{item:stacking-1} 
\item $G'_{i}$ has at most $(2\alpha)^{i+1}m$ edges.\label{item:stacking-2} 
\item There is an embedding $\Pi_{G'_{i}\rightarrow G\cup G'_{i-1}}$ that embeds $G'_{i}$ into $G\cup G'_{i-1}$ with length slack $1$, congestion $1$, path count $(2\alpha)^{i+1}m$, and maximum step at most $(2st+3s)h$.\footnote{We define $G'_{-1}=\emptyset$.} Moreover, $\Pi_{G'_{i}\rightarrow G\cup G'_{i-1}}$ only routes through edges of $G$ with length in the range $(h_{0}h^{i-1},h_{0}h^{i}]$. \label{item:stacking-3} 
\end{enumerate}
The algorithm calls $\mathcal{A}$ on $d$ many graphs, each with at most $O(d(2\alpha)^{d}m)$ edges, and, outside these calls, runs in $O(d(2\alpha)^{d}m)$ work and $\Otil(d)$ depth.
\end{thm}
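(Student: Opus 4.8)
The plan is to build the emulators $G'_0, G'_1, \ldots, G'_d$ iteratively, using $\mathcal{A}$ once per level, so that $G'_i$ captures $(h_0 h^i)$-length flows in $G$. The base case is to take $G'_0$ to be the output of $\mathcal{A}$ on $G$ with target length $h_0 \le h \le (2t+3)h$: this gives an $h_0$-length-constrained $t$-step emulator with length slack $s$, congestion slack $\kappa \le 2\kappa$, at most $\alpha m \le 2\alpha m$ edges, and an embedding $\Pi_{G'_0 \to G}$ of path count $\le \alpha m \le 2\alpha m$, length slack $1$, congestion $1$ — matching all three items with $i=0$ (with $G'_{-1} = \emptyset$, the range constraint on edge lengths is vacuous for level $0$, or can be arranged by only feeding $\mathcal{A}$ the subgraph of $G$ with edge lengths in $(0, h_0]$). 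For the inductive step, suppose $G'_{i-1}$ is built. The key idea is that an $(h_0 h^i)$-length flow in $G$ can be decomposed by distance scale: the portion using ``short'' edges (length $\le h_0 h^{i-1}$) routes through $G'_{i-1}$, which telescopes the step count, while the ``long'' edges (length in $(h_0 h^{i-1}, h_0 h^i]$) are few per path and can be handled directly. Concretely, I would form the graph $H_i := G_{\mathrm{long}} \cup G'_{i-1}$, where $G_{\mathrm{long}}$ is $G$ restricted to edges of length in $(h_0 h^{i-1}, h_0 h^i]$, and run $\mathcal{A}$ on $H_i$ with target length $h' \approx (2t+3) h_0 h^{i-1}$ — note $G'_{i-1}$ has uniform edge length $s^i (2t+3)^{i-1} h_0 h^{i-1} \cdot t$ roughly (the step-$t$ emulator length), and a $t$-step path in it has length $\le t s^i(2t+3)^{i-1}h_0h^{i-1}$; combined with at most $\sim 2t+3$ long edges of length $\le h_0h^i = h\cdot h_0h^{i-1}$, the relevant target length for $\mathcal{A}$ on $H_i$ is bounded by $(2t+3) h$ times the base scale, so the hypothesis ``$h' \le (2t+3)h$'' applies after rescaling lengths so the base unit is $h_0 h^{i-1}$. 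Set $G'_i := \mathcal{A}(H_i, h')$.

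The heart of the argument is verifying the three properties for $G'_i$. For item~\ref{item:stacking-3}, $\mathcal{A}$ directly yields $\Pi_{G'_i \to H_i} = \Pi_{G'_i \to G_{\mathrm{long}} \cup G'_{i-1}}$ with length slack $1$, congestion $1$, and path count $\le \alpha |E(H_i)|$; since $|E(H_i)| \le |E(G)| + |E(G'_{i-1})| \le m + (2\alpha)^i m \le 2(2\alpha)^i m$, the path count is $\le 2\alpha(2\alpha)^i m = (2\alpha)^{i+1} m$, and the edge count of $G'_i$ is $\le \alpha \cdot 2(2\alpha)^i m = (2\alpha)^{i+1} m$ as well, giving item~\ref{item:stacking-2}. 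The maximum step bound $(2st+3s)h$ on $\Pi_{G'_i \to H_i}$ comes from the length slack $s$ of $\mathcal{A}$ applied to target $h' \approx (2t+3)h$ (base unit), since step $\le$ length $/$ (min edge length), and the edges of $H_i$ at the base scale have length $\ge h_0 h^{i-1}$; that $\Pi_{G'_i \to H_i}$ only routes through $G$-edges of length in $(h_0 h^{i-1}, h_0 h^i]$ is immediate because those are the only $G$-edges present in $H_i$. For item~\ref{item:stacking-1}, the length slack multiplies: $G'_i$ has uniform edge length $s \cdot h' = s(2t+3) \cdot (h_0 h^{i-1}) \cdot \big(s^i (2t+3)^{i-1}\big)$-ish once we unwind the rescaling, i.e.\ $s^{i+1}(2t+3)^i h_0 h^{i-1}$ — wait, we want it in terms of $h_0 h^i$; a careful bookkeeping gives length slack $s^{i+1}(2t+3)^i$ relative to $h_0 h^i$. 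The congestion slack composes as $2\kappa$ times the previous $(2\kappa)^i$: one factor $\kappa$ from $\mathcal{A}$'s congestion slack on $H_i$, and the extra factor $2$ absorbs the cost of splitting an input flow's paths into their short-edge and long-edge segments and re-routing the short segments through $G'_{i-1}$ (whose congestion slack is already $(2\kappa)^i$), so that composed congestion slack is $\kappa \cdot 2 \cdot (2\kappa)^i = (2\kappa)^{i+1}$.

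The forward-mapping argument — showing that any $A$-respecting $(h_0 h^i)$-length flow $F$ in $G$ maps to a $t$-step flow $F'$ in $G'_i$ with the claimed congestion slack — is the step I expect to be the main obstacle. The subtlety is that a flow-path $P$ of $F$ interleaves short and long edges arbitrarily, so one must chop $P$ at its long edges into maximal short-edge segments, observe each such segment is an $(h_0 h^{i-1})$-length demand that is appropriately node-weighting-respecting (here the load/degree-respecting bookkeeping à la the Forward Mapping lemma, \Cref{lem:forward map}, is needed, and it is why the emulator hypothesis is stated for arbitrary node-weightings $A$), route it through $G'_{i-1}$ to get a $t$-step path there, then stitch these together with the $O(t)$ long edges to obtain an $O(t)$-step path in $H_i$, and finally invoke $\mathcal{A}$'s emulation guarantee on $H_i$ to compress this to a genuine $t$-step path in $G'_i$. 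One must check the number of long edges on any such $P$ is $O(t)$ — this follows because $P$ has length $\le h_0 h^i$ and each long edge has length $> h_0 h^{i-1}$, so there are at most $h$ of them, not $O(t)$; so in fact one needs an intermediate step-reduction, which is exactly where the $(2t+3)h$ slack in the hypothesis of $\mathcal{A}$ and an extra emulator application (or the routing-through-a-witness machinery) comes in. Finally, the running-time and depth claims follow by summing over the $d$ levels: each call to $\mathcal{A}$ is on a graph with $\le O((2\alpha)^d m)$ edges, there are $d$ such calls, and forming each $H_i$ and post-processing costs $O(|E(H_i)|)$ work and $O(\log n)$ depth, for $O(d(2\alpha)^d m)$ total work and $\tilde O(d)$ depth.
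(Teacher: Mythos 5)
Your construction is essentially the paper's: build $H_i$ from $G'_{i-1}$ together with the $G$-edges in the band $(h_0h^{i-1},h_0h^i]$, rescale so the base unit is $h_0h^{i-1}$, call $\mathcal{A}$ once per level with length bound $(2t+3)h$, and prove forward/backward mappings by decomposing each flow path into short segments (routed through $G'_{i-1}$) and single long edges; your self-correction that a path has up to $O(h)$ long edges, with the resulting $H_i$-flow of length $\le(2t+3)h$ then compressed by the single call to $\mathcal{A}$ on $H_i$ (no extra emulator application is needed), is exactly how the paper resolves it. The one point you leave unverified is Condition 3 of \Cref{def:h-len emu} for $G'_i$ itself, namely a congestion-$1$, length-slack-$1$ embedding of $G'_i$ into $G$ \emph{alone}: this follows by composing $\Pi_{G'_i\to G\cup G'_{i-1}}$ with the inductive embedding $\Pi_{G'_{i-1}\to G}$, using that the two pieces congest disjoint length-bands of $E(G)$ (edges of length at most $h_0h^{i-1}$ versus edges in $(h_0h^{i-1},h_0h^i]$), an observation you already have but do not assemble.
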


Before proving \Cref{thm:stacking}, we explain why \Cref{item:stacking-3} is important; it is crucial in the lemma below. 
\begin{lem}
\label{lem:fast flow map}For any $i\in[d]$, for any flow $F'$ in $G'_{i}$, there is a flow $F$ in $G$ routing the same demand where $\congest_{F}\le\congest_{F'}$ and $\leng_{F}\le\leng_{F'}$. Given the edge representation $\flow_{F'}$ of $F'$, we can compute the edge representation $\flow_{F}$ of $F$ in $O(hst\cdot(2\alpha)^{i+1}m)$ work and $\Otil(i)$ depth. 
\end{lem}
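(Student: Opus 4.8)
The plan is to compose the embeddings $\Pi_{G'_j \rightarrow G \cup G'_{j-1}}$ from \Cref{thm:stacking} \Cref{item:stacking-3} down the hierarchy, level by level, translating a flow on $G'_i$ into a flow on $G$. First I would unpack what a flow $F'$ on $G'_i$ means: by the edge representation $\flow_{F'}: \overleftrightarrow{E}(G'_i) \rightarrow \reals_{\ge 0}$, each bidirectional edge $(x,y) \in \overleftrightarrow{E}(G'_i)$ carries a net flow value, and this is precisely a demand on the edge set of $G'_i$ which is routed by the canonical single-edge flow. Now I apply the embedding $\Pi_{G'_i \rightarrow G \cup G'_{i-1}}$: each edge $(x,y) \in E(G'_i)$ gets replaced by its embedding path (a $(x,y)$-flow of value $u_{G'_i}(x,y)$), scaled down by the fraction $\flow_{F'}(x,y)/u_{G'_i}(x,y)$ actually used. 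Since the embedding has congestion $1$, the resulting flow $F^{(i-1)}$ on $G \cup G'_{i-1}$ has $\congest_{F^{(i-1)}} \le \congest_{F'}$; since the embedding has length slack $1$ and every edge of $G'_i$ has uniform length, $\leng_{F^{(i-1)}} \le \leng_{F'}$. (Here I would invoke the ``routed through'' composition bookkeeping already used in the proof of the ``Expansion Witness Certifies Expansion'' corollary: concatenating flows multiplies step-counts but the length/congestion bounds compose additively/maximally as stated.)

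Next I would iterate: the flow $F^{(i-1)}$ lives on $G \cup G'_{i-1}$, i.e.\ it uses some edges of $G$ and some edges of $G'_{i-1}$. The part on $G'_{i-1}$ is again a flow on an emulator, so I recurse, applying $\Pi_{G'_{i-1} \rightarrow G \cup G'_{i-2}}$ to it; the part already on $G$ is left untouched. Crucially, by the ``Moreover'' clause of \Cref{item:stacking-3}, $\Pi_{G'_{j} \rightarrow G \cup G'_{j-1}}$ only routes through edges of $G$ with length in $(h_0 h^{j-1}, h_0 h^j]$, so the edge-of-$G$ pieces deposited at different levels $j$ occupy disjoint length bands; this means that when I accumulate all these pieces into a single flow $F$ on $G$, their contributions to congestion on any fixed edge $e \in E(G)$ come from at most one level, and likewise the length accounting does not compound across levels in a way that would blow up the bound. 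After $i$ rounds of peeling, nothing remains on any $G'_j$ and I have a flow $F$ entirely on $G$ with $\congest_F \le \congest_{F'}$ and $\leng_F \le \leng_{F'}$, and $D_F = D_{F'}$ since each step preserves the routed demand.

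For the running time: at level $j$ I process a flow supported on (at most) all edges of $G'_j$, of which there are $(2\alpha)^{j+1} m$, and I replace each by its embedding path in $\Pi_{G'_j \rightarrow G \cup G'_{j-1}}$, which has path count $(2\alpha)^{j+1} m$ and maximum step at most $(2st+3s)h = O(sth)$; so the work at level $j$ is $O(sth \cdot (2\alpha)^{j+1} m)$. Summing over $j = 1, \dots, i$ the geometric series is dominated by its top term, giving total work $O(hst \cdot (2\alpha)^{i+1} m)$ as claimed. Each level is one parallel pass over the edges (expand each edge into its path, scale, then aggregate the per-edge flow values on $G$ and on $G'_{j-1}$), so depth is $O(\log n)$ per level and $\Otil(i)$ overall; the aggregation of flow values on shared edges is a standard parallel reduction.

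The main obstacle I anticipate is the length bookkeeping across the stacked embeddings. Naively, composing embeddings of length slack $1$ over $i$ levels could suggest $\leng_F \le s^i (2t+3)^i \cdot (\text{something})$, which would be far too weak. The point that makes it work — and which I would need to state carefully — is that the edges of $G'_i$ have \emph{uniform} length $h_0 h^i \cdot s^{i+1}(2t+3)^i$ (the length slack of $G'_i$ as a $h_0 h^i$-length emulator), each embedding step has length slack exactly $1$ with respect to \emph{that} uniform length, and the length-disjoint-bands property routes the eventual $G$-edges into non-overlapping length scales. So the per-edge length of $F'$ on $G'_i$ translates, scale-by-scale, into a path in $G$ of the same length; the hierarchy is precisely engineered (via the uniform-length condition, \Cref{def:h-len emu}\Cref{enu:uniform}) so that no compounding occurs. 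I would make this precise by tracking, for a single flow-path $P'$ of $F'$, the total length of its image in $G$: each of its $\le t$ edges of length $h_0 h^i \cdot s^{i+1}(2t+3)^i$ unfolds into a sub-path of $G \cup G'_{i-1}$ of at most that same length, and inductively into a sub-path of $G$ of at most that same length, so $\leng_F \le \leng_{F'}$ holds path-by-path.
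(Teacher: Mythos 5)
Your proposal is correct and follows essentially the same route as the paper: peel the flow down one level at a time via $\Pi_{G'_j\rightarrow G\cup G'_{j-1}}$, split the result into its $G$-part and its $G'_{j-1}$-part, recurse on the latter, and use the disjoint length bands of \Cref{item:stacking-3} for the congestion accounting and the length-slack-$1$ composition for the length bound, with the same work/depth analysis. The only cosmetic difference is that you spend extra effort worrying about length compounding, whereas the composition of length-slack-$1$ embeddings already settles it directly.
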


\begin{proof}
By scaling, we can assume that $F'$ has congestion and length $1$ in $G'_{i}$. We will construct the edge representation $\flow_{F}$ of $F$ with congestion and length $1$. Although the existence of $F$ follows immediately because $G'_{i}$ embeds into $G$ with congestion $1$ and length slack $1$, below we will how to construct the flow $F$ inductively level by level for the efficiency on constructing the edge representation $\flow_{F}$ of $F$.

Define $F'_{i}\gets F'$. Let $\flow_{F'_{i}}\gets\flow_{F'}$ be the edge representation of $F'_{i}$. We construct a flow $\Fhat_{i}$ in $G\cup G'_{i-1}$ as follows: for each directed edge $(v,w)\in\overleftrightarrow{E}(G'_{i})$, $\Fhat_{i}$ routes $\flow_{F'_{i}}(v,w)$ units of flow through the $(v,w)$-flow-paths of $\Pi_{G'_{i}\rightarrow G\cup G'_{i-1}}$. By \Cref{thm:stacking}(\ref{item:stacking-3}), we have $|\path(\Pi_{G'_{i}\rightarrow G\cup G'_{i-1}})|\le(2\alpha)^{i+1}m$ and $\step_{\Pi_{G'_{i}\rightarrow G\cup G'_{i-1}}}\le(2st+3s)h$. Thus, we can explicitly compute the path decomposition of $\Fhat_{i}$ as well as the its edge representation of $\flow_{\Fhat_{i}}$ in $O(hst\cdot(2\alpha)^{i+1}m)$ work and $\Otil(1)$ depth by explicitly summing the flow values overall flow paths of $\Pi_{G'_{i}\rightarrow G\cup G'_{i-1}}$.

Given $\Fhat_{i}$, we now define $F_{i}$ and $F'_{i-1}$ as the flow $\Fhat_{i}$ restricted to edges $G$ and $G'_{i-1}$, respectively. More precisely, for each directed edge $(v,w)\in\overleftrightarrow{E}(G)$ where $\Fhat_{i}((v,w))>0$, $F_{i}$ routes $\flow_{\Fhat_{i}}(v,w)$ units of flows through a single edge $(v,w)$. Similarly, for each directed edge $(v,w)\in\overleftrightarrow{E}(G'_{i-1})$ where $\Fhat_{i}((v,w))>0$, $F'_{i-1}$ routes $\flow_{\Fhat_{i}}(v,w)$ units of flows through $(v,w)$. Since each flow path of $F_{i}$ and $F'_{i-1}$ routes through a single (directed) edge, the edge representations $\flow_{F_{i}}$ and $\flow_{F'_{i-1}}$ can be trivially computed in linear work and $\Otil(1)$ depth.

Next, we repeat the same argument on the edge representation $\flow_{F'_{i-1}}$ of $F'_{i-1}$ and until $i=0$. Finally, we obtain the edge representations $\flow_{F_{i}},\flow_{F_{i-1}},\dots,\flow_{F_{0}}$ of $F_{i},F_{i-1},\dots,F_{0}$ in $G$, respectively, such that, after concatenating the flow paths of $F_{i},F_{i-1},\dots,F_{0}$, we can obtain the flow $F$ that routes exactly the same demand as $F'$. We also have that the edge representation of $\flow_{F}=\flow_{F_{i}}+\flow_{F_{i-1}}+\dots+\flow_{F_{0}}$. The total cost for constructing $\flow_{F}$ is then $O(hst\cdot(2\alpha)^{i+1}m)$ work and $\Otil(i)$ depth as desired.

Note that we have $\leng_{F}\le1$ because $\Pi_{G'_{i}\rightarrow G\cup G'_{i-1}}$ has length slack $1$ for all $i$. Moreover, $\congest_{F}\le1$ because $\Pi_{G'_{i}\rightarrow G\cup G'_{i-1}}$ has congestion $1$ and each $F_{i}$ only route through edges of $G$ with length in the range $(h_{0}h^{i-1},h_{0}h^{i}]$ by \Cref{thm:stacking}(\ref{item:stacking-3}). 
\end{proof}
By plugging the algorithm from \Cref{thm:h leng emu alg} into \Cref{thm:stacking} and \Cref{lem:fast flow map}, we obtain immediately the following, which will be used in the next section. 
\begin{cor}
\label{cor:stacking algo}For any $\eps \in (\log^{-c}, 1)$ for some small enough constant $c$, there are parameters $t=\exp(\poly1/\eps)$ and $\gamma=N^{\poly\eps}$ such that there is a parallel algorithm that, given a $m$-edge graph $G$, and $(h,d,\eps,h_{0})$ as parameters where $h_{0}\le h$, computes graphs $G'_{0},G'_{1},G'_{2},\ldots,G'_{d}$ such that for each index $i\le d$, $G'_{i}$ is an $(h_{0}h^{i})$-length-constrained $t$-step emulator of $G$ containing $m\gamma^{i}$ edges with length slack $O(t^{2}){}^{i}$ and congestion slack $\gamma^{i}$. The algorithm has $m\gamma^{d}\poly(h)$ work and $\Otil(d\cdot\poly(h))$ depth.

For any $i$, given an edge representation of $F'$ in $G'_{i}$, one can compute the edge representation $\flow_{F}$ of $F$ in $G$ that routes the same demand where $\congest_{F}\le\congest_{F'}$ and $\leng_{F}\le\leng_{F'}$ in $m\gamma^{d}h$ work and $\Otil(d)$ depth.
\end{cor}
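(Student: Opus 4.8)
The plan is to obtain \Cref{cor:stacking algo} as a direct instantiation of the bootstrapping framework: feed the emulator algorithm of \Cref{thm:h leng emu alg} into \Cref{thm:stacking} as the black box $\mathcal{A}$, and then read off the flow-mapping half from \Cref{lem:fast flow map}. Since both ingredients are already proved, the only substance of the argument is matching up parameters and checking that the overheads which accumulate over the $d$ levels can be absorbed into the stated bounds.

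First I would fix $\eps\in(\log^{-c}N,1)$ and take $\mathcal{A}$ to be the algorithm of \Cref{thm:h leng emu alg} (applied with the degree node-weighting of its input graph). Let $t_0=\exp(\poly(1/\eps))$ be the parameter there; then on an $m'$-edge input and a length parameter $h'$, $\mathcal{A}$ returns an $h'$-length-constrained $O(t_0^2)$-step emulator with length slack $O(t_0^2)$, congestion slack $N^{\poly\eps}$, at most $m'N^{\poly\eps}$ edges, and an embedding of path count $m'N^{\poly\eps}$, using $m'\poly(h')N^{\poly\eps}$ work and $\poly(h')N^{\poly\eps}$ depth. Renaming $t:=O(t_0^2)$ (still of the form $\exp(\poly(1/\eps))$) and setting $s:=O(t_0^2)$, $\kappa:=N^{\poly\eps}$, $\alpha:=N^{\poly\eps}$, this is exactly an algorithm of the shape hypothesized in \Cref{thm:stacking}; note that \Cref{thm:stacking} only ever invokes $\mathcal{A}$ with length parameters $h'\le(2t+3)h$, whereas \Cref{thm:h leng emu alg} supports every $h'$, so that side condition is vacuous.

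Next I would invoke \Cref{thm:stacking} with these $(\alpha,s,\kappa,t)$ and inputs $(h,d,t,h_0)$, $h_0\le h$. Items~(\ref{item:stacking-1})--(\ref{item:stacking-2}) then say each $G'_i$ is an $h_0h^i$-length-constrained $t$-step emulator of $G$ with length slack $s^{i+1}(2t+3)^i$, congestion slack $(2\kappa)^{i+1}$, and at most $(2\alpha)^{i+1}m$ edges; plugging in the numerical values, the length slack is $(\exp(\poly(1/\eps)))^i$ and the congestion slack and edge count are $(N^{\poly\eps})^i$, where the stray factors of $2$, the ``$+1$'' in the exponents, and all $\poly(\log N)$ terms get folded into the $\poly(1/\eps)$, $\poly\eps$ exponents --- legitimate since $d\le n\le N$ and since we may enlarge the constant in the exponent of $\gamma=N^{\poly\eps}$. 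This is the first half of \Cref{cor:stacking algo}; the running time follows because \Cref{thm:stacking} makes $d$ (sequential) calls to $\mathcal{A}$ on graphs with $O(d(2\alpha)^dm)$ edges plus $O(d(2\alpha)^dm)$ extra work and $\Otil(d)$ extra depth, so each call costs $O(d(2\alpha)^dm)\cdot\poly(h)N^{\poly\eps}$ work and $\poly(h)N^{\poly\eps}$ depth (using $h'\le(2t+3)h$ and that $t$ depends only on $\eps$), and summing and absorbing lower-order factors into $\gamma^d$ and $\poly(h)$ gives $m\gamma^d\poly(h)$ work and $\Otil(d\cdot\poly(h))$ depth. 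The flow-mapping half is then literally \Cref{lem:fast flow map} read at these parameters: from the edge representation of any $F'$ in $G'_i$ it constructs the edge representation of a flow $F$ in $G$ routing the same demand with $\congest_F\le\congest_{F'}$ and $\leng_F\le\leng_{F'}$, in $O(hst\cdot(2\alpha)^{i+1}m)\le m\gamma^d h$ work (absorbing the constants $s,t$ and $(2\alpha)^{i+1}\le\gamma^d$) and $\Otil(i)\le\Otil(d)$ depth; here I would stress the one genuinely structural point \Cref{lem:fast flow map} rests on, namely \Cref{thm:stacking}(\ref{item:stacking-3}), that the level-$i$ embedding $\Pi_{G'_i\to G\cup G'_{i-1}}$ routes only through $G$-edges of length in $(h_0h^{i-1},h_0h^i]$, so that unrolling the hierarchy produces per-level contributions to $F$ on disjoint length scales whose congestions do not compound.

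Consequently the proof has no real mathematical obstacle; the thing to be careful about is the parameter bookkeeping --- reconciling the $O(t^2)$-step output of \Cref{thm:h leng emu alg} with the $t$-step interface of \Cref{thm:stacking}, tracking how the multiplicative slacks and sizes compound across the $d$ levels, and verifying that the $2^{\Theta(d)}$, $d^{O(1)}$ and $\poly(\log N)$ overheads (including in the quasi-polynomial regime where $\eps$ is close to $\log^{-c}N$) are indeed swallowed by $\gamma=N^{\poly\eps}$ and by the explicit $\poly(h)$ factor.
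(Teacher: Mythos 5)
Your proposal is correct and follows exactly the paper's route: the paper obtains \Cref{cor:stacking algo} by plugging the algorithm of \Cref{thm:h leng emu alg} (as the black box $\mathcal{A}$) into \Cref{thm:stacking} and reading the flow-mapping claim off \Cref{lem:fast flow map}, with the same absorption of the $2^{O(d)}$, $d^{O(1)}$, and $N^{\poly\eps}$ overheads into $\gamma^d$ and $\poly(h)$. Your extra care about reconciling the $O(t^2)$-step output with the $t$-step interface and about the $h'\le(2t+3)h$ side condition is exactly the bookkeeping the paper leaves implicit.
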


The rest of this section is for proving \Cref{thm:stacking}.

\subsection{Construction and Analysis}

\begin{algorithm}
\begin{enumerate}
\item Let $G'_{0}$ be an $h_{0}$-length-constrained $t$-step emulator obtained by calling algorithm $\mathcal{A}$ on $G_{\le h_{0}}$, i.e., the graph containing only edges in $G$ of length at most $h_{0}$. 
\item For $1\le i\le d$: 
\begin{enumerate}
\item Let $H_{i}$ be the unit-length graph constructed as follows: 
\begin{enumerate}
\item For each edge $e$ in $G'_{i-1}$, add a corresponding unit-length edge $e$ in $H_{i}$. 
\item For each edge $e$ in $G$ with length in the range $(h_{0}h^{i-1},h_{0}h^{i}]$, add an edge in $H_{i}$ of length $\lceil\ell_{G}(e)/(h_{0}h^{i-1})\rceil$. 
\end{enumerate}
\item Let $H'_{i}$ be a $(2t+3)h$-length-constrained $t$-step emulator obtained by calling algorithm $\mathcal{A}$ on $H_{i}$. 
\item Let $G'_{i}$ be the graph $H'_{i}$ with all edges modified to have length $h_{0}h^{i}\cdot s^{i+1}(2t+3)^{i}$. 
\end{enumerate}
\end{enumerate}
\caption{\textsc{EmulatorWithBootstrapping}$(G,h,d,t)$}
\end{algorithm}

In this subsection, we prove three properties of \Cref{thm:stacking} by induction on $i\ge0$.

The base case $i=0$ is straightforward. Let $G_{\le h_{0}}$ denote the graph containing only edges in $G$ of length at most $h_{0}$. By the guarantee of ${\cal A}$, $G'_{0}$ is an $h_{0}$-length-constrained $t$-step emulator of $G_{\le h_{0}}$ with length slack $s$ and congestion slack $\kappa$. By definition, $G'_{0}$ is also an $h_{0}$-length-constrained $t$-step emulator of $G$ with the same ganrantees, because any flow $F$ in $G$ of length at most $h_{0}$ may route through only edges of length at most $h_{0}$. Moreover, $|E(G'_{0})|\le\alpha m$ and $\Pi_{G'_{0}\rightarrow G}$ has length slack $1$, congestion $1$, and path count $\alpha m\le2\alpha^{2}m$. The maximum step $\Pi_{G'_{0}\rightarrow G}$ is $\step_{\Pi_{G'_{0}\rightarrow G}}\le\leng_{\Pi_{G'_{0}\rightarrow G}}$ because the edge length of $G$ is integral. We have $\leng_{\Pi_{G'_{0}\rightarrow G}}\le sh_{0}\le(2st+3s)h$ because the length of edges in $G'_{0}$ is $h_{0}s$ and $\Pi_{G'_{0}\rightarrow G}$ has length slack $1$. So $\step_{\Pi_{G'_{0}\rightarrow G}}\le(2st+3s)h$. Finally, $\Pi_{G'_{0}\rightarrow G}$ only routes through edges of $G$ with length at most $h_{0}$, by definition of $G_{\le h_{0}}$.

For the rest of the proof, assume that the three properties hold for iteration $i-1$. The two lemmas below establish the analogues of Lemmas~\ref{lem:forward map}~and~\ref{lem:backward map} from \Cref{sec:emu}. We defer their proofs to \Cref{sec:algorithm-backward-mapping,sec:algorithm-forward-mapping}. 
\begin{lem}
[Backward Mapping]\label{lem:algorithm-backward-mapping}The graph $G'_{i}$ has at most $(2\alpha)^{i+1}m$ edges and there is an embedding $\Pi_{G'_{i}\to G\cup G'_{i-1}}$ with congestion $1$, length slack $1$, path count at most $(2\alpha)^{i+1}m$, and $\step_{\Pi_{G'_{i}\to G\cup G'_{i-1}}}\le(2st+3s)h$. Also, there is an embedding $\Pi_{G'_{i}\to G}$ with congestion $1$ and length slack $1$.
\end{lem}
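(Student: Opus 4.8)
The plan is to prove \Cref{lem:algorithm-backward-mapping} by induction on $i$, in tandem with the three properties of \Cref{thm:stacking}; the base case $i=0$ is already dealt with in the text, so I fix $i\ge 1$ and assume the lemma and \Cref{thm:stacking}(1)--(3) hold for $i-1$. I would obtain $\Pi_{G'_i\to G\cup G'_{i-1}}$ by ``unfolding'' the emulator embedding that $\mathcal{A}$ produces for $H'_i$: by \Cref{def:h-len emu}(3) and the hypothesis on $\mathcal{A}$, there is an embedding $\Pi_{H'_i\to H_i}$ with congestion $1$, length slack $1$, and path count at most $\alpha|E(H_i)|$. By construction each edge of $H_i$ is either a unit-length copy of an edge of $G'_{i-1}$ or a copy $\hat e$, of length $\lceil\ell_G(e)/(h_0h^{i-1})\rceil$, of an edge $e$ of $G$ with $\ell_G(e)\in(h_0h^{i-1},h_0h^i]$; this is a capacity-preserving bijection onto $E(G'_{i-1})$ together with the in-range edges of $G$. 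Replacing, along every flow-path of $\Pi_{H'_i\to H_i}$, each $H_i$-edge by the corresponding edge of $G\cup G'_{i-1}$ (using $E(G'_i)=E(H'_i)$, with $G'_i$ differing from $H'_i$ only by a length rescaling) gives the claimed embedding; that it only routes through edges of $G$ of length in $(h_0h^{i-1},h_0h^i]$ is then immediate, which also yields the last sentence of \Cref{thm:stacking}(3).

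For the counting bounds, $|E(H_i)|=|E(G'_{i-1})|+\#\{e\in E(G):\ell_G(e)\in(h_0h^{i-1},h_0h^i]\}\le(2\alpha)^i m+m\le 2(2\alpha)^i m$ by the induction hypothesis and $\alpha>1$. Hence $|E(G'_i)|=|E(H'_i)|\le\alpha|E(H_i)|\le(2\alpha)^{i+1}m$, and the path count of $\Pi_{H'_i\to H_i}$, hence of $\Pi_{G'_i\to G\cup G'_{i-1}}$ (unfolding relabels each path edge-by-edge, with no splitting), is at most $\alpha|E(H_i)|\le(2\alpha)^{i+1}m$. Congestion $1$ of $\Pi_{G'_i\to G\cup G'_{i-1}}$ is inherited from congestion $1$ of $\Pi_{H'_i\to H_i}$ via the capacity-preserving bijection just described.

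For the step and length-slack bounds, recall that $H'_i$ is a $(2t+3)h$-length-constrained emulator with length slack $s$, so its edges have uniform length $s(2t+3)h$; thus every flow-path of $\Pi_{H'_i\to H_i}$ has length at most $s(2t+3)h$ measured in $H_i$, and since every edge of $H_i$ has length at least $1$ this path has at most $s(2t+3)h=(2st+3s)h$ edges, a count preserved by unfolding (one $H_i$-edge maps to one edge of $G\cup G'_{i-1}$). For the length: a $G'_{i-1}$-edge has length $1$ in $H_i$ and true length equal to the uniform edge length of $G'_{i-1}$, which by the algorithm is $h_0h^{i-1}s^i(2t+3)^{i-1}$; an in-range $G$-edge $\hat e$ has length $\lceil\ell_G(e)/(h_0h^{i-1})\rceil$ in $H_i$ and true length $\ell_G(e)\le h_0h^{i-1}\lceil\ell_G(e)/(h_0h^{i-1})\rceil$. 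So true length is at most $h_0h^{i-1}s^i(2t+3)^{i-1}$ per unit of length in $H_i$, hence each unfolded path has true length at most $s(2t+3)h\cdot h_0h^{i-1}s^i(2t+3)^{i-1}=h_0h^is^{i+1}(2t+3)^i$, exactly the uniform edge length of $G'_i$; so $\Pi_{G'_i\to G\cup G'_{i-1}}$ has length slack $1$.

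Finally, I would obtain $\Pi_{G'_i\to G}$ by composing $\Pi_{G'_i\to G\cup G'_{i-1}}$ with the embedding $\Pi_{G'_{i-1}\to G}$ given by the lemma for $i-1$: replace each $G'_{i-1}$-edge appearing on a flow-path by a scaled copy of its $\Pi_{G'_{i-1}\to G}$-image and keep the $G$-edges. Length slack $1$ is preserved since each $G'_{i-1}$-edge $e'$ is replaced by a $G$-path of length at most $\ell_{G'_{i-1}}(e')$. For congestion I would strengthen the induction with the invariant that $\Pi_{G'_{i-1}\to G}$ routes only through edges of $G$ of length at most $h_0h^{i-1}$; then, since $\Pi_{G'_i\to G\cup G'_{i-1}}$ uses only $G$-edges of length in $(h_0h^{i-1},h_0h^i]$, the two contributions to the flow on $G$ have disjoint edge supports, each of congestion at most $1$ (the first directly; the second because $\Pi_{G'_i\to G\cup G'_{i-1}}$ induces on $G'_{i-1}$ a demand dominated by the edge-demand of $G'_{i-1}$, which $\Pi_{G'_{i-1}\to G}$ routes with congestion $1$), so congestion $1$ overall, and $\Pi_{G'_i\to G}$ uses only $G$-edges of length at most $h_0h^i$, continuing the induction. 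I expect the only delicate points to be the length-slack arithmetic (checking that the per-level slacks telescope exactly to $s^{i+1}(2t+3)^i$) and maintaining the disjoint-length-range invariant so that the composed congestion stays at $1$; the rest is bookkeeping.
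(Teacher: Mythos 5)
Your proposal is correct and follows essentially the same route as the paper: your ``unfolding'' of $\Pi_{H'_i\to H_i}$ through the edge-bijection between $H_i$ and $G\cup G'_{i-1}$ is exactly the paper's composition of the two trivial embeddings $\Pi_{G'_i\to H'_i}$ and $\Pi_{H_i\to G\cup G'_{i-1}}$ with $\Pi_{H'_i\to H_i}$, with the same edge-count, path-count, step, and telescoping length-slack arithmetic. Your strengthened inductive invariant that $\Pi_{G'_{i-1}\to G}$ uses only $G$-edges of length at most $h_0h^{i-1}$, giving disjoint edge supports and hence congestion $1$ for the composed $\Pi_{G'_i\to G}$, is precisely the ``stronger property'' the paper invokes for the same purpose.
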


\begin{lem}
[Forward Mapping]\label{lem:algorithm-forward-mapping} Let $F$ be a flow in $G$ with $\leng_{F}\le h_{0}h^{i}$. There is a flow $F'$ routing $D_{F}$ in $G'_{i}$ with $\congest_{F'}\le(2\kappa)^{i+1}\cdot\congest_{F}$ and $\step_{F'}\le t$. 
\end{lem}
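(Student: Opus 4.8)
The plan is to prove the lemma by induction on $i$, simultaneously with the other properties of \Cref{thm:stacking}. The base case $i=0$ is immediate from the guarantee of $\mathcal{A}$: a flow $F$ in $G$ with $\leng_{F}\le h_{0}$ routes in the $h_{0}$-length-constrained $t$-step emulator $G'_{0}$ with at most $t$ steps and congestion at most $\kappa\cdot\congest_{F}\le(2\kappa)^{1}\congest_{F}$. For the inductive step, assume the statement (and the inductive hypotheses for the other properties) at level $i-1$, and let $F$ be a flow in $G$ with $\leng_{F}\le h_{0}h^{i}$; by scaling we may assume $\congest_{F}=1$, so that $D_{F}$ is $\deg_{G}$-respecting. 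The strategy is to first route $D_{F}$ inside the intermediate graph $H_{i}$ using length at most $(2t+3)h$, and then invoke the $t$-step emulator property of $G'_{i}$ — noting that $G'_{i}$ is obtained from $H'_{i}$ by multiplying all edge lengths by a common factor, which preserves the emulator guarantee, and that $H'_{i}$ is a $(2t+3)h$-length-constrained $t$-step emulator of $H_{i}$.

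To route $D_{F}$ in $H_{i}$, call an edge of $G$ \emph{long} if its length lies in $(h_{0}h^{i-1},h_{0}h^{i}]$ and \emph{short} if it is at most $h_{0}h^{i-1}$ (no longer edge can be used, since $\leng_{F}\le h_{0}h^{i}$). Greedily decompose every flow-path $P$ of $F$ into consecutive pieces $R_{1},\dots,R_{k}$, each being either a single long edge or a maximal run of short edges of $G$-length at most $h_{0}h^{i-1}$. Greediness forces $\ell_{G}(R_{a})+\ell_{G}(R_{a+1})>h_{0}h^{i-1}$ for every $a<k$, which together with $\ell_{G}(P)\le h_{0}h^{i}$ yields $k\le 2h$. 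Let $F^{\mathrm{short}}$ be the flow in $G$ obtained by keeping, with value $F(P)$, the short-run pieces of all flow-paths of $F$; since each short edge of $P$ lies in exactly one piece, $\congest_{F^{\mathrm{short}}}\le\congest_{F}=1$, and $\leng_{F^{\mathrm{short}}}\le h_{0}h^{i-1}$. Applying the inductive hypothesis at level $i-1$ to $F^{\mathrm{short}}$ produces a flow $(F^{\mathrm{short}})'$ in $G'_{i-1}$ routing $D_{F^{\mathrm{short}}}$ with congestion at most $(2\kappa)^{i}$ and at most $t$ steps.

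Next, build $\tilde F$ in $H_{i}$ by stitching: for every flow-path $P$ of $F$ and every piece $R_{a}$, route $F(P)$ units between the endpoints of $R_{a}$ — through the corresponding length-$\lceil\ell_{G}(R_{a})/(h_{0}h^{i-1})\rceil$ edge of $H_{i}$ if $R_{a}$ is long, and through a sub-flow of $(F^{\mathrm{short}})'$ inside $G'_{i-1}\subseteq H_{i}$ (carved out by path-decomposing $(F^{\mathrm{short}})'$ per demand pair) if $R_{a}$ is short; concatenating the pieces of $P$ moves $F(P)$ units between the endpoints of $P$, so $D_{\tilde F}=D_{F}$. Because the edges of $G'_{i-1}$ are disjoint in $H_{i}$ from the long edges, and the short-piece contributions reassemble exactly $(F^{\mathrm{short}})'$ while each long edge carries at most $\congest_{F}=1$, we get $\congest_{\tilde F}\le(2\kappa)^{i}$. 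For the length, each short piece contributes at most $t$ unit-length $H_{i}$-edges, while the long pieces contribute $\sum_{R_{a}\text{ long}}\lceil\ell_{G}(R_{a})/(h_{0}h^{i-1})\rceil\le h+\#\{\text{long edges}\}\le 2h$; with at most $k\le 2h$ pieces in total this gives $\leng_{\tilde F}\le 2h\cdot t+2h\le(2t+3)h$. Finally, rescale $\tilde F$ by $(2\kappa)^{-i}$ so its demand becomes $\deg_{H_{i}}$-respecting, apply the $t$-step emulator property of $G'_{i}$ to route this demand in $G'_{i}$ with at most $t$ steps and congestion at most $\kappa$, and rescale back by $(2\kappa)^{i}$ to obtain $F'$ routing $D_{F}$ in $G'_{i}$ with $\step_{F'}\le t$ and $\congest_{F'}\le\kappa(2\kappa)^{i}\le(2\kappa)^{i+1}$. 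Undoing the initial normalization of $\congest_{F}$ completes the induction.

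I expect the main obstacle to be the length bookkeeping inside $H_{i}$: one must decompose each flow-path of $F$ into only $O(h)$ pieces of the two admissible types — the naive approach (split at long-edge boundaries, then split the short runs again at multiples of $h_{0}h^{i-1}$) overcounts and can produce $\Omega(h^{2})$ pieces — and then verify that, after each short piece blows up into up to $t$ unit-length edges of $H_{i}$ and each long edge contributes its compressed length, the total stays under the $(2t+3)h$ bound that $\mathcal{A}$ was called with. A secondary delicate point is the stitching: one has to check that the short-piece sub-flows can be carved consistently out of $(F^{\mathrm{short}})'$ (using path decomposition of a flow routing a prescribed demand) and that assembling them with the long-edge flows yields a legitimate flow in $H_{i}$ routing exactly $D_{F}$, rather than merely a flow dominating that demand.
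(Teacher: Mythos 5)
Your proposal is correct and follows essentially the same route as the paper's proof: greedily decompose each flow-path into at most $O(h)$ pieces that are either single long edges or short runs of length at most $h_{0}h^{i-1}$, route the short part through $G'_{i-1}$ via the inductive hypothesis and the long edges through their compressed counterparts in $H_{i}$, verify the concatenated flow in $H_{i}$ has length at most $(2t+3)h$, and finish by applying the emulator guarantee of $H'_{i}$. The only (harmless) differences are cosmetic: you bound the congestion in $H_{i}$ by the maximum $(2\kappa)^{i}$ over the two disjoint edge sets where the paper uses the sum $(2\kappa)^{i}+1$, and you make the scaling needed to invoke the degree-respecting hypothesis of the emulator definition explicit.
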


\Cref{lem:algorithm-backward-mapping} immediately implies properties~(\ref{item:stacking-2})~and~(\ref{item:stacking-3}) of \Cref{thm:stacking} for iteration $i$. To see that property~(\ref{item:stacking-1}) is satisfied, we check each requirement in \Cref{def:h-len emu}: 
\begin{enumerate}
\item By construction, edges in $G'_{i}$ have the same length $h_{0}h^{i}\cdot s^{i+1}(2t+3)^{i}$. 
\item Given a flow $F$ in $G$ where $\leng_{F}\le h_{0}h^{i}$, \Cref{lem:algorithm-forward-mapping} guarantees a flow $F'$ in $G'$ routing the same demand where $\congest_{F'}\le(2\kappa)^{i}\cdot\congest_{F}$ and $\step_{F'}\le t$. 
\item By \Cref{lem:algorithm-backward-mapping}, $G'_{i}$ can be embedded into $G$ with congestion $1$ and length slack $1$. 
\end{enumerate}
Finally, we show that the algorithm calls $\mathcal{A}$ on $d$ many graphs, each with at most $O((2\alpha)^{d}m)$ edges, and runs in $O((2\alpha)^{d}m)$ work and $\tilde{O}(1)$ depth outside these calls. On each iteration $1\le i\le d$, we call $\mathcal{A}$ on the graph $H_{i}$ which consists of edges from $G'_{i-1}$ and $G$ (with their lengths modified), which is at most $O((2\alpha)^{d}m)$ edges in total. Outside of this call, the algorithm clearly runs in time linear in $G'_{i-1}$ and $G$, which is $O((2\alpha)^{d}m)$ time. Over the $d$ iterations, the total work outside calls to $\mathcal{A}$ is $O(d(2\alpha)^{d}m)$ and the total depth is $\tilde{O}(d)$.

\subsection{Proof of \Cref{lem:algorithm-backward-mapping}: Backward Mapping}

\label{sec:algorithm-backward-mapping}

First, we show that the number of edges in $G'_{i}$ is at most $(2\alpha)^{i+1}m$. This is because $|E(G'_{i})|=|E(H'_{i})|\le\alpha|E(H_{i})|$ by the guarantee of ${\cal A}$, and $|E(H_{i})|\le|E(G'_{i-1})|+|E(G)|\le(2\alpha)^{i}m+m$. So $|E(G'_{i})|\le\alpha((2\alpha)^{i}+1)m\le(2\alpha)^{i+1}m$. Our next goal is to construct the embedding $\Pi_{G'_{i}\to G\cup G'_{i-1}}$ and $\Pi_{G'_{i}\to G}$ with desired properties. To do this, we will show the following embedding: $\Pi_{G'_{i}\to H'_{i}}$, $\Pi_{H'_{i}\to H_{i}}$, $\Pi_{H_{i}\to G\cup G'_{i-1}}$, and $\Pi_{H_{i}\to G}$. We will obtain the goal by composing them. Recall that all $G'_{i}$ have length $h_{0}h^{i}\cdot s^{i+1}(2t+3)^{i}$ including the case when $i=0$. 
\begin{enumerate}
\item \textbf{From $G'_{i}$ to $H'_{i}$:} Since $G'_{i}$ is the graph $H'_{i}$ with all edge lengths scaled up by factor $h_{0}h^{i-1}s^{i}(2t+3)^{i-1}$, there is a trivial embedding $\Pi_{G'_{i}\to H'_{i}}$ with congestion $1$ and length slack $1/(h_{0}h^{i-1}s^{i}(2t+3)^{i-1})$. 
\item \textbf{From $H'_{i}$ to $H_{i}$: }By the guarantee of algorithm $\mathcal{A}$, it returns an embedding $\Pi_{H'_{i}\to H_{i}}$ with congestion $1$, length slack $1$. The path count $|\path(\Pi_{H'_{i}\to H_{i}})|$ is at most $\alpha|E(H_{i})|\le(2\alpha)^{i+1}m$. Next, we bound $\step_{\Pi_{H'_{i}\to H_{i}}}$. We have $\step_{\Pi_{H'_{i}\to H_{i}}}\le\leng_{\Pi_{H'_{i}\to H_{i}}}$ because the edge length of $H_{i}$ is integral. Also, flow path of $\Pi_{H'_{i}\to H_{i}}$ has length $\leng_{\Pi_{H'_{i}\to H_{i}}}\le(2t+3)hs$ because each edge in $H'_{i}$ has length $(2t+3)hs$ and the length slack of $\Pi_{H'_{i}\to H_{i}}$ is $1$. So $\step_{\Pi_{H'_{i}\to H_{i}}}\le(2t+3)hs$. 
\item \textbf{From $H_{i}$ to $G\cup G'_{i-1}$: }Since $H_{i}$ is the graph $G'_{i-1}$ scaled down by factor $h_{0}h^{i-1}s^{i}(2t+3)^{i-1}$, together with edges in $G$ with length in the range $(h_{0}h^{i-1},h_{0}h^{i}]$ scaled down by factor at most $h_{0}h^{i-1}$. So there is a trivial embedding $\Pi_{H_{i}\to G\cup G'_{i-1}}$ with congestion $1$ and length slack $\max\{h_{0}h^{i-1}s^{i}(2t+3)^{i-1},h_{0}h^{i-1}\}=h_{0}h^{i-1}s^{i}(2t+3)^{i-1}$. 
\item \textbf{From $H_{i}$ to $G$: }We embed $H_{i}$ further to $G$ as follows. We split the trivial embedding $\Pi_{H_{i}\to G\cup G'_{i-1}}$ into an embedding $\Pi_{1}$ from $H_{i}$ to $G'_{i-1}$ and another embedding $\Pi_{2}$ from $H_{i}$ to $G$ that only congests edges of length more than $h_{0}h^{i-1}$, each with congestion $1$ and length slack at most $h_{0}h^{i-1}s^{i}(2t+3)^{i-1}$. By induction, $G'_{i-1}$ can be embedded into $G$ with congestion $1$ and length slack $1$. Actually, we claim the stronger property that $G'_{i-1}$ can be embedded with congestion $1$ and length slack $1$ into the subgraph of $G$ consisting of all edges of length at most $h_{0}h^{i}$. This is because the edges in $G$ of length greater than $h^{i}$ are ignored in the first $i-1$ levels of the construction. So the same inductive statement must hold on the graph with these edges taken out. Let $\Pi_{G'_{i-1}\to G}$ be this strengthened embedding. We compose $\Pi_{1}$ with $\Pi_{G'_{i-1}\to G}$ and then combine it with $\Pi_{2}$, we obtain an embedding $\Pi_{H_{i}\to G}$ with congestion $1$ and length slack $h_{0}h^{i-1}s^{i}(2t+3)^{i-1}$. 
\end{enumerate}
By composing the embedding $\Pi_{G'_{i}\to H'_{i}}$, $\Pi_{H'_{i}\to H_{i}}$, $\Pi_{H_{i}\to G\cup G'_{i-1}}$, we obtain $\Pi_{G'_{i}\to G\cup G'_{i-1}}$ with congestion $1$ and length slack $1/(h_{0}h^{i-1}s^{i}(2t+3)^{i-1})\times1\times h_{0}h^{i-1}s^{i}(2t+3)^{i-1}=1$. Since both $\Pi_{G'_{i}\to H'_{i}}$ and $\Pi_{H_{i}\to G\cup G'_{i-1}}$ are trivial embedding, we have $|\path(\Pi_{G'_{i}\to G\cup G'_{i-1}})|\le(2\alpha)^{i+1}m$ and $\step_{\Pi_{G'_{i}\to G\cup G'_{i-1}}}\le(2t+3)hs$, inheriting the properties of $\Pi_{H'_{i}\to H_{i}}$. By the definition of $H_{i}$, we have $\Pi_{G'_{i}\rightarrow G\cup G'_{i-1}}$ only routes through edges of $G$ with length in the range $(h_{0}h^{i-1},h_{0}h^{i}]$.

The embedding $\Pi_{G'_{i}\to G}$ with congestion $1$ and length slack $1$ is obtained by composing $\Pi_{G'_{i}\to H'_{i}}$, $\Pi_{H'_{i}\to H_{i}}$, and $\Pi_{H_{i}\to G}$.

\subsection{Proof of \Cref{lem:algorithm-forward-mapping}: Forward Mapping}

\label{sec:algorithm-forward-mapping}

Let $F$ be a flow in $G$ with $\leng_{F}\le h_{0}h^{i}$ from the lemma statement. Our goal is to show a flow $F'$ routing $D_{F}$ in $G'_{i}$ with $\congest_{F'}\le(2\kappa)^{i}\cdot\congest_{F}$ and $\step_{F'}\le t$.

\paragraph{Construct flow $F^{*}$ on $H_{i}$.}

We first construct a flow $F^{*}$ that routes $D_{F}$ in $H_{i}$ with $\leng_{F^{*}}\le(2t+3)h$ and $\congest_{F^{*}}\le((2\kappa)^{i}+1)\cdot\congest_{F}$. We start by decomposing the flow-paths in $F$ as follows.

For each flow-path $P$ in $F$, we first break down the path into at most $2h$ segments $P_{1},P_{2},\ldots$ such that each path $P_{i}$ either has length at most $h_{0}h^{i-1}$ or consists of a single edge. To do so, initialize $P'\gets P$ and $i\gets1$, and while $P'$ is non-empty, let path $P_{i}$ be the longest prefix of $P'$ of length at most $h_{0}h^{i-1}$, or the first edge of $P'$ if its length is already greater than $h_{0}h^{i-1}$; then, remove the edges of $P_{i}$ from $P'$ and increment $i$ by $1$. If $P_{i}$ is the longest prefix of $P'$ of length at most $h_{0}h^{i-1}$, and if there is an edge $e$ after $P_{i}$ in $P'$, then the combined length of $P_{i}$ and $e$ is greater than $h_{0}h^{i-1}$, and furthermore, $e$ must be removed on the next iteration. It follows that every two iterations decreases the length of $P'$ by at least $h_{0}h^{i-1}$, and since $P$ has length at most $h_{0}h^{i}$, there are at most $2h$ many paths.

For each path $P_{i}$ of length at most $h_{0}h^{i-1}$, add it to a new flow $F_{1}$, and for the remaining paths $P_{i}$ (consisting of single edges of length greater than $h_{0}h^{i-1}$), add it to a new flow $F_{2}$. Let $F_{1}$ and $F_{2}$ be the final flows after repeating this procedure for all flow-paths $P$ in $F$. By construction, we have $F=F_{1}+F_{2}$, $\congest_{F_{1}}\le\congest_{F}$, $\congest_{F_{2}}\le\congest_{F}$, $\leng_{F_{1}}\le h_{0}h^{i-1}$, and $\step_{F_{2}}=1$, and moreover, each flow-path $P$ in $F$ decomposes into at most $2h$ flow-paths in $F_{1}$ and $F_{2}$.

By induction, property~(\ref{item:stacking-1}) guarantees that $G'_{i-1}$ is an $h_{0}h^{i-1}$-length-constrained $t$-step emulator for $G$ with congestion slack $(2\kappa)^{i}$. Since $\leng_{F_{1}}\le h_{0}h^{i-1}$, there exists a flow $F'_{1}$ in $G'_{i-1}$ routing demand $D_{F_{1}}$ where $\congest_{F'_{1}}\le(2\kappa)^{i}\cdot\congest_{F_{1}}\le(2\kappa)^{i}\cdot\congest_{F}$ and $\step_{F'_{1}}\le t$. Since edges $H_{i}$ have unit length, there is a corresponding flow $F_{1}^{*}$ in $H_{i}$ where $\congest_{F_{1}^{*}}\le(2\kappa)^{i}\cdot\congest_{F}$ and $\leng_{F_{1}^{*}}\le t$.

By construction, each path in $F_{2}$ is a single edge $e$ of length $\ell_{G}(e)\in(h_{0}h^{i-1},h_{0}h^{i}]$, so there is a corresponding edge in $H_{i}$ of length $\lceil\ell_{G}(e)/h_{0}h^{i-1}\rceil\le\ell_{G}(x)/(h_{0}h^{i-1})+1$. Let $F_{2}^{*}$ be the flow in $H_{i}$ that routes each single edge in $F_{2}$ through its corresponding edge in $H_{i}$. By construction, $\congest_{F_{2}^{*}}=\congest_{F_{2}}\le\congest_{F}$.

Finally, we concatenate flows $F_{1}^{*}$ and $F_{2}^{*}$ in $H_{i}$ as follows. For each flow-path $P$ in $F$, consider the decomposition into at most $2h$ segments $P_{1},P_{2},\ldots$. For each path $P_{i}$ of length at most $h_{0}h^{i-1}$, take a corresponding flow in $F_{1}^{*}$ of length at most $t$, and for each single-edge path $P_{i}$ of length greater than $h_{0}h^{i-1}$, take the flow in $F_{2}^{*}$ along its corresponding edge in $H_{i}$. In both cases, the flow in $H_{i}$ has length at most $\max\{t,\ell_{G}(P_{i})/(h_{0}h^{i-1})+1\}$. Concatenating these flows over all $i$ produces a flow for path $P$ of length at most $\sum_{i}\max\{t,\ell_{G}(P_{i})/(h_{0}h^{i-1})+1\}\le2ht+h_{0}h^{i}/h_{0}h^{i-1}+2h=(2t+3)h$. Over all flow-paths $P$, the final flow $F^{*}$ in $H_{i}$ satisfies $\leng_{F^{*}}\le(2t+3)h$ and $\congest_{F^{*}}\le\congest_{F_{1}^{*}}+\congest_{F_{2}^{*}}\le((2\kappa)^{i}+1)\cdot\congest_{F}$.

\paragraph{Use emulator $H'_{i}$.}

Since $H'_{i}$ is a $(2t+3)h$-length-constrained $t$-step emulator of $H_{i}$ with congestion slack $\kappa$, there is a flow $F^{\dagger}$ in $H'_{i}$ routing $D_{F^{*}}=D_{F}$ with $\congest_{F^{\dagger}}\le\kappa\cdot\congest_{F^{*}}\le\kappa\cdot((2\kappa)^{i}+1)\cdot\congest_{F}\le(2\kappa)^{i+1}\cdot\congest_{F}$ and $\step_{F^{\dagger}}\le t$. Finally, since $G'_{i}$ is simply $H'_{i}$ with edge length increased, the flow $F^{\dagger}$ in $H'_{i}$ translates to a flow $F'$ in $G'_{i}$ with $\congest_{F'}\le(2\kappa)^{i}\cdot\congest_{F}$ and $\step_{F'}\le t$ as promised. 

\section{Low-Step Emulators}
\label{sec:low-step emu}
In this section, we define and construct emulators similar to the length-constrained low-step emulators, but they preserve information about general flows instead of length-constrained flows.

To do this, we need a notion of \emph{path-mapping} between flows. For any two flows $F$ in $G$ and $F'$ in $G'$ routing the same demand, observe that there exist path-decomposition of $F$ and $F'$ denoted by ${\cal P}$ and ${\cal P}'$, respectively, and a bijection $\pi:{\cal P}\rightarrow{\cal P}'$ such that for every $P\in{\cal P}$ and $P'=\pi(P)$, we have $\val(F'(P'))=\val(F(P))$. We call $\pi$ a \emph{path-mapping from $F$ to $F'$}. We say that $\pi$ has length slack $s$ if $\l_{G'}(\pi(P))\le s\cdot\l_{G}(P)$.

Intuitively, if $F$ can be mapped to $F'$ via a path-mapping of length slack $s$, then $F'$ is ``as short as'' $F$ up to a factor of $s$. Now we are ready to define our main object. 
\begin{defn}
[Low-Step Emulators]\label{def:emu}Given a graph $G$ and a node-weighting $A$ in $G$, we say that $G'$ is a \emph{$t$-step emulator} of $A$ with length slack $s$ and congestion slack $\kappa$ if 
\begin{enumerate}
\item Given a flow $F$ in $G$ routing an $A$-respecting demand, there is a flow $F'$ in $G'$ routing the same demand where $\congest_{F'}\le\kappa\cdot\congest_{F}$ and $\step_{F'}\le t$. Moreover, there exists a path-mapping from $F$ to $F'$ with length slack $s$. 
\item $G'$ can be embedded into $G$ with congestion $1$ and length slack $1$. 
\end{enumerate}
\end{defn}

\begin{remark}
Let $G'$ is a $t$-step emulator of $A$ in $G$. Then, $G'$ is simultaneous an $h$-length-constrained $t$-step emulator of $A$ in $G$ for all $h$, except that $G'$ does not satisfies Condition \ref{enu:uniform} on length uniformity.
\end{remark}

The main theorems of this section are the construction of low-step emulators. 
\begin{thm}
[Existential]\label{thm:emu exist}Given any graph $G$, a node weighing $A$ and parameter $t$, there exists a $t$-step emulator $G'$ for $A$ in $G$ with length slack $O(t)$, congestion slack $\poly(t\log N)N^{O(1/\sqrt{t})}$, and $|E(G')|\le n\cdot N^{O(1/\sqrt{t})}\poly(\log N)$. 
\end{thm}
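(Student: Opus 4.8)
The plan is to build $G'$ by superimposing $h$-length-constrained low-step emulators (\Cref{thm:h leng emu}) over a geometric sequence of length scales, one scale per ``distance band,'' in exactly the way a single hopset is assembled out of hopsets for every distance scale. Since lengths and capacities are polynomially bounded, every simple path of $G$ has length in $[1/\poly(n),\poly(n)]$, so it suffices to use scales $h_{j}=2^{j}$ for $j$ ranging over an index set $J$ with $|J|=O(\log N)$. I set the parameter of \Cref{thm:h leng emu} to $\tau=\Theta(\sqrt t)$, chosen so that the resulting emulator is $O(\tau^{2})=t$-step, and for each $j\in J$ I invoke it to obtain an $h_{j}$-length-constrained $t$-step emulator $G'_{(j)}$ of $A$ in $G$ with congestion slack $\kappa_{0}=\poly(t\log N)\,N^{O(1/\sqrt t)}$, at most $nN^{O(1/\sqrt t)}\poly(\log N)$ edges of uniform length $\Theta(t)\cdot h_{j}$, and the accompanying embedding $\Pi_{G'_{(j)}\to G}$ of congestion $1$ and length slack $1$. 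Setting $\lambda:=|J|$, I let $G'$ be the graph on vertex set $\bigcup_{j\in J}V(G'_{(j)})\subseteq V(G)$ whose edge multiset is $\bigsqcup_{j\in J}E(G'_{(j)})$, where each copied edge keeps its length but has its capacity divided by $\lambda$.

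Condition~2 of \Cref{def:emu} then comes essentially for free: taking $\Pi_{G'\to G}:=\sum_{j\in J}\tfrac1\lambda\Pi_{G'_{(j)}\to G}$, each of the $\lambda$ summands has congestion $1$, so the sum has congestion $\le 1$; each has length slack $1$ and the capacities were divided by exactly $\lambda$, so $\Pi_{G'\to G}$ routes the edge-demand of $G'$ with length slack $1$. The edge bound $|E(G')|\le\sum_{j\in J}|E(G'_{(j)})|\le|J|\cdot nN^{O(1/\sqrt t)}\poly(\log N)=nN^{O(1/\sqrt t)}\poly(\log N)$ is immediate.

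For Condition~1, given a flow $F$ in $G$ routing an $A$-respecting demand, I split its flow-paths by length, $F=\sum_{j\in J}F_{(j)}$, where $F_{(j)}$ collects the flow-paths $P$ with $\ell_{G}(P)\in(h_{j-1},h_{j}]$. Each $F_{(j)}\le F$ entrywise, so $D_{F_{(j)}}$ is $A$-respecting and $\leng_{F_{(j)}}\le h_{j}$; the emulator property of $G'_{(j)}$ (with its scaled-down capacities) then yields a flow $F'_{(j)}$ in $G'_{(j)}\subseteq G'$ routing $D_{F_{(j)}}$ with $\step_{F'_{(j)}}\le t$ and $\congest_{G',F'_{(j)}}\le\lambda\kappa_{0}\cdot\congest_{G,F_{(j)}}\le\lambda\kappa_{0}\cdot\congest_{G,F}$. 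Put $F':=\sum_{j\in J}F'_{(j)}$, which routes $D_{F}$; since the edge sets $E(G'_{(j)})\subseteq E(G')$ are pairwise disjoint, $\congest_{F'}=\max_{j}\congest_{F'_{(j)}}\le\lambda\kappa_{0}\congest_{F}=\poly(t\log N)N^{O(1/\sqrt t)}\congest_{F}$ and $\step_{F'}=\max_{j}\step_{F'_{(j)}}\le t$. For the path-mapping, since $F_{(j)}$ and $F'_{(j)}$ route the same demand, refining their path decompositions (splitting paths into copies with matching flow values, commodity by commodity) gives decompositions $\mcP_{j},\mcP'_{j}$ and a bijection $\pi_{j}\colon\mcP_{j}\to\mcP'_{j}$ with $F_{(j)}(P)=F'_{(j)}(\pi_{j}(P))$; every $P\in\mcP_{j}$ is a flow-path of $F_{(j)}$, hence $\ell_{G}(P)>h_{j-1}=h_{j}/2$, while $\pi_{j}(P)$ is a flow-path of $F'_{(j)}$, hence $\ell_{G'}(\pi_{j}(P))\le\leng_{F'_{(j)}}$. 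Taking the disjoint union of the $(\mcP_{j},\pi_{j})$ over $j\in J$ produces a path-mapping from $F$ to $F'$ whose length slack is $2\max_{j}\leng_{F'_{(j)}}/h_{j}$.

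The main obstacle is bounding that last quantity by $O(t)$. With the per-scale emulators taken directly from \Cref{thm:h leng emu}, a flow-path of $F'_{(j)}$ has up to $t$ edges, each of length $\Theta(t)\cdot h_{j}$, so $\leng_{F'_{(j)}}$ is only $O(t^{2})\cdot h_{j}$ and the naive argument above gives length slack $O(t^{2})$ — everything else (step $t$, congestion slack $N^{O(1/\sqrt t)}$, $nN^{O(1/\sqrt t)}\poly(\log N)$ edges) already matches the theorem. Shaving this to $O(t)$ is the crux and seems to need a more careful treatment: either recursing the emulator construction so that, within a fixed band, the shortcut edges are themselves much shorter than the band scale, or interleaving the per-scale construction with the stacking of \Cref{sec:bootstrap-emu}, all while keeping the step at $t$ and the congestion-slack exponent at $O(1/\sqrt t)$ — it is exactly this step/congestion balance that forces the parameter choice $\tau=\Theta(\sqrt t)$. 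The remaining pieces — the length-band decomposition of $F$, the pairwise-disjoint superposition, and the $1/\lambda$ capacity rescaling that keeps the composed backward embedding at congestion exactly $1$ — are routine once this is set up.
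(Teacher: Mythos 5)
Your construction is exactly the paper's: it decomposes any $A$-respecting flow into length bands $(2^{j-1},2^j]$, takes an $h$-length-constrained low-step emulator per scale from \Cref{thm:h leng emu} with parameter $\Theta(\sqrt t)$ (so that the step bound becomes $t$ and the congestion-slack exponent becomes $O(1/\sqrt t)$), and superimposes the $d=O(\log N)$ scales with capacities divided by $d$; this union-and-rescale step, including the disjoint-support congestion argument and the band-by-band path-mapping with slack $2st$, is precisely \Cref{lem:reduc to h leng emu}. The one issue you flag as "the crux" — that this yields length slack $2st=O(t^{2})$ rather than the stated $O(t)$ — is not resolved in the paper either: the paper's own proof of \Cref{thm:emu exist} explicitly computes "length slack $2st=O(t^{2})$" and then the theorem statement records $O(t)$, so the discrepancy is an inconsistency internal to the paper (most plausibly a typo in the theorem statement), not a missing idea you failed to find. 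No recursion within a band or interleaving with the bootstrapping of \Cref{sec:bootstrap-emu} is used there, and nothing downstream depends on $O(t)$ versus $O(t^{2})$ — the algorithmic version \Cref{thm:emu algo} only claims length slack $\exp(\poly(1/\eps))$. So your proposal is as complete as the paper's own argument; you should simply state the length slack as $O(t^{2})$.
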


Next, we show an algorithmic version of the above theorem when $A=\deg_{G}$ and the number of emulator edges is close to $m$, instead of $n$.

\begin{thm}
[Algorithmic]\label{thm:emu algo}Given any $m$-edge graph $G$ and $\eps \in (\log^{-c} N, 1)$ for some sufficiently small constant $c$, let $t=\exp(\poly(1/\eps))$. There is a parallel algorithm $\textsc{LowStepEmu}(G,\eps)$ that constructs a $t$-step emulator $G'$ for $A$ in $G$ with length slack $\exp(\poly(1/\eps))$, congestion slack $N^{\poly\eps}$, and $|E(G')|\le mN^{\poly\eps}$.

The algorithms takes $mN^{\poly\eps}$ work and $N^{\poly\eps}$ depth. Given an edge representation $\flow_{F'}$ of flow $F'$ in $G'$, there is an algorithm $\textsc{FlowMap}(G',\flow_{F'})$ that computes an edge representation $\flow_{F}$ of flow $F$ in $G$ routing the same demand where $\congest_{F}\le\congest_{F'}$ and $\leng_{F}\le\leng_{F'}$ using $mN^{\poly\eps}$ work and $\tilde{O}(1/\poly\eps)$ depth. 
\end{thm}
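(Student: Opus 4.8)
The plan is to assemble $G'$ by combining, across a geometric family of length scales, the bootstrapped length-constrained low-step emulators of \Cref{cor:stacking algo} (here $A=\deg_G$). Concretely, I would pick a geometric ratio $h$, a base $h_0$, and a level count $d$ with $h_0h^d$ larger than any simple-path length in $G$, feed \Cref{cor:stacking algo} a suitably small parameter $\eps'=\poly(\eps)$, and obtain graphs $G'_0,\dots,G'_d$ in which $G'_i$ is an $(h_0h^i)$-length-constrained $t$-step emulator of $G$ for $A$, with $t=\exp(\poly(1/\eps))$, at most $m\gamma^i$ edges and congestion slack $\gamma^i$ (where $\gamma=N^{\poly\eps'}$), and uniform edge length $L_i\le h_0h^i\cdot t^{O(i)}$. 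I then set $G':=\bigcup_{i=0}^d G'_i$ (a disjoint union of edge sets) and scale every capacity of $G'$ down by $d+1$. Choosing the parameters so that $d$, $\gamma^d$ and $\poly(h)$ are all $N^{\poly\eps}$ (and $L_i/(h_0h^{i-1})=\exp(\poly(1/\eps))$) gives $|E(G')|\le mN^{\poly\eps}$ and a construction running in $mN^{\poly\eps}$ work and $N^{\poly\eps}$ depth, dominated by \Cref{cor:stacking algo}.

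To see $G'$ is a $t$-step emulator (\Cref{def:emu}): for condition~(2), each unscaled $G'_i$ embeds into $G$ with congestion $1$ and length slack $1$ (\Cref{lem:algorithm-backward-mapping}, packaged in \Cref{cor:stacking algo}); summing the $d+1$ embeddings and applying the capacity scaling yields an embedding of $G'$ into $G$ with congestion $\le1$ and length slack $1$. For condition~(1), given a flow $F$ in $G$ routing an $A$-respecting demand, take a path decomposition of $F$ and bucket its flow-paths by length into $\mathcal P_j=\{P:h_0h^{j-1}<\l_G(P)\le h_0h^j\}$ (and $\mathcal P_0=\{P:\l_G(P)\le h_0\}$); the sub-flow $F_j$ on $\mathcal P_j$ routes a sub-demand of $D_F$ with $\leng_{F_j}\le h_0h^j$, so $G'_j$ yields a flow $F'_j$ in $G'_j$ routing $D_{F_j}$ with $\step_{F'_j}\le t$ and $\congest_{F'_j}\le\gamma^j\congest_F$. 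Since the $E(G'_j)$ are disjoint, $F':=\sum_jF'_j$ has $\step_{F'}\le t$ and, after the $d+1$ scaling, congestion $\le(d+1)\gamma^d\congest_F=N^{\poly\eps}\congest_F$; and any bijective path-mapping $F\to F'$ built by concatenating bijections $F_j\to F'_j$ has length slack $\le\exp(\poly(1/\eps))$, since a flow-path $P\in\mathcal P_j$ maps to at most $t$ edges of $G'_j$, of total length $\le tL_j\le\exp(\poly(1/\eps))\cdot\l_G(P)$ (using $\l_G(P)>h_0h^{j-1}$ for $j\ge1$, and $\l_G(P)\ge1$ against $L_0\le h_0\cdot t^{O(1)}$ for $j=0$).

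For $\textsc{FlowMap}(G',\flow_{F'})$, split the edge representation $\flow_{F'}$ into the edge representations $\flow_{F'_i}$ of the restrictions of $F'$ to the sets $E(G'_i)$, run the flow-mapping routine of \Cref{cor:stacking algo} on each to get $\flow_{F_i}$ for flows $F_i$ in $G$ routing $D_{F'_i}$ with $\congest_{F_i}\le\congest_{F'_i}$ and $\leng_{F_i}\le\leng_{F'_i}$, and return $\flow_F:=\sum_i\flow_{F_i}$; that $\congest_F\le\congest_{F'}$ follows exactly as in condition~(2) (the scaled combined embedding $G'\to G$ has congestion $\le1$, so a congestion-$c$ flow maps down to congestion $\le c$), and $\leng_F\le\leng_{F'}$ since each $\Pi_{G'_i\to G}$ has length slack $1$; the work is $(d+1)mN^{\poly\eps}=mN^{\poly\eps}$ and the depth is $\Otil(d)=\Otil(1/\poly\eps)$, since the $d+1$ calls run in $\Otil(d)$ depth in parallel. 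I expect the main obstacle to be precisely the parameter bookkeeping in the first step: to keep the path-mapping length slack independent of $N$ one wants the ratio $h$ (and $h_0$) to be only $\exp(\poly(1/\eps))$, but then spanning all length scales up to $\poly(n)$ demands more levels $d$, which in turn pushes up $\gamma^d$ and the work. Reconciling these — plausibly by applying the bootstrapping in a few successive rounds (stacking emulators of emulators, in the style of parallel hopsets) rather than in one shot, and invoking $\eps>\log^{-c}N$ to fold residual $\exp(\poly(1/\eps))$ factors into $N^{\poly\eps}$ — is where the real work lies.
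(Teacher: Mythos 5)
Your skeleton matches the paper's (bootstrapped length-constrained emulators at geometrically spaced scales, a capacity-scaled disjoint union, per-scale flow decomposition for the forward direction, summed embeddings for the backward direction, and a per-scale \textsc{FlowMap}), but the tension you flag at the end is a genuine gap, not bookkeeping, and your suggested fix is not the right one. The problem is exactly as you diagnose it: in your combination step the length slack of the path-mapping is governed by the ratio between the uniform edge length $L_j$ of $G'_j$ and the \emph{lower} end $h_0h^{j-1}$ of the $j$-th bucket, which carries a factor of $h$; forcing $h\le\exp(\poly(1/\eps))$ then forces $d=\Omega(\log N/\poly(1/\eps))$ levels, and $\gamma^{d}$ becomes superpolynomial. ``Stacking emulators of emulators in a few successive rounds'' does not escape this, since that is already what \Cref{thm:stacking} does and the per-level slacks still compound over however many levels are needed to span all scales.

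The paper's resolution decouples the stacking ratio from the bucketing ratio. Take $h=N^{\eps'}$ (a power of two, with $\eps'=\poly(\eps)$), so that only $d'=1/\eps'$ stacking levels are needed and $\gamma^{d'}=N^{\poly\eps}$; the per-level length slack of \Cref{thm:stacking} is $s^{i+1}(2t+3)^i$, independent of $h$, so a large $h$ costs nothing there. Then run \Cref{cor:stacking algo} once for \emph{each} offset $h_0=2^j\le h$, i.e.\ $O(\log h)$ times. The resulting family $\{G'_{j,k}\}$ contains a $2^i$-length-constrained $t$-step emulator for \emph{every} $i\le\log N$, and these are combined by \Cref{lem:reduc to h leng emu}, whose buckets are dyadic ($\l(P)\in[2^{i-1},2^i)$), so the length slack of the combination is only $2st=\exp(\poly(1/\eps))$ regardless of $h$. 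Everything else in your proposal (the congestion accounting after scaling by the number of scales, the embedding composition for condition~(2), and the decompose-map-sum implementation of \textsc{FlowMap}) is correct and matches the paper.
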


It is a natural question to ask if we can obtain above emulator for general $A$ whose number of edges is close to $|\supp(A)|$. We believe this to be possible, but leave this for future work.

In the rest of this section, we prove \Cref{thm:emu exist} and \Cref{thm:emu algo}. We will use the following basic lemma showing how to construct a $t$-step emulator given $2^{i}$-length-constrained $t$-step emulator for each $i$. 
\begin{lem}
\label{lem:reduc to h leng emu}Let $A$ be a a node-weighting of $G$. For $i\in\{1,\dots,d=\log N\}$, let $G'_{i}$ be a $2^{i}$-length-constrained $t$-step emulator of $A$ with length slack $s$ and congestion slack $\kappa$. Let $G'=\biguplus_{i}G'_{i}$ with capacity scaled down by $d$. Then $G'$ is a $t$-step emulator of $A$ with length slack $2st$ and congestion $\kappa d$. 
\end{lem}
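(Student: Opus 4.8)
The plan is to verify the two defining conditions of a $t$-step emulator (Definition~\ref{def:emu}) for $G' = \biguplus_i G'_i$ with capacities scaled down by $d = \log N$. The second condition — embeddability of $G'$ into $G$ with congestion $1$ and length slack $1$ — is the easy one: each $G'_i$ individually embeds into $G$ with congestion $1$ and length slack $1$ by Condition~3 of Definition~\ref{def:h-len emu}, so summing these $d$ embeddings gives a flow in $G$ of congestion at most $d$ routing the edge-demand of $\biguplus_i G'_i$; after scaling capacities down by $d$ the congestion drops to $1$, and length slack is unaffected (it is a per-edge multiplicative guarantee). So $\Pi_{G' \to G} := \frac{1}{d}\sum_i \Pi_{G'_i \to G}$ works.

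For the first condition, let $F$ be a flow in $G$ routing an $A$-respecting demand $D$. The idea is to split $F$ by the length of its flow-paths into $d$ dyadic buckets: for $i \in \{1,\dots,d\}$ let $F^{(i)}$ be the subflow of $F$ consisting of all flow-paths $P$ with $\leng(P) \in (2^{i-1}, 2^i]$ (and put paths of length $\le 1$ in bucket $i=1$). Since lengths are polynomially bounded, $d = \log N$ buckets suffice, and $F = \sum_i F^{(i)}$ with $\congest_{F^{(i)}} \le \congest_F$ for each $i$. Each $F^{(i)}$ routes an $A$-respecting demand $D^{(i)}$ (a restriction of a demand respecting $A$ still respects $A$) and has $\leng_{F^{(i)}} \le 2^i$, so by the defining property of the $2^i$-length-constrained $t$-step emulator $G'_i$, there is a flow $F'^{(i)}$ in $G'_i$ routing $D^{(i)}$ with $\congest_{F'^{(i)}} \le \kappa \cdot \congest_{F^{(i)}} \le \kappa \cdot \congest_F$ and $\step_{F'^{(i)}} \le t$. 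Now set $F' := \sum_i F'^{(i)}$, viewed as a flow in $G' = \biguplus_i G'_i$. Since $F'^{(i)}$ lives entirely on the disjoint copy $G'_i$, there is no interaction between buckets on a common edge; on the scaled-down capacities of $G'$, each $F'^{(i)}$ alone has congestion at most $d\kappa \cdot \congest_F$, but since at most one bucket uses any given edge, $\congest_{F'} \le d\kappa \cdot \congest_F$ — wait, this needs care: the disjointness means $\congest_{F'}$ on the original (unscaled) capacities is $\max_i \congest_{F'^{(i)}} \le \kappa \congest_F$, hence on capacities scaled down by $d$ it is $d\kappa\congest_F$; so the congestion slack is $d\kappa$ as claimed. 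Also $F'$ routes $\sum_i D^{(i)} = D$, and $\step_{F'} \le t$ since every flow-path of $F'$ lies in a single $G'_i$ and has at most $t$ edges.

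It remains to exhibit a path-mapping from $F$ to $F'$ of length slack $2st$. Here I would work bucket by bucket: within bucket $i$, the relation $D_{F^{(i)}} = D_{F'^{(i)}}$ supplies (as noted in the paragraph defining path-mappings) path-decompositions $\mathcal{P}_i$ of $F^{(i)}$ and $\mathcal{P}'_i$ of $F'^{(i)}$ and a value-preserving bijection $\pi_i : \mathcal{P}_i \to \mathcal{P}'_i$; taking the union over $i$ gives a path-mapping $\pi$ from $F$ to $F'$. For the length bound: a path $P \in \mathcal{P}_i$ has $\l_G(P) > 2^{i-1}$ (paths of true length $\le 1$ are handled by noting $\l_{G'}(\pi(P)) \le t \cdot 2^i s \le 2st \cdot 1$ since each such path uses $\le t$ edges each of length $2^i s$), while $\pi_i(P)$ lies in $G'_i$, whose edges all have length $2^i s$ by Condition~\ref{enu:uniform} of Definition~\ref{def:h-len emu}, and has at most $t$ edges, so $\l_{G'}(\pi_i(P)) \le t \cdot 2^i s = 2s \cdot 2^{i-1} \cdot t < 2st \cdot \l_G(P)$. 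Thus $\pi$ has length slack $2st$, completing the verification.

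The main thing to be careful about — the only place where a subtlety could hide — is the interaction between the congestion scaling by $d$ and the disjoint-union structure: one must make sure the factor $d$ appears exactly once (from the capacity scaling needed to fix the backward embedding), not twice. Since the $G'_i$ are vertex-and-edge-disjoint copies, a single flow-path of $F'$ never spans two copies, so $\congest_{F'}$ before scaling is the \emph{max} over $i$ (not the sum), and the only $d$ factor is from the capacity rescaling. Everything else is bookkeeping on dyadic buckets and on the uniform edge lengths of each $G'_i$.
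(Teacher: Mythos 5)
Your proposal is correct and follows essentially the same route as the paper's proof: dyadic bucketing of $F$ by flow-path length, applying the $2^i$-length-constrained emulator property bucket by bucket, observing that disjointness makes the pre-scaling congestion a maximum rather than a sum (so the factor $d$ enters only once, via the capacity rescaling), and assembling the per-bucket path-mappings into one of length slack $2st$. The embedding argument via $\frac{1}{d}\sum_i \Pi_{G'_i\to G}$ is also exactly the paper's.
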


\begin{proof}
We will argue the following: 
\begin{enumerate}
\item Given a flow $F$ in $G$ routing an $A$-respecting demand with congestion $1$, there is a flow $F'$ in $G'$ routing $D_{F}$ where $\congest_{F'}\le\kappa d$ and $\step_{F'}\le t$. Moreover, there exist a path-mapping between $F$ and $F'$ with length slack $2st$. 
\item $G'$ can be embedded into $G$ with congestion $1$ and length slack $1$. 
\end{enumerate}
For the first point, given $F$, we decompose $F=F_{1}+\dots+F_{d}$ so that $F_{i}$ contains all flow paths $P$ of $F$ where $2^{i-1}\le\ell(P)<2^{i}$ for $i\ge1$. For each flow $F_{i}$ in $G$, by \Cref{def:h-len emu}, there is a flow $F'_{i}$ routing $D_{F_{i}}$ in $G'_{i}$ with $\congest_{F'_{i}}\le\kappa$, $\step_{F'_{i}}\le t$, and $\leng_{F'_{i}}\le st\cdot2^{i}$. Define $F'=F'_{1}+\dots+F'_{d}$ as a flow in $G'$. We now bound the congestion, step, and length slack of $F'$, respectively. We have $\congest_{G',F'}\le d\cdot\max_{i}\congest_{G'_{i},F'_{i}}\le\kappa d$ because $G'$ is a disjoint union of $G'_{i}$ after scaling down the capacity by $d$. Also, we have $\step_{F'}\le t$ as $\step_{F'_{i}}\le t$ for all $i$. To bound the length slack, since $F_{i}$ and $F'_{i}$ route the same demand, there exists a path mapping $\pi_{i}$ from $F_{i}$ to $F_{i}'$. Since all flow paths of $F_{i}$ and $F_{i}'$ have length at least $2^{i-1}$ and less than $st\cdot2^{i}$ respectively, the length slack of $\pi_{i}$ is at most $2st$. From $\pi_{1},\dots,\pi_{d}$, there exists a natural path-mapping $\pi$ from $F$ to $F'$ with length slack $2st$. This completes the proof of the first point.

For the second point, for each $i$, let $\Pi_{G'_{i}\rightarrow G}$ be the embedding from $G'_{i}$ into $G$ with congestion $1$ and length slack $1$. Observe that $\Pi_{G'\rightarrow G}=\frac{1}{d}\biguplus_{i}\Pi_{G'_{i}\rightarrow G}$ is an embedding of $G'$ (which is $\biguplus_{i}G'_{i}$ after scaling down the capacity by $d$) into $G$ with congestion $d/d=1$ and length slack $1$. 
\end{proof}

\paragraph{Existential Emulators: Proof of \Cref{thm:emu exist}.}

From \Cref{thm:h leng emu}, for any $i$, there exists a $2^{i}$-length-constrained $t$-step emulator $G'_{i}$ for $A$ in $G$ with with length slack $s=O(t)$ and congestion slack $\kappa=\poly(t\log N)N^{O(1/\sqrt{t})}$ where $|E(G'_{i})|\le nN^{O(1/\sqrt{t})}\poly(\log N).$

By plugging $G'_{1},\dots,G'_{\log N}$ into \Cref{lem:reduc to h leng emu}, we obtain a $t$-step emulator $G'$ for $A$ in $G$ with length slack $2st=O(t^{2})$ and congestion slack $\kappa d=\poly(t\log N)N^{O(1/\sqrt{t})}$. Since $G'=\biguplus_{i}G'_{i}$ with capacity scaled down, the bound of $|E(G'_{i})|$ follow.

\paragraph{Algorithmic Emulators: Proof of \Cref{thm:emu algo}.}

Given $\eps \in (\log^{-c}, 1)$ for some small enough constant $c$, let $t=\exp(\poly(1/\eps))$ and $\gamma=N^{\eps^{c_{0}}}$ be the parameters from \Cref{cor:stacking algo} for some constant $c_{0}>0$. Set $\eps'\gets\eps^{c_{0}/2}$ and $h\gets N^{\eps'}$ as the stacking parameter. We round up $h$ so that it is a power of $2$. For every $h_{0}=2^{j}$ where $h_{0}\le h$, we do the following. Let $d'=\log_{h}N=1/\eps'$. Therefore, $\gamma^{d'}\le N^{(\eps^{c_{0}})/\eps^{c_{0}/2}}=N^{\eps^{c_{0}/2}}=N^{\poly\eps}$. We will exploit this inequality.

Construct $G'_{j,0},G'_{j,1},\ldots,G'_{j,d'}$ via \Cref{cor:stacking algo} such that for each index $k$, $G'_{j,k}$ is an $(h_{0}h^{k})$-length-constrained $t$-step emulator of $G$. Each emulator $G'_{j,k}$ contains $|E(G'_{j,k})|\le m\gamma^{d'}=m\cdot N^{\poly\eps}$ edges and has with length slack at most $O(t^{2}){}^{d'}=\exp(\poly(1/\eps)\cdot(1/\eps'))=\exp(\poly(1/\eps))$ and congestion slack $\kappa\le\gamma^{d'}\le N^{\poly\eps}$.

Therefore, we obtained $2^{i}$-length-constrained $t$-step emulator $G'_{i}$ of $G$ containing $m\cdot N^{\poly\eps}$ edges with length slack $s=\exp(\poly(1/\eps))$ and congestion slack $\kappa=N^{\poly\eps}$ for all $i\le\log N$. By plugging these emulators $G'_{i}$ into \Cref{lem:reduc to h leng emu}, we obtain a $t$-step emulator $G'$ of $G$ with length slack $2st=\exp(\poly(1/\eps))$ and congestion slack $\kappa\log N=N^{\poly\eps}$ where $G'=\biguplus_{i}G'_{i}$ with capacity scaled down by $\log N$. Clearly, we have $|E(G')|\le m\cdot N^{\poly\eps}$.

Let us analyze the construction time of $G'$. We call \Cref{cor:stacking algo} $O(\log h)=O(1/\poly\eps)$ times, each of which takes $m\gamma^{d'}\poly(h)=mN^{\poly\eps}$ work and $\Otil(d\cdot\poly(h))=N^{\poly\eps}$ depth.

Finally, given an edge representation $\flow_{F'}$ of flow $F'$ in $G'$, we show how to compute the edge representation of the corresponding flow $F$ in $G$. Given $\flow_{F'}:\overleftrightarrow{E}(G')\rightarrow\mathbb{R}_{\ge0}$, we define $\flow_{F'_{i}}$ as $\flow_{F'}$ restricted to $\overleftrightarrow{E}(G'_{i})$ that represents a flow $F'_{i}$ in $G'_{i}$ where $\leng_{F'_{i}}\le\leng_{F'}$ and $\congest_{F'_{i}}\le\congest_{F'}/\log N$ (because $G'=\biguplus_{i}G'_{i}$ with capacity scaled down by $\log N$). By \Cref{cor:stacking algo}, we can compute the edge representation $\flow_{F_{i}}$ of flow $F_{i}$ in $G$ where $D_{F_{i}}=D_{F'_{i}}$, $\leng_{F_{i}}\le\leng_{F'}$ and $\congest_{F_{i}}\le\congest_{F'}/\log N$ in $m\gamma^{d'}h=mN^{\poly\eps}$ work and $\Otil(d)=\tilde{O}(1/\poly\eps)$ depth. By concatenating all $F_{i}$, we get the flow $F$ in $G$ where $D_{F}=D_{F'}$, $\leng_{F}\le\leng_{F'}$ and $\congest_{F}\le\sum_{i}\congest_{F_{i}}/\log N\le1$. The edge representation $\flow_{F}$ of $F$ can be defined as $\flow_{F}=\sum_{i}\flow_{F_{i}}$. Thus, we can return $\flow_{F}$ in $mN^{\poly\eps}$ work and $\tilde{O}(1/\poly\eps)$ depth.

\begin{comment}

\subsection{Algorithmic Emulators for General Node-Weighting}
\begin{itemize}
\item Think how to get vertex sparsifiers. 
\begin{itemize}
\item Emulator $H$ for the whole graph first. 
\begin{itemize}
\item Need that $\deg_{H}\le\deg_{G}\cdot N^{\eps}$. 
\end{itemize}
\item Round the length and compute LC-emulator for $A$ on $H$. 
\end{itemize}
\end{itemize}
\end{comment}


\section{Routing on an Expansion Witness}\label{sec:length-constrained-expander-routing}

In this section, we prove the following.

\begin{restatable}[Routing on a Router]{thm}{routingrouter}\label{thm:routing-router}
Given a $t$-step $\gamma$-router $R$ for a node weighting $A$, an $A$-respecting demand $D$ and $\eps \in (\log^{-c} N, 1)$ for some sufficiently small constant $c$, for parameters  $\lambda_{rr}(\eps)=\exp(\poly(1/\eps))$ and $\kappa_{rr}(\eps)=N^{\poly(\eps)}$, one can compute a flow $F$ routing $D$ with
\begin{itemize}
    \item congestion $\gamma \kappa_{rr}(\eps)$ and length $t \lambda_{\mathrm{rr}}(\eps)$, and
    \item support size $|\supp(F)| \le (|E(R)| + |\supp(D)|) N^{\poly \eps}$,
\end{itemize}
with $(|E(R)| + |\supp(D)|) \cdot \poly(t) N^{\poly\eps}$ work and $\poly(t) N^{\poly\eps}$ depth.
\end{restatable}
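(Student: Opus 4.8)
The plan is to treat the router $R$ as (essentially) its own length-constrained expander, compute an expansion witness for it \emph{from scratch} (this is the one place we spend $\poly(t)$ in the running time), route the demand $D$ through that witness, and route directly on the small, bounded-step routers that appear inside the witness. The $\exp(\poly(1/\eps))$ length slack and $N^{\poly\eps}$ congestion slack will come from a \emph{single} hierarchical application of the witnessed expander decomposition, so the parameters do not compound.

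\textbf{Step 1 (expansion witness for $R$).} Since $R$ is unit-length and a $t$-step $\gamma$-router for $A$, every $A$-respecting demand has all its pairs within $R$-distance $t$ and routes with congestion $\le\gamma$ and length $\le t$; hence $R$ is a $(t,s)$-length $\Omega(1/\gamma)$-expander for $A$ for every slack $s\ge 1$ (cf.\ \Cref{prop:expander-router}). Fix $s:=1/\eps$ and run the algorithmic witnessed expander decomposition \Cref{thm:witness ED alg} on $R$ with length $h:=t$, conductance $\phi$ just below $\cond_{(t,s)}(A)=\Omega(1/(\gamma\poly(\log N)))$, and the required linkedness. Because $\phi$ is below the true conductance, $R$ admits no $\phi$-sparse $(ts)$-moving cut with respect to any $t$-length $A$-respecting demand, so we may take the returned cut to be empty and obtain a $(t,t_{\mathcal R},h_\Pi,\kappa_\Pi)$-witness $(\mathcal N,\mathcal R,\Pi_{\mathcal R\to R})$ of $A$ in $R$ with $t_{\mathcal R}=1/\eps$, $h_\Pi\le t\exp(\poly(1/\eps))$, $\kappa_\Pi=N^{\poly\eps}/\phi=\gamma N^{\poly\eps}$, $|E(\mathcal R)|\le|E(R)|N^{\poly\eps}$, and $\Pi_{\mathcal R\to R}$ an explicit \emph{integral} embedding of path count $\le|E(R)|N^{\poly\eps}$; this costs $|E(R)|\poly(t)N^{\poly\eps}$ work and $\poly(t)N^{\poly\eps}$ depth. (If one prefers not to rely on the algorithm returning an empty cut on an expander, the small separated demand is instead peeled off and recursed upon exactly as in the forward-mapping argument of \Cref{lem:forward map}, changing the bounds only by constant factors.)

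\textbf{Step 2 (route $D$ through the witness).} Mimicking the proof that an expansion witness certifies expansion, for each $(v,w)\in\supp(D)$ pick a cluster $S\in\mathcal N$ with $v,w\in S$ (one exists since $\dist_R(v,w)\le t$ equals the covering radius), obtaining sub-demands $D_S\le D$ that are $(A\cap S)$-respecting with $\sum_S|\supp(D_S)|=|\supp(D)|$; a neighborhood-cover point-location structure makes this $\tO(|\supp(D)|)$ work and $\tO(1)$ depth. Each $R^S\in\mathcal R$ is a $t_{\mathcal R}$-step $1$-router with $t_{\mathcal R}=1/\eps$ \emph{independent of $t$}, so we route $D_S$ on $R^S$ by a direct base-case routine for bounded-step routers: using that the router of \Cref{lem:router expander} is, after bucketing $A\cap S$ by powers of two, a bounded power $H_0^k$ of a constant-degree expander $H_0$ with $k=O(\eps\log N)$, routing $D_S$ on it reduces to routing a demand on $H_0$ within $O(\log N)$ hops, which a standard parallel expander-routing primitive does with $\poly(\log N)$ congestion, $O(\log N)$ hops, $\tO(1)$ depth, and $\tO(|E(R^S)|+|\supp(D_S)|)$ work. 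This yields flows $F_S$ on $R^S$ with $\congest_{F_S}\le N^{\poly\eps}$, $\step_{F_S}\le\exp(\poly(1/\eps))$, and $|\supp(F_S)|\le(|E(R^S)|+|\supp(D_S)|)N^{\poly\eps}$; since the $R^S$ are edge-disjoint, $F_{\mathcal R}:=\sum_S F_S$ has $\congest_{F_{\mathcal R}}\le N^{\poly\eps}$, $\step_{F_{\mathcal R}}\le\exp(\poly(1/\eps))$, and $|\supp(F_{\mathcal R})|\le(|E(\mathcal R)|+|\supp(D)|)N^{\poly\eps}$.

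\textbf{Step 3 (push through $\Pi$ and conclude).} Compose $F_{\mathcal R}$ with the stored integral embedding $\Pi_{\mathcal R\to R}$, routing for each directed edge $e$ of $\mathcal R$ the $F_{\mathcal R}(e)$ units along the recorded flow-paths of $\Pi_{\mathcal R\to R}$ realizing $e$ and concatenating. The resulting flow $F$ in $R$ routes $D$ with $\leng_F\le\step_{F_{\mathcal R}}\cdot\leng_{\Pi_{\mathcal R\to R}}\le\exp(\poly(1/\eps))\cdot t\exp(\poly(1/\eps))=t\,\lambda_{rr}(\eps)$ and $\congest_F\le\congest_{F_{\mathcal R}}\cdot\congest_{\Pi_{\mathcal R\to R}}=N^{\poly\eps}\cdot\gamma N^{\poly\eps}=\gamma\,\kappa_{rr}(\eps)$. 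For the support bound, compute the edge representation $\flow_F$ of the composition in near-linear work and then flow-decompose it respecting the layering inherited from $F_{\mathcal R}$ and $\Pi_{\mathcal R\to R}$; since $\step_{F_{\mathcal R}}\le\exp(\poly(1/\eps))$ and $|\supp(\Pi_{\mathcal R\to R})|\le|E(R)|N^{\poly\eps}$, this produces $|\supp(F)|\le(|E(R)|+|\supp(D)|)N^{\poly\eps}$ while keeping every flow-path of length $\le t\lambda_{rr}(\eps)$. The whole assembly runs in $(|E(R)|+|\supp(D)|)\poly(t)N^{\poly\eps}$ work and $\poly(t)N^{\poly\eps}$ depth.

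\textbf{Main obstacle.} The crux is Step 2's base case: routing a demand on a bounded-step router, in parallel with low depth and near-linear work, \emph{without} circularly re-invoking the very primitive being built. This is why the reduction must stop at step parameter $\Theta(1/\eps)$ rather than driving it down to a constant (each extra expander-decomposition level would need a shrinking slack and so cost a polynomial-in-$N$, not $N^{\poly\eps}$, congestion factor), and why it must exploit the explicit constant-degree-expander-power structure of the routers produced by \Cref{lem:router expander}. Secondary subtleties are keeping the composition in Step 3 from inflating the path count beyond near-linear (handled via the explicit bounded-path-count integral embedding and a layering-aware re-decomposition) and ensuring the Step-1 witness can actually route $D$ — which it can once the moving cut is empty, or after the \Cref{lem:forward map}-style peeling of the separated demand — since $D$ is only guaranteed to be $t$-length in $R$ itself.
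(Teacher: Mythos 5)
There is a genuine gap, and it sits exactly where you flag the ``main obstacle'': the base case of Step~2. You assert that routing an arbitrary $A\cap S$-respecting multicommodity demand $D_S$ on a bounded-step router $R^S$ can be done by ``a standard parallel expander-routing primitive'' with $\tilde O(|E(R^S)|+|\supp(D_S)|)$ work, $\tilde O(1)$ depth, $\poly(\log N)$ congestion, and support size near-linear in $|E(R^S)|+|\supp(D_S)|$. No such primitive exists off the shelf: computing an explicit per-commodity flow decomposition for a $k$-commodity demand on an expander with work and support size that are only \emph{additively} (not multiplicatively) dependent on $k$ is precisely the content of \Cref{thm:routing-router}, so your argument is circular at its core. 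The only tool of this kind available in the paper is \Cref{lem:lengthbound-lowcom-flow}, and it only handles \emph{complete flows between node-weighting pairs} (cutmatch-style instances), not general point-to-point demands; bridging that gap is what forces the paper into its recursive structure. A secondary problem is that the routers $R^S$ delivered by the black-box witnessed decomposition of \Cref{thm:witness ED alg} are not guaranteed to have the explicit constant-degree-expander-power structure of \Cref{lem:router expander} (that construction is used only in the \emph{existential} witness of \Cref{cor:witness exists}), so even if such a primitive existed for expander powers you could not invoke it here.

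For contrast, the paper's proof takes a different route that avoids ever needing this base case. It recursively partitions $V(R)$ into $k=|V(R)|^{\eps}$ parts, builds a fresh router on each part by running $k$ cut-matching games in parallel (so the routers recursed on are constructed, not extracted from a witness), and uses \Cref{lem:lengthbound-lowcom-flow} — whose support size is $\tilde O(|E|+k+\sum_j|\supp(A_j+A_j')|)$ with $k$ appearing only additively — both to realize the matchings and to redistribute the demand between parts, so that each level only ever solves node-weighting-pair flow instances. General demand routing is deferred all the way down the recursion of depth $\lceil 1/\eps\rceil$ until the parts are single vertices, at which point there is nothing left to route. Your Step~1 reliance on \Cref{thm:witness ED alg} returning an empty cut on an expander is a further (more minor) unsupported assumption — the theorem only bounds $|C|\le\phi|A|$ — though your proposed \Cref{lem:forward map}-style peeling could plausibly repair that part. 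The base-case circularity, however, is not repaired by anything in the proposal.
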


Since an expansion witness covers neighbourhoods with routers, the above result for routing on a router gives as a corollary the following result for routing on a witness.

\begin{restatable}[Routing on a Witness]{cor}{routingwitness}\label{cor:routing-witness}
Given a \emph{$(h, t_{{\cal R}}, t_{\Pi}, \kappa_{\Pi})$-witness} $({\cal N},\mathcal{R},\Pi_{{\cal R}\to G})$ of $A$ in $G$, an $A$-respecting demand $D$ such that for all $(a, b) \in \supp(D)$ there exists $S \in \mathcal{S} \in \mathcal{N}$ such that $a, b \in S$, and $\eps \in (\log^{-c} N, 1)$ for some sufficiently small constant $c$, one can compute a flow $F$ routing $D$ with
\begin{itemize}
    \item length $h_{\Pi} t_{{\cal R}} \lambda_{rr}(\eps))$ and congestion $\kappa_{\Pi}\kappa_{rr}(\eps)$, and
    \item support size $|\supp(F)| \le (|\path(\Pi_{{\cal R}\to G})| + |\supp(D)|) N^{\poly \eps}$,
\end{itemize}
with $(|\supp(D)| + \left|\path(\Pi_{{\cal R}\to G})\right|) \cdot \poly(t_{\cal R}) N^{\poly(\eps)}$ work and $\poly(t_{\cal R}) N^{\poly(\eps)}$ depth.
\end{restatable}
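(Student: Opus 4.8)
The plan is to follow the same strategy used to show that an expansion witness certifies expansion (routing each piece of the demand inside an appropriate router and then pushing the combined flow down along $\Pi_{\mathcal{R}\to G}$), but this time tracking support size and running time. First I would \emph{split the demand cluster by cluster}: for each $(a,b)\in\supp(D)$ the hypothesis provides a cluster $S\in\mathcal{N}$ with $a,b\in S$, so I assign $(a,b)$ to one such $S$ and let $D_S$ be the demand consisting of the pairs assigned to $S$. Then $D=\sum_{S\in\mathcal{N}}D_S$, the supports $\supp(D_S)$ partition $\supp(D)$, and since $D_S\le D$ entrywise, $D$ is $A$-respecting, and $D_S$ is supported only on pairs inside $S$, each $D_S$ is $(S\cap A)$-respecting.

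Next, for each $S$ I would apply \Cref{thm:routing-router} to the $t_{\mathcal{R}}$-step $1$-router $R^S$ (so $\gamma=1$) and the demand $D_S$, obtaining a flow $F_S$ in $R^S$ routing $D_S$ with congestion $\kappa_{rr}(\eps)$, step-length at most $t_{\mathcal{R}}\lambda_{rr}(\eps)$ (the length bound of \Cref{thm:routing-router} is also a step-length bound since $R^S$ is unit-length), and $|\supp(F_S)|\le(|E(R^S)|+|\supp(D_S)|)N^{\poly\eps}$, in $(|E(R^S)|+|\supp(D_S)|)\poly(t_{\mathcal{R}})N^{\poly\eps}$ work and $\poly(t_{\mathcal{R}})N^{\poly\eps}$ depth, with all clusters handled in parallel. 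Setting $F_{\mathcal{R}}=\sum_{S\in\mathcal{N}}F_S$ and using that $\mathcal{R}$ is the edge-disjoint union of the routers $R^S$, the flow $F_{\mathcal{R}}$ has congestion $\le\kappa_{rr}(\eps)$ and step-length $\le t_{\mathcal{R}}\lambda_{rr}(\eps)$ on $\mathcal{R}$, and summing over clusters gives $|\supp(F_{\mathcal{R}})|\le(|E(\mathcal{R})|+|\supp(D)|)N^{\poly\eps}$ with total work $(|E(\mathcal{R})|+|\supp(D)|)\poly(t_{\mathcal{R}})N^{\poly\eps}$.

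Finally I would push $F_{\mathcal{R}}$ down to $G$ along $\Pi_{\mathcal{R}\to G}$: take a path decomposition $\mathcal{Q}$ of $F_{\mathcal{R}}$, and for each $Q=(e_1,\dots,e_k)\in\mathcal{Q}$ (with $k\le t_{\mathcal{R}}\lambda_{rr}(\eps)$) replace each $e_j$ by the embedding flow $f_{e_j}$ scaled to value $F_{\mathcal{R}}(Q)$ and concatenate these across the shared endpoints; summing over $\mathcal{Q}$ produces a flow $F$ in $G$ routing $D_{F_{\mathcal{R}}}=D$. Each edge $e$ of $\mathcal{R}$ is used by the paths of $\mathcal{Q}$ through it with total fraction $\congest_{F_{\mathcal{R}}}(e)\le\kappa_{rr}(\eps)$ of $f_e$, so $\congest_F\le\kappa_{rr}(\eps)\cdot\congest_{\Pi_{\mathcal{R}\to G}}\le\kappa_{rr}(\eps)\kappa_{\Pi}$, and each of the $\le t_{\mathcal{R}}\lambda_{rr}(\eps)$ edges of $Q$ becomes a $G$-path of length $\le h_{\Pi}$, so $\leng_F\le h_{\Pi}t_{\mathcal{R}}\lambda_{rr}(\eps)$. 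For the support size, concatenating flows of equal value at a shared vertex while greedily matching in-subpaths to out-subpaths makes path counts add rather than multiply, and distributing $f_e$ among the paths of $\mathcal{Q}$ through $e$ in a path-respecting way creates at most $O(\kappa_{rr}(\eps))\cdot|\supp(f_e)|+|\{Q\in\mathcal{Q}:e\in Q\}|$ incidences (the $\kappa_{rr}(\eps)$ accounts for reusing $f_e$ at congestion $\le\kappa_{rr}(\eps)$); hence $|\supp(F)|\le O(\kappa_{rr}(\eps))|\path(\Pi_{\mathcal{R}\to G})|+\sum_{Q\in\mathcal{Q}}|Q|\le O(\kappa_{rr}(\eps))|\path(\Pi_{\mathcal{R}\to G})|+|\supp(F_{\mathcal{R}})|\cdot t_{\mathcal{R}}\lambda_{rr}(\eps)$. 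Since $|E(\mathcal{R})|\le|\path(\Pi_{\mathcal{R}\to G})|$, $\kappa_{rr}(\eps)=N^{\poly\eps}$, and $t_{\mathcal{R}}\lambda_{rr}(\eps)=\poly(t_{\mathcal{R}})\exp(\poly(1/\eps))$, this yields $|\supp(F)|\le(|\path(\Pi_{\mathcal{R}\to G})|+|\supp(D)|)N^{\poly\eps}$, and the same bookkeeping gives the claimed $(|\supp(D)|+|\path(\Pi_{\mathcal{R}\to G})|)\poly(t_{\mathcal{R}})N^{\poly\eps}$ work and $\poly(t_{\mathcal{R}})N^{\poly\eps}$ depth.

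The demand split and the per-router routing are routine given \Cref{thm:routing-router}; the one real obstacle is this last step, where naively concatenating embedding paths along the $\le t_{\mathcal{R}}\lambda_{rr}(\eps)$-edge flow paths of $F_{\mathcal{R}}$ would multiply path counts and blow up both $|\supp(F)|$ and the running time. Keeping these linear in $|\path(\Pi_{\mathcal{R}\to G})|$ up to an $N^{\poly\eps}$ factor requires precisely the two observations above: that greedy matching turns products of path counts into sums, and that each embedding flow $f_e$ must be partitioned among the flow paths of $F_{\mathcal{R}}$ through $e$ in a path-respecting manner so as not to create spurious sub-paths.
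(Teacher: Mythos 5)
Your proposal is correct and follows the same overall structure as the paper's proof: assign each demand pair to a covering cluster, route each $D_S$ inside the router $R^S$ via \Cref{thm:routing-router} with $\gamma=1$, and push the union of the router flows down to $G$ through $\Pi_{\mathcal{R}\to G}$. The only genuine divergence is in how the last step's support size is controlled. The paper preprocesses: every edge $e$ of every router is split into parallel edges, one per flow path of $f_e$, so that after splitting each router edge embeds into a \emph{single} path of $G$ and the total router size becomes exactly $|\path(\Pi_{\mathcal{R}\to G})|$. With that normalization, mapping a router flow down is just concatenation of single paths, so $|\supp(F)|=\sum_S|\supp(F_S^{\mathrm{router}})|$ and no further accounting is needed. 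You instead keep each $f_e$ as a flow and bound the blow-up directly via the greedy prefix-matching argument (the same device as the paper's $\mathrm{SplitFlow}$/$\mathrm{ConcatFlow}$ lemmas), which makes path counts add rather than multiply; your resulting bound $O(\kappa_{rr}(\eps))|\path(\Pi_{\mathcal{R}\to G})|+|\supp(F_{\mathcal{R}})|\cdot t_{\mathcal{R}}\lambda_{rr}(\eps)$ is slightly looser than necessary (scaling $f_e$ does not increase its path count, so the $\kappa_{rr}(\eps)$ factor is an over-count) but still lands inside the claimed $(|\path(\Pi_{\mathcal{R}\to G})|+|\supp(D)|)N^{\poly\eps}$. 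Both routes are valid; the paper's edge-splitting trick buys a trivial down-mapping at the cost of enlarging the routers (harmlessly, since $|E(\mathcal{R})|\le|\path(\Pi_{\mathcal{R}\to G})|$ appears in the bound anyway), while your version makes the concatenation combinatorics explicit.
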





\subsection{Section Preliminaries}

Our algorithm is based on the \emph{cut-matching game}, especially in the low hop-length regime developed in \cite{haeupler2022cut}.

\textbf{Cut-Matching Game.} For a node weighting $A$, a cut-matching game with $r$ rounds produces a sequence of capacitated unit-length graphs $G^{(0)}, \dots, G^{(r)}$ on $\supp(A)$, where $G^{(0)}$ is the empty graph. In round $i$, the cut-player selects based on $G^{(0)}, G^{(1)}, \dots, G^{(i)}$ a pair of node weightings $(A^{(i)}, B^{(i)})$, where $A^{(i)}$ and $B^{(i)}$ are $A$-respecting, have disjoint support, and have equal size (i.e. $|A^{(i)}| = |B^{(i)}|$). The matching player then produces a capacitated unit-length matching graph $\widetilde G^{(i)}$, with edges between the supports of $A^{(i)}$ and $B^{(i)}$, and capacities such that $\deg_{\widetilde G^{(i)}} = A^{(i)} + B^{(i)}$. The produced graphs are then added to the current graph: $G^{(i+1)} := G^{(i)} + \widetilde G^{(i)}$.

\textbf{Matching Strategy.} For a node weighting $A$, a $(r, t, \eta, \Delta)$-cut strategy for the cut matching game with $r$ rounds produces, regardless of the matching player, a graph $G^{(r)}$ which is a $t$-step $\eta$-congestion router for $A$ with $\deg_{G^{(r)}} \prec \Delta A$.

We use the following cut-matching strategy of \cite{haeupler2022cut}.

\begin{theorem}[Good Cut Strategy, Theorem~4.2 of \cite{haeupler2022cut}]\label{thm:cutmatch-game-exists}
For every node weighting $A$ and every $t \le \log N$, there exists a $(r, t, \eta, \Delta)$-cut strategy with $r, \eta, \Delta \leq N^{\poly(1 / t)}$. Suppose that each graph $G^{(i)}$ produced by the matching player contains at most $m'$ edges. Then, such a cut strategy can be computed in time $\tilde{O}(\poly(t) \cdot (T(m') + m'))$, where $T(m')$ is the time needed for computing an $(h, 2^t)$-length $(\phi, N^{\poly(1/t)})$-expander decomposition on a capacitated unit-length $m'$-edge graph.
\end{theorem}

Using the expander decomposition algorithm of \Cref{thm:witness ED alg} with $\epsilon=1/t$ (and disregarding linkedness), we immediately obtain the following.

\begin{corollary}\label{lem:cutmatch-game-exists}
For every node weighting $A$ and every $\eps \in (\log^{-c} N, 1)$ for some sufficiently small constant $c$, we can compute a $(r, t, \eta, \Delta)$-cut strategy with $t=1/\epsilon$ and $r, \eta, \Delta \leq N^{\textup{poly}\epsilon}$ in depth $\tilde O(1)\cdot\textup{poly}(t)N^{\textup{poly}\epsilon}$ and work $\tilde O(m' + |\supp(A)|)\cdot\textup{poly}(t)N^{\textup{poly}\epsilon}$, where $m'$ is the maximum number of edges in any graph produced by the matching player.
\end{corollary}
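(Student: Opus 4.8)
The plan is to read this off as a direct instantiation of \Cref{thm:cutmatch-game-exists}, using \Cref{thm:witness ED alg} to implement the expander-decomposition subroutine whose running time is abstracted there as $T(m')$. First set $t := \lceil 1/\epsilon\rceil$. Since $\epsilon > \log^{-c} N$, we have $t \le \log^{c} N \le \log N$ once $c\le 1$, so the hypothesis $t\le\log N$ of \Cref{thm:cutmatch-game-exists} is met; applying it gives a $(r,t,\eta,\Delta)$-cut strategy with $r,\eta,\Delta\le N^{\poly(1/t)}=N^{\poly\epsilon}$, together with the reduction that the strategy is computable in $\tilde O(\poly(t)\cdot(T(m')+m'))$ time, where $T(m')$ is the cost of an expander decomposition (with length slack $2^t$, conductance $\phi$, congestion slack $N^{\poly(1/t)}$, and the length bound $h$ dictated by the cut-matching game, which is polynomial in $2^t=2^{O(1/\epsilon)}$) on a capacitated unit-length $m'$-edge graph.

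Next, instantiate $T(m')$ via \Cref{thm:witness ED alg}, run on the unit-length matching graph with linkedness $\beta=0$ (we disregard linkedness, and $0\le 1/(\phi\log N)$) and length-slack parameter set so that the internal $\epsilon$ of \Cref{thm:witness ED alg} is $\Theta(\epsilon)$ — equivalently $s=\Theta(t)$, which satisfies $s\le\log^{c}N$ for the same range of $\epsilon$ and produces length slack $\exp(\poly(1/\epsilon))\ge 2^t$ and congestion slack $N^{\poly\epsilon}=N^{\poly(1/t)}$, exactly as the subroutine requires. This call costs $m'\poly(h)N^{\poly\epsilon}$ work and $\poly(h)N^{\poly\epsilon}$ depth, so $T(m')\le m'N^{\poly\epsilon}$ in work and $N^{\poly\epsilon}$ in depth. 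Plugging this back into the reduction, and charging an additional $|\supp(A)|$ for the cut-player's per-round bookkeeping over the $r\le N^{\poly\epsilon}$ rounds, yields total work $\tilde O(m'+|\supp(A)|)\cdot\poly(t)N^{\poly\epsilon}$ and depth $\tilde O(1)\cdot\poly(t)N^{\poly\epsilon}$, which is the claim; the bounds $t=1/\epsilon$ and $r,\eta,\Delta\le N^{\poly\epsilon}$ come directly from \Cref{thm:cutmatch-game-exists}.

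The only step that is more than mechanical transcription of $N^{\poly(1/t)}=N^{\poly\epsilon}$ factors is the observation that the $\poly(h)$ overhead of \Cref{thm:witness ED alg} is absorbed into $N^{\poly\epsilon}$; this is precisely why the hypothesis $\epsilon>\log^{-c}N$ is imposed with $c$ a sufficiently small constant. Indeed, $h$ is polynomial in $2^{O(1/\epsilon)}=2^{O(\log^{c}N)}$, and $2^{O(\log^{c}N)}=2^{\log^{1-\Theta(c)}N\cdot\log^{\Theta(c)-1+c}N}\le 2^{\log^{1-\Theta(c)}N}=N^{\poly\epsilon}$ once $c$ is below a threshold depending on the fixed polynomial hidden in "$\poly$"; hence $\poly(h)\le N^{\poly\epsilon}$ and the overhead is harmless. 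With that in hand, the proof is complete.
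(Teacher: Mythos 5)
Your proposal is correct and takes essentially the same route as the paper, which proves the corollary in one sentence by instantiating the $T(m')$ subroutine of \Cref{thm:cutmatch-game-exists} with the expander-decomposition algorithm of \Cref{thm:witness ED alg} at $\eps=1/t$ (disregarding linkedness); your additional bookkeeping about why $\poly(h)=\poly(2^{O(1/\eps)})$ is absorbed into $N^{\poly\eps}$ under the hypothesis $\eps>\log^{-c}N$ is exactly the implicit justification the paper relies on.
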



 Note that what we use here has two differences to the result stated in \cite{haeupler2022cut}:
\begin{itemize}
    \item The cut matching game is on a node weighting, instead of a vertex set.
    \item The matching player can return arbitrary complete flows between the two vertex sets, instead of being restricted to return a perfect matching.
\end{itemize}
Obtaining the first generalization is simple: first, round down the node weighting $A$ into powers of two, and bucket equal powers of two, forming node weightings $A_1, \dots, A_{b}$ (for $b \leq \log N$) such that $A_{i}(v) \in \{0, 2^i\}$, exactly one of $A_{i}(v)$ is nonzero for any $v$, and $A/2 \leq \sum_i A_i \leq A$. Then, the node weighting $A_i$ of maximum value $|A_i|$ satisfies $|A_i| \geq \frac{1}{b} \sum_{i'} |A_{i'}|$. Next, we run a cut matching game on vertex set $|\supp(A_i)|$, with step bound $t - 2$. Finally, the remaining vertices need to be connected to the router constructed on $|\supp(A_i)|$. This can be done through $b - 1$ cuts, each of which has $A_j$, $j \neq i$ as one side, and a subdemand $A_i' \leq A_i$ of size $|A_i'| = |A_j|$ as the other side. Now, any demand can be routed with congestion $b$ times higher (which can be absorbed into the $N^{\poly \eps}$-factor) through paths of length $(t - 2) + 2 = t$. 

The second generalization is possible with a capacitated length-bounded expander decomposition algorithm. The algorithm or potential analysis of the cut-matching game needs no changes.

\paragraph{Length-Constrained Multicommodity Flow.}
We will use the following lemma for $h$-length $k$-commodity flow, obtained by applying a result of \cite{haeupler2023length} to the specific case of routers. Note that the \emph{additive} $k$ in support size (as opposed to multiplicative $k$ as in the algorithms of \cite{Haeupler2023lenboundflow}) is critical for our application. For derivation of \Cref{lem:lengthbound-lowcom-flow}, see \Cref{sec:derivation-lengthbound-lowcom-flow}.

\begin{restatable}{lemma}{lengthboundlowcomflow}\label{lem:lengthbound-lowcom-flow}
Let $G = (V, E)$ be a $t$-step $\gamma$-router for node weighting $A$, and $\{(A_j, A_j')\}_{j \in [k]}$ node weighting pairs such that $\sum_j A_j + A_j'$ is $A$-respecting and $|A_j| = |A_j'|$ for all $j$. Then, one can compute a flow $F = \sum_j F_j$ where $F_j$ is a complete $A_j$ to $A_j'$ -flow for each $j \in [k]$ with
\begin{itemize}
    \item length $2t$ and congestion $\tilde{O}(\gamma)$, and
    \item support size $|\supp(F)| \leq \tilde{O}(|E(G)| + k + \sum_{j} |\supp(A_j + A_j')|)$,
\end{itemize}
with $\tilde{O}((|E(G)| + \sum_j |\supp(A_j + A_j')|) \cdot k \cdot \poly(t))$ work and $\tilde{O}(k \cdot \poly(t))$ depth.
\end{restatable}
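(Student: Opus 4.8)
The plan is to route a sparse \emph{transportation demand} for each commodity through $G$ and invoke the fast length-constrained multicommodity flow algorithm of \cite{haeupler2023length}; the key point is that, since $G$ is a router, the resulting instance is feasible \emph{by construction}, which is exactly what allows the additive (rather than multiplicative) dependence on $k$ in the support size. Concretely, for each $j$, since $|A_j| = |A_j'|$ we view $A_j$ and $A_j'$ as two partitions of $[0,|A_j|)$ into blocks indexed by $\supp(A_j)$ and $\supp(A_j')$, and let $D_j$ be the demand where $D_j(u,v)$ is the length of the overlap of $u$'s block in the first partition with $v$'s block in the second. Then $D_j$ has source-marginal exactly $A_j$, sink-marginal exactly $A_j'$, and $|\supp(D_j)| \le |\supp(A_j)| + |\supp(A_j')| \le 2|\supp(A_j+A_j')|$; it is produced by a parallel merge of the two sorted block structures in $\tilde{O}(|\supp(A_j+A_j')|)$ work and $\tilde{O}(1)$ depth. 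By the hypothesis $\sum_j (A_j+A_j') \le A$, the demand $D := \sum_j D_j$ satisfies $\sum_w D(v,w) = \sum_j A_j(v) \le A(v)$ and $\sum_w D(w,v) = \sum_j A_j'(v) \le A(v)$, so $D$ is $A$-respecting.

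\textbf{Step 2 (feasibility from the router property).} Since $G$ is a $t$-step $\gamma$-router for $A$ and $D$ is $A$-respecting, there is a flow $F^G$ in $G$ with $D_{F^G} = D$, $\congest_{F^G} \le \gamma$, and $\step_{F^G} \le t$. For each $(u,w)$-flow-path $P$ of $F^G$ set $F^G_j(P) := F^G(P)\cdot D_j(u,w)/D(u,w)$ (well-defined as $D_{F^G}=D$ forces $D(u,w)>0$ on such $P$). Then $\sum_j F^G_j(P) = F^G(P)$, so $\sum_j \congest_{F^G_j}(e) = \congest_{F^G}(e) \le \gamma$ for every edge $e$, while $D_{F^G_j} = D_j$ and $\step_{F^G_j} \le t$. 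Hence the $k$-commodity instance ``route $D_j$ for every $j$ with length at most $t$'' admits a solution of congestion $\gamma$; only the existence of $F^G$ is used, not its (possibly large) representation.

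\textbf{Step 3 (compute a small-support flow and conclude).} Run the length-constrained multicommodity flow algorithm of \cite{haeupler2023length} on $(G,\{D_j\}_j)$ with length parameter $t$. Because the instance is feasible with congestion $\gamma$ by Step 2, the algorithm returns a flow $F = \sum_j F_j$ routing every $D_j$ \emph{completely}, with congestion $\tilde{O}(\gamma)$ and length at most $2t$; since $F_j$ routes $D_j$ completely and $D_j$ has marginals $A_j,A_j'$, each $F_j$ is a complete $A_j$-to-$A_j'$ flow. The algorithm's guarantees give $|\supp(F)| \le \tilde{O}\big(|E(G)| + k + \sum_j |\supp(D_j)|\big) = \tilde{O}\big(|E(G)| + k + \sum_j |\supp(A_j+A_j')|\big)$, work $\tilde{O}\big((|E(G)| + \sum_j |\supp(D_j)|)\cdot k \cdot \poly(t)\big)$, and depth $\tilde{O}(k\cdot \poly(t))$; adding the negligible cost of Step 1 yields the lemma.

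The main obstacle is extracting from \cite{haeupler2023length} a length-constrained multicommodity flow subroutine with exactly these guarantees: the \emph{additive} $k$ in the support bound, a work bound scaling like $(m + \text{demand support})\cdot k\cdot\poly(t)$, and the property that a demand feasible with congestion $\gamma$ is routed \emph{completely} with congestion $\tilde{O}(\gamma)$ and length $\le 2t$ (i.e.\ no super-constant length slack in this regime) — these verifications are carried out in \Cref{sec:derivation-lengthbound-lowcom-flow}. A secondary point is the precise length bound: the router only needs $t$ steps, so the factor-$2$ margin absorbs a constant length slack; if zero slack were required one could instead route through a super-source/super-sink gadget at the cost of $O(\sum_j |\supp(A_j+A_j')|)$ extra edges and two additional hops.
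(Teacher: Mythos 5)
There is a genuine gap, and it sits exactly where you placed your ``main obstacle'' disclaimer. Your Step~3 invokes a black box from \cite{haeupler2023length} that, given a \emph{feasible} length-constrained $k$-commodity instance, routes every $D_j$ \emph{completely} with congestion $\tilde O(\gamma)$, length $2t$, and an additive $k$ in the support size. No such subroutine is available off the shelf: what \cite{haeupler2023length} provides (and what the paper actually uses, \Cref{thm:lengthbound-lowcom-cutmatch}) is a \emph{cutmatch} --- it partitions each pair $(A_j,A_j')$ into matched and unmatched parts, routes only the matched parts, and returns a $2t$-length moving cut $C$ of size $|C|\le\phi\cdot(\sum_j|A_j|-\val(F))$ certifying that the unmatched parts are far apart in $G-C$. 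Converting ``the instance is feasible on a router'' into ``the returned flow is complete'' is the entire content of the lemma's proof: one calls the cutmatch with $h=2t$ and $\phi=1/(2\gamma)$, supposes the unmatched part is nonempty, routes an arbitrary demand with load $\sum_j(U_j+U_j')$ through the router with $t$ steps and congestion $\gamma$, and observes that every such flow path must have $\sum_{e\in P}C(e)>1/2$ (its length must more than double under a $2t$-length cut), which forces $|C|>\frac{1}{2\gamma}|D|$ and contradicts the cutmatch's sparsity guarantee. Your proposal never carries out this argument; it asserts the conclusion and defers the verification to the very appendix section whose proof you were asked to supply.

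The rest of your outline is sound and close to the paper's: your Step~2 feasibility argument (the combined demand is $A$-respecting, hence routable with $t$ steps and congestion $\gamma$) is precisely the ingredient the paper feeds into the contradiction, and the prefix-overlap construction of explicit demands $D_j$ in Step~1 is harmless (though unnecessary, since the cutmatch interface accepts node-weighting pairs directly and is free to choose the pairing itself). To close the gap you need to (i) state the cutmatch primitive as the actual available tool, and (ii) supply the no-sparse-cut argument showing the unmatched part is empty.
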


\subsection{Algorithm for Routing on a Router}

In this section, we present \Cref{alg:router-routing}, which routes a flow on a \emph{router}. The correctness of the algorithm is proven in \Cref{sec:analysis-router-routing}. The simple extension to routing on a witness is performed in \Cref{sec:routing-witness}.

In addition to the length-bounded cut-matching game strategy and flow algorithm presented in the section preliminaries, \Cref{alg:router-routing} needs two simple functions $\mathrm{SplitFlow}$ and $\mathrm{ConcatFlow}$ for manipulating flows.

$\mathrm{SplitFlow}$ splits a complete flow $F$ from a node weighting $A = \sum_{i \in k} A_i$ to $A'$ into complete flows $F_i$ from node weightings $A_i$ to $A'_i$ for $\sum_i F_i = F$ and $\sum_i A'_i = A'$:

\begin{lemma}\label{lem:split-flow-lemma}
    Let $G$ be a graph, $A = \sum_{i \in k} A_i$ and $A'$ be node weightings, and $F$ be a complete flow from $A$ to $A'$. Then, there is a deterministic algorithm $\mathrm{SplitFlow}(F, \{A_i\}_{i \in k})$ that returns (flow, node weighting) pairs $\{(F_i, A_i')\}$ such that
    \begin{itemize}
        \item $\sum_i F_i = F$ and $\sum_i A_i' = A'$,
        \item $F_i$ is a complete flow from $A_i$ to $A_i'$, and
        \item $\sum_i |\supp(A_i')| \leq \sum_i |\supp(F_i)| \leq \sum_i |\supp(A_i)| + |\supp(F)|$
    \end{itemize}
    The algorithm has work $\tilde{O}(|\supp(F)|)$ and depth $\tilde{O}(1)$.
\end{lemma}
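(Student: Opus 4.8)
The plan is to fix a path decomposition of $F$, process each source vertex independently, and at each source split the outgoing flow among the indices $i\in[k]$ by a greedy ``staircase'' matching between the flow-paths leaving that vertex and the values $A_1(v),\dots,A_k(v)$ (these sum to $A(v)$, the total flow leaving $v$, since $F$ is a complete flow whose source marginal is $A$ and whose sink marginal is $A'$).

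Concretely, fix a path decomposition $\mathcal{P}$ of $F$, so $|\mathcal{P}|=|\supp(F)|$. For a source $v$, let $\mathcal{P}_v\subseteq\mathcal{P}$ be the paths starting at $v$; then $\sum_{P\in\mathcal{P}_v}F(P)=A(v)=\sum_i A_i(v)$. Lay out the segment $[0,A(v)]$ and mark, on the one hand, the prefix sums of the values $\{F(P)\}_{P\in\mathcal{P}_v}$ in any fixed order, and on the other hand the prefix sums of the nonzero values among $\{A_i(v)\}_{i\in[k]}$; the common refinement of these two partitions of $[0,A(v)]$ has at most $|\mathcal{P}_v|+|\{i:A_i(v)>0\}|-1$ cells, and each cell lies inside exactly one path-cell and one index-cell, telling us to route that much of that path to that index. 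Doing this at every $v$, we let $F_i$ collect the sub-paths assigned to index $i$ and let $A_i'$ be the sink marginal of $F_i$. By construction $\sum_i F_i=F$, so summing sink marginals gives $\sum_i A_i'=A'$; the source marginal of $F_i$ is $A_i$ by construction; hence each $F_i$ is a complete flow from $A_i$ to $A_i'$. For the support bounds, summing the cell count over all sources gives $\sum_i|\supp(F_i)|\le\sum_v\bigl(|\mathcal{P}_v|+|\{i:A_i(v)>0\}|\bigr)=|\supp(F)|+\sum_i|\supp(A_i)|$, using $\sum_v|\mathcal{P}_v|=|\supp(F)|$ and $\sum_v|\{i:A_i(v)>0\}|=\sum_i|\supp(A_i)|$; and $\sum_i|\supp(A_i')|\le\sum_i|\supp(F_i)|$ since within each $i$ every $w\in\supp(A_i')$ is an endpoint of some path in $\supp(F_i)$ and distinct such $w$ require distinct paths.

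For efficiency, the whole computation is per-source and uses only (parallel) prefix sums over $\{F(P)\}_{P\in\mathcal{P}_v}$ and over $\{A_i(v)\}_i$, followed by merging the two resulting sorted lists of breakpoints and a binary search to attribute each cell to its (path, index) pair — all in $\tilde{O}(1)$ depth and work near-linear in $|\mathcal{P}_v|+|\{i:A_i(v)>0\}|$, summing to $\tilde{O}(|\supp(F)|)$ work (using that $\sum_i|\supp(A_i)| = \tilde O(|\supp(F)|)$ in the regime of interest, since the $A_i$ are presented succinctly in our applications; in general the work is $\tilde{O}(|\supp(F)|+\sum_i|\supp(A_i)|)$) and $\tilde{O}(1)$ depth. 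The one place where care is needed is the support accounting: one must take the \emph{common refinement} of the path-partition and the index-partition of $[0,A(v)]$, so that the two sources of cells are \emph{added} (indeed with a $-1$ per source to spare) rather than refining one partition by the other, which would make the bound multiplicative in $k$ instead of additive. This, together with the observation that each cell is a sub-path of a single original flow-path (so no new ``combinatorial'' paths are created), is the crux; the rest is bookkeeping.
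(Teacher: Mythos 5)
Your proposal is correct and is essentially the paper's own argument: the paper also processes each source vertex independently and pairs the flow-paths leaving $v$ with the indices $i$ having $A_i(v)>0$ by overlapping prefix sums, yielding $a+b-1$ pairs per vertex, which is exactly your common-refinement count. Your write-up just spells out the support accounting and the sink-marginal definition of $A_i'$ more explicitly (and rightly flags that the stated work bound implicitly assumes $\sum_i|\supp(A_i)|=\tilde O(|\supp(F)|)$, a point the paper glosses over as well).
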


\begin{proof}
    For every vertex $v \in \supp(A)$, let $I := \{i : A_i(v) > 0\}$ be the set of node weightings with nonzero weight on $v$ and $a = |I|$, and let $P_1, \dots, P_{b} \in \supp(F)$ be the flow paths from $v$. Fix an arbitrary order of the $a$ node weightings and $b$ paths. Then, form $a + b - 1$ pairs $(i, j)$, where a pair is formed if the prefix sums from node weighting $i - 1$ to $i$ and paths $j - 1$ to $j$ overlap. For every pair, add flow path $P_j$ to flow $F_i$ with value equal to the overlap. As sorting can be done with depth $\tilde{O}(1)$, this can be done with depth $\tilde{O}(1)$.
\end{proof}

$\mathrm{ConcatFlow}$ \textit{concatenates} a complete flow $F$ from node weighting $A$ to node weighting $A'$ with a complete flow $F'$ from node weighting $A'$ to node weighting $A''$, forming a complete flow $F^{\mathrm{res}}$ from $A$ to $A''$.

\begin{lemma}\label{lem:concat-flow-lemma}
    Let $G$ be a graph, $A, A'$ and $A''$ be node weightings with $|A| = |A'| = |A''|$, $F$ a complete $A$-to-$A'$-flow, and $F'$ a complete $A'$-to-$A''$-flow. Then, there is a deterministic algorithm $\mathrm{ConcatFlow}(F, F')$ that returns a complete $A$-to-$A''$-flow $F^{\mathrm{res}}$ such that
    \begin{itemize}
        \item Each flow path in $F^{\mathrm{res}}$ is a concatenation of a flow path in $F$ with a flow path in $F'$, and the total flow value of flow paths using a path $P \in F$ or $P' \in F'$ is $F(P)$ or $F(P')$ respectively.
        \item $|\supp(F^{\mathrm{res}})| \leq |\supp(F)| + |\supp(F')|$.
    \end{itemize}
    The algorithm has work $\tilde{O}(|\supp(F)| + |\supp(F')|)$ and depth $\tilde{O}(1)$.
\end{lemma}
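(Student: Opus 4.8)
The plan is to mimic the proof of \Cref{lem:split-flow-lemma}, but now the merging happens at the ``gluing layer'' $\supp(A')$ where the two flows meet. The key structural fact is that, since $F$ is a complete $A$-to-$A'$-flow, every flow-path of $F$ ends at a vertex of $\supp(A')$ and the $F$-paths ending at a fixed $v\in\supp(A')$ have total value exactly $A'(v)$; symmetrically, every flow-path of $F'$ starts at a vertex of $\supp(A')$ and the $F'$-paths starting at $v$ have total value exactly $A'(v)$. So at each such $v$ we are handed two lists of positive reals — the $a_v$ values of the in-paths $P_1,\dots,P_{a_v}$ and the $b_v$ values of the out-paths $P'_1,\dots,P'_{b_v}$ — with the same sum $A'(v)$, and we glue them with the standard interval-overlap procedure: fix an arbitrary order of each list, form the two prefix-sum sequences, and walk them together; each maximal overlap of an interval $[\,\mathrm{pre}_{i-1},\mathrm{pre}_i)$ on the $F$-side with an interval $[\,\mathrm{pre}'_{j-1},\mathrm{pre}'_j)$ on the $F'$-side yields one block, and for that block we add the concatenated path $P_i\cdot P'_j$ to $F^{\mathrm{res}}$ with value equal to the length of the overlap. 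There are exactly $a_v+b_v-1$ such blocks.

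It then remains to check the three bullets. For the first, summing the overlaps incident to a fixed in-path $P_i$ telescopes to $F(P_i)$, and likewise for each out-path, so $F^{\mathrm{res}}$ is a complete $A$-to-$A''$-flow, every flow-path of $F^{\mathrm{res}}$ is by construction a concatenation of an $F$-path with an $F'$-path, and the per-path value accounting holds. For the support bound, every flow-path of $F$ has a unique endpoint in $\supp(A')$ and every flow-path of $F'$ a unique startpoint there, so $\sum_{v\in\supp(A')}a_v=|\supp(F)|$ and $\sum_v b_v=|\supp(F')|$, whence $|\supp(F^{\mathrm{res}})|\le\sum_v(a_v+b_v-1)\le|\supp(F)|+|\supp(F')|$. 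For the resource bound, the only work is bucketing the $F$-paths by endpoint and the $F'$-paths by startpoint, then at each $v$ computing prefix sums and merging the two sorted interval lists; with parallel sorting, prefix sums, and merging this is $\tilde O(|\supp(F)|+|\supp(F')|)$ work and $\tilde O(1)$ depth.

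The one genuine subtlety — and the step I would be most careful about — is that the concatenation $P_i\cdot P'_j$ of two simple paths sharing only the vertex $v$ need not itself be a simple path, whereas flows are formally defined on simple paths. I would handle this exactly as the surrounding text does: either retain these objects as walks for the purposes of this subroutine, since the only quantities it exports are step-length, congestion, and the layer-by-layer value accounting, all of which make sense for walks; or, if simple paths are insisted upon, canonically shortcut each concatenated walk to a simple path, which can only decrease its step-length and the congestion it induces and costs work linear in the total representation size. Either way the guarantees of the lemma are unaffected, and everything else is the same bookkeeping as in \Cref{lem:split-flow-lemma}.
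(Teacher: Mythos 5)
Your proposal is correct and is essentially the paper's own proof: at each $v\in\supp(A')$ it pairs the incoming $F$-paths with the outgoing $F'$-paths via prefix-sum interval overlaps, producing $a_v+b_v-1$ concatenated paths whose values telescope to the required per-path totals and whose count gives the $|\supp(F)|+|\supp(F')|$ bound. Your added remark on concatenations possibly failing to be simple paths is a point the paper silently glosses over, and your handling of it is fine.
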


\begin{proof}
    For every vertex $v \in \supp(A')$, let $P_1, \dots, P_a \in \supp(F)$ be the flow paths of $F$ to $v$, and $P'_1, \dots, P'_b$ the flow paths of $F'$ from $v$. We have $\sum_{i} F(P_i) = \sum_j F'(P_j)$ as $F$ is a complete flow to $A'$ and $F'$ is a complete flow from $A'$.

    Now, as with \Cref{lem:split-flow-lemma}, pair paths $P_i$ with paths $P_j$ if the prefix sums of flow value overlap, to form $a + b - 1$ pairs $(i, j)$. For every such pair, add a path $P$ that consists of $P_i$ concatenated with $P_j$ of value equal to the overlap.
\end{proof}

\textbf{Brief overview of \Cref{alg:router-routing}.} The algorithm is recursive, with the goal of splitting the instance into $k$ smaller instances of roughly the same total size (where the "size" of an instance equals the number of edges in the router plus the number of pairs in the demand). Suppose the graph contains more than one vertex (otherwise, we are in a trivial base case). Then, the algorithm performs the following steps:
\begin{enumerate}
    \item Split the vertex set of the router into $k$ equal size vertex subsets.
    \item Embed a router in each of the vertex subsets, by running a cut-matching game in each subset in parallel.
    \item Compute flows sending demand from each vertex subset to the correct vertex subset, so that it only remains to route flow within each vertex subset.
    \item Recursively route the resulting demand within each vertex subset using the embedded router.
\end{enumerate}
Each recursive instance will be on a vertex set that is a factor $k$ smaller than the current one, for a recursion depth of $\log_k |V(R)|$. Thus, if the total size of the recursive instances is at most $N^{\eps}$ times the size of the current instance, the total size of the instances at the bottommost level of the recursion is $N^{\eps \cdot \log_k |V(R)|}$ times the initial size -- for $k = |V(R)|^{\eps}$, this is $N^{\poly \eps}$. Notably, we must ensure that the total size of the recursive instances is not $k$ times the size of the current instance.

In step 2, the routers we recurse on are constructed through parallel cut-matching games: we run an independent cut-matching game in each of the $k$ vertex subsets. Each round $i$, we receive from each game $j \in [k]$ a partition $(A^{(i)}_j, B^{(i)}_j)$ of the node weighting of its vertex subset. To construct the matching, we call \Cref{lem:lengthbound-lowcom-flow} \textit{once}, with $k$ commodities, one for each pair $(A^{(i)}_j, B^{(i)}_j)$. Thus, the number of flow paths produced is $\tilde{O}(|E(G)| + k + \sum_{j} |\supp(A^{(i)}_j + B^{(i)}_j)|) \leq \tilde{O}(|E(G)| + k)$ instead of $\tilde{O}(k \cdot |E(G)|)$; as each flow path will correspond to an edge in the constructed routers, this avoids multiplying the total size of the recursive instances by $k$. The cut-matching games need to run for $N^{\poly \eps}$ rounds, but this is an acceptable blowup in instance size.

For step 3, we want to route the part of the demand that starts and ends at different vertex subsets (say, $A_j$ and $A_{j'}$) from $A_j$ to $A_{j'}$, so that it only remains to route flow within each subset. To do this, we create a $k^2$-commodity flow instance, where each commodity corresponds to a pair of source-vertex-subset, destination-vertex-subset, with the node weightings equal to the amount of flow each vertex of the subsets needs to send or receive from the other subset. We call \Cref{lem:lengthbound-lowcom-flow} once to solve this flow instance; again, the additive $k^2$ in support size is critical, as this goes to the total support size of the demands in the recursive instance.

Step 4 is then simple. It remains to route flow within each vertex subset, which we can do using the routers embedded in step 2. After computing these flows, we map them back to the original graph using the embeddings of the routers, and concatenate the flows of step 3 with those produced recursively.

For notational simplicity, the algorithm represents the demands with node weightings: the demand is a set of node weightings $\{S_w\}_{w \in V}$, with the correspondence $S_w(v) = D(v, w)$.

\begin{algorithm}
\caption{Router Routing Algorithm $\mathrm{RouteRouter}(G, t, \gamma, \epsilon, A, \{S_w\}_{w \in V}, k)$}\label{alg:router-routing}
\textbf{Input:} $t$-step $\gamma$-router $G = (V, E)$ for a node weighting $A$ with edge capacities $u$, demand node weightings $S_w$ ($D(v, w) = S_w(v)$) such that $\sum_{w \in V} S_w$ and $|S_w|$ are $A$-respecting, and a recursion parameter $k \ge 1$.\\
\textbf{Output:} Multicommodity flow $F = \sum_{w} F_w$ satisfying the demand (i.e. $\val((F_w)_{(v, w)}) = S_w(v)$). 
\begin{enumerate}
\setcounter{enumi}{-1}
\item Base case: If $|V| = 1$, output the empty flow as there is no demand to satisfy.
\item Partition vertices into $k$ parts: let $V_1, \ldots, V_k$ be a partition of $V$ such that $\big||V_{j}|-|V_{j'}|\big|\le1$ for all $j, j' \in [k]$, and let $A_j, S_{w, j}$ be the node weightings $A$, $S_{w}$ restricted to $V_{j}$.
\item Construct routers $G_j$ for each $A_j$ with embeddings $\Pi_j$: run $k$ cut-matching games in parallel, one on each $A_j$, using a $(r, t', \eta, \Delta)$-cut strategy of \Cref{lem:cutmatch-game-exists} with parameters $t' = 1 / \epsilon$ and $r, \eta, \Delta \leq N^{\poly \epsilon}$:
\begin{enumerate}
  \item For each round $i = 1, 2, \ldots, r$ sequentially:
    \begin{enumerate}
    \item For each $j \in [k]$, let $A^{(i)}_j, B^{(i)}_j \leq A_j$ be the $i$th node weightings produced by the cut strategy on $A_j$.
    \item Let $F^{(i)} = \sum_j F_j^{(i)}$ be a $k$-commodity flow for node weighting pairs $\{(A^{(i)}_j, B^{(i)}_j)\}_{j \in [k]}$ computed using \Cref{lem:lengthbound-lowcom-flow}. \label{line:Fqi}
    \item For each flow-path $P\in F_j^{(i)}$ with endpoints $v, w$, add an edge $e = (v, w)$ of capacity $F_j^{(i)}(P)$ to the matching graph $\tilde G_j^{(i)}$, with embedding $\Pi_j(e) = P$. \label{line:tildeGi}
    \end{enumerate}
  \item For each $j \in [k]$, let $G_j = G^{(r)}_j$ be the constructed router and $\Pi_j$ its embedding. \label{line:let-Gi}
\end{enumerate}
\item Compute flows between pairs $V_j, V_{j'}$:
\begin{enumerate}
    \item For each $j, j' \in [k]$, let $A_{j, j'}(v) := \sum_{w \in V_{j'}} S_{w, j}(v)$ and $B_{j, j'}(w) := |S_{w, j}| I[w \in V_{j'}]$.
    \item Let $F^{\mathrm{match}} = \sum_{j, j'} F_{j, j'}^{\mathrm{match}}$ be a $k^2$-commodity flow for node weighting pairs $\{(A_{j, j'}, B_{j, j'})\}_{j, j' \in [k]}$ computed using \Cref{lem:lengthbound-lowcom-flow}. \label{line:let-F}
    \item For all $j, j' \in [k]$, let $\{(F_{w, j}^{\mathrm{match}}, S'_{w, j})\}_{w \in V_{j'}} := \mathrm{SplitFlow}(F_{j, j'}^{\mathrm{match}}, \{S_{w, j}\}_{w \in V_{j'}})$.
    \item Let $S'_{w} := \sum_{j \in [k]} S'_{w, j}$.
\end{enumerate}
\item Recurse to route $\{S'_{w}\}_{w \in V_{j'}}$ inside each $G_{j'}$:
\begin{enumerate}
    \item For all $j' \in [k]$,
    \begin{enumerate}
        \item Let $F^{\mathrm{rec}}_{j'} := \sum_{w \in V_{j'}} F^{\mathrm{rec}}_{w} := \mathrm{RouteRouter}(G_{j'}, t', \eta, \epsilon, A_{j'}, \{S'_w\}_{w \in V_{j'}}, k)$.
        \item Let $F^{\mathrm{tail}}_{w} := \Pi_{j'}(F^{\mathrm{rec}}_{w})$ for $w \in V_{j'}$.
        \item Let $\{(F^{\mathrm{tail}}_{w, j}, \cdot)\}_{j \in [k]} := \mathrm{SplitFlow}(F^{\mathrm{tail}}_{w}, \{S'_{w, j}\}_{j \in [k]}))$ for $w \in V_{j'}$.
    \end{enumerate}
    \item Return $F = \sum_{w} F_{w} = \sum_{j, w} \mathrm{ConcatFlow}(F^{\mathrm{match}}_{w, j}, F^{\mathrm{tail}}_{w, j})$
\end{enumerate}

\end{enumerate}
\end{algorithm}

\subsection{Analysis of the Algorithm} \label{sec:analysis-router-routing}

Proof of correctness of \Cref{alg:router-routing} is split into multiple simple lemmas. \Cref{lem:routing-router-recsize} certifies the validity of the recursive instance and bounds its size, \Cref{lem:routing-router-timeused} bounds the work and depth of a single call, excluding recursive calls, and \Cref{lem:routing-router-quality} bounds the congestion, length and support size of the produced flow. After proving the lemmas, we combine them for the proof of \Cref{thm:routing-router}.

\begin{lemma}\label{lem:routing-router-recsize}
For each recursive call made by \Cref{alg:router-routing},
\begin{itemize}
    \item $G_{j'}$ is a $t'$-step $\eta$-router for $A_{j'}$, and
    \item $\sum_{w \in V_{j'}} S'_w$ and $|S'_w|$ (restricted to $V_{j'}$) are $A_{j'}$-respecting,
\end{itemize}
and the total size of recursive instances satisfies
\begin{equation*}
\sum_{j'} \left(|E(G_{j'})| + \sum_{w \in V_{j'}} |\supp(S'_w)|\right) \leq \tilde{O}\left(k^2 + |E(G)| + \sum_{w \in V} |\supp(S_w)|\right) N^{\poly \eps}.
\end{equation*}
\end{lemma}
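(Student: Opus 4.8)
The plan is to verify the two structural claims about each recursive instance first, and then bound the total size by a careful accounting of where the edges and demand-support come from. For the first bullet: $G_{j'} = G_{j'}^{(r)}$ is the output of a cut-matching game run with a $(r, t', \eta, \Delta)$-cut strategy on the node weighting $A_{j'}$ (from \Cref{lem:cutmatch-game-exists}, with $t' = 1/\eps$ and $r, \eta, \Delta \le N^{\poly\eps}$). By the definition of a matching strategy, regardless of the matching player, the produced graph $G_{j'}^{(r)}$ is a $t'$-step $\eta$-congestion router for $A_{j'}$, so this is immediate — the only thing to note is that the matching player (implemented via \Cref{lem:lengthbound-lowcom-flow} on line~\ref{line:Fqi}) indeed returns, for each $j$, a complete flow between $A_j^{(i)}$ and $B_j^{(i)}$ (these have equal size and $A_j$-respecting, disjoint supports, so the hypotheses of \Cref{lem:lengthbound-lowcom-flow} hold), hence a valid matching graph with $\deg_{\tilde G_j^{(i)}} = A_j^{(i)} + B_j^{(i)}$.

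For the second bullet I would trace the demand transformation through steps 2 and 3. We start with $\sum_w S_w$ and $(|S_w|)_w$ being $A$-respecting. Restricting to $V_j$ gives that $\sum_w S_{w,j}$ and $|S_{w,j}|$ are $A_j$-respecting. In step~2, $A_{j,j'}(v) = \sum_{w \in V_{j'}} S_{w,j}(v)$ and $B_{j,j'}(w) = |S_{w,j}|\,\mathbb{1}[w \in V_{j'}]$; one checks $|A_{j,j'}| = |B_{j,j'}|$ (both equal $\sum_{w \in V_{j'}} |S_{w,j}|$), and $\sum_{j,j'} A_{j,j'} + B_{j,j'}$ is $A$-respecting (the $A_{j,j'}$ part sums, over $j'$, to $\sum_w S_{w,j} \le A_j$ at each vertex of $V_j$; the $B_{j,j'}$ part assigns $|S_{w,j}|$ to $w$, summing over $j$ to $\mathrm{load}$-type quantities bounded by $A_j$), so \Cref{lem:lengthbound-lowcom-flow} applies to produce $F^{\mathrm{match}}$. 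Then $\mathrm{SplitFlow}$ (\Cref{lem:split-flow-lemma}) breaks $F_{j,j'}^{\mathrm{match}}$ into flows $F^{\mathrm{match}}_{w,j}$ from $S_{w,j}$ to $S'_{w,j}$, where $|S'_{w,j}| = |S_{w,j}|$ and $\supp(S'_{w,j}) \subseteq V_{j'}$. Now $S'_w = \sum_j S'_{w,j}$, restricted to $V_{j'}$: since each $S'_{w,j}$ has support in $V_{j'}$ exactly when $w \in V_{j'}$... wait, actually $S'_{w,j}$ records the flow \emph{received} at vertices, and for $w \in V_{j'}$ the demand $S_{w,j}$ for each $j$ gets routed to vertices in $V_{j'}$ (the subset containing $w$), so $S'_w$ for $w \in V_{j'}$ is supported in $V_{j'}$. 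For these, $\sum_{w \in V_{j'}} S'_w(v) = \sum_{w \in V_{j'}} \sum_j S'_{w,j}(v)$, and since the $F^{\mathrm{match}}_{w,j}$ are subflows of the single complete flow $F^{\mathrm{match}}$ which is $A$-respecting, the amount of flow received at any $v \in V_{j'}$ is at most $A(v) = A_{j'}(v)$; similarly $|S'_w| = |S_w|$ is $A_{j'}$-respecting on $V_{j'}$ by the original hypothesis. I would write this carefully since it is the most error-prone part of the bookkeeping.

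For the size bound: $\sum_{j'} |E(G_{j'})| = \sum_{j'} |E(G_{j'}^{(r)})|$, and $G_{j'}^{(r)}$ accumulates edges over $r \le N^{\poly\eps}$ rounds, one edge per flow-path of $F_j^{(i)}$ per round; by \Cref{lem:lengthbound-lowcom-flow} the support of the round-$i$ flow over all $j$ is $\tilde O(|E(G)| + k + \sum_j |\supp(A_j^{(i)} + B_j^{(i)})|) \le \tilde O(|E(G)| + k + |\supp(A)|) \le \tilde O(|E(G)| + k)$ (using $|\supp(A)| \le 2|E(G)|$ for a router, or absorbing it), so $\sum_{j'} |E(G_{j'})| \le r \cdot \tilde O(|E(G)| + k) \le \tilde O(|E(G)| + k) N^{\poly\eps}$. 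For the demand support, $\sum_{j'} \sum_{w \in V_{j'}} |\supp(S'_w)| \le \sum_{w} \sum_j |\supp(S'_{w,j})| \le \sum_{j,j'} \big(\sum_{w \in V_{j'}} |\supp(S_{w,j})| + |\supp(F^{\mathrm{match}}_{j,j'})|\big)$ by the $\mathrm{SplitFlow}$ guarantee; the first part sums to $\sum_w |\supp(S_w)|$ over all $j,j'$ (each $S_{w,j}$ counted once), wait — summing over $j$ gives $\sum_w |\supp(S_w)|$-ish, but also over $j'$ we'd overcount, so more precisely $\sum_{j'}\sum_{w \in V_{j'}}\sum_j |\supp(S_{w,j})| = \sum_j \sum_w |\supp(S_{w,j})| \le \sum_j \sum_w |\supp(S_w)| $... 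I need to be careful, but $\sum_j |\supp(S_{w,j})| = |\supp(S_w)|$ since the $V_j$ partition $V$, so this part is exactly $\sum_w |\supp(S_w)|$; and $\sum_{j,j'} |\supp(F^{\mathrm{match}}_{j,j'})| \le |\supp(F^{\mathrm{match}})| \le \tilde O(|E(?)| + k^2 + \sum_{j,j'}|\supp(A_{j,j'} + B_{j,j'})|)$ from \Cref{lem:lengthbound-lowcom-flow} applied in the router $G$ — but wait, $F^{\mathrm{match}}$ is routed in $G$, so this is $\tilde O(|E(G)| + k^2 + \sum_w |\supp(S_w)| + \ldots)$. Putting it together gives the claimed $\tilde O(k^2 + |E(G)| + \sum_w |\supp(S_w)|) N^{\poly\eps}$.

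\textbf{Main obstacle.} The routine parts are the applications of \Cref{lem:lengthbound-lowcom-flow}, \Cref{lem:split-flow-lemma}, and the router guarantee. The genuinely delicate step is the bookkeeping of the demand node weightings $S'_w$ — tracking that $S'_w$ really is supported inside the single part $V_{j'} \ni w$, that $\sum_{w \in V_{j'}} S'_w$ and $|S'_w|$ stay $A_{j'}$-respecting (this relies on $F^{\mathrm{match}}$ being a single $A$-respecting flow so that the received-flow at each vertex is globally bounded), and that the support counts telescope to $\sum_w |\supp(S_w)|$ rather than blowing up by a factor of $k$. I would spend most of the write-up making that chain of restrictions and the $\mathrm{SplitFlow}$ support accounting airtight, and double-check the precise form of the additive term from \Cref{lem:lengthbound-lowcom-flow} (the additive $k^2$ and $\sum_{j,j'}|\supp(A_{j,j'}+B_{j,j'})|$) to confirm it collapses into the stated bound.
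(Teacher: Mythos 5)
Your proposal is correct and follows essentially the same route as the paper's proof: the router property comes directly from the cut-strategy guarantee, the $A_{j'}$-respecting property is traced through the fact that $\mathrm{SplitFlow}$ partitions the destination node weightings $B_{j,j'}$ (whose sum over $j$ at $w\in V_{j'}$ is exactly $|S_w|\le A(w)$), and the size bound combines the per-round support bound $\tilde O(|E(G)|+k)$ over $r\le N^{\poly\eps}$ rounds with the $\mathrm{SplitFlow}$ accounting $\sum_{w,j}|\supp(S'_{w,j})|\le\sum_w|\supp(S_w)|+|\supp(F^{\mathrm{match}})|$. The points you flagged as delicate (the telescoping $\sum_j|\supp(S_{w,j})|=|\supp(S_w)|$ and the additive $k^2$ from the flow lemma) resolve exactly as you indicate.
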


\begin{proof}
    For the first claim, the graphs $G_{j'}$ are produced through a $(r, t', \eta, \Delta)$-cut strategy for the node weightings $A_{j'}$, and are thus $t'$-step $\eta$-routers for $A_{j'}$.

    For the second claim, for all $j, j'$, the sum $\sum_{w \in V_{j'}} S'_{w, j}$ is $B_{j, j'}$-respecting, as $F^{\mathrm{match}}_{j, j'}$ is a complete $A_{j, j'}$, $B_{j, j'}$-flow, and \Cref{lem:split-flow-lemma} partitions the destination node weighting. We have $B_{j, j'}(w) := |S_{w, j}|$ for $w \in V_{j'}$, thus $\sum_j B_{j, j'}(w) = |S_w|$ for $w \in V_{j'}$. Since $|S_w|$ is $A$-respecting and $A_{j'}$ is $A$ restricted to $V_{j'}$, $|S_w|$ is $A_{j'}$-respecting for $w \in V_{j'}$. Finally, by \Cref{lem:split-flow-lemma}, $|S'_w| = |S_w|$.

    It remains to bound the total size of the recursive instances. We show the following two bounds, which combine to the desired bound:
    \begin{itemize}
        \item $\sum_{w} |\supp(S'_w)| \leq \tilde{O}(k^2 + E(G) + \sum_{w} |\supp(S_w)|)$.
        \item $\sum_{j'} |E(G_{j'})| \leq \tilde{O}(|E(G)| + k) N^{\poly \eps}$.
    \end{itemize}
    For the first bound,
    \begin{itemize}
        \item By \Cref{lem:split-flow-lemma}, $\sum_w |\supp(S'_w)| \leq \sum_{w, j} |\supp(S'_{w, j})| \leq \sum_{w} |\supp(S_{w})| + |\path(F^{\mathrm{match}})|$.
        \item By \Cref{lem:lengthbound-lowcom-flow}, $|\supp(F^{\mathrm{match}})| \leq \tilde{O}(k^2 + E(G) + \sum_{j, j'} |\supp(A_{j, j'} + B_{j, j'})|) = \tilde{O}(E(G) + k^2 + \sum_{w} |\supp(S_w)|)$.
        \item Thus, $\sum_w |\supp(S'_w)| \leq \tilde{O}(k^2 + E(G) + \sum_{w} |\supp(S_w)|)$.
    \end{itemize}
    For the second bound,
    \begin{itemize}
        \item By \Cref{lem:lengthbound-lowcom-flow}, $\sum_j |\supp(F^{(i)}_j)| = |\supp(F^{(i)})| \leq \tilde{O}(E(G) + k + \sum_{j} |\supp(A_j^{(i)} + B_j^{(i)})|) = \tilde{O}(E(G) + k)$.
        \item The cut strategy (\Cref{lem:cutmatch-game-exists}) used has $r \leq N^{\poly \eps}$ rounds, thus $\sum_j |E(G_j)| \leq \tilde{O}(|E(G)| + k) N^{\poly \eps}$.
    \end{itemize}
\end{proof}

\begin{lemma}\label{lem:routing-router-timeused}
In a call to \Cref{alg:router-routing}, excluding the recursive calls, the work done is $\tilde{O}((|E(G)| + \sum_w |\supp(S_w)|) \cdot k^2 \cdot N^{\poly \eps} \cdot \poly(t))$ and the depth is $\tilde{O}(k^2 \cdot N^{\poly \eps} \cdot \poly(t))$.
\end{lemma}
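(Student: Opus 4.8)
The plan is to charge, step by step, the four numbered parts of \Cref{alg:router-routing} in a single invocation $\mathrm{RouteRouter}(G,t,\gamma,\eps,A,\{S_w\}_w,k)$, treating the $k$ recursive calls of Step~4 as externally given: their input sizes are controlled by \Cref{lem:routing-router-recsize} and the support size and length of their outputs by \Cref{lem:routing-router-quality} applied inductively (this circularity is resolved when the lemmas are assembled in the proof of \Cref{thm:routing-router}). Throughout write $m := |E(G)|$ and $q := \sum_w|\supp(S_w)| = |\supp(D)|$, observe that after deleting isolated zero-weight vertices $|V|,|\supp(A)| = O(m)$ (a vertex with positive weight but no incident edge makes $G$ fail to be a router), and note that every $\poly(1/\eps)$, $\lambda_{rr}(\eps)=\exp(\poly(1/\eps))$ and $\log N$ factor is $N^{\poly\eps}$ (using that $c$ is small enough) and may be absorbed. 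Step~1 forms the balanced partition $V_1,\dots,V_k$ and the restricted weightings $A_j,S_{w,j}$ by a single pass over $\supp(A)$ and over $\bigcup_w\supp(S_w)$: $\tilde O(m+q)$ work, $\tilde O(1)$ depth.

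For Step~2 there are $r\le N^{\poly\eps}$ rounds. In round $i$ the call to \Cref{lem:lengthbound-lowcom-flow} on the $t$-step router $G$ with the $k$ commodities $\{(A^{(i)}_j,B^{(i)}_j)\}_j$ costs $\tilde O\big((m+\sum_j|\supp(A^{(i)}_j+B^{(i)}_j)|)\,k\,\poly(t)\big) = \tilde O(mk\poly(t))$ work and $\tilde O(k\poly(t))$ depth, since the $A^{(i)}_j,B^{(i)}_j$ live on disjoint subsets of $\supp(A)$; the decisive point is that it returns a flow $F^{(i)}$ with $|\supp(F^{(i)})|\le\tilde O(m+k)$, thanks to the \emph{additive} $k$ in the support bound of \Cref{lem:lengthbound-lowcom-flow}. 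Hence every matching graph $\widetilde G^{(i)}_j$ built in Line~\ref{line:tildeGi} has at most $|\supp(F^{(i)})|\le\tilde O(m+k)$ edges, so the parameter ``$m'$'' of \Cref{lem:cutmatch-game-exists} is $\tilde O(m+k)$, and the $k$ parallel cut strategies together cost $\sum_j\tilde O(m'+|\supp(A_j)|)\poly(t)N^{\poly\eps} = \tilde O\big((m+k)k\big)\poly(t)N^{\poly\eps}$ work and $\tilde O(1)\poly(t)N^{\poly\eps}$ depth. Adding the $r$ flow calls and the $\tilde O(1)$-per-round edge insertion, Step~2 spends $\tilde O\big((m+q)k^2\poly(t)N^{\poly\eps}\big)$ work and $\tilde O\big(k\poly(t)N^{\poly\eps}\big)$ depth.

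In Step~3 the weightings $A_{j,j'},B_{j,j'}$ come from a group-by of $\supp(D)$ by the parts of the two endpoints ($\tilde O(k^2+q)$ work, $\tilde O(1)$ depth), and since each pair of $\supp(D)$ contributes to exactly one $A_{j,j'}$ and one $B_{j,j'}$ we have $\sum_{j,j'}|\supp(A_{j,j'}+B_{j,j'})|\le 2q$; hence the single $k^2$-commodity call to \Cref{lem:lengthbound-lowcom-flow} costs $\tilde O\big((m+q)k^2\poly(t)\big)$ work, $\tilde O(k^2\poly(t))$ depth, and returns $F^{\mathrm{match}}$ with $|\supp(F^{\mathrm{match}})|\le\tilde O(m+k^2+q)$, after which the $k^2$ calls to $\mathrm{SplitFlow}$ and the formation of the $S'_w$ fit in $\tilde O(m+k^2+q)$ work and $\tilde O(1)$ depth by \Cref{lem:split-flow-lemma}. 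For Step~4, \Cref{lem:routing-router-recsize} bounds the total size of the recursive instances by $\tilde O(k^2+m+q)N^{\poly\eps}$, so inductively (via \Cref{lem:routing-router-quality}) $\sum_{j'}|\supp(F^{\mathrm{rec}}_{j'})|\le\tilde O(k^2+m+q)N^{\poly\eps}$ and, routers having unit length, every recursive flow-path uses at most $t'\lambda_{rr}(\eps)$ edges; pushing each such flow-path through the stored embedding $\Pi_{j'}$ — whose every edge expands to a path of at most $2t$ edges of $G$ (the length bound in \Cref{lem:lengthbound-lowcom-flow}) — therefore costs $\tilde O\big(\sum_{j'}|\supp(F^{\mathrm{rec}}_{j'})|\cdot t'\lambda_{rr}(\eps)\cdot t\big) = \tilde O\big((k^2+m+q)\poly(t)N^{\poly\eps}\big)$ work and $\tilde O(1)$ depth; and since expanding a path yields a single path, support counts never exceed $\tilde O((k^2+m+q)N^{\poly\eps})$, so the remaining $\mathrm{SplitFlow}$ and $\mathrm{ConcatFlow}$ calls stay within the same budget by Lemmas~\ref{lem:split-flow-lemma} and~\ref{lem:concat-flow-lemma}.

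Summing the four steps yields total work $\tilde O\big((m+q)k^2\poly(t)N^{\poly\eps}\big)$ and total depth $\tilde O\big(k^2\poly(t)N^{\poly\eps}\big)$ — the $k^2$ in the depth being forced solely by the $k^2$-commodity flow of Step~3, every other step contributing only $\tilde O(k\poly(t)N^{\poly\eps})$ — which is exactly the claim since $\sum_w|\supp(S_w)| = q$. I do not expect a genuine obstacle; this is bookkeeping. The two places needing care are: first, the per-round bound $|\supp(F^{(i)})|\le\tilde O(m+k)$ — replacing the single $k$-commodity call of Line~\ref{line:Fqi} by $k$ separate single-commodity flows would give each matching graph $\Theta(km)$ edges and multiply the router sizes by $k$ at each of the $\Theta(\log_k|V|)$ recursion levels, i.e.\ by $k^{\Theta(\log_k|V|)}=\poly(N)$ overall, which the additive-$k$ guarantee of \Cref{lem:lengthbound-lowcom-flow} precisely prevents; and second, the cost in Step~4 of expanding recursive flow-paths through two layers of embeddings, which needs the inductive length bound on the recursive output together with the length-$2t$ bound on the embedding paths, and uses that a path expands to a single path so that the support size is not inflated.
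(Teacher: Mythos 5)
Your proof is correct and follows essentially the same route as the paper's: the cost is dominated by the $k$ cut strategies (using $m'=\tilde O(|E(G)|+k)$, which follows from the additive-$k$ support bound of \Cref{lem:lengthbound-lowcom-flow}), the $r=N^{\poly\eps}$ per-round $k$-commodity flow calls, and the single $k^2$-commodity flow call that forces the $k^2$ factor in work and depth. You are somewhat more careful than the paper in explicitly charging the Step-4 post-processing (expanding recursive flow paths through the embeddings $\Pi_{j'}$), which the paper's proof folds into ``negligible''; this extra bookkeeping is consistent with the stated bounds and does not change the argument.
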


\begin{proof}
    Excluding the recursive calls, work done outside calls to \Cref{lem:lengthbound-lowcom-flow} or \Cref{lem:cutmatch-game-exists} is negligble. For those two,
    \begin{itemize}
        \item the cut-strategy of \Cref{lem:cutmatch-game-exists} can be computed with work $\tilde{O}(|E(G)| + k) \cdot \poly(t) N^{\poly \eps}$ and depth $\tilde{O}(1) \cdot \poly(t) N^{\poly \eps}$, as the maximum number of edges $m'$ in a matching graph produced is $m' = \tilde{O}(|E(G)| + k)$. The total number of cut-strategies computed is $k$.
        \item computing the flows for the cut-matching games takes $\tilde{O}(|E(G)| \cdot k \cdot \poly(t))$ work and $\tilde{O}(k \cdot \poly(t))$ depth, and is done $r = N^{\poly \eps}$ times. Computing the flows for matching vertex sets $V_j, V_j'$ takes $\tilde{O}((|E(G)| + \sum_w |\supp(S_w)|) \cdot k^2 \cdot \poly(t))$ work and $\tilde{O}(k^2 \cdot \poly(t))$ depth, and is done once.
    \end{itemize}
    Thus, the total work excluding recursive calls is $\tilde{O}((|E(G)| + \sum_w |\supp(S_w)|)\cdot k^2 \cdot N^{\poly \eps} \cdot \poly(t))$, and the depth is $\tilde{O}(k^2 \cdot N^{\poly \eps} \cdot \poly(t))$.
\end{proof}

\begin{lemma}\label{lem:routing-router-quality}
Suppose that each flow $F^{\mathrm{rec}}_{j'} := \sum_{w \in V_{j'}} F_{w}^{\mathrm{rec}}$ returned by the recursive calls from a call to \Cref{alg:router-routing} satisfies the demand $\{S'_w\}_{w \in V_{j'}}$ (i.e. $\val((F^{\mathrm{rec}}_{w})_{v, w}) = S'_{w}(v)$), with
\begin{itemize}
    \item length at most $t^{\mathrm{rec}}$,
    \item congestion at most $\kappa^{\mathrm{rec}}$, and
    \item support size at most $s^{\mathrm{rec}} \cdot (|E(G_{j'})| + \sum_{w \in V_{j'}} |\supp(S'_{w})|)$.
\end{itemize}
Then, the flow $F = \sum_{w} F_w$ returned satisfies the demand $\{S_{w}\}_{w \in V}$ with
\begin{itemize}
    \item length at most $t^{\mathrm{rec}} \cdot O(t)$,
    \item congestion at most $\kappa^{\mathrm{rec}} \cdot \tilde{O}(\gamma N^{\poly \eps})$, and
    \item support size at most $\left(s^{\mathrm{rec}} \cdot \tilde{O}(N^{\poly \eps})\right) \cdot (|E(G)| + \sum_{w} |\supp(S_w)|) + \tilde{O}(k^2)$.
\end{itemize}
\end{lemma}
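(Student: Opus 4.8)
The plan is to follow the flow $F$ through the four steps of \Cref{alg:router-routing} and bound each of its parameters in terms of those of the recursive flows $F^{\mathrm{rec}}_{j'}$, using only the four building blocks $\mathrm{SplitFlow}$ (\Cref{lem:split-flow-lemma}), $\mathrm{ConcatFlow}$ (\Cref{lem:concat-flow-lemma}), the length-bounded low-commodity flow routine (\Cref{lem:lengthbound-lowcom-flow}), and the recursive-instance bound (\Cref{lem:routing-router-recsize}). The returned flow is $F=\sum_{j,w}\mathrm{ConcatFlow}(F^{\mathrm{match}}_{w,j},F^{\mathrm{tail}}_{w,j})$, where the ``head'' piece $F^{\mathrm{match}}_{w,j}$ is the complete $S_{w,j}$-to-$S'_{w,j}$ flow obtained from the $k^2$-commodity matching flow $F^{\mathrm{match}}$ on $G$ via $\mathrm{SplitFlow}$, and the ``tail'' piece $F^{\mathrm{tail}}_{w,j}$ is the complete $S'_{w,j}$-to-$w$ flow obtained by pushing the recursive flow $F^{\mathrm{rec}}_w$ in $G_{j'}$ through the embedding $\Pi_{j'}$ into $G$ and then applying $\mathrm{SplitFlow}$. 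Since the intermediate node weighting $S'_{w,j}$ is exactly the destination of the head piece and the source of the tail piece, $\mathrm{ConcatFlow}$ is applicable, produces a complete $S_{w,j}$-to-$w$ flow, and summing over $j$ (using $\sum_j S_{w,j}=S_w$) gives $\val((F_w)_{(v,w)})=S_w(v)$; so $F$ satisfies $\{S_w\}_{w\in V}$, using that each $F^{\mathrm{rec}}_{j'}$ satisfies $\{S'_w\}_{w\in V_{j'}}$ by hypothesis.

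For the length bound, \Cref{lem:lengthbound-lowcom-flow} guarantees that every flow path of $F^{\mathrm{match}}$ and of every round-matching flow $F^{(i)}$ has length at most $2t$ in the unit-length router $G$; since each edge of $G_{j'}$ embeds under $\Pi_{j'}$ to one such round-matching path, a flow path of $F^{\mathrm{rec}}_w$ of length at most $t^{\mathrm{rec}}$ in the unit-length graph $G_{j'}$ maps to a flow path of $F^{\mathrm{tail}}_w$ of length at most $2t\cdot t^{\mathrm{rec}}$, and concatenating with a head piece of length at most $2t$ yields $\leng_F\le 2t+2t\,t^{\mathrm{rec}}=t^{\mathrm{rec}}\cdot O(t)$. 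For the congestion bound, congestion is additive over sums of flows and $\mathrm{ConcatFlow}$ only adds the congestions of its two arguments on each edge, so $\congest_F$ is at most $\congest_{F^{\mathrm{match}}}+\congest_{\sum_{j'}\Pi_{j'}(F^{\mathrm{rec}}_{j'})}$. The first term is $\tilde O(\gamma)$ by \Cref{lem:lengthbound-lowcom-flow}. For the second, the key observation is that the combined embedding $\sum_{j'}\Pi_{j'}$ of all the routers into $G$ is, as a flow, precisely $\sum_{i=1}^{r}F^{(i)}$ (one embedding path per round-matching flow path), which has congestion at most $r\cdot\tilde O(\gamma)=\tilde O(\gamma N^{\poly\eps})$ since $r\le N^{\poly\eps}$; and pushing $F^{\mathrm{rec}}_{j'}$ (of congestion $\le\kappa^{\mathrm{rec}}$ on $G_{j'}$) through $\Pi_{j'}$ scales the embedding congestion by $\kappa^{\mathrm{rec}}$. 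Hence $\congest_F\le\kappa^{\mathrm{rec}}\cdot\tilde O(\gamma N^{\poly\eps})$.

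For the support-size bound, \Cref{lem:concat-flow-lemma} gives $|\supp(F)|\le\sum_{j,w}|\supp(F^{\mathrm{match}}_{w,j})|+\sum_{j,w}|\supp(F^{\mathrm{tail}}_{w,j})|$. For the head pieces, \Cref{lem:split-flow-lemma} yields $\sum_{j,w}|\supp(F^{\mathrm{match}}_{w,j})|\le\sum_w|\supp(S_w)|+|\supp(F^{\mathrm{match}})|$, and \Cref{lem:lengthbound-lowcom-flow} bounds $|\supp(F^{\mathrm{match}})|\le\tilde O(|E(G)|+k^2+\sum_w|\supp(S_w)|)$ — the additive $k^2$ coming exactly from the $k^2$ commodities of the matching flow. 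For the tail pieces, \Cref{lem:split-flow-lemma} applied to $\mathrm{SplitFlow}(F^{\mathrm{tail}}_w,\{S'_{w,j}\}_j)$ gives $\sum_{j,w}|\supp(F^{\mathrm{tail}}_{w,j})|\le\sum_{j,w}|\supp(S'_{w,j})|+\sum_w|\supp(F^{\mathrm{tail}}_w)|$, where the first sum is again controlled by the head-piece bound (since $|\supp(S'_{w,j})|\le|\supp(F^{\mathrm{match}}_{w,j})|$), and the second satisfies $\sum_w|\supp(F^{\mathrm{tail}}_w)|\le\sum_{j'}|\supp(F^{\mathrm{rec}}_{j'})|$ because pushing a flow through an embedding that maps each edge to a single path does not increase path count; this is then at most $s^{\mathrm{rec}}\sum_{j'}(|E(G_{j'})|+\sum_{w\in V_{j'}}|\supp(S'_w)|)$ by hypothesis, which \Cref{lem:routing-router-recsize} bounds by $s^{\mathrm{rec}}\cdot\tilde O(N^{\poly\eps})\cdot(|E(G)|+\sum_w|\supp(S_w)|+k^2)$. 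Collecting the multiplicative-$N^{\poly\eps}$ contributions against $s^{\mathrm{rec}}$ and isolating the level-local matching term gives the claimed bound $s^{\mathrm{rec}}\cdot\tilde O(N^{\poly\eps})\cdot(|E(G)|+\sum_w|\supp(S_w)|)+\tilde O(k^2)$.

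I expect the support-size accounting to be the main obstacle. The delicate points are: (i) verifying that path count is genuinely preserved — not multiplied by $k$ — both when $F^{\mathrm{rec}}_w$ is pushed through $\Pi_{j'}$ and through the two chained $\mathrm{SplitFlow}$ calls composed with $\mathrm{ConcatFlow}$; (ii) confirming that the \emph{only} additive blow-up introduced at a level is the $k^2$ from the matching flow of Step 2, while the cut-matching game contributes merely a multiplicative $r=N^{\poly\eps}$ (the matching flows $F^{(i)}$ and the recursive instances each absorbing their own additive $\tilde O(k)$/$\tilde O(k^2)$ terms as in \Cref{lem:routing-router-recsize}); and (iii) keeping the $N^{\poly\eps}$ factor ``cheap'' enough that, when this lemma is unrolled over the $\log_k|V(R)|$ levels of the recursion in the proof of \Cref{thm:routing-router}, the total remains $N^{\poly\eps}$. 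Everything else — the demand-satisfaction check and the length and congestion bounds — is a routine consequence of additivity and the stated properties of $\mathrm{SplitFlow}$, $\mathrm{ConcatFlow}$, and the embeddings.
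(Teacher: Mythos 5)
Your proposal is correct and follows essentially the same route as the paper's proof: the same head/tail decomposition via $\mathrm{SplitFlow}$ and $\mathrm{ConcatFlow}$, the same length argument (each edge of $G_{j'}$ embeds to a $2t$-length matching path), the same congestion accounting ($r\cdot\tilde O(\gamma)$ embedding congestion times $\kappa^{\mathrm{rec}}$), and the same support-size bookkeeping isolating the additive $\tilde O(k^2)$ from the matching flow while charging everything else multiplicatively against $s^{\mathrm{rec}}\cdot\tilde O(N^{\poly\eps})$ via \Cref{lem:routing-router-recsize}. The "delicate points" you flag are exactly the ones the paper's proof handles, and your handling of them matches.
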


\begin{proof}
    First, consider the flows $F^{\mathrm{tail}}_{w} := \Pi_{j'}(F^{\mathrm{rec}}_{w})$. The length of any path in this flow is at most $2t \cdot t^{\mathrm{rec}}$, as the length of each flow path computed by \Cref{lem:lengthbound-lowcom-flow} is at most $2t$, and each edge in $G_{j'}$ maps to one such flow path. For this reason, we also have $|\supp(F^{\mathrm{tail}}_{w})| = |\supp(F^{\mathrm{rec}}_{w})|$.
    
    For congestion, consider some edge $e$ of $E(G)$. The total capacity of edges in the graphs $G_{j'}$ that map to a flow path containing $e$ is at most the number of rounds $r$ of the cut strategy times the maximum congestion of a flow produced by \Cref{lem:lengthbound-lowcom-flow}, which is $\tilde{O}(\gamma)$, and the maximum congestion of an edge $e' \in E(G_{j'})$ is the edge's capacity times $\kappa^{\mathrm{rec}}$, thus the congestion of $\sum_{w} F^{\mathrm{tail}}_{w}$ is at most $r \cdot \tilde{O}(\gamma) \cdot \kappa^{\mathrm{rec}} = \tilde{O}(\gamma \cdot \kappa^{\mathrm{rec}}) \cdot N^{\poly \eps}$.

    Next, consider the flows $F^{\mathrm{tail}}_{w, j}$. The total congestion and dilation of these flows follows from $\sum_{w, j} F^{\mathrm{tail}}_{w, j} = \sum_{w} F^{\mathrm{tail}}_{w}$. For support size, by \Cref{lem:split-flow-lemma}, we have 
    \begin{equation*}
        \sum_{w, j} |\supp(F^{\mathrm{tail}}_{w, j})| \leq \sum_{w} (\sum_{j} |\supp(S'_{w, j})| + |\supp(F^{\mathrm{tail}}_{w})|) \leq \sum_{w, j} |\supp(S'_{w, j})| + \sum_{w} |\supp(F^{\mathrm{rec}}_{w})|.
    \end{equation*}
    and, using the bound $\sum_{w, j} |\supp(S'_{w, j})| \leq \tilde{O}(k^2 + E(G) + \sum_w |\supp(S_w)|)$ from the proof of \Cref{lem:routing-router-recsize} and the assumption from the lemma,
    \begin{equation*}
        \sum_{w, j} |\supp(F^{\mathrm{tail}}_{w, j})| \leq \left(s^{\mathrm{rec}} \cdot \tilde{O}(N^{\poly \eps})\right) \cdot (|E(G)| + \sum_{w} |\supp(S_w)|) + \tilde{O}(k^2).
    \end{equation*}

    Now, consider the flows $F^{\mathrm{match}}_{j, j'}$. As they are computed by a single call to \Cref{lem:lengthbound-lowcom-flow}, the congestion of $\sum_{j, j'} F^{\mathrm{match}}_{j, j'}$ is at most $\tilde{O}(\gamma)$, the length of each flow path is at most $2t$, and
    \begin{equation*}
        \sum_{j, j'} |\supp(F^{\mathrm{match}}_{j, j'})| \leq \tilde{O}(k^2 + E(G) + \sum_{j, j'} |\supp(A_{j, j'} + B_{j, j'})|) = \tilde{O}(k^2 + E(G) + \sum_{w} |\supp(S_w)|) 
    \end{equation*}
    as argued in the proof of \Cref{lem:routing-router-recsize}. As before, after splitting, the flow $\sum_{w, j} F^{\mathrm{match}}_{w, j}$ retains the same congestion bound $\tilde{O}(\gamma)$, the length bound $2t$ and has the same support size: $\tilde{O}(k^2 + E(G) + \sum_{w} |\supp(S_w)|) + \sum_{w, j} |\supp(S_{w, j})| = \tilde{O}(k^2 + E(G) + \sum_{w} |\supp(S_w)|)$.

    The returned flow is the concatenation of the flows $F^{\mathrm{match}}_{w, j}$ with the flows $F^{\mathrm{tail}}_{w, j}$. Thus, by \Cref{lem:concat-flow-lemma}, the congestion is at most the sum of the two congestions, the length the sum of the two lengths, and the support size the sum of the two support sizes. In each case, the quantity of $F^{\mathrm{tail}}_{w, j}$ is larger, giving the desired bounds.

    It remains to show the flow satisfies the demand $\{S_w\}_{w \in V}$. This is simple: as the flow $F^{\mathrm{match}}_{w, j}$ is a complete $S_{w, j}$-to-$S'_{w, j}$-flow, and $F^{\mathrm{tail}}_{w, j}$ is a complete $S'_{w, j}$-to-$w$-flow, their concatenation is a complete $S_{w, j}$-to-$w$-flow. The sum of such flows over $j$ is a complete $S_{w}$-to-$w$-flow, as desired.
\end{proof}

We are ready to prove \Cref{thm:routing-router}.

\routingrouter*

\begin{proof}
    We produce a flow with the properties claimed by the theorem by calling \Cref{alg:router-routing} with the graph $R$ and node weighting $A$ (which is a $t$-step $\gamma$-router), $\{S_w\}_{w \in V}$ defined by $S_w(v) = D(v, w)$, $k = |V(R)|^{\eps}$ (which results in recursion depth $d := \lceil \log_k |V(R)| \rceil = \lceil 1 / \eps \rceil$, and error parameter $\epsilon' = \epsilon^{1 / c'}$ for a sufficiently large constant $c'$, such that
    \begin{itemize}
        \item $\tilde{O}(N^{\poly \eps'})^{d} \leq N^{\poly \eps}$, and
        \item $O(1 / \eps')^{d} \leq \exp(\poly(1 / \eps))$.
    \end{itemize}
    Note that this is possible as $\eps \geq \log^{-c} N$, thus $\tilde{O}(N^{\poly \eps'}) = N^{\poly \eps'}$ for a different polynomial.

    Now, having set the parameters, consider the $j$th level of recursion, for $0 \leq j \leq d$. As the recursion always splits into $k$ instances, there are $k^{j}$ recursive instances at this level. Let $\mathrm{siz}_{j}$ denote the total size (sum of $|E(G')| + \sum_w |\supp(S'_w)|$) of instances at the $j$th level of recursion (with $\mathrm{siz}_0 = |E(R)| + \sum_w |\supp(S_w)|$. We have
    \begin{equation*}
        \mathrm{siz}_{j + 1} \leq \left(\mathrm{siz}_j + k^{j} \cdot k^2\right) \tilde{O}(N^{\poly \eps'}),
    \end{equation*}
    thus $\mathrm{siz}_{j} \leq (\mathrm{siz}_0 + k^{j + 1}) \cdot \tilde{O}(N^{\poly \eps'})^{j}$.

    At every recursion level except the topmost, the graph $G'$ is a $t'$-step $\eta$-router for $t' = 1 / \eps'$ and $\eta \leq N^{\poly \eps'}$. Let $t^{\mathrm{rec}}_j$ be the maximum length of a flow returned from the $j$th recursion level and $\kappa^{\mathrm{rec}}_j$ the maximum congestion, as in \Cref{lem:routing-router-quality}. Then, we have
    \begin{itemize}
        \item $t^{\mathrm{rec}}_{j - 1} \leq t^{\mathrm{rec}}_j \cdot O(1 / \eps')$, and
        \item $\kappa^{\mathrm{rec}}_{j - 1} \leq \kappa^{\mathrm{rec}}_j \cdot \tilde{O}(N^{\poly \eps'})$.
    \end{itemize}
    For support size, the total support size of the flows returned from the second-bottommost level is at most $\mathrm{siz}_{d - 1} \cdot \tilde{O}(N^{\poly \eps'}) + k^{d - 1} \cdot \tilde{O}(k^2)$. Thus, the final returned flow has
    \begin{itemize}
        \item length at most $t \cdot O(1 / \eps')^{d} \leq t \cdot \exp(\poly(1 / \eps))$,
        \item congestion at most $\gamma \cdot \tilde{O}(N^{\poly \eps'})^{d} \leq \gamma \cdot N^{\poly \eps}$, and
        \item support size at most $\left(\mathrm{siz}_{d - 1} + k^{d + 1}\right) \cdot \tilde{O}(N^{\poly \eps'})^d \leq (|E(R)| + |\supp(D)|) N^{\poly \eps}$.
    \end{itemize}
    
    Finally, we analyze the work and depth. By \Cref{lem:routing-router-timeused}, the total work of the algorithm is
    \begin{equation*}
        \mathrm{siz}_{d} \cdot \poly(t) = (\mathrm{siz}_0 + k^{d + 1}) \cdot \tilde{O}(N^{\poly \eps'})^{d} \cdot \poly(t) \leq (|E(R)| + |\supp(D)|) \cdot N^{\poly \eps} \cdot \poly(t)
    \end{equation*}
    and its depth is $\tilde{O}(k^{3} \cdot N^{\poly \eps'} \cdot \poly(t)) \leq N^{\poly \eps} \cdot \poly(t)$ (as the size of the recursive instance does not affect its depth, and the recursion depth is $k$).
\end{proof}

\subsection{Routing on a Witness} \label{sec:routing-witness}

Recall the definition of an expansion witness:

\expansionwitness*

Our algorithm satisfying \Cref{cor:routing-witness} is simple: for every demand pair, choose an arbitrary neighbourhood cover cluster containing both vertices of the pair. Then, call routing in a router within the embedded router of each cluster to route that cluster's demand.

\routingwitness*

\begin{proof}
    First, to simplify the algorithm, for every edge $e$ in a router $R^{S}$ of a cluster $S \in \cS \in \mathcal{R}$, we split the edge into multiple edges, each of which $\Pi_{\mathcal{R} \rightarrow G}$ maps to a path, not a flow. After this, the total size of the routers is $|\path(\Pi_{\mathcal{R} \rightarrow G})|$.
    
    Now, let $D_{S}, S \in \cS \in \cN$ be demands such that $D_{S}$ is restricted to cluster $S$, $D_{S}(a, b) \in D(a, b)$, and $\sum_{S} D_{S} = D$. For each cluster $S$, let $F^{\mathrm{router}}_{S}$ be a flow routing $D_{S}$ on $R^{S}$ computed by \Cref{thm:routing-router} with error parameter $\epsilon$. Then, $F^{\mathrm{router}}_{S}$ has length $t_{\mathcal{R}} \cdot \lambda_{\mathrm{rr}}(\eps)$, congestion $1 \cdot \kappa_{\mathrm{rr}}(\eps)$ and support size $(|E(R^{S})| + |\supp(D_S)|) \cdot N^{\poly \eps}$. This takes $(|E(R^{S})| + |\supp(D_S)|) \cdot N^{\poly \eps} \cdot \poly(t_{\mathcal{R}})$ work and has depth $N^{\poly \eps} \cdot \poly(t_{\mathcal{R}})$; over all the clusters $S$, the total support size is $(|\supp(D)| + |\path(\Pi_{\mathcal{R} \rightarrow G})|) \cdot N^{\poly \eps}$ and the work is $(|\supp(D)| + |\path(\Pi_{\mathcal{R} \rightarrow G})|) \cdot N^{\poly \eps} \cdot \poly(t_{\mathcal{R}})$.

    Next, we map the flows back to the original graph using $\Pi_{\mathcal{R} \rightarrow G}$. Let $F_{S} := \Pi_{\mathcal{R} \rightarrow G}(F^{\mathrm{router}}_{S})$ be the flow on the router mapped into $G$ by the embedding of the expansion witness, and $F = \sum_{S} F_{S}$. Then, $F$ has length $h_{\Pi} \cdot t_{\mathcal{R}} \cdot \lambda_{\mathrm{rr}}(\eps)$, congestion $\kappa_{\mathrm{rr}}(\eps) \cdot \kappa_{\Pi}$ and support size $|\supp(F)| = \sum_{S \in \cS \in \mathcal{R}} |F^{\mathrm{router}}_{S}|$, as desired. The work and depth of this stage of the algorithm are negligible.
\end{proof}

\section{Low-Step Multi-commodity Flow}
\label{sec:low-step-flow}

In this section, we give $O(t)$-step $k$-commodity flow algorithms whose the running time are $|E|\cdot\poly(t)N^{\poly\eps}$. We emphasize that the running time is independent from $k$, in contrast the algorithms from \cite{Haeupler2023lenboundflow}. Later in \Cref{sec:general flow}, we will further remove the $\poly(t)$ dependency. %
\begin{comment}
Compared with length-constrained multicommodity flow algorithms of \cite{Haeupler2023lenboundflow}, our algorithms only work in undirected graphs and has worse approximation instead of

require no length approximation, are $(1+\epsilon)$-approximate in congestion and function in directed graphs, they have a multiplicative work dependency on the number of commodities $k$. This makes them infeasible to directly use in a close-to-linear time multicommodity flow algorithm.

In this section, we present length-bounded multicommodity flow approximation algorithms without this multiplicative work dependency on $k$. At a very high level, these results are achieved through combining length-bounded expander decomposition with the length-constrained expander routing algorithm from \Cref{sec:length-constrained-expander-routing}, which lacks work dependency on $k$.

While this approach removes the dependency on $k$, it restricts the algorithms to undirected graphs and requires approximation in both length and congestion with spanner-type tradeoffs. However, as flow sparsifiers are always undirected, share the length-congestion approximation tradeoff, and are required to later eliminate the $\poly(h)$-factors in work and depth shared by both the algorithms of this section and those of \cite{Haeupler2023lenboundflow}, this drawback is inconsequential. 
\end{comment}

Below, a flow $F$ is said to \emph{partially route} a demand $D$ if $D_{F}\leq D$ (pointwise), where $D_{F}$ is the demand routed by the flow $F$. The definition of the \emph{total length} and \emph{average length} of a flow are defined as follows: 
\begin{align*}
\mathrm{totlen}(F) & :=\sum_{P}F(P)\ell(P)=\sum_{e}F(e)\ell(e),\\
\avglen(F) & :=\frac{\mathrm{totlen}(F)}{\mathrm{val}(F)}.
\end{align*}
The following is the main result of this section. We can control both the maximum step $t$ and total length bound $T$ of the flow. The bound of the total length is very useful, as we will see in \Cref{sec:flow boost}, as it us to ``boost'' the congestion slack $\kappa$ to match the length slack $s\ll\kappa$. 
\begin{thm}
[Low-step Non-concurrent Flow]\label{lem:polyh-nonconcflow} Let $G$ be a graph with integral edge lengths $\ell \geq 1$ and capacities $u \geq 1$, $D$ an integral demand, $t \geq 1$ a step bound, $T \geq 1$ a total length bound, and $\eps \in (\log^{-c} N, 1)$ for some sufficiently small constant $c$ a tradeoff parameter. Then, $\textsc{LowStepNonConcFlow}(G, D, t, T, \eps)$ (\Cref{alg:avglen-lowstep-nonconcflow}) returns a multicommodity flow $F$ partially routing $D$, such that 
\begin{enumerate}
\item $F$ has maximum step length $ts$, total length $T s$ and congestion $\kappa$ for step slack $s = \exp(\poly1/\eps)$ and congestion slack $\kappa = N^{\poly(\eps)}$. The support size of $F$ is $\supportsize$. 
\item Let $F^{*}$ be the maximum-value multicommodity flow partially routing $D$ of step length $t$, total length $T$ and congestion $1$. Then, $\val(F) \geq \val(F^{*})$. 
\end{enumerate}
The algorithm has depth $\poly(t) N^{\poly(\eps)}$ and work $(|E| + \mathrm{supp}(D)) \cdot \poly(t) N^{\poly(\eps)}$.
\end{thm}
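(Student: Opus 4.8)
\textbf{Proof plan for \Cref{lem:polyh-nonconcflow}.}

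The plan is to combine three ingredients already available: (i) a low-step emulator $G'$ of $G$ from \Cref{thm:emu algo} (more precisely its length-constrained variants, since here we have a genuine total-length budget $T$ rather than a per-path length budget), (ii) the witnessed expander decomposition of \Cref{thm:witness ED alg} applied to $G'$, together with the routing-on-a-witness primitive of \Cref{cor:routing-witness}, and (iii) a multiplicative-weights / Garg--Konemann-style outer loop that repeatedly pushes flow along ``cheap'' routes until capacities are saturated, so that the returned flow's value matches that of the best unit-congestion competitor $F^*$. First I would reduce the problem on $G$ to a problem on the emulator: $F^*$ has step length $t$ and total length $T$, so by the forward-mapping guarantee of the emulator there is a flow in $G'$ routing the same demand with $O(t)$ steps, total length $O(Ts)$, and congestion blown up by $\kappa' = N^{\poly\eps}$; conversely any flow computed in $G'$ maps back to $G$ via \Cref{lem:fast flow map} / \textsc{FlowMap} without increasing congestion or length. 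Thus it suffices to (a) solve the bicriteria low-step flow problem on $G'$, and (b) map back, charging the emulator's slacks into the final $s$ and $\kappa$.

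For step (a) on $G'$: since $G'$ has uniformly bounded step-length structure, I would run a length-constrained expander decomposition on $G'$ (with length parameter $\Theta(t)$ and $\eps$ as the tradeoff), obtaining a moving cut $C$ and an expansion witness $(\mathcal N,\mathcal R,\Pi_{\mathcal R\to G'-C})$. The decomposition's guarantee is that every demand pair within the length bound can be routed through the witness with congestion $N^{\poly\eps}/\phi$; the cut $C$ is charged against the objective. I would then set up a Garg--Konemann iteration with an edge-length (``weight'') potential $w_e$ per edge of $G'$ and a length budget: in each iteration, form the subdemand of $D$ consisting of pairs that are (w.r.t.\ current weights) within the allowed length/step budget in $G'-C$, route it through the witness using \Cref{cor:routing-witness}, scale the routed flow down so that the most-congested edge receives a bounded increment, add it to the accumulator $F$, and update $w_e$ multiplicatively. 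Pairs whose weight-distance exceeds the budget are handled by a recursive/hierarchical argument through the levels of the decomposition (this is exactly where the expander-decomposition hierarchy and the bootstrapped emulators do the work of bounding the number of levels by $O(1/\eps)$, hence the $\exp(\poly 1/\eps)$ step slack and $N^{\poly\eps}$ congestion slack). After $N^{\poly\eps}$ iterations the standard multiplicative-weights potential argument shows $\val(F)\ge\val(F^*)$ while congestion stays $O(\kappa)$ and total length stays $O(Ts)$; the support-size bound $\supportsize$ comes from the additive (not multiplicative) dependence on $k$ in \Cref{cor:routing-witness} and \Cref{lem:lengthbound-lowcom-flow}, accumulated over the $N^{\poly\eps}$ iterations and $O(1/\eps)$ recursion levels.

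I expect the main obstacle to be the bookkeeping that makes the running time \emph{independent of $k$} while still certifying $\val(F)\ge\val(F^*)$. The naive Garg--Konemann loop touches each commodity separately and costs $\Omega(mk)$; the fix is that in each iteration we solve \emph{one} $k$-commodity routing call through the witness whose output size is $(|E(G')|+\supp(D))N^{\poly\eps}$ rather than $k\cdot|E(G')|$, and there are only $N^{\poly\eps}$ iterations. Making the potential/feasibility argument go through with this ``batched'' routing step — in particular, arguing that restricting to the within-budget subdemand in each iteration still lets us compete against $F^*$, and that the length budget can be enforced additively across the $O(1/\eps)$ levels without it exploding — is the delicate part. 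A secondary obstacle is controlling the \emph{total length} $T$ (as opposed to max per-path length): I would track $\mathrm{totlen}$ as a separate ``budget resource'' alongside the per-edge capacities in the multiplicative-weights scheme, so that $F^*$ being feasible for the joint (capacity, total-length) polytope translates into the algorithm never being blocked before matching $\val(F^*)$. The depth and work bounds then follow by multiplying the per-iteration cost from \Cref{cor:routing-witness} and \Cref{thm:witness ED alg} by the $N^{\poly\eps}$ iteration count and $O(1/\eps)$ recursion depth.
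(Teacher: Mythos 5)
There is a genuine gap, and it sits exactly at the guarantee $\val(F)\ge\val(F^*)$. You propose to certify this with a Garg--Konemann / multiplicative-weights potential argument, but MWU of that kind only yields an \emph{approximation} to the optimal value (roughly $(1-O(\eps))/s$ of it) -- which is precisely why the paper defers MWU to the separate boosting step in \Cref{sec:flow boost} and does \emph{not} use it here. To get the full value of the unit-congestion optimum while only relaxing congestion to $\kappa=N^{\poly\eps}$, the paper instead uses a ``route-or-cut'' dichotomy: \textsc{WeakCutmatch} (\Cref{lem:highcomcutmatch}) either fully routes a demand pair or produces a moving cut separating it, and \Cref{lem:sepcutlowerbound} bounds the amount of $F^*$ that a moving cut of size $|C|$ can separate by $2|C|\cdot\congest_{F^*}$. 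Keeping the accumulated cut small requires the demand-capping device $D'_{\mathrm{cap}}(a,b)=\min(D'(a,b),2^p)$ inside \Cref{lem:polyh-maxlen-nonconcflow}, which maintains $|D'_{\mathrm{cap}}|\le 2\val(F_i^*)$ so that each cut costs only an $O(1/(\gamma\log N))$-fraction of the residual optimum. None of this accounting appears in your plan, and without it the claim $\val(F)\ge\val(F^*)$ does not follow from the batched routing calls.

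Two further points. First, the paper does not build an emulator inside this algorithm; this theorem \emph{is} the subroutine that later runs on an emulator (\Cref{sec:general flow}), which is why a $\poly(t)$ dependence in the work is acceptable here. The step bound and the length scale are combined not via emulator stacking but via the rounding $\ell'(e)=\lceil t\ell(e)/h\rceil$ of \Cref{lem:roundinglemma}, applied over dyadic scales $h=2^p$ with a greedy augmentation argument ($\val(F_p)\ge\val(F_p^*)$ for the subflow of $F^*$ with paths of $\ell$-length at most $2^p$). Second, the total-length budget $T$ cannot be handled by just ``tracking totlen as a budget resource'': the algorithm's paths are a factor $s'$ longer than the optimum's, so the budget must be relaxed to $T'=5s'T$ and early termination must be justified by the inductive comparison of \Cref{claim:Fhat in nonconc flow}, which exhibits a subflow $\widehat F_{p}$ of the algorithm's flow matching $\val(F_p^*)$ with total length at most $4s'\cdot\totlen(F_p^*)$; this is what shows that hitting the budget implies $F^*$ would also be blocked. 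Your proposal would need all three of these ingredients filled in before it could establish Property 2.
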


The organization of this section is as follows. In \Cref{subsec:weak cutmatch}, we first give a \textit{weak} cutmatch algorithm, which we need later. In \Cref{subsec:max len ncflow}, we give an algorithm for computing multi-commodity flows with bounded maximum length. We will use this key subroutine to prove \Cref{lem:polyh-nonconcflow} in \Cref{subsec:avg max len ncflow}. 

\subsection{Weak Cutmatch for Many Commodities}

\label{subsec:weak cutmatch}

For the flow algorithms, we will need a \textit{weak} cutmatch algorithm $\textsc{WeakCutmatch}$. It has a similar guarantee as the cutmatch algorithm from \cite{Haeupler2023lenboundflow}, but, in contrast to \cite{Haeupler2023lenboundflow}, our running time is independent of the number of commodities. This comes at the cost of slack in both length and congestion and a weaker bound on the size $|C|$ of the cut: the cut has size at most a $\phi$-fraction of the size of the total demand, instead of just the un-routed part of the demand.

\begin{lem}
\label{lem:highcomcutmatch} Let $G$ be a graph with integral edge lengths $\ell \geq 1$ and capacities $u \geq 1$, $D$ an integral demand, $h \geq 1$ a maximum length bound, $\phi$ a sparsity parameter and $\eps \in (1 / \log N, 1)$ a tradeoff parameter. Then, $\textsc{WeakCutmatch}(G,D,h,\phi,\eps)$ returns a multicommodity flow, $h$-length moving cut pair $(F,C)$ such that 
\begin{enumerate}
\item $F$ partially routes $D$. Moreover, $\val(F_{(a,b)})\in\{0,D(a,b)\}$. That is, for every vertex pair, either none of the demand or all of the demand is routed by $F$. 
\item $F$ has length $hs$ and congestion $\frac{\kappa}{\phi}$ for length slack $s=\exp(\poly(1/\eps))$ and congestion slack $\kappa=N^{\poly(\eps)}$. The support size of $F$ is $\supportsize$. 
\item For any $(a,b)$, if $\dist_{G - C}(a, b) \leq h$, then $\val(F_{(a,b)}) = D(a,b)$. That is, $F$ routes all the demand between any vertex pair not $h$-separated by the cut $C$. \label{lem:highcomcutmatch:item:close} 
\item $C$ has size at most $\phi \cdot |D|$. 
\end{enumerate}
The algorithm has depth $\poly(h) N^{\poly(\eps)}$ and work $(|E|+|\mathrm{supp}(D)|)\cdot\poly(h)N^{\poly(\eps)}$. 
\end{lem}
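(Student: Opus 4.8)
The plan is to obtain the pair $(F,C)$ by composing a length-constrained \emph{witnessed} expander decomposition with the routing-on-a-witness primitive. First I would set $A := \load(D)$; then $|A| = 2|D|$ and, directly from the definition of $\load$, $D$ is $A$-respecting. Invoke the algorithmic witnessed expander decomposition (\Cref{thm:witness ED alg}) on $G$ with node-weighting $A$, length bound $h$, conductance $\phi' := \phi/2$, linkedness $\beta := 0$, and length slack $s := 1/\eps'$, where $\eps' := \eps^{1/c'}$ for a large enough constant $c'$ so that the hypotheses of all cited results hold on the stated range of $\eps$ and every $N^{\poly\eps'}$, $\exp(\poly(1/\eps'))$, $\poly(1/\eps')$ factor below can be absorbed into $N^{\poly\eps}$ or $\exp(\poly(1/\eps))$. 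This returns a moving cut $C$ with $|C| \le \phi'|A| = \phi|D|$ (which is property~4) together with an $(h, t_{{\cal R}}, h_\Pi, \kappa_\Pi)$-witness $({\cal N}, {\cal R}, \Pi_{{\cal R}\to G})$ of $A$ in $G - C$, where $t_{{\cal R}} = 1/\eps'$, $h_\Pi \le h\exp(\poly(1/\eps'))$, $\kappa_\Pi = N^{\poly\eps'}/\phi'$, $|\path(\Pi_{{\cal R}\to G})| \le |E|\,N^{\poly\eps'}$, in $|E|\poly(h)N^{\poly\eps'}$ work and $\poly(h)N^{\poly\eps'}$ depth.

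Next I would isolate the part of $D$ the witness can carry. Let $D'$ be $D$ restricted to those pairs $(a,b)$ for which some cluster $S \in {\cal N}$ contains both $a$ and $b$; as ${\cal N}$ has width $N^{\poly\eps}$, this can be computed in $|\supp(D)|N^{\poly\eps}$ work. Since $D' \le D$ pointwise, $D'$ is $A$-respecting and the resulting flow satisfies $\val(F_{(a,b)}) \in \{0, D(a,b)\}$ for every pair, and by construction every pair of $\supp(D')$ lies in a common cluster. Applying routing on a witness (\Cref{cor:routing-witness}) to $({\cal N}, {\cal R}, \Pi_{{\cal R}\to G})$ and $D'$ gives a flow $F$ routing $D'$ with length $h_\Pi t_{{\cal R}}\lambda_{rr}(\eps') \le h\exp(\poly(1/\eps))$, congestion $\kappa_\Pi\kappa_{rr}(\eps') \le N^{\poly\eps}/\phi$, and support size $(|\path(\Pi_{{\cal R}\to G})| + |\supp(D')|)N^{\poly\eps} \le (|E|+|\supp(D)|)N^{\poly\eps}$, in $(|E|+|\supp(D)|)\poly(h)N^{\poly\eps}$ work and $\poly(h)N^{\poly\eps}$ depth. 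Output $(F,C)$. Properties~1 and~2 and the claimed work/depth bounds then follow by summing the two stages.

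For property~3, suppose $\dist_{G-C}(a,b) \le h$. Because ${\cal N}$ has covering radius $h$, there is a cluster $S \in {\cal N}$ with $\ball_{G-C}(a,h) \subseteq S$, and $\dist_{G-C}(a,b)\le h$ then forces $b \in S$ as well; hence $(a,b)$ survives into $D'$, and $\val(F_{(a,b)}) = D'(a,b) = D(a,b)$.

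I expect the only genuinely delicate point to be this last alignment. The input condition required by \Cref{cor:routing-witness} is that every demand pair share a cluster, whereas property~3 speaks of being within distance $h$ in $G-C$; these are not equivalent --- two vertices of a common cluster may be as far apart as the cover's diameter. The resolution is that only the implication ``$\dist_{G-C}(a,b)\le h \Rightarrow$ common cluster'' is needed, and that is exactly the covering-radius guarantee; routing the (possibly strictly larger) demand $D'$ is harmless because property~1 forbids only \emph{partial} routing of a pair, and $D'$ assigns each pair either its full value or zero. What remains is parameter bookkeeping --- reconciling the statement's range of $\eps$ with the $\log^{-c}N$ ranges in \Cref{thm:witness ED alg} and \Cref{cor:routing-witness} via the substitution $\eps'=\eps^{1/c'}$, and checking the moving cut is at the right scale ($h_C = h\exp(\poly(1/\eps))$, $|C|\le\phi|D|$) for the distance assertion --- none of which poses a real difficulty.
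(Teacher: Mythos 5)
Your proposal is correct and follows essentially the same route as the paper's own proof: a witnessed expander decomposition on the node-weighting $\load(D)$ with conductance $\phi/2$ (giving $|C|\le\phi|D|$), restriction of $D$ to pairs sharing a cluster, and a single call to routing-on-a-witness, with property~3 coming from the covering-radius guarantee exactly as you argue. The parameter substitution $\eps'=\eps^{1/c'}$ and your remark that only the implication ``$\dist_{G-C}(a,b)\le h\Rightarrow$ common cluster'' is needed are both consistent with the paper's bookkeeping.
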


The algorithm for \Cref{lem:highcomcutmatch} is simple. We first compute a witnessed length-constrained $\phi$-expander decomposition $C$ for $\load(D)$ of size $|C| \le \phi|D|$. Since $G - C$ has an expansion witness, all demand pairs $(a,b)$ that appear together in some cluster $S \in \mathcal{S} \in \mathcal{N}$ that are still close in $G-C$ can be fully routed using routing in a witnessed graph (\Cref{cor:routing-witness}) with a length-constrained multicommodity flow of congestion $\approx 1 / \phi$.

\begin{algorithm}
\caption{Weak Cutmatch: $\textsc{WeakCutmatch}(G,D,h,\phi,\epsilon)$}
\label{alg:highcom-cutmatch} \textbf{Input:} Graph $G$ with integral edge lengths $\ell \geq 1$ and capacities $u \geq 1$, integral demand $D$, maximum length bound $h \geq 1$, sparsity parameter $\phi$, tradeoff parameter $\eps$.\\
 \textbf{Output:} A pair $(F, C)$ of a multicommodity flow partially routing $D$ and a $h$-length moving cut. It is guaranteed that $F$ has length $hs$ and congestion $\frac{\kappa}{\phi}$ for $s=\exp(\poly(1/\eps))$ and $\kappa=N^{\poly(\eps)}$, that $|C|\leq\phi\cdot|D|$, and that the demand between any vertices $h$-close in $G-C$ is routed by $F$. 
\begin{enumerate}
\item Let $s' := \exp(\poly(1 / \eps))$ and $\kappa' := N^{\poly(\eps)}$, with appropriate asymptotics.
\item Let $C$ and $(\mathcal{N}, \mathcal{R}, \Pi_{\mathcal{R} \rightarrow \mathcal{G}})$ be a pair of a $h$-length moving cut of size $\phi |D|$ and a $(h, \lceil 1 / \eps \rceil, hs', 2\kappa' / \phi)$ expansion witness for $G - C$, computed by \Cref{thm:witness ED alg} on $G$ and node weighting $\load(D)$ with parameters $(h, \phi / 2, 0, \lceil 1 / \eps \rceil)$.
\item Let $D'(a, b) = D(a, b) \mathbb{I}[\exists S \in \mathcal{S} \in \mathcal{N} : a, b \in S]$.
\item Let $F$ be an flow routing $D'$ of length $h s'^3$, congestion $\kappa'^2$ and support size $(|E| + |D'|) N^{\poly \epsilon}$ computed by \Cref{cor:routing-witness} on graph $G - C$, node weighting $\load(D)$ and witness $(\mathcal{N}, \mathcal{R}, \Pi_{\mathcal{R} \rightarrow G})$. 
\item Return $(F, C)$, with length slack $s = s'^3$ and congestion slack $\kappa = 2\kappa'^2$.
\end{enumerate}
\end{algorithm}

\begin{proof}[Proof of \Cref{lem:highcomcutmatch}.]
We show each of the claims. 
 
\begin{itemize}
\item \textbf{Property 1.} \Cref{cor:routing-witness} guarantees the returned flow routes exactly $D'$. As $D'(a, b) \in \{0, D(a, b)\}$, for every vertex pair, either none of the demand or all of the demand is routed.
\item \textbf{Property 2.} The flow has length $hs = hs'^3$ and congestion $\kappa / \phi = 2\kappa'^2 / \phi$ with appropriately chosen $s'$ and $\kappa'$ as \Cref{cor:routing-witness} produces a flow of length $h \cdot t_{\mathcal{R}} t_{\Pi} \lambda_{rr}(\epsilon)$ and congestion $\kappa_{\Pi} \kappa_{rr}(\epsilon)$. For support size, \Cref{thm:witness ED alg} guarantees the embedding $\Pi_{\mathcal{R} \rightarrow G}$ has path count $|E| N^{\poly(\eps)}$, thus the call to \Cref{cor:routing-witness} produces a flow of support size $(|E| N^{\poly(\eps)} + |\supp(D)|) N^{\poly(\eps)} = (|E| + |\supp(D)|) N^{\poly(\eps)}$.
\item \textbf{Property 4.} If $\dist_{G - C}(a, b) \leq h$, then $a, b \in S$ for some $S \in \mathcal{S} \in \mathcal{N}$ as $\mathcal{N}$ is a neighbourhood cover of $G - C$ with covering radius $h$.
\item \textbf{Property 3.} \Cref{thm:witness ED alg} guarantees that $|C| \leq (\phi / 2) |\load(D)| = \phi |D|$.
\item \textbf{Work and depth.} The algorithm's work consists of a call to \Cref{thm:witness ED alg} and a call to \Cref{cor:routing-witness}, both of which have depth at most $\poly(h) N^{\poly(\eps)}$ and work $(|E|+\supp(D)) \cdot \poly(h) N^{\poly(\eps)}$, thus the algorithm has this work and depth. 
\end{itemize}
\end{proof}

\subsection{Maximum Length-Constrained Non-Concurrent Flow}

\label{subsec:max len ncflow}

In this section, we give an algorithm for computing multi-commodity flows with bounded maximum length. When we use later it as a subroutine, we the input demand will be $\frac{1}{n^{2}}$-fractional. This is why the statement handles $\frac{1}{n^{2}}$-fractional demands instead of just integral demands.
\begin{lem} \label{lem:polyh-maxlen-nonconcflow}
Let $G$ be a graph with integral edge lengths $\ell \geq 1$ and capacities $u \geq 1$, $D$ a $\frac{1}{n^{2}}$-fractional demand, $h \geq 1$ a maximum length bound, and $\eps \in (\log^{-c} N, 1)$ for some sufficiently small constant $c$ a tradeoff parameter. Then, $\textsc{MaxLenNonConcFlow}(G, D, h, \eps)$ (\Cref{alg:polyh-maxlen-nonconcflow}) returns a $\frac{1}{n^2}$-fractional multicommodity flow $F$ routing a subdemand $D'$ of $D$ such that 
\begin{enumerate}
\item $F$ has maximum length $hs$ and congestion $\kappa$ for $s = \exp(\poly 1 / \eps)$ and $\kappa = N^{\poly(\eps)}$. The support size of $F$ is $\supportsize$. 
\item Let $F^{*}$ be the maximum-value multicommodity flow partially routing $D$ of maximum length $h$ and congestion $1$. Then, $\val(F) \geq \val(F^{*})$. 
\end{enumerate}
The algorithm has depth $\poly(h) N^{\poly(\eps)}$ and work $(|E| + \mathrm{supp}(D)) \cdot \poly(h) N^{\poly(\eps)}$.
\end{lem}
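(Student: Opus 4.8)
I would begin with two harmless normalizations: scale $D$ up by $n^2$ so it becomes integral (and scale the output flow back down at the end), and delete from $D$ every pair $(a,b)$ with $\dist_G(a,b)>h$, since no length-$\le h$ flow can carry any flow between such a pair; this leaves a subdemand $D'\le D$ with $\val(F^*)$ unchanged. The core of the algorithm is a \emph{bounded chain} of calls to $\textsc{WeakCutmatch}$ (\Cref{lem:highcomcutmatch}): maintain a residual demand $D_\tau$ (initially $D'$), an accumulating flow $F$, and an accumulating moving cut $\hat C_\tau$. In round $\tau$, call $\textsc{WeakCutmatch}(G-\hat C_{\tau-1}, D_\tau, h_\tau, \phi, \eps)$ with $\phi$ a small constant and a length bound $h_\tau$ that starts at $h$ and is multiplied by the length slack $s'=\exp(\poly(1/\eps))$ of $\textsc{WeakCutmatch}$ each round. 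This returns $(F_\tau,C_\tau)$ where $F_\tau$ routes in full every residual pair that is $h_\tau$-close in $(G-\hat C_{\tau-1})-C_\tau$, has congestion $\kappa'/\phi$ and length $h_\tau s'$, and $|C_\tau|\le\phi|D_\tau|$; I add $F_\tau$ to $F$, set $\hat C_\tau=\hat C_{\tau-1}+C_\tau$, and obtain $D_{\tau+1}$ by removing the fully-routed pairs from $D_\tau$.

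By the flow characterization of length-constrained expanders (\Cref{thm:flow character}) applied to the expansion of $\operatorname{load}(D_\tau)$ certified inside each $\textsc{WeakCutmatch}$ call, the residual $D_{\tau+1}$ is supported only on pairs whose distance in $G-\hat C_\tau$ has grown past $h_\tau$; since $D'$ is $h$-length in $G$ and distances are polynomially bounded, after $O(\poly(1/\eps))$ rounds — chosen so that $h_\tau = h\,(s')^{O(\poly(1/\eps))}\le hs$ for the target slack $s=\exp(\poly(1/\eps))$ — one either exhausts the residual demand or is left with a residual whose optimal length-$h$ congestion-$1$ value is certified to be tiny, which is then cleared by one final expensive-but-cheaply-rescalable routing step. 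Summing the per-round guarantees, $F$ has length $\le hs$, support size $\supportsize$, and congestion $O(\poly(1/\eps))\cdot\kappa'/\phi$, which I scale down to $\kappa=N^{\poly\eps}$; the work is $O(\poly(1/\eps))$ $\textsc{WeakCutmatch}$ calls, giving $(|E|+|\supp(D)|)\poly(h)N^{\poly\eps}$ work and $\poly(h)N^{\poly\eps}$ depth.

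For the value bound, the final accumulated moving cut $\hat C$ plays the role of an approximate LP dual: every pair on which $D$ was not fully routed is $\Omega(h_{\mathrm{final}})$-separated by $\hat C$, so any congestion-$1$ length-$\le h$ flow of a subdemand supported on those pairs must push each flow-path through total $\hat C$-weight at least $\tfrac12$, whence $\val(F^*)$ restricted to the un-routed pairs is $O(|\hat C|)$; on the fully-routed pairs $F$ carries all of $D$, hence at least as much as $F^*$ (since $D_{F^*}\le D$). Thus $\val(F)\ge\val(F^*)-O(|\hat C|)$, and the final clearing step is calibrated to cover exactly this $O(|\hat C|)$ gap. \textbf{The main obstacle} is precisely this last point: $\textsc{WeakCutmatch}$ only promises $|C_\tau|\le\phi|D_\tau|$ — a $\phi$-fraction of the \emph{entire} residual demand, not of the part it fails to route — so a priori $|\hat C|$ need not be small relative to $\val(F^*)$ (e.g.\ on two expanders joined by one low-capacity edge, $\textsc{WeakCutmatch}$ routes nothing yet $\val(F^*)>0$). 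Resolving this requires either a telescoping argument that charges each round's cut to genuinely newly-separated demand, or an outer $O(\log N)$ loop guessing $\val(F^*)$ to within a factor of two and rescaling so that $\phi|D|$ is comparable to $\val(F^*)$, exploiting the $\tfrac1{n^2}$-granularity of $D$ to make the accounting exact — all while keeping the length slack at $\exp(\poly(1/\eps))$ and the congestion at $N^{\poly\eps}$, which is the technical heart of the argument.
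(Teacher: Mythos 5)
You have correctly identified the skeleton of the argument (iterated calls to $\textsc{WeakCutmatch}$, the accumulated moving cut as a dual certificate via the separation bound of \Cref{lem:sepcutlowerbound}, the $n^2$ capacity scaling to make a final small ``clean-up'' flow free, and the dichotomy ``fully routed or $2h$-separated''), and you have honestly flagged the crux: $\textsc{WeakCutmatch}$ only guarantees $|C|\le\phi|D|$ against the \emph{whole} residual demand, so without further ideas the accumulated cut need not be small relative to $\val(F^*)$. But flagging the gap is not closing it, and neither of your two suggested fixes is carried out; the proposal as written does not prove Property~2. The paper's resolution is a specific mechanism you do not have: an inner loop over $p=1,\dots,\lceil\log\gamma|D|\rceil$ that feeds $\textsc{WeakCutmatch}$ the \emph{capped} demand $D'_{\mathrm{cap}}(a,b)=\min(D'(a,b),2^p)\cdot\mathbb{I}[(a,b)\in S_{\mathrm{close}}]$, where $S_{\mathrm{close}}$ retains only pairs fully routed in every previous inner iteration. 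This maintains the invariant $|D'_{\mathrm{cap}}|\le2\val(F_i^*)$ by a win--win: in each inner iteration either $\textsc{WeakCutmatch}$ routes at least half of $|D'_{\mathrm{cap}}|$ (and we are done with this phase), or it separates at least half, so the pairs dropped from $S_{\mathrm{close}}$ compensate for the doubling of the cap. With sparsity set to $\approx 1/(\gamma\log)$, the total cut accumulated over the inner loop is then at most $\val(F_i^*)/(8\gamma)$, which by \Cref{lem:sepcutlowerbound} separates at most half of $F_i^*$; an outer loop of $\lceil\log|D|\rceil$ phases halves the optimality gap each time until it is below $2n^2$, at which point the final flow absorbs it. None of this charging structure appears in your proposal.

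Two further points. First, your geometrically increasing length bounds $h_\tau=h(s')^\tau$ are both unnecessary and unjustified: the paper calls $\textsc{WeakCutmatch}$ with the \emph{same} bound $2h$ on $G'-C$ every time (distances in $G'-C$ only grow, so ``fully routed or $2h$-separated'' is preserved, cf.\ \Cref{lemma:separated-or-cut}), and your claim that $O(\poly(1/\eps))$ rounds suffice ``since distances are polynomially bounded'' has no supporting argument --- termination comes from the halving of the value gap, requiring $\tilde O(1)$ iterations, not from length growth. Second, your appeal to \Cref{thm:flow character} to control the residual is not needed and does not substitute for the missing accounting: the expansion certified inside $\textsc{WeakCutmatch}$ says nothing about how much of $F^*$ survives the cuts. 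The outline is in the right family, but the technical heart of the lemma --- exactly the part you defer --- is absent.
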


Our high-level strategy is as follows. We try to repeatedly apply $\textsc{WeakCutmatch}$ to send flow as much as possible. We will set its sparsity parameter to be small enough so that the cuts from $\textsc{WeakCutmatch}$ have small total size. Since $\textsc{WeakCutmatch}$ guarantees that the demands pairs that are not cut, i.e., those still close after applying the cut, must have been fully routed, this mean that we have send a lot of flow. The complete algorithm description of \Cref{lem:polyh-maxlen-nonconcflow} to carry out this strategy is shown in \Cref{alg:polyh-maxlen-nonconcflow}. 

\begin{algorithm}[h]
\caption{Length-Bounded NC Multicommodity Flow: $\textsc{MaxLenNonConcFlow}(G,D,h,\protect\eps)$}
\label{alg:polyh-maxlen-nonconcflow}

\textbf{Input:} Graph $G$ with integral edge lengths $\ell \geq 1$ and capacities $u \geq 1$, $\frac{1}{n^{2}}$-fractional demand $D$, maximum length bound $h \geq 1$, tradeoff parameter $\epsilon$.\\
 \textbf{Output:} A multicommodity flow $F$ of length $hs$ and congestion $\kappa$ routing a $\frac{1}{n^{2}}$-fractional subdemand of $D$ for length slack $s=\exp(\poly1/\eps)$ and congestion slack $\kappa=N^{\poly(\eps)}$. For every multicommodity flow $F^{*}$ partially routing $D$ of maximum length $h$ and congestion $1$, it is guaranteed that $\val(F) \geq \val(F^{*})$. 
\begin{enumerate}
\item Let $\gamma := n^{2}$ and $\kappa' := 16 \gamma \cdot \lceil\log\gamma|D|\rceil$. 
\item Let $F_{\mathrm{res}}\leftarrow0$, $D'\leftarrow\gamma D$, and $G'$ be $G$ with capacities $\gamma u$. 
\item For $i\in\{1,2,\dots,\lceil\log|D|\rceil\}$: \label{line:outer loop} 
\begin{enumerate}
\item Let $C\leftarrow0$, $F\leftarrow0$ and $S_{\mathrm{close}}\leftarrow\mathrm{supp}(D')$. 
\item For $p\in\{1,2,\dots,\lceil\log\gamma|D|\rceil\}$:\label{line:inner loop} 
\begin{enumerate}
\item Let $D'_{\mathrm{cap}}(a, b) := \min(D'(a, b), 2^p) \cdot \mathbb{I}[(a, b) \in S_{\mathrm{close}}]$. \label{line:cap-line} 
\item Let $(C',F')=\textsc{WeakCutmatch}(G'-C,D'_{\mathrm{cap}}, 2h, 1/\kappa',\eps)$. \label{line:cutmatch-line} 
\item Set $C\leftarrow C+C'$ and $F \leftarrow F + F'$. 
\item Set $D'(a,b) \leftarrow D'(a,b) - \mathrm{val}(F'_{(a,b)})$ for each $(a,b)\in\supp(D')$. 
\item Set $S_{\mathrm{close}} \leftarrow S_{\mathrm{close}} \cap \{(a, b) : \val(F'_{(a, b)}) > 0\}$ 
\end{enumerate}
\item $F_{\mathrm{res}}\leftarrow F_{\mathrm{res}}+F$. 
\end{enumerate}
\item Let $F'_{\mathrm{final}}=\textsc{WeakCutmatch}(G'-C,D',2h,1/N^2,\eps)$. 
\item Let $F_{\mathrm{final}}$ be a subflow of $F'_{\mathrm{final}}$ with integral $D_{F_{\mathrm{final}}}$ of value $\val(F_{\mathrm{final}}) = \min(\val(F'_{\mathrm{final}}), 2n^{2})$ routing a subdemand of $D'$. 
\item Let $F_{\mathrm{res}} \leftarrow F_{\mathrm{res}} + F_{\mathrm{final}}$.\label{line:final-flow} 
\item Return $\frac{1}{\gamma} F_{\mathrm{res}}$. 
\end{enumerate}
\end{algorithm}

\paragraph{Algorithm Explanation.}

We motivate the algorithm in more details here. First, we scale up the demand and capacity of the input graph by $\gamma=n^{2}$ from the beginning to allow us to solely work on integral demands. Our returned flow $F_{\mathrm{res}}$ directly satisfies Property 1 as the sum of logarithmically many flows returned from $\textsc{WeakCutmatch}$ plus a tiny, short flow. However, showing the second property, i.e. that $\val(F_{\mathrm{res}})\ge\val(F^{*})$, is non-trivial. 

The algorithm has two main loops. For the outer for-loop, we will make progress as follows. After the iteration $i$ of the outer loop, let $F_{i}^{*}$ denote the max-value flow with small maximum length and congestion partially routing remaining $D'$. We will route a flow $F$ (and add it $F_{\mathrm{res}}$) to with value $\val(F)\ge\val(F_{i}^{*})/2$ as long as $\val(F_{i}^{*})\ge2n^{2}$. This implies that, either either $\val(F^{*})-\val(F_{\mathrm{res}})$ halves, or we have $\val(F^{*})-\val(F_{\mathrm{res}})\leq2n^{2}$ for every iteration. If the latter happens, then we will add $F_{\mathrm{final}}$ to $F_{\mathrm{res}}$ at the end of value $\min(\val(F^{*})-\val(F_{\mathrm{res}}),2n^{2})$. Note that $F_{\mathrm{final}}$ can be computed trivially as we do not need to worry about the congestion as we have scaled up the capacity of the graph by $n^{2}$ from the beginning. 

The inner for-loop tries to construct $F$ where $\val(F)\ge\val(F_{i}^{*})/2$ assuming $\val(F_{i}^{*})\ge2n^{2}$. Our strategy is to maintain a subdemand $D'_{cap}$ of $D'$ such that $|D'_{cap}|\le2\val(F_{i}^{*})$. Our definition of $D'_{cap}$ satisfies this when $p=1$ since $\val(F_{i}^{*})\ge2n^{2}$. Also, $|D'_{cap}|$ can grow by only a factor of $2$ per iteration of the inner loop. 

We will use $\textsc{WeakCutmatch}$ to cut or route $D'_{cap}$. The interesting case is when $\val(F_{i}^{*})\le|D'_{cap}|\le2\val(F_{i}^{*})$. If $\textsc{WeakCutmatch}$ routes more than half of $D'_{cap}$, then we have routed at least $\val(F_{i}^{*})/2$ and we have achieved the goal. Otherwise, $\textsc{WeakCutmatch}$ cuts/separates more than half of $D'_{cap}$ which reduces $|D'_{cap}|$ in the next iteration and maintain the invariant $|D'_{cap}|\le2\val(F_{i}^{*})$. 

The bound of $|D'_{cap}|$ is useful because it means that each cut from $\textsc{WeakCutmatch}$ has size bounded by $\phi|D'_{cap}|=O(\phi\val(F_{i}^{*}))$ where $\phi$ is a parameter we can choose. So the total size size after $\tilde{O}(1)$ iterations is small compared to $\val(F_{i}^{*})$. So all cuts found did not ``separates'' too many demand pairs routed by $F_{i}^{*}$. Since $\textsc{WeakCutmatch}$ guarantees that the demand pairs  that are not separated must be are completely routed. This implies that the total flow we have routed during the inner for loop is at least $\val(F_{i}^{*})/2$ again.

For the proof of the theorem, we will use the following lemmas: 
\begin{lem}
\label{lem:sepcutlowerbound} Let $G$ be a graph with edge capacities $u$ and lengths $\ell$. Let $F$ be a $h$-length multicommodity flow and $C$ a $2h$-length moving cut. Let $F'$ be the subflow of $F$ that is $2h$-separated in $G-C$, i.e., for all paths $P$, $F'(P):=F(P)\mathbb{I}[l_{G-C}(P)>2h]$. Then, 
\[
\mathrm{val}(F')\leq2|C|\cdot\congest_{F}.
\]
\end{lem}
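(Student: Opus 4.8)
\textbf{Proof plan for \Cref{lem:sepcutlowerbound}.}
The plan is a short averaging argument. First I would observe that every flow-path contributing to $F'$ must use a substantial amount of the cut $C$. Indeed, for any path $P$ with $F'(P)>0$ we have $\ell_{G-C}(P)>2h$ by definition, while $\ell_G(P)\le\leng_F\le h$ since $F$ is an $h$-length flow. Since $\ell_{G-C}(P)=\ell_G(P)+2h\sum_{e\in P}C(e)$ by the definition of the moving cut, this gives $2h\sum_{e\in P}C(e)>2h-\ell_G(P)\ge h$, hence $\sum_{e\in P}C(e)>\tfrac12$ for every flow-path $P$ of $F'$.

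Next I would bound $\val(F')$ by charging each path to the cut edges it traverses. Using $\sum_{e\in P}C(e)>\tfrac12$ for each $P\in\supp(F')$ and then swapping the order of summation,
\[
\val(F')=\sum_{P}F'(P)\le 2\sum_{P}F'(P)\sum_{e\in P}C(e)=2\sum_{e}C(e)\sum_{P:e\in P}F'(P)=2\sum_e C(e)\,F'(e),
\]
where $F'(e)=\sum_{P:e\in P}F'(P)$. Since $F'$ is a subflow of $F$, we have $F'(e)\le F(e)=u(e)\,\congest_F(e)\le u(e)\,\congest_F$ for every edge $e$. Substituting,
\[
\val(F')\le 2\sum_e C(e)\,u(e)\,\congest_F=2\,\congest_F\sum_e u(e)\,C(e)=2|C|\cdot\congest_F,
\]
using the definition $|C|=\sum_e u(e)C(e)$, which is exactly the claimed bound.

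I do not expect any real obstacle here; the only point requiring care is the first step, namely correctly using that $C$ is a $2h$-length moving cut (so edge lengths increase by $2h\cdot C(e)$, not $h\cdot C(e)$) together with the fact that $F$ is $h$-length, to conclude that separated paths spend more than half a unit of cut. Everything after that is a one-line double-counting computation, so I would keep the write-up at essentially the length above.
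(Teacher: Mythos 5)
Your proposal is correct and is essentially the same double-counting argument as the paper's: the paper compares the total length of $F$ in $G$ versus $G-C$ (lower-bounding the increase by $h\cdot\val(F')$ and upper-bounding it by $2h|C|\congest_F$), which is exactly your observation that each separated path accumulates cut mass $\sum_{e\in P}C(e)>\tfrac12$, followed by the same swap of summation and the same congestion bound $F(e)\le u(e)\congest_F$. No gaps.
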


\begin{proof}
Let $L$ be the total length of the flow $F$ in $G$. The total length of $F$ in $G-C$ is at least $L+h\cdot\mathrm{val}(F')$, as the length of the subflow $F'$ has increased from at most $h$ to at least $2h$. On the other hand, the total length of $F$ in $G-C$ is at most $L+\sum_{e}F(e)\cdot2h\cdot C(e)\le L+\congest_{F}\cdot2h\sum_{e}u(e)C(e)=L+\congest_{F}\cdot2h|C|$. This implies that gives 
\[
L+h\cdot\mathrm{val}(F')\leq\sum_{P}F(P)\ell_{G-C}(P)\leq L+2h\cdot|C|\cdot\mathrm{cong}_{F}.
\]
Cancelling $L$ and dividing by $h$ gives $\mathrm{val}(F')\leq2|C|\cdot\mathrm{cong}_{F}$, as desired. 
\end{proof}
\begin{lem}
\label{lem:integral-good-solution-exists} Let $G'$ be a $n$-vertex graph with edge capacities $u\geq n^{2}$ and lengths $\ell$, $h$ a maximum length bound, $\gamma\geq1$ a congestion bound and $D$ an integral demand. Then, there exists a flow $F^{*}$ of maximum length $h$ and congestion $2\gamma$ that routes an \textbf{integral} subdemand of $D$, such that any flow $F^{**}$ of maximum length $h$ and congestion $\gamma$ that routes a subdemand of $D$ satisfies $\mathrm{val}(F^{*})\geq\mathrm{val}(F^{**})$. 
\end{lem}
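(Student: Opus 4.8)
The plan is to take the value-maximizing flow among all admissible competitors and round it \emph{up} to an integral subdemand, absorbing the rounding error into an additive $+1$ in congestion, which is affordable precisely because $u \ge n^2$. First I would note that the maximum in the statement is attained: every flow of maximum length at most $h$ is supported on simple paths of length at most $h$, of which there are only finitely many in a finite graph, so the set of flows with maximum length at most $h$, congestion at most $\gamma$, routing a subdemand of $D$, is a bounded polytope in a finite-dimensional space, on which the linear functional $\val(\cdot)$ attains its maximum at some flow $F^{**}_{\max}$. Hence it suffices to construct a single $F^*$ (integral subdemand, maximum length $\le h$, congestion $\le 2\gamma$) with $\val(F^*) \ge \val(F^{**}_{\max})$, since then $\val(F^*)\ge\val(F^{**}_{\max})\ge\val(F^{**})$ for every admissible $F^{**}$.

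Next, write $d_{a,b} := \val\big((F^{**}_{\max})_{(a,b)}\big)$ for the amount of flow $F^{**}_{\max}$ routes between the ordered pair $(a,b)$; note $d_{a,b}\le D(a,b)$. I would define $F^*$ by starting from $F^{**}_{\max}$ and, for each ordered pair $(a,b)$ with $d_{a,b}$ not an integer, adding $\lceil d_{a,b}\rceil - d_{a,b} < 1$ extra units of flow along a single $(a,b)$-flow-path of $F^{**}_{\max}$ (such a path exists because $d_{a,b}>0$, and has length at most $h$ because $F^{**}_{\max}$ has maximum length at most $h$). Since $D$ is integral and $d_{a,b} < D(a,b)$ whenever $d_{a,b}$ is fractional, we get $\lceil d_{a,b}\rceil \le D(a,b)$ in all cases; therefore $D_{F^*}(a,b)=\lceil d_{a,b}\rceil$ is integral and $D_{F^*}\le D$, i.e.\ $F^*$ routes an integral subdemand of $D$. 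Its maximum length is still at most $h$, and $\val(F^*)=\sum_{(a,b)}\lceil d_{a,b}\rceil \ge \sum_{(a,b)} d_{a,b} = \val(F^{**}_{\max})$.

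The only quantitative point to verify — and the step where the hypothesis $u \ge n^2$ is essential, so it is the main thing to get right — is the congestion bound. On any edge $e$, the total flow of $F^*$ through $e$ equals the flow of $F^{**}_{\max}$ through $e$ plus $\sum_{(a,b)} (\lceil d_{a,b}\rceil - d_{a,b})\,\mathbb{I}[e\text{ lies on the chosen }(a,b)\text{-path}]$, and this added quantity is strictly less than the number of ordered demand pairs, hence at most $|\supp(D)| \le n^2 \le u(e)$. Dividing by $u(e)$, the extra congestion on $e$ is at most $1$, so $\congest_{F^*} \le \congest_{F^{**}_{\max}} + 1 \le \gamma + 1 \le 2\gamma$, using $\gamma\ge 1$. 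This gives all four required properties of $F^*$ and completes the argument; no step beyond this counting bound presents a real difficulty.
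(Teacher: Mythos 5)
Your proposal is correct and follows essentially the same route as the paper: round the per-pair flow values of an optimal congestion-$\gamma$ flow up to the next integer, observe that the added flow on any edge is less than $|\supp(D)|\le n^2\le u(e)$, and conclude the congestion increases by at most $1\le\gamma$. The extra details you supply (attainment of the maximum, routing the rounding surplus along a single existing flow-path per pair) are fine and only make explicit what the paper leaves implicit.
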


\begin{proof}
Let $F^{**}$ be the maximum-value flow of length $h$ and congestion $\gamma$ routing a subdemand of $D$. Let $F^{*}$ be the flow created by rounding up the flow value between every vertex pair up to the next integer. This increases the total flow by at most $n^{2}$, and thus the total flow over any edge by at most $|\mathrm{supp}(D)|\leq n^{2}$. Thus, as the capacity of every edge is at least $n^{2}$, the congestion goes up by at most $1\leq\gamma$, while the flow value does not decrease. The routed demand still remains a subdemand of $D$, as $D$ is integral. 
\end{proof}

\begin{lem}
\label{lemma:separated-or-cut} At the end of each iteration of the for-loop on line \ref{line:outer loop}, for every $(a,b)$, either $D'(a,b)=0$ (the flow fully satisfies the $(a, b)$-demand) or $\mathrm{dist}_{G'-C}(a,b) > 2h$ (the pair is $2h$-separated). 
\end{lem}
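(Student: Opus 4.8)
Fix an outer iteration $i$ of the loop on line~\ref{line:outer loop} and a pair $(a,b)$, and let $C,D'$ denote their values at the end of that iteration. If $D'(a,b)=0$ we are in the first case, so assume $D'(a,b)>0$; the goal is to show $\dist_{G'-C}(a,b)>2h$. The plan is to track two invariants through the inner loop on line~\ref{line:inner loop}: (i) $D'(a,b)$ is nonincreasing over the inner loop, so $D'(a,b)>0$ at the start of every inner round of this outer iteration; and (ii) $S_{\mathrm{close}}$ only shrinks, being modified solely by the intersection $S_{\mathrm{close}}\leftarrow S_{\mathrm{close}}\cap\{(a,b):\val(F'_{(a,b)})>0\}$. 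Intuitively $S_{\mathrm{close}}$ records the pairs for which every $\textsc{WeakCutmatch}$ call so far routed a positive amount, hence (by the all-or-nothing guarantee) the full current demand. I will case on whether $(a,b)\in S_{\mathrm{close}}$ at the end of the inner loop.

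\emph{Case $(a,b)\in S_{\mathrm{close}}$ at the end: derive a contradiction.} Let $P:=\lceil\log\gamma|D|\rceil$ be the index of the last inner round. By invariant~(ii), $(a,b)\in S_{\mathrm{close}}$ throughout round $P$, so on line~\ref{line:cap-line} we have $D'_{\mathrm{cap}}(a,b)=\min(D'(a,b),2^{P})$; since $D'\le\gamma D$ pointwise and $|D|\ge D(a,b)$, we get $2^{P}\ge\gamma|D|\ge D'(a,b)$, so $D'_{\mathrm{cap}}(a,b)$ equals $D'(a,b)$ as evaluated at the start of round $P$. Because $(a,b)$ survives the final intersection of round $P$ it had $\val(F'_{(a,b)})>0$ in round $P$, and then Property~1 of \Cref{lem:highcomcutmatch} forces $\val(F'_{(a,b)})=D'_{\mathrm{cap}}(a,b)$, so the update $D'(a,b)\leftarrow D'(a,b)-\val(F'_{(a,b)})$ in round $P$ sets $D'(a,b)=0$. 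As round $P$ is the last one, $D'(a,b)=0$ at the end of the outer iteration, contradicting $D'(a,b)>0$.

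\emph{Case $(a,b)\notin S_{\mathrm{close}}$ at the end.} By invariant~(i), $D'(a,b)>0$ at the start of the outer iteration, so $(a,b)\in\supp(D')=S_{\mathrm{close}}$ there, and hence $(a,b)$ is removed from $S_{\mathrm{close}}$ in some inner round $p'$; removal occurs only via the intersection, so $\val(F'_{(a,b)})=0$ in round $p'$. At the start of round $p'$ we still have $(a,b)\in S_{\mathrm{close}}$ and $D'(a,b)>0$, so $D'_{\mathrm{cap}}(a,b)=\min(D'(a,b),2^{p'})>0$. Let $C_{p'}$ be the accumulated cut just after round $p'$; the call on line~\ref{line:cutmatch-line} in round $p'$ runs on $G'-C_{p'-1}$ with length bound $2h$, and applying its returned cut to $G'-C_{p'-1}$ yields exactly $G'-C_{p'}$ (the length increments add). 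Since $\val(F'_{(a,b)})=0\ne D'_{\mathrm{cap}}(a,b)$, the contrapositive of \Cref{lem:highcomcutmatch:item:close} gives $\dist_{G'-C_{p'}}(a,b)>2h$. Every later inner round adds a nonnegative moving cut to $C$, so the final $C$ dominates $C_{p'}$ edge-by-edge; since the length function $\ell_{G'}+2h\cdot C$ of $G'-C$ is monotone in $C$, we conclude $\dist_{G'-C}(a,b)\ge\dist_{G'-C_{p'}}(a,b)>2h$, as required.

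I do not expect a genuine obstacle: the argument is invariant bookkeeping whose one substantive input is the ``pairs not $h$-separated by the returned cut are fully routed'' guarantee of $\textsc{WeakCutmatch}$ (\Cref{lem:highcomcutmatch:item:close}), together with its all-or-nothing routing (Property~1 of \Cref{lem:highcomcutmatch}). The mild subtlety worth flagging is that the accumulated $C$ need not be a legitimate $2h$-length moving cut (per-edge values may exceed $1$), but this is immaterial since we only use monotonicity of $\dist_{G'-C}$ under adding nonnegative cuts; it is also convenient, though not strictly needed, to observe that $D'$ stays integral throughout (it starts at the integral $\gamma D$ and only ever undergoes all-or-nothing decrements).
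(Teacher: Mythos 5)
Your proof is correct and takes essentially the same route as the paper's: both arguments hinge on the all‑or‑nothing routing guarantee of $\textsc{WeakCutmatch}$ together with its "pairs not $2h$-separated are fully routed" property, plus monotonicity of $\dist_{G'-C}$ as cuts accumulate; the paper phrases it as a four-way case split on the state just before the final inner round, while you case on $S_{\mathrm{close}}$ membership at the end, which is only a cosmetic reorganization. Your explicit remarks on the accumulated $C$ not being a bona fide $2h$-length moving cut and on integrality are fine but not needed.
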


\begin{proof}
Consider the iteration of the for-loop on line \ref{line:inner loop} where $p = \lceil\log\gamma|D|\rceil$. Then, $2^{p}\geq\gamma|D|$, and in particular $D'(a, b) \leq 2^{p}$ holds for every vertex pair $(a, b)$. There are four possible situations before the updates to $C, F, D'$ and $S_{\mathrm{close}}$ of the iteration:
\begin{itemize}
    \item $D'(a, b) = 0$,
    \item $(a, b) \in S_{\mathrm{close}}$ and $\val(F'_{(a, b)}) = D'_{\mathrm{cap}}(a, b)$,
    \item $(a, b) \in S_{\mathrm{close}}$ and $\val(F'_{(a, b)}) = 0$, and
    \item $(a, b) \not \in S_{\mathrm{close}}$.
\end{itemize}
In the first case we are trivially done. In the second case, we are done as $D'(a, b) = D'_{\mathrm{cap}}(a, b)$. In the third case, since the cutmatch algorithm guarantees that for $(a, b)$ with $\dist_{G' - C}(a, b) \leq 2h$ we have $\val(F'_{(a, b)}) = D'_{\mathrm{cap}}(a, b)$, the pair must be $2h$-separated. Similarly, for $(a, b)$ to have left $S_{\mathrm{close}}$ in a previous iteration, their distance must have been greater than $2h$. Thus, as distances cannot decrease, their distance remains greater than $2h$.
\end{proof}

We are ready to prove \Cref{lem:polyh-maxlen-nonconcflow}. 
\begin{proof}
The work and depth bounds are clear, as the algorithm's work and depth are dominated by $\tilde{O}(1)$ calls to \Cref{alg:highcom-cutmatch}.

\paragraph{Property 1. }

The cutmatch algorithm only produces integral flows, thus the returned flow is $\frac{1}{n^{2}}$-fractional. The flow $F_{\mathrm{res}}$ before adding $F_{\mathrm{final}}$ is the sum of $\tilde{O}(1)$ flows, each produced by \Cref{alg:highcom-cutmatch} with maximum length bound $2h$ and sparsity parameter $\frac{1}{\kappa'}=O(1 / \gamma\log N)$. \Cref{alg:highcom-cutmatch} thus guarantees that the flow has length $2h \cdot s'' = h \cdot \exp(\poly1/\eps)$ and congestion $\kappa'\kappa''=\gamma N^{\poly(\eps)}$, for $s''=\exp(\poly1/\eps)$ and $\kappa''=N^{\poly(\eps)}$. The flow $F_{\mathrm{final}}$ has length $2h \cdot s''$ and congestion at most its value of $2 \gamma$, thus adding it to $F_{\mathrm{res}}$ does not change the asymptotic length and congestion, and the returned flow has congestion $N^{\poly(\eps)}$, as desired. Finally, the support size is $\supportsize$ as $F$ is the sum of $\tilde{O}(1)$ flows of support size $\supportsize$.

\paragraph{Property 2. }

The algorithm starts by scaling up all capacities and demands by $\gamma=n^{2}$. Then, $u\geq n^{2}$, thus by \Cref{lem:integral-good-solution-exists}, the maximum-value length-$h$ congestion-$2\gamma$ flow $F^{*}$ routing an \emph{integral} subdemand of $\gamma D$ has value at least the value of any maximum-length-$h$ congestion-$\gamma$ flow routing a possibly fractional subdemand of $\gamma D$. It thus suffices to show that $\val(F_{\mathrm{res}})\ge\val(F^{*})$ at the end of the algorithm.

Fix a iteration $i$ of the outer for-loop on line \ref{line:outer loop}. Let $D'_{i}$ be the remaining demand $D'$ at the start of iteration $i$. Let $F_{i}^{*}$ be the maximum-value length-$h$ and congestion-$2\gamma$ flow routing an integral subdemand of $D'_{i}$. Consider the demand $D_{i}^{\Delta}=\min\{D'_{i},D_{F^{*}}\}$ where $D_{F^{*}}$ is routed by $F^{*}$. Since $D_{i}^{\Delta}$ is a subdemand of $D_{F^{*}}$, it is routable by a length-$h$ congestion-$2\gamma$ flow. So $\val(F_{i}^{*})\ge|D_{i}^{\Delta}|\ge\val(F^{*})-\val(F_{res})$ for $F_{res}$ at the beginning of iteration $i$. 

If $\val(F_{i}^{*})\geq2n^{2}$, we will show that the flow $F$ constructed at the end of iteration $i$ satisfies $\val(F)\ge\val(F_{i}^{*})/2$. Therefore, at the end of iteration $i$ when we set $F_{\mathrm{res}}\leftarrow F_{\mathrm{res}}+F$, either $\val(F^{*})-\val(F_{\mathrm{res}})$ halves, or we have $\val(F^{*})-\val(F_{\mathrm{res}})\leq2n^{2}$. As the difference is initially at most $\gamma|D|=n^{2}|D|$ and at the end of the algorithm we add a flow $F_{\mathrm{final}}$ of value $2n^{2}$ to the returned flow, we have that after $\lceil\log|D|\rceil$ iterations and after line \ref{line:final-flow}, we have $\val(F_{\mathrm{res}})\ge\val(F^{*})$.

Now, assume that $\val(F_{i}^{*})\geq2n^{2}$. We show that the flow $F$ constructed satisfies $\val(F)\ge\val(F_{i}^{*})/2$. To show this, by \Cref{lemma:separated-or-cut}, we have that the produced cut $C$ $2h$-separates all of the demand not routed by $F$. Thus, it is sufficient to show that at most half of $F_{i}^{*}$ is $2h$-separated by $C$.

By \Cref{lem:sepcutlowerbound}, any length-$2h$ cut $C$ $2h$-separates at most $4\gamma|C|$ of $F_{i}^{*}$, as $F_{i}^{*}$ has congestion at most $2\gamma$. Thus, as long as $|C|\leq\mathrm{val}(F_{i}^{*})/8\gamma$, $C$ separates at most half of $F_{i}^{*}$. The size of the length-$2h$ cut $C'$ produced on line \ref{line:cutmatch-line} of the algorithm is at most 
\[
\frac{1}{\kappa'}|D'_{\mathrm{cap}}|=\frac{1}{8\gamma}\frac{1}{\lceil\log\gamma|D|\rceil}\cdot\left(\frac{1}{2}|D'_{\mathrm{cap}}|\right),
\]
by the guarantee of \textsc{WeakCutmatch} from \Cref{lem:highcomcutmatch}. Thus, assuming that $|D'_{\mathrm{cap}}| \leq 2\mathrm{val}(F_{i}^{*})$ at every point during iteration $i$, we have that the size of the final length-$2h$ cut $C$ produced is at most $\frac{\mathrm{val}(F_{i}^{*})}{8\gamma}$. So, at most half of $F_{i}^{*}$ can be $2h$-separated by $C$, as desired. 

Next, we prove the following result, which gives either $|D'_{\mathrm{cap}}|\leq2\mathrm{val}(F_{i}^{*})$ at every point during iteration $i$ or directly that $\val(F)\geq\frac{1}{2}\val(F_{i}^{*})$, by induction: for every $p\in\{0,1,\dots,\lceil\log\gamma|D|\rceil\}$, either $\val(F)\geq\frac{1}{2}\val(F_{i}^{*})$, or $|D'_{\mathrm{cap},p+1}|\leq2\val(F_{i}^{*})$, where $D'_{\mathrm{cap},p}$ is the demand defined at line \ref{line:cap-line} in the iteration of the loop on line \ref{line:inner loop} for a particular $p$.

The base case follows from $|D_{\mathrm{cap},p}^{\prime}|\leq2^{p}|\mathrm{supp}(D')|\leq2n^{2}\le2\val(F_{i}^{*})$ when $p=1$. Assume the claim holds for $p'<p$. If for the previous iteration $\val(F)\geq\frac{1}{2}\val(F_{i}^{*})$, we are done, as the value of $F$ cannot decrease and the value of $F_{i}^{*}$ is invariant. Thus, we may assume that the other condition holds.

Observe that $|D'_{\mathrm{cap},p+1}|\leq2|D'_{\mathrm{cap},p}|$. Thus, if $|D'_{\mathrm{cap},p}|\leq\val(F_{i}^{*})$, we are done. Assume the contrary, and consider the pair $(C',F')$ returned by \textsc{WeakCutmatch}. By \Cref{lem:highcomcutmatch}, either the moving cut $C'$ drops pairs from $S_{\mathrm{close}}$ contributing at least a $\frac{1}{2}$-fraction of $|D'_{\mathrm{cap}, p}|$ or the flow $F'$ has value $\val(F')\ge\frac{1}{2}|D'_{\mathrm{cap},p}|$. In the latter case, we are done, as now $\val(F)\geq\val(F')\geq\frac{1}{2}|D'_{\mathrm{cap},p}|\geq\frac{1}{2}\val(F_{i}^{*})$. In the former case, we are done, as the demand pairs dropped from $S_{\mathrm{close}}$ will not contribute to $|D'_{\mathrm{cap}, p}|$ the next iteration and all iterations after, and the demand value for the other pairs is at most doubled, and thus $|D'_{\mathrm{cap},p+1}| \leq 2(\frac{1}{2}|D'_{\mathrm{cap},p}|)\leq2\mathrm{val}(F_{i}^{*})$. 
\end{proof}

\subsection{Low-Step Total-Length-Constrained Non-Concurrent Flow}

\label{subsec:avg max len ncflow}

\begin{comment}

\begin{lem} \label{lem:avglen-highcom-nonconcflow} Let $G$ be a graph with edge lengths $\ell$ and capacities $u\geq1$, $h\geq1$ a maximum length bound, $b\geq1$ an average length bound, $D$ an integral demand and $\epsilon\in(\frac{1}{\log N},1)$ a tradeoff parameter. Then, $\textsc{AvgLenNCFlow}(G,D,h,b,\eps)$ returns a multicommodity flow $F$ partially routing $D$ such that 
\begin{enumerate}
\item $F$ has length $hs$, average length $bs$ and congestion $\kappa$ for $s=\exp(\poly1/\eps)$ and $\kappa=N^{\poly(\eps)}$, and support size $\supportsize$, 
\item Let $F^{*}$ be the maximum-value multicommodity flow partially routing $D$ of length $h$, average length $b$ and congestion $1$. Then, $\mathrm{val}(F)\geq\mathrm{val}(F^{*})$. 
\end{enumerate}
The algorithm has depth $\poly(h)N^{\poly(\eps)}$ and work $(|E|+|V|+\mathrm{supp}(D))\cdot\poly(h)N^{\poly(\eps)}$. \end{lem} 
\end{comment}

In this section, we prove \Cref{lem:polyh-nonconcflow}. The algorithm needs to round edge lengths to go from a step bound and a length bound to just a length bound. The following fact shows the correctness of the approach:
\begin{fact}
\label{lem:roundinglemma} Let $G$ be a graph with positive edge lengths $\ell$, $h$ be a length bound, and $t$ be a step bound. Define the length function $\ell'(e)=\left\lceil \frac{t\cdot\ell(e)}{h}\right\rceil $. Then, for any path $p$, we have
\[
\max\left(|p|,\frac{t\cdot\ell(p)}{h}\right)\le\ell'(p)\leq|p|+\frac{t\cdot\ell(p)}{h}.
\]
In particular, 
\begin{itemize}
\item if $\ell(p)\leq h$ and $|p|\leq t$, then $\ell'(p)\leq2t$, and 
\item if $\ell'(p)\leq2t$, then $\ell(p)\leq2h$ and $|p|\leq2t$.
\end{itemize}
\end{fact}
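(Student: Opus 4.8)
The plan is to reduce everything to a per-edge estimate and then sum over the edges of the path $p$. For a single edge $e$, since $\ell(e) > 0$ and $t, h > 0$, the quantity $\frac{t\ell(e)}{h}$ is strictly positive, so the definition of the ceiling immediately gives the three inequalities $\ell'(e) \ge \frac{t\ell(e)}{h}$, $\ell'(e) \ge 1$, and $\ell'(e) \le \frac{t\ell(e)}{h} + 1$. The second of these uses that the ceiling of a \emph{positive} real is at least $1$; this is the only spot where positivity of edge lengths is actually invoked.

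Next I would sum these bounds over all $e \in p$. Additivity of $\ell$, $\ell'$, and the hop count along a path yields, from the two lower bounds, $\ell'(p) = \sum_{e \in p} \ell'(e) \ge \sum_{e \in p} \frac{t\ell(e)}{h} = \frac{t\ell(p)}{h}$ and $\ell'(p) = \sum_{e \in p}\ell'(e) \ge \sum_{e \in p} 1 = |p|$, so $\ell'(p) \ge \max\!\left(|p|, \frac{t\ell(p)}{h}\right)$; and from the upper bound, $\ell'(p) = \sum_{e \in p}\ell'(e) \le \sum_{e \in p}\left(\frac{t\ell(e)}{h} + 1\right) = \frac{t\ell(p)}{h} + |p|$. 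This is exactly the displayed two-sided inequality.

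The two ``in particular'' bullets then follow by plugging into the inequality just proved. For the first, the hypotheses $\ell(p) \le h$ and $|p| \le t$ substituted into the upper bound give $\ell'(p) \le |p| + \frac{t\ell(p)}{h} \le t + t = 2t$. For the second, the hypothesis $\ell'(p) \le 2t$ combined with the lower bound $\frac{t\ell(p)}{h} \le \ell'(p)$ rearranges to $\ell(p) \le 2h$, and combined with the lower bound $|p| \le \ell'(p)$ gives $|p| \le 2t$ directly.

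I do not expect any genuine obstacle: the statement is a routine consequence of the ceiling inequalities and additivity along paths. The one point meriting a sentence of care is the bound $\ell'(e) \ge 1$, which underlies $\ell'(p) \ge |p|$ and would fail for a zero-length edge; it is precisely here that the hypothesis of strictly positive edge lengths is used.
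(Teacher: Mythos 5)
Your proof is correct and follows the same route as the paper's (one-line) argument: bound the ceiling on each edge by $\max\{1,\tfrac{t\ell(e)}{h}\}$ from below and by $\tfrac{t\ell(e)}{h}+1$ from above, then sum over the edges of $p$. Your extra remark that strict positivity of edge lengths is what guarantees $\ell'(e)\ge 1$ (hence $\ell'(p)\ge|p|$) is a correct and worthwhile clarification of a step the paper leaves implicit.
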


\begin{proof}
We have $\ell'(p)=\sum_{e\in p}\left\lceil \frac{t\cdot\ell(e)}{h}\right\rceil $. Clearly, $\max\left(|p|,\frac{t\cdot\ell(p)}{h}\right)\le\sum_{e\in p}\left\lceil \frac{t\cdot\ell(e)}{h}\right\rceil \leq|p|+\frac{t\cdot\ell(p)}{h}.$
\end{proof}
\begin{comment}
For a path $p$ of length $\ell(p)\leq h$ and step-length $|p|\leq t$, we have 
\[
\ell'(p)=\sum_{e\in p}\left\lceil \frac{t\cdot\ell(e)}{h}\right\rceil \leq|p|+\frac{t\cdot\ell(p)}{h}\leq2t.
\]
For a path $p$ of length $\ell'(p)\leq2t$, we have 
\[
\ell'(p)=\sum_{e\in p}\left\lceil \frac{t\cdot\ell(e)}{h}\right\rceil \geq\max\left(|p|,\frac{t\cdot\ell(p)}{h}\right),
\]
thus $|p|\leq2t$ and $\ell(p)\leq2h$. 
\end{comment}

Now, we describe the algorithm of \Cref{lem:polyh-nonconcflow} in \Cref{alg:avglen-lowstep-nonconcflow}. 


\begin{algorithm}[H]
\caption{Low-Step Non-Concurrent Flow: $\textsc{LowStepNonConcFlow}(G,D,t,T,\protect\eps)$}
\label{alg:avglen-lowstep-nonconcflow}

\textbf{Input:} Graph $G$ with integral edge lengths $\ell \geq 1$ and capacities $u \geq 1$, integral demand $D$, step bound $t \geq 1$, total length bound $T \geq 1$, tradeoff parameter $\epsilon$.\\
 \textbf{Output:} A multicommodity flow $F$ partially routing $D$ of step length $ts$, total length $Ts$ and congestion $\kappa$ for $s = \exp(\poly 1 / \eps)$ and $\kappa = N^{\poly(\eps)}$. For every multicommodity flow $F^{*}$ partially routing $D$ of step length $t$, total length $T$ and congestion $1$, it is guaranteed that $\mathrm{val}(F) \geq \mathrm{val}(F^{*})$. 
\begin{enumerate}
\item Let $s'=\exp(\poly(1/\eps))$ be the length slack of \Cref{alg:polyh-maxlen-nonconcflow} on tradeoff parameter $\epsilon$. 
\item Let $T'=5s'T$. 
\item Let $F\leftarrow0$ and $D'\leftarrow D$. 
\item For $p\in\{0,1,\dots,\lceil\log nN\rceil\}$: 
\begin{enumerate}
\item Let $h=2^{p}$. 
\item Let $\ell'(e)=\left\lceil \frac{t\cdot\ell(e)}{h}\right\rceil $ and $G'$ be $G$ with edge lengths $\ell'$. 
\item Let $F^{\aug}=\textsc{MaxLenNonConcFlow}(G',D',2t,\eps)$.\label{line:compute length-bounded nc flow} 
\item Let $\lambda$ be the maximum value in $[0,1]$ such that $\totlen(F+\lambda F^{\aug})\leq T'$. \label{line:enforce avg len} 
\item Set $F\leftarrow F+\lambda F^{\aug}$. 
\item If $\lambda<1$, return $F$.\label{line: ncflow early return} 
\item For all $(a,b)\in\supp(D')$, set $D'(a,b)\leftarrow D'(a,b)-\mathrm{val}(F_{(a,b)}^{\aug})$. 
\end{enumerate}
\item Return $F$. \label{line: ncflow last return} 
\end{enumerate}
\end{algorithm}
\begin{proof}
The work and depth are dominated by the calls to $\textsc{MaxLenNonConcFlow}$, of which there are $\tilde{O}(1)$-many. Note that since flow returned by $\textsc{MaxLenNonConcFlow}$ routes a $\frac{1}{n^{2}}$-fractional subdemand, $D'$ is always $\frac{1}{n^{2}}$-fractional and is a valid input for $\textsc{MaxLenNonConcFlow}$. Now, we prove the two properties of the returned flow $F$.

\paragraph{Property 1.}

Let $s=\exp(\poly1/\eps)$ and $\kappa=N^{\poly(\eps)}$ be the length slack and congestion slack respectively of $\textsc{\textsc{MaxLenNonConcFlow}}$ from \Cref{lem:polyh-maxlen-nonconcflow}. By \Cref{lem:roundinglemma}, the step bound of $F$ is at most $2ts'$ because $F$ has maximum $\ell'$-length at most $2ts'$. The total length bound of $F$ is at most $T'=3s'b$ as explicitly enforced by line \ref{line:enforce avg len}. The flow $F$ has congestion $\kappa\lceil\log nN\rceil$ as there are at most $1+\lceil\log nN\rceil$ iterations. The support size bound $\supp(F)=\supportsizet$ follows directly from \Cref{lem:polyh-maxlen-nonconcflow}.

\paragraph{Property 2.}

Let $F^{*}$ be the maximum-value multicommodity flow partially routing $D$ using step $t$, total length $T$ and congestion $1$. Our goal is to show that $\val(F)\ge\val(F^{*})$.

For $p\in\{0,1,\dots,\lceil\log nN\rceil\}$, let $F_{p}^{*}$ be the sub-flow of $F^{*}$ with path lengths in $\ell$ at most $2^{p}$. Note that $F^{*}=F_{\lceil\log nN\rceil}^{*}$ because simple paths have length at most $nN$ as $\ell(e)\leq N$. Let $D_{p}^{*}$ be the demand routed by $F_{p}^{*}$. Let $F_{p}^{*\aug}$ be the flow that augment $F_{p-1}^{*}$ to $F_{p}^{*}$, i.e., $F_{p}^{*}=F_{p-1}^{*}+F_{p}^{*\aug}$. Let $F_{p}^{\mathrm{aug}}$ be the flow produced by \Cref{alg:polyh-maxlen-nonconcflow} from \Cref{lem:polyh-maxlen-nonconcflow} on line \ref{line:compute length-bounded nc flow} when $h=2^{p}$. Let $F_{p}=F_{p-1}+F_{p}^{\mathrm{aug}}$ where $F_{-1}=0$. That is, $F_{p}^{\mathrm{aug}}$ augments $F_{p-1}$ to $F_{p}$.

First, we show that $\val(F_{p})\geq\val(F_{p}^{*})$ for all $p$. Consider $D'_{p}=D'-D_{F_{p-1}}$. That is, $D'_{p}$ is the remaining demand $D'$ at the start of the iteration of the loop when $h=2^{p}$. Consider the demand $D_{p}^{\Delta}=\min(D'_{p},D_{p}^{*})$. By the definition of $F_{p}^{*}$, there is a step-$t$ $\ell$-length-$h$ congestion-$1$ flow routing $D_{p}^{\Delta}$. So, $D_{p}^{\Delta}$ can be routed by a flow with $\ell'$-length $2t$ and congestion $1$ by \Cref{lem:roundinglemma}. Therefore, by the guarantee of $\textsc{MaxLenNonConcFlow}$ from \Cref{lem:polyh-maxlen-nonconcflow}, the flow $F_{p}^{\aug}$ produced on line \ref{line:compute length-bounded nc flow} has value at least $|D_{p}^{\Delta}|\geq\val(F_{p}^{*})-\val(F_{p-1})$. Hence, we have $\val(F_{p})=\val(F_{p}^{\aug})+\val(F_{p-1})\geq\val(F_{p}^{*})$.

Suppose $F$ is returned on line \ref{line: ncflow last return}. Then, $F=F_{\lceil\log nN\rceil}$ and so we have $\val(F)=\val(F_{\lceil\log nN\rceil})\ge\val(F_{\lceil\log nN\rceil}^{*})=\mathrm{val}(F^{*})$. 

From now, we assume that $F$ is returned on line \ref{line: ncflow early return} at the iteration $p$. Observe that $F=F_{p-1}+\lambda F_{p}^{\aug}$ and $\totlen(F)=T'$. The key claim is the following, which will be proved at the end. 
\begin{claim}
\label{claim:Fhat in nonconc flow}For all $p$, there is a subflow $\Fhat_{p}$ of $F_{p}$ such that $\mathrm{val}(\Fhat_{p})=\mathrm{val}(F_{p}^{*})$ and $\totlen(\Fhat_{p})\le4s'\cdot\totlen(F_{p}^{*})$. 
\end{claim}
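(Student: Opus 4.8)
\emph{Proof plan for \Cref{claim:Fhat in nonconc flow}.} The plan is to prove by induction on $q$ the stronger statement that there are subflows $\widehat F_0,\widehat F_1,\dots$ with $\widehat F_q$ a subflow of $F_q$, $\widehat F_{q-1}$ a subflow of $\widehat F_q$, $\val(\widehat F_q)=\val(F_q^*)$, and $\totlen(\widehat F_q)\le 4s'\cdot\totlen(F_q^*)$; the assertion of the claim is then the case $q=p$. The construction processes length scales in increasing order and always prefers the shortest available flow paths: at scale $q$ we enlarge $\widehat F_{q-1}$ by exactly $w_q:=\val(F_q^*)-\val(F_{q-1}^*)$ units of flow, drawn first from the leftover flow $F_{q-1}-\widehat F_{q-1}$ (whose paths are short) and only then from $F_q^{\aug}$.

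Everything rests on two length estimates, which I would establish first. (i) Combining \Cref{lem:polyh-maxlen-nonconcflow}(1) with \Cref{lem:roundinglemma}, applied with length bound $h=2^q$ and the rounded lengths $\ell'(e)=\lceil t\ell(e)/2^q\rceil$ used in \Cref{alg:avglen-lowstep-nonconcflow}, every flow path of $F_q^{\aug}$ has $\ell$-length at most $2^{q+1}s'$; since $F_q=\sum_{q'\le q}F_{q'}^{\aug}$, every flow path of $F_q$ also has $\ell$-length at most $2^{q+1}s'$. (ii) Since $F_q^{*\aug}$ is by definition the portion of $F^*$ carried on flow paths of $\ell$-length in $(2^{q-1},2^q]$, we get $\totlen(F_q^{*\aug})\ge 2^{q-1}w_q$, hence $\totlen(F_q^*)\ge\totlen(F_{q-1}^*)+2^{q-1}w_q$. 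I also use the inequality $\val(F_q)\ge\val(F_q^*)$ already proved above, which guarantees the subflows below have enough value to exist.

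For the inductive step (the base case $q=0$ is the same computation with $F_{-1}=\widehat F_{-1}=F_{-1}^*=0$ and the trivial bound $\totlen(F_0^*)\ge\val(F_0^*)$, valid because edge lengths are integers at least $1$): from $\widehat F_{q-1}$, a subflow of $F_{q-1}$ and hence of $F_q$, the residual $F_q-\widehat F_{q-1}$ has value $\val(F_q)-\val(F_{q-1}^*)\ge w_q$ and decomposes as $(F_{q-1}-\widehat F_{q-1})+F_q^{\aug}$. I set $\widehat F_q:=\widehat F_{q-1}+\widehat G$, where $\widehat G$ is a subflow of $F_q-\widehat F_{q-1}$ of value exactly $w_q$ formed from $a:=\min\{w_q,\val(F_{q-1}-\widehat F_{q-1})\}$ units of $F_{q-1}-\widehat F_{q-1}$ together with $w_q-a$ units of $F_q^{\aug}$ (feasible since $\val(F_q^{\aug})=\val(F_q)-\val(F_{q-1})\ge w_q-a$). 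Then $\widehat F_q$ is a subflow of $F_q$ with $\val(\widehat F_q)=\val(F_q^*)$, and by estimate (i) the new flow costs $\totlen(\widehat G)\le a\cdot 2^q s'+(w_q-a)\cdot 2^{q+1}s'\le 2^{q+1}s'\,w_q$, so using the inductive hypothesis and estimate (ii),
\[
\totlen(\widehat F_q)\le 4s'\,\totlen(F_{q-1}^*)+2^{q+1}s'\,w_q\le 4s'\bigl(\totlen(F_{q-1}^*)+2^{q-1}w_q\bigr)\le 4s'\,\totlen(F_q^*).
\]

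The main thing to get right is the bookkeeping that makes the factor-$2$ length-scale loss of $F_q^{\aug}$ cancel against the $2^{q-1}$ lower bound on path lengths in $F_q^{*\aug}$: this forces the slack in the claim to be $4s'$ rather than $2s'$, and it is exactly why the construction must exhaust the cheap residual flow of earlier scales before drawing on $F_q^{\aug}$. I expect no other real obstacle; the one routine point to spell out is that a flow has a subflow of any prescribed value between $0$ and its total value (split a single path-flow), so $\widehat G$ and its $a$-unit and $(w_q-a)$-unit pieces all genuinely exist.
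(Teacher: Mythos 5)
Your proof is correct and follows essentially the same induction as the paper, resting on the same two length estimates: every flow path of $F_p$ has $\ell$-length at most $2^{p+1}s'$ (via \Cref{lem:roundinglemma}) while every path of $F_p^{*\aug}$ has $\ell$-length greater than $2^{p-1}$, giving the factor $4s'$. The only difference is that you select the value-$w_q$ augmenting subflow greedily by path length, whereas the paper simply takes a uniform fraction $\gamma(F_q-\widehat F_{q-1})$; since the $2^{q+1}s'$ max-length bound already holds for all of $F_q$ (not just $F_q^{\aug}$), your greedy preference for the residual of earlier scales is not actually necessary, contrary to your closing remark.
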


We will use the above claim only for $F_{p-1}$. Now, suppose for contradiction that $\val(F)<\val(F^{*})$. We will analyze $\totlen(F^{*})=\totlen(F^{*}-F_{p-1}^{*})+\totlen(F_{p-1}^{*})$. Let us analyze the two term as follows. First, we have 
\begin{align*}
\totlen(F^{*}-F_{p-1}^{*}) & =\avglen(F^{*}-F_{p-1}^{*})(\val(F^{*})-\val(F_{p-1}^{*}))\\
 & \ge\frac{1}{4s'}\avglen(F-\Fhat_{p-1})(\val(F^{*})-\val(F)+\val(F)-\val(\Fhat_{p-1}))\\
 & =\frac{1}{4s'}\left(\avglen(F-\Fhat_{p})(\val(F^{*})-\val(F))+\totlen(F-\Fhat_{p-1})\right)\\
 & >\frac{1}{4s'}\totlen(F-\Fhat_{p-1})
\end{align*}
where the first inequality follows from (1) the minimum $\ell$-length of $F^{*}-F_{p}^{*}$ is at least $2^{p-1}$, (2) the maximum $\ell$-length of $F-\Fhat_{p-1}$ is at most $2^{p+1}s'$ because the maximum $\ell'$-length of $F$ is $2ts'$ and by \Cref{lem:roundinglemma}, and (3) $\val(\Fhat_{p-1})=\val(F_{p-1}^{*})$ by \Cref{claim:Fhat in nonconc flow}. The last inequality is by our assumption that $\val(F^{*})-\val(F)>0$. Second, by \Cref{claim:Fhat in nonconc flow}, we directly have 
\[
\totlen(F_{p-1}^{*})\ge\frac{1}{4s'}\cdot\totlen(\Fhat_{p-1}).
\]
Combining the two inequalities, we get a contradiction because 
\begin{align*}
T & \ge\totlen(F^{*})\\
 & =\totlen(F^{*}-F_{p}^{*})+\totlen(F_{p}^{*})\\
 & >\totlen(F)/4s'\\
 & >T
\end{align*}
where the last equality is because $\totlen(F)=T'=5s'T$. This concludes the proof that $\val(F)\ge\val(F^{*})$ when returned on line \ref{line: ncflow early return}. 
\end{proof}
Finally, we prove \Cref{claim:Fhat in nonconc flow}.
\begin{proof}
[Proof of \Cref{claim:Fhat in nonconc flow}]We prove by induction. The base case when $p=-1$ is trivial because $\mathrm{val}(F_{-1}^{*})=0$ as every edge has length at least $1$ and $F_{-1}=0$. Now, assume the claim holds for all $p'<p$. 

Since $\val(F_{p})\geq\val(F_{p}^{*})$ and $\val(\Fhat_{p-1})=\val(F_{p-1}^{*})$ by induction, we have that $\val(F_{p}-\Fhat_{p-1})\ge\val(F_{p}^{*\aug})$.\footnote{Note that $F_{p}-\Fhat_{p-1}$ is well-defined because $\Fhat_{p-1}$ is a subflow of $F_{p-1}$ which is a subflow of $F_{p}$.} So, there exists $\gamma\in[0,1]$ such that $\gamma\val(F_{p}-\Fhat_{p-1})=\val(F_{p}^{*\aug})$. We define $\Fhat_{p}^{\aug}=\gamma(F_{p}-\Fhat_{p-1})$ and set $\Fhat_{p}=\Fhat_{p-1}+\Fhat_{p}^{\aug}$. 

Let us verify that $\Fhat_{p}$ satisfies the two properties. For the first property, we have 
\[
\val(\Fhat_{p})=\val(\Fhat_{p-1})+\val(\Fhat_{p}^{\aug})=\mathrm{val}(F_{p-1}^{*})+\mathrm{val}(F_{p}^{*\aug})=\val(F_{p}^{*})
\]
 by induction and by the definition of $\Fhat_{p}^{\aug}$. For the second property, observe that maximum $\ell'$-length of $\Fhat_{p}^{\aug}$, which is a subflow of $F_{p}$, is at most $2ts'$, and so the maximum $\ell$-length of $\Fhat_{p}^{\aug}$ is at most $(2ts')\cdot h/t=2^{p+1}s'$ by \Cref{lem:roundinglemma}. But the minimum $\ell$-length of $F_{p}^{*\aug}$ is at least $2^{p-1}$. Hence, $\avglen(\Fhat_{p}^{\aug})\le4s'\avglen(F_{p}^{*\aug})$ and so $\totlen(\Fhat_{p}^{\aug})\le4s'\totlen(F_{p}^{*\aug})$ because the value of the two flows are same. Thus, by induction, we get 
\[
\totlen(\Fhat_{p})=\totlen(\Fhat_{p-1})+\totlen(\Fhat_{p}^{\aug})\le4s'\cdot\totlen(F_{p-1}^{*})+4s'\cdot\totlen(F_{p}^{*\aug})=4s'\cdot\totlen(F_{p}^{*}).
\]
This completes the inductive step of the claim.
\end{proof}

\section{Flow Boosting}
\label{sec:flow boost}


In this section, we show how to \emph{boost} a flow algorithm that achieves length slack $s$ and congestion slack $\kappa$ to an algorithm that achieves length slack $s$ and congestion slack $(1+\epsilon)s$ with an additional running time overhead of $\poly(\kappa/\epsilon)$. Since we are primarily interested in the regime when $s=\exp(\poly(1/\epsilon))$ and $\kappa=n^{\poly\epsilon}$, boosting effectively reduces the congestion slack down to the length slack.

In order to bypass the $O(mk)$ flow-path decomposition barrier, our algorithms must output an \emph{implicit} flow, which we formalize as a flow \emph{oracle}.

\begin{definition}[Flow Oracle]
    A flow oracle $\mathcal{O}_{F}$ for a multi-commodity flow $F$ on a graph $G$ is a data structure supporting the following query:
    \begin{itemize}
        \item Given a subset $S$ of pairs of vertices of $G$, return the edge representation $\flow_{F_{S}}$ of $F_{S}$, where $F_{S} := \sum_{(a, b) \in S} F_{(a, b)}$ is the subflow of $F$ between the vertex pairs $(a, b) \in S$.
    \end{itemize}
    The oracle has query work $Q_w$ and query depth $Q_d$ if every query $S$ takes at most $Q_w$ work and has depth at most $Q_d$.
\end{definition}

\subsection{Flow Boosting Template}

We begin with a generic flow boosting \emph{template} that does not depend on the specifics of the flow problem, and instead works for any convex set $\mathcal F$ of satisfying flows. For illustration, the reader can imagine that $\mathcal F$ is the set of concurrent or non-concurrent flows for a given demand.

\begin{theorem}[Flow Boosting Template]\label{thm:boosting-template}
Let $G=(V,E,u,b)$ be a graph with capacity function $u$ and cost function $b$, and let $B\ge0$ be the cost budget. Let $\mathcal F$ be any convex set of flows in $G$ containing at least one capacity-respecting flow, and let $\epsilon,s,\kappa\ge0$ be parameters. Suppose an algorithm is given an oracle $\mathcal O$ that, given any integral edge length function $\ell:E\to\{1,2,\ldots,O(m^{1/\epsilon}N/\epsilon)\}$, computes the edge representation of a (not necessarily capacity-respecting) flow $F\in\mathcal F$ such that
 \begin{itemize}
 \item \textup{Length slack $s$:} $\sum_{e\in E}F(e)\cdot\ell(e)\le s\cdot\sum_{e\in E}F^*(e)\cdot\ell(e)$ for any flow $F^*\in\mathcal F$ that is also capacity-respecting, and
 \item \textup{Congestion slack $\kappa$:} $F(e)\le\kappa u(e)$ for all $e\in E$, i.e., $F(e)/\kappa$ is capacity-respecting.
 \end{itemize}
%
Then, there is a deterministic algorithm that makes $O(\kappa\epsilon^{-2}\log^2n)$ calls to oracle $\mathcal O$ and outputs the edge representation of a flow $\bar F\in\mathcal F$ and scalar $\lambda\ge0$ such that
 \begin{enumerate}
 \item \textup{Feasibility:} The flow $\lambda\bar F$ is capacity-respecting with cost at most $B$, and\label{item:feasibility}
 \item \textup{Approximation factor $\approx s$:} Let $\lambda^*$ be the maximum value such that there exists flow $F^*\in\mathcal F$ where $\lambda^*F^*$ is capacity-respecting with cost at most $B$. Then, $\lambda\ge\frac{1-O(\epsilon)}s\lambda^*$.\label{item:approx-factor}
 \end{enumerate}
Moreover, the flow $\bar F$ is a convex combination of the flows returned by oracle $\mathcal O$, and this convex combination can be output as well.

Furthermore, if the oracle $\mathcal O$ also outputs a flow oracle with query work $Q_w$ and query depth $Q_d$, then the algorithm can also output a flow oracle for $\bar{F}$ with query work $\tilde O(\kappa\epsilon^{-2}Q_w)$ and query depth $\tilde O(Q_d)$.

The algorithm takes $\tilde O(\kappa\epsilon^{-2}m)$ work and $\tilde O(\kappa\epsilon^{-2})$ time outside of the oracle calls.
\end{theorem}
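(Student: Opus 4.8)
This is a multiplicative‑weights (Garg--K\"onemann‑style) boosting argument in which the oracle $\mathcal O$ plays the role of an approximate pricing routine for the packing LP
\[
\max\ \lambda\quad\text{s.t.}\quad F\in\mathcal F,\ \ \lambda F(e)\le u(e)\ \forall e\in E,\ \ \lambda\sum_{e\in E}F(e)b(e)\le B.
\]
First I would fold the cost budget into the capacity constraints by adding a single \emph{virtual resource} of ``capacity'' $B$ whose consumption by a flow $F$ is $\mathrm{cost}(F):=\sum_eF(e)b(e)$, so the LP has $m+1$ packing constraints. The (deterministic) outer loop maintains a price $y_e\ge0$ for each edge resource and a price $y_0\ge0$ for the virtual resource, all initialized to a common small $\delta$, and in each iteration feeds $\mathcal O$ the integral edge length function obtained by scaling and rounding $\ell(e):=y_e/u(e)+y_0b(e)/B$. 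With this choice $\sum_eF(e)\ell(e)$ is exactly the price‑weighted, capacity‑normalized consumption of $F$ over all $m+1$ resources, so the cost constraint stays visible to $\mathcal O$ even though $\mathcal O$ only accepts edge lengths. A preliminary binary search over the target value (equivalently, a rescaling of $(u,b,B)$) reduces to the case $\lambda^*\in[\Omega(1),1]$, which also guarantees $\lambda^*F^*\in\mathcal F$ by convexity against the capacity‑respecting flow $\mathcal F$ is promised to contain.

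\textbf{The loop.} In each of $T=O(\kappa\epsilon^{-2}\log^2 n)$ iterations: (i) rescale and round $\ell$ up to an integral function with values in $\{1,\dots,O(m^{1/\epsilon}N/\epsilon)\}$, legitimate because the prices only grow and the stopping threshold caps their dynamic range polynomially, so lifting the minimum length above $1/\epsilon$ and rounding costs only a $(1\pm\epsilon)$ factor; (ii) call $\mathcal O$ to get $F\in\mathcal F$ with $F(e)\le\kappa u(e)$ and $\sum_eF(e)\ell(e)\le s\sum_eF'(e)\ell(e)$ for every capacity‑respecting $F'\in\mathcal F$; (iii) route the step $\eta F$ with $\eta=\Theta(\epsilon/\kappa)$ — capacity‑ and budget‑respecting since $F/\kappa$ is — and add it to the running solution; (iv) update $y_e\leftarrow y_e\exp(\eta F(e)/u(e))$ and $y_0\leftarrow y_0\exp(\eta\,\mathrm{cost}(F)/B)$. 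The loop stops once $\Phi:=\sum_ey_e+y_0$ reaches $\approx m^{1/\epsilon}\delta$; a standard width‑bounded counting argument bounds this by $T$ iterations, the linear‑in‑$\kappa$ dependence being exactly the ``width'' of $\mathcal O$. The output is $\bar F$, a convex combination (with explicitly recorded coefficients) of the $O(T)$ oracle outputs $F_i$ — hence $\bar F\in\mathcal F$ — together with the largest scalar $\lambda$ for which $\lambda\bar F$ is capacity‑ and budget‑respecting.

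\textbf{Analysis.} Correctness is the usual two‑sided potential bound. For the lower bound, $\ln(y_e/\delta)$ accumulates $\eta F_i(e)/u(e)$, the consumption of resource $e$ by the routed step, so $\ln(\Phi_{\mathrm{final}}/((m{+}1)\delta))$ is (up to $1+O(\epsilon)$) at least $\eta$ times the total routed flow value divided by the maximum per‑resource consumption; reaching the threshold thus forces the routed value, and hence $\lambda$, to be $\ge(1-O(\epsilon))\lambda^*/s$. For the upper bound, feasibility of $\lambda^*F^*$ gives $\sum_e(\lambda^*F^*)(e)\ell(e)\le\Phi$ in every iteration, and the oracle's length slack then yields $\sum_eF_i(e)\ell(e)\le s\Phi/\lambda^*$, bounding the per‑step growth $\Phi_{i+1}\le\Phi_i(1+O(\eta s/\lambda^*))$; combining the two estimates pins the ratio and gives $\lambda\ge\frac{1-O(\epsilon)}{s}\lambda^*$, with $\lambda\bar F$ feasible by construction. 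This proves items~\ref{item:feasibility} and~\ref{item:approx-factor}.

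\textbf{Flow oracle, complexity, and the main obstacle.} If each oracle call also returns a flow oracle, the output flow oracle for $\bar F=\sum_i\alpha_iF_i$ forwards each query $S$ to the $O(\kappa\epsilon^{-2}\log^2 n)$ stored sub‑oracles, runs them in parallel, and sums the returned edge representations with coefficients $\alpha_i$ in a binary tree, giving query work $\tilde O(\kappa\epsilon^{-2}Q_w)$ and query depth $\tilde O(Q_d)$. Outside the oracle calls each iteration only updates the $O(m)$ prices and the running flow in $\tilde O(m)$ work and $\tilde O(1)$ depth, for $\tilde O(\kappa\epsilon^{-2}m)$ work and $\tilde O(\kappa\epsilon^{-2})$ depth total. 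The \emph{main obstacle} is twofold: (a) keeping the price dynamic range polynomially bounded so that rounding $\ell$ into the prescribed integer range $\{1,\dots,O(m^{1/\epsilon}N/\epsilon)\}$ costs only $(1\pm\epsilon)$ — this is what fixes the stopping threshold, the step size $\eta$, and hence the $O(\kappa\epsilon^{-2}\log^2 n)$ iteration count; and (b) reconciling the oracle's guarantee (which compares only against capacity‑respecting flows \emph{in} $\mathcal F$) with the possibly‑scaled‑out‑of‑$\mathcal F$ optimum $\lambda^*F^*$, handled by the preliminary rescaling/binary‑search step together with convexity of $\mathcal F$.
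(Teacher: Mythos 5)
Your proposal follows essentially the same Garg--K\"onemann multiplicative-weights route as the paper: dual prices on the $m$ edge capacities plus one price on the cost budget, combined into the length function fed to $\mathcal O$, a width-$\kappa$ step size, a stopping rule on the total dual objective, and the output $\bar F$ as an explicitly recorded convex combination of oracle answers (with the flow oracle obtained by forwarding queries to the stored sub-oracles). The preliminary rescaling to force $\lambda^*\in[1,2]$ is also what the paper does.

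There is, however, one genuine gap in step (iii) of your loop. You assert that the routed step $\eta F$ is ``capacity- \emph{and budget}-respecting since $F/\kappa$ is.'' The oracle's congestion-slack guarantee bounds only $F(e)/u(e)\le\kappa$; it gives \emph{no} bound on $\mathrm{cost}(F)/B$. Consequently the width of your virtual budget resource is unbounded: a single iteration could have $\mathrm{cost}(F)\gg \kappa B/\epsilon$, so the update $y_0\leftarrow y_0\exp(\eta\,\mathrm{cost}(F)/B)$ is not a small multiplicative step, the regret/potential analysis for that coordinate breaks, and the $O(\kappa\epsilon^{-2}\log^2 n)$ iteration bound is not established. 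The paper repairs exactly this by truncating each returned flow to $z^{(i)}F^{(i)}$ with $z^{(i)}=\min\{1,\,B/b(F^{(i)})\}$ before routing it and before the price update, and then separately counting the iterations with $z^{(i)}<1$ (each such iteration multiplies the budget price by $1+\epsilon/\kappa$, so there are at most $\log_{1+\epsilon/\kappa}(1/\delta)$ of them) against those with $z^{(i)}=1$ (bounded via $\sum_i z^{(i)}\le\beta\kappa\log_{1+\epsilon}(1/\delta)$). Your argument needs this truncation, or an equivalent width-reduction for the budget constraint, to go through; with it, the rest of your analysis (two-sided potential bound, feasibility by construction, complexity, and the composed flow oracle) matches the paper's.
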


For the rest of this subsection, we prove \Cref{thm:boosting-template}. The proof closely follows Sections~5~and~6 of \cite{garg2007faster}, so we claim no novelty here. We first impose the assumption that $\lambda^*\ge1$ for $\lambda^*$ as defined in Condition~\ref{item:approx-factor} of \Cref{thm:boosting-template}.

Let $\mathcal K$ be the set of capacity-respecting flows in $G$, and consider the following flow LP of the graph $G$. We have a variable $x(F)\ge0$ for each $F\in\mathcal F\cap\mathcal K$ indicating that we send flow $F$ scaled by $x(F)$. To avoid clutter, we also define $b(F)=\sum_{e\in E}F(e)\cdot b(e)$ as the cost of the flow $F$.
\begin{align*}
\max \qquad & \sum_{F\in\mathcal F\cap\mathcal K} x(F)
\\\text{s.t.} \qquad & \sum_{F\in\mathcal F\cap\mathcal K} F(e) \cdot x(F) \le u(e) & \forall e\in E
\\ & \sum_{F\in\mathcal F\cap\mathcal K}b(F)\cdot x(F)\le B
\\ & x\ge0
\end{align*}
Let $\beta$ be the optimal value of this LP. Note that since $\mathcal F\cap\mathcal K$ is convex, there is an optimal solution with $x(F^*)=\beta$ for some $F^*\in\mathcal F$ and $x(F)=0$ elsewhere. It follows that $\beta=\lambda^*$.

The dual LP has a length $\ell(e)\ge0$ for each edge $e\in E$ as well as a length $\phi\ge0$ of the cost constraint.
\begin{align*}
\min \qquad & \sum_{e\in E} u(e)\cdot\ell(e) + B\cdot\phi & =:D(\ell,\phi)
\\\text{s.t.} \qquad & \sum_{e\in E} F(e) \cdot(\ell(e)+b(e)\phi)\ge 1 & \forall F\in\mathcal F\cap\mathcal K
\\ & \ell\ge0,\,\phi\ge0
\end{align*}
Let $D(\ell,\phi)=\sum_{e\in E} u(e)\ell(e)+B\cdot\phi$ be the objective value of the dual LP. Define $\alpha(\ell,\phi)$ as the minimum length of a flow $F\in\mathcal F\cap\mathcal K$ under length function $\ell+\phi b$:
\[ \alpha(\ell,\phi)=\min_{F\in\mathcal F\cap\mathcal K}\sum_{e\in E}F(e)\cdot(\ell(e)+b(e)\phi) .\]
Then by scaling, we can restate the dual LP as finding a length function $\ell$ minimizing $D(\ell,\phi)/\alpha(\ell,\phi)$. By LP duality, the minimum is $\beta$, the optimal value of the primal LP, which we recall also equals $\lambda^*\ge1$.

The algorithm initializes length functions $\ell^{(0)}(e)=\delta/u(e)$ and $\phi^{(0)}=\delta/B$ for parameter $\delta=m^{-1/\epsilon}$, and proceeds for a number of iterations. For each iteration $i$, the algorithm wishes to call the oracle $\mathcal O$ on length function $\ell^{(i-1)}+\phi^{(i-1)}b$, but the length function $\ell^{(i-1)}+\phi^{(i-1)}b$ is not integral. However, we will ensure that they are always in the range $[\delta/N,O(1)]$. So the algorithm first multiplies each length by $O(N/(\delta\epsilon))=O(m^{1/\epsilon}N/\epsilon)$ and then rounds the weights to integers so that each length is scaled by roughly the same factor up to $(1+\epsilon)$. The algorithm calls the oracle on these scaled, integral weights to obtain a flow $F^{(i)}$. On the original, unscaled graph, the flow satisfies the following two properties:
 \begin{enumerate}
 \item \textup{Length slack $(1+\epsilon)s$:} $\sum_{e\in E}F^{(i)}(e)\cdot(\ell^{(i-1)}(e)+b(e)\phi^{(i-1)})\le(1+\epsilon)s\cdot\alpha(\ell^{(i-1)},\phi^{(i-1)})$, and
 \item \textup{Congestion slack $\kappa$:} $F(e)\le\kappa u(e)$ for all $e\in E$, i.e., $F/\kappa\in\mathcal K$.
 \end{enumerate}
Define $z^{(i)}=\min\{1,\,B/b(F^{(i)})\}$ so that $b(z^{(i)}F^{(i)})\le B$, i.e., the cost of the scaled flow $z^{(i)}F^{(i)}$ is within the budget $B$.
The lengths are then modified as
\[ \ell^{(i)}(e)=\ell^{(i-1)}(e)\bigg(1+\frac\epsilon\kappa\cdot\frac{z^{(i)}F^{(i)}(e)}{u(e)}\bigg) \quad\text{and}\quad \phi^{(i)}=\phi^{(i-1)}\bigg(1+\frac\epsilon\kappa\cdot\frac{b(z^{(i)}F^{(i)})}B\bigg) . \]
This concludes the description of a single iteration. The algorithm terminates upon reaching the first iteration $t$ for which $D(t)\ge1$ and outputs
\[ \bar F=\frac{z^{(1)}F^{(1)}+z^{(2)}F^{(2)}+\cdots+z^{(t-1)}F^{(t-1)}}{z^{(1)}+z^{(2)}+\cdots+z^{(t-1)}}\quad\text{and}\quad\lambda=\frac{z^{(1)}+z^{(2)}+\cdots+z^{(t-1)}}{\kappa\log_{1+\epsilon}1/\delta} .\]
\paragraph{Analysis.}
We will analyze the values of $D(\ell^{(i)},\phi^{(i)})$ and $\alpha(\ell^{(i)},\phi^{(i)})$ only for the lengths $\ell^{(i)},\phi^{(i)}$. To avoid clutter, we denote $D(i)=D(\ell^{(i)},\phi^{(i)})$ and $\alpha(i)=\alpha(\ell^{(i)},\phi^{(i)})$.  For each iteration $i$, we have
\begin{align*}
D(i)&=\sum_{e\in E} u(e)\cdot\ell^{(i)}(e)+B\cdot\phi^{(i)}
\\&=\sum_{e\in E}u(e)\cdot\ell^{(i-1)}(e)\bigg(1+\frac\epsilon\kappa\cdot\frac{z^{(i)}F^{(i)}(e)}{u(e)}\bigg)+B\cdot\phi^{(i-1)}\bigg(1+\frac\epsilon\kappa\cdot\frac{b(z^{(i)}F^{(i)})}B\bigg)
\\&=D(i-1)+\frac\epsilon\kappa\cdot\sum_{e\in E}z^{(i)}F^{(i)}(e)\cdot\ell^{(i-1)}(e)+\frac\epsilon\kappa\cdot z^{(i)}\cdot b(F^{(i)})\cdot\phi^{(i-1)}
\\&=D(i-1)+\frac\epsilon\kappa\cdot z^{(i)}\bigg(\sum_{e\in E}F^{(i)}(e)\cdot(\ell^{(i-1)}(e)+b(e)\phi^{(i-1)})\bigg)
\\&\le D(i-1)+\frac\epsilon\kappa\cdot z^{(i)}\cdot(1+\epsilon)s\cdot \alpha(i-1).
\end{align*}
Since $D(i-1)/\alpha(i-1)\ge\beta$ by definition of $\beta$, we have
\[ D(i) \le D(i-1)+\frac\epsilon\kappa\cdot z^{(i)}\cdot(1+\epsilon)s\cdot \frac{D(i-1)}\beta .\]
Define $\epsilon'=\epsilon(1+\epsilon)s/\kappa$ so that
\[ D(i)\le\bigg(1+\frac{\epsilon(1+\epsilon)sz^{(i)}}{\kappa\beta}\bigg)D(i-1) = \bigg(1+\frac{\epsilon'z^{(i)}}\beta\bigg)D(i-1) .\]
Since $D(0)=m\delta$ we have for $i\ge1$
\begin{align*}
D(i)\le\bigg(\prod_{j\le i}(1+\epsilon'z^{(j)}/\beta)\bigg)m\delta&=\bigg(1+\frac{\epsilon'z^{(i)}}\beta\bigg)m\delta\prod_{j\le i-1}\bigg(1+\frac{\epsilon'z^{(j)}}\beta\bigg)
\\&\le(1+\epsilon')m\delta\exp\bigg(\frac{\epsilon'\sum_{j\le i-1}z^{(j)}}\beta\bigg) ,
\end{align*}
where the last inequality uses our assumption that $\beta\ge1$ and the fact that $z^{(j)}\le1$ by definition. To avoid clutter, define $z^{(\le i)}=\sum_{j\le i}z^{(j)}$ for all $i$.

The procedure stops at the first iteration $t$ for which $D(t)\ge1$. Therefore,
\[ 1\le D(t)\le(1+\epsilon')m\delta\exp\bigg(\frac{\epsilon'z^{(\le t-1)}}\beta\bigg) ,\]
which implies
\begin{gather}
\frac\beta{z^{(\le t-1)}}\le\frac{\epsilon'}{\ln\frac1{(1+\epsilon')m\delta}} .\label{eq:to-establish-condition-2}
\end{gather}
\begin{claim}
The scaled down flow $\frac1{\kappa\log_{1+\epsilon}1/\delta}(z^{(1)}F^{(1)}+z^{(2)}F^{(2)}+\cdots+z^{(t-1)}F^{(t-1)})$ is capacity-respecting with cost at most $B$.
\end{claim}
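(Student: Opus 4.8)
The plan is to run the standard Garg--K\"onemann potential argument, using only the multiplicative length-update rule and the stopping rule. The one elementary inequality needed is that $(1+\epsilon)^{y}\le 1+\epsilon y$ for every $y\in[0,1]$ (the chord of the convex function $y\mapsto(1+\epsilon)^{y}$ on $[0,1]$ lies above the curve), equivalently $1+a\ge(1+\epsilon)^{a/\epsilon}$ for $a\in[0,\epsilon]$.

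\textbf{Capacity.} I would fix an arbitrary edge $e$ and track $\ell^{(i)}(e)$. By the congestion-slack guarantee $F^{(i)}(e)\le\kappa u(e)$ and $z^{(i)}\le1$, the increment $a_i:=\frac{\epsilon}{\kappa}\cdot\frac{z^{(i)}F^{(i)}(e)}{u(e)}$ lies in $[0,\epsilon]$, so the update $\ell^{(i)}(e)=\ell^{(i-1)}(e)(1+a_i)$ gives $\ell^{(i)}(e)\ge\ell^{(i-1)}(e)\,(1+\epsilon)^{a_i/\epsilon}=\ell^{(i-1)}(e)\,(1+\epsilon)^{z^{(i)}F^{(i)}(e)/(\kappa u(e))}$. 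Telescoping over $i=1,\dots,t-1$ and using $\ell^{(0)}(e)=\delta/u(e)$,
\[ \ell^{(t-1)}(e)\ \ge\ \frac{\delta}{u(e)}\,(1+\epsilon)^{\frac1{\kappa u(e)}\sum_{i\le t-1}z^{(i)}F^{(i)}(e)}. \]
Since iteration $t$ is the first with $D(t)\ge1$ we have $D(t-1)<1$, and as $D(t-1)=\sum_{e}u(e)\ell^{(t-1)}(e)+B\phi^{(t-1)}$ is a sum of non-negative terms, $u(e)\ell^{(t-1)}(e)<1$. Combining the two bounds, $\delta\,(1+\epsilon)^{\frac1{\kappa u(e)}\sum_{i\le t-1}z^{(i)}F^{(i)}(e)}<1$, hence $\sum_{i\le t-1}z^{(i)}F^{(i)}(e)<\kappa\log_{1+\epsilon}(1/\delta)\,u(e)$. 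Dividing by $\kappa\log_{1+\epsilon}(1/\delta)$ shows the scaled flow respects the capacity of $e$; since $e$ was arbitrary, it is capacity-respecting.

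\textbf{Cost.} The same argument applies to $\phi^{(i)}$. By the definition $z^{(i)}=\min\{1,B/b(F^{(i)})\}$ we have $b(z^{(i)}F^{(i)})\le B$, so the increment $\frac{\epsilon}{\kappa}\cdot\frac{b(z^{(i)}F^{(i)})}{B}$ lies in $[0,\epsilon/\kappa]\subseteq[0,\epsilon]$ and the update gives $\phi^{(i)}\ge\phi^{(i-1)}(1+\epsilon)^{b(z^{(i)}F^{(i)})/(\kappa B)}$. Telescoping from $\phi^{(0)}=\delta/B$ yields $\phi^{(t-1)}\ge\frac{\delta}{B}(1+\epsilon)^{\frac1{\kappa B}\sum_{i\le t-1}b(z^{(i)}F^{(i)})}$, and $B\phi^{(t-1)}\le D(t-1)<1$ forces $\sum_{i\le t-1}b(z^{(i)}F^{(i)})<\kappa\log_{1+\epsilon}(1/\delta)\,B$. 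By linearity of cost, the cost of $\frac1{\kappa\log_{1+\epsilon}(1/\delta)}\sum_{i\le t-1}z^{(i)}F^{(i)}$ equals $\frac1{\kappa\log_{1+\epsilon}(1/\delta)}\sum_{i\le t-1}b(z^{(i)}F^{(i)})<B$, completing the claim.

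There is no real obstacle here -- the argument is essentially that of \cite{garg2007faster}. The two points to be careful about are (i) getting the direction of the convexity inequality right, since we need a \emph{lower} bound on $\ell^{(t-1)}(e)$ and on $\phi^{(t-1)}$ to contrast with the \emph{upper} bound coming from $D(t-1)<1$; and (ii) checking that the length functions $\ell^{(i)}+\phi^{(i)}b$ never leave the range $[\delta/N,\,O(1)]$ so that the rounding inside the oracle call is valid and the inequalities above refer to the unscaled lengths -- the lower bound is immediate since lengths only increase from $\delta/u(e)\ge\delta/N$, and the upper bound before termination is exactly the estimate $u(e)\ell^{(t-1)}(e),\,B\phi^{(t-1)}<D(t-1)<1$ used above.
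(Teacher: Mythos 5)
Your proof is correct and follows essentially the same route as the paper's: track the multiplicative growth of $\ell^{(i)}(e)$ and $\phi^{(i)}$, lower-bound the final values in terms of the accumulated flow (resp.\ cost) through the convexity inequality $(1+a)\ge(1+\epsilon)^{a/\epsilon}$ for $a\in[0,\epsilon]$, and play this against the upper bound $u(e)\ell^{(t-1)}(e),\,B\phi^{(t-1)}\le D(t-1)<1$ from the stopping rule. The paper states the same mechanism slightly more informally ("for every $\kappa u(e)$ units of flow routed through $e$ the length grows by a factor $1+\epsilon$"); your telescoping makes that step rigorous, which is the standard Garg--K\"onemann accounting.
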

\begin{proof}
To show it is capacity-respecting, consider an edge $e$. On each iteration, we route $z^{(i)}F^{(i)}(e)\le\kappa u(e)$ units of flow through $e$ and increase its length by a factor $(1+\frac\epsilon\kappa\cdot\frac{z^{(i)}F^{(i)}(e)}{u(e)})\le1+\epsilon$. So for every $\kappa u(e)$ units of flow routed through $e$ over the iterations, we increase its length by at least a factor $1+\epsilon$. Initially, its length is $\delta/u(e)$, and after $t-1$ iterations, since $D(t-1)<1$, the length of $e$ satisfies $\ell^{(t-1)}(e)\le D(t-1)/u(e)<1/u(e)$. Therefore the total amount of flow through $e$ in the first $t-1$ phases is strictly less than $\kappa\log_{1+\epsilon}\frac{1/u(e)}{\delta/u(e)}=\kappa\log_{1+\epsilon}1/\delta$ times its capacity. Scaling the flow down by $\kappa\log_{1+\epsilon}1/\delta$, we obtain a capacity-respecting flow.

To show that the scaled down flow has cost at most $B$, similarly observe that on each iteration, we route a flow of cost $b(z^{(i)}F^{(i)})\le B$ and increase the length $\phi^{(i)}$ by a factor $(1+\frac\epsilon\kappa\cdot\frac{b(z^{(i)}F^{(i)})}B)\le1+\epsilon/\kappa$ over the previous length $\phi^{(i-1)}$. So for every $B$ cost of flow routed, we increase the length $\phi^{(i)}$ by at least a factor $1+\epsilon/\kappa$. And for every $\kappa B$ cost of flow routed, the length increases by at least a factor $(1+\epsilon/\kappa)^\kappa\ge1+\epsilon$. Initially, $\phi^{(0)}=\delta/B$, and after $t-1$ iterations, since $D(t-1)<1$, the length satisfies $\phi^{(t-1)}\le D(t-1)/B<1/B$. Therefore the total cost of flow in the first $t-1$ phases is strictly less than $\kappa B\log_{1+\epsilon}\frac{1/B}{\delta/B}=\kappa B\log_{1+\epsilon}1/\delta$. Scaling the flow down by $\kappa\log_{1+\epsilon}1/\delta$, we obtain a flow with cost at most $B$.
\end{proof}
Recall that $\bar F=\frac{z^{(1)}F^{(1)}+z^{(2)}F^{(2)}+\cdots+z^{(t-1)}F^{(t-1)}}{z^{(\le t-1)}}$ and $\lambda=\frac{z^{(\le t-1)}}{\kappa\log_{1+\epsilon}1/\delta}$, so $\lambda\bar F$ is capacity-respecting with cost at most $B$, fulfilling Condition~\ref{item:feasibility}. To establish Condition~\ref{item:approx-factor}, we use Equation~(\ref{eq:to-establish-condition-2}) and the fact that $\delta=m^{1/\epsilon}$ to obtain
\[ \frac\lambda{\lambda^*}=\frac{z^{(\le t-1)}}{\kappa\log_{1+\epsilon}1/\delta}\cdot\frac1\beta\ge \frac{\ln\frac1{(1+\epsilon')m\delta}}{\epsilon'\cdot\kappa\log_{1+\epsilon}1/\delta} = \frac{\ln\frac1{(1+\epsilon')m\delta}}{\epsilon s\log_{1+\epsilon}1/\delta} \ge \frac{1-O(\epsilon)}s .\]

\paragraph{Running time.}
Recall from above that $\lambda\bar F$ is capacity-respecting with cost at most $B$, so $\lambda\le\beta$. Since $\lambda=\frac{z^{(\le t-1)}}{\kappa\log_{1+\epsilon}1/\delta}$, we obtain $z^{(\le t-1)}\le\beta\kappa\log_{1+\epsilon}1/\delta$. On each iteration $i\le t-1$, either $z^{(i)}=1$ or $z^{(i)}<1$, and the latter case implies that $b(z^{(i)}F^{(i)})=B$, which means $\phi^{(i)}=\phi^{(i-1)}(1+\epsilon/\kappa)$. Initially, $\phi^{(0)}=\delta/B$, and after $t-1$ iterations, since $D(t-1)<1$, we have $\phi^{(t-1)}\le D(t-1)/B<1/B$. So the event $\phi^{(i)}=\phi^{(i-1)}(1+\epsilon/\kappa)$ can happen at most $\log_{1+\epsilon/\kappa}1/\delta$ times. It follows that $z^{(i)}<1$ for at most $\log_{1+\epsilon/\kappa}1/\delta$ values of $i\le t-1$. Since $z^{(\le t-1)}\le\beta\kappa\log_{1+\epsilon}1/\delta$, we have $z^{(i)}=1$ for at most $\beta\kappa\log_{1+\epsilon}1/\delta$ many values of $i\le t-1$. Therefore, the number of iterations $t$ is at most
\[ \log_{1+\epsilon/\kappa}1/\delta+\beta\kappa\log_{1+\epsilon}1/\delta+1=O(\beta\kappa\epsilon^{-2}\log m) .\]
By scaling all edge capacities and costs by various powers of two, we can ensure that $\beta\in[1,2]$ on at least one guess, so the number of iterations is $O(\kappa\log_{1+\epsilon}1/\delta)=O(\kappa\epsilon^{-2}\log m)$. Doing so also ensures that $\lambda^*=\beta\ge1$ as we had previously assumed. For incorrect guesses, we terminate the algorithm above after $O(\kappa\epsilon^{-2}\log m)$ iterations to not waste further computation. Among all guesses, we take the one with maximum $\lambda$ that satisfies feasibility (Condition~\ref{item:feasibility}). Since there are $O(\log n)$ relevant powers of two, the running time picks up an overhead of $O(\log n)$.

\paragraph{Flow oracle.}
Finally, if oracle $\mathcal O$ outputs a flow oracle, then the algorithm can return the following flow oracle for the output flow $F$: on input subset $S$ of pairs of vertices of $G$, query the flow oracles of the flows $F^{(1)},F^{(2)},\ldots,F^{(t-1)}$ to obtain flows $F^{(1)}_S,F^{(2)}_S,\ldots,F^{(t-1)}_S$, respectively, and output
\[ F_S=\frac{z^{(1)}F^{(1)}_S+z^{(2)}F^{(2)}_S+\cdots+z^{(t-1)}F^{(t-1)}_S}{z^{(1)}+z^{(2)}+\cdots+z^{(t-1)}} .\]
Querying each flow oracle $F^{(i)}_S$ takes work $Q_w$ and depth $Q_d$ and can be done in parallel. Since there are $t-1\le O(\kappa\eps^{-2}\log m)$ many flow oracles, the total work is $\tilde O(\kappa\eps^{-2}Q_w)$ and the total depth is $\tilde O(Q_d)$.

\subsection{Flow Boosting: Thresholded Instantiation}

By choosing an appropriate convex set $\mathcal F$ of flows, we directly obtain boosting theorems for computing a flow of a given target value $\tau$ that partially routes a given flow demand $D$. Note that this formulation includes both min-cost concurrent and non-concurrent multicommodity flow: for concurrent we set $\tau=|D|$ forcing the flow to route the entire demand $D$, and for non-concurrent we set $\tau=1$ which is less than or equal to the value of any nonzero demand pair (when the demand is integral).
\begin{theorem}[Thresholded Flow Boosting]\label{thm:boosting}

Let $G=(V,E,u,b)$ be a graph with capacity function $u$ and cost function $b$. Let $B\ge0$ be the cost budget. 
Let $D:V\times V\to\mathbb R_{\ge0}$ be a flow demand, and let $\epsilon,s,\kappa\ge0$ be parameters. Let $\tau$ be the target flow value parameter such that there exists a capacity-respecting flow partially routing $D$ of value at least $\tau$.

Suppose an algorithm is given an oracle $\mathcal O$ that, given any integral edge length function $\ell:E\to\{1,2,\ldots,O(m^{1/\epsilon}N/\epsilon)\}$, computes the edge representation of a flow $F$ of value $\val(F) \geq \tau$ partially routing $D$ such that
 \begin{itemize}
 \item \textup{Length slack $s$:} $\sum_{e\in E}F(e)\cdot\ell(e)\le s\cdot\sum_{e\in E}F^*(e)\cdot\ell(e)$ for any capacity-respecting flow $F^*$ of value $\val(F^{*}) \geq \tau$ partially routing $D$, and
 \item \textup{Congestion slack $\kappa$:} $F(e) \le \kappa u(e)$ for all $e\in E$, i.e., $F(e) / \kappa$ is capacity-respecting.
 \end{itemize}
Then, there is a deterministic algorithm that makes $O(\kappa\epsilon^{-2}\log^2n)$ calls to oracle $\mathcal{O}$ and outputs the edge representation of a flow $\bar F$ of value $\val(\bar{F}) \geq \tau$ partially routing $D$ and scalar $\lambda \ge 0$ such that
 \begin{enumerate}
 \item \textup{Feasibility:} The flow $\lambda\bar F$ is capacity-respecting with cost at most $B$, and
 \item \textup{Approximation factor $\approx s$:} Let $\lambda^*$ be the maximum value such that there exists a flow $F^*$ of value $\val(F^{*}) \geq \tau$ partially routing $D$ where $\lambda^*F^*$ is capacity-respecting with cost at most $B$. Then, $\lambda\ge\frac{1-O(\epsilon)}s\lambda^*$.
 \end{enumerate}
Moreover, the flow $\bar F$ is a convex combination of the flows returned by oracle $\mathcal O$, and this convex combination can be output as well.

Furthermore, if the oracle $\mathcal O$ also outputs a flow oracle with query work $Q_w$ and query depth $Q_d$, then the algorithm can also output a flow oracle for $\bar{F}$ with query work $\tilde O(\kappa\epsilon^{-2}Q_w)$ work and $\tilde O(Q_d)$ depth.

The algorithm takes $\tilde O(\kappa\epsilon^{-2}m)$ work and $\tilde O(\kappa\epsilon^{-2})$ time outside of the oracle calls.
\end{theorem}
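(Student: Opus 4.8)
The plan is to deduce \Cref{thm:boosting} from \Cref{thm:boosting-template} by instantiating the generic convex set $\mathcal F$ with the appropriate set of flows for the thresholded problem. Specifically, I would set
\[
\mathcal F = \{F : F \text{ partially routes } D,\ \val(F) \ge \tau\}.
\]
First I would check that $\mathcal F$ is convex: if $F_1, F_2$ partially route $D$ with value at least $\tau$, then for $\theta \in [0,1]$ the flow $\theta F_1 + (1-\theta) F_2$ has $D_{\theta F_1 + (1-\theta)F_2} = \theta D_{F_1} + (1-\theta) D_{F_2} \le D$ pointwise, and its value is $\theta \val(F_1) + (1-\theta)\val(F_2) \ge \tau$, so it lies in $\mathcal F$. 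Next I would check that $\mathcal F$ contains a capacity-respecting flow: this is exactly the hypothesis that there exists a capacity-respecting flow partially routing $D$ of value at least $\tau$. So $\mathcal F$ satisfies the preconditions of \Cref{thm:boosting-template}.

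Then I would observe that the oracle $\mathcal O$ hypothesized in \Cref{thm:boosting} is precisely an oracle of the type required by \Cref{thm:boosting-template} for this choice of $\mathcal F$: given an integral length function $\ell : E \to \{1, \dots, O(m^{1/\epsilon}N/\epsilon)\}$, it returns (the edge representation of) a flow $F \in \mathcal F$ — i.e., $\val(F) \ge \tau$ and $F$ partially routes $D$ — with the length-slack guarantee $\sum_e F(e)\ell(e) \le s \sum_e F^*(e)\ell(e)$ for every capacity-respecting $F^* \in \mathcal F$ (that is, every capacity-respecting flow of value at least $\tau$ partially routing $D$), and with congestion slack $F(e) \le \kappa u(e)$. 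These match the two bulleted conditions in \Cref{thm:boosting-template} verbatim once $\mathcal F$ is substituted. Similarly, if $\mathcal O$ outputs a flow oracle with query work $Q_w$ and depth $Q_d$, that is carried through unchanged.

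Applying \Cref{thm:boosting-template} directly now yields a deterministic algorithm making $O(\kappa\epsilon^{-2}\log^2 n)$ oracle calls that outputs the edge representation of a flow $\bar F \in \mathcal F$ and a scalar $\lambda \ge 0$ with: (i) $\lambda \bar F$ capacity-respecting with cost at most $B$; and (ii) $\lambda \ge \frac{1 - O(\epsilon)}{s}\lambda^*$, where $\lambda^*$ is the maximum value such that some $F^* \in \mathcal F$ has $\lambda^* F^*$ capacity-respecting with cost at most $B$ — and by the definition of $\mathcal F$, this $F^*$ is exactly a flow of value at least $\tau$ partially routing $D$, matching the statement of \Cref{thm:boosting}. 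Since $\bar F \in \mathcal F$, it has $\val(\bar F) \ge \tau$ and partially routes $D$, as required. The moreover clause (convex combination of returned flows) and the flow-oracle clause (query work $\tilde O(\kappa\epsilon^{-2}Q_w)$, depth $\tilde O(Q_d)$), as well as the $\tilde O(\kappa\epsilon^{-2}m)$ work and $\tilde O(\kappa\epsilon^{-2})$ depth outside oracle calls, all transfer immediately from \Cref{thm:boosting-template}.

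Since this is essentially a specialization argument, I do not anticipate a genuine obstacle; the only points requiring care are the convexity check for $\mathcal F$ (routine, as above) and making sure the "value at least $\tau$" constraint interacts correctly with the scaling $\lambda \bar F$ — note that $\bar F$ itself, not $\lambda \bar F$, must lie in $\mathcal F$, which is fine because \Cref{thm:boosting-template} guarantees $\bar F \in \mathcal F$ and separately that $\lambda \bar F$ is feasible. I would also remark that the two stated use cases (concurrent flow with $\tau = |D|$, forcing $D_F = D$; non-concurrent with $\tau = 1 \le D(a,b)$ for every demand pair when $D$ is integral) are immediate sanity checks on the formulation but require no further argument.
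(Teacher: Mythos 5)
Your proposal is correct and matches the paper's proof exactly: the paper likewise defines $\mathcal F$ as the set of flows of value at least $\tau$ partially routing $D$ and applies the flow boosting template (\Cref{thm:boosting-template}). Your additional verification of convexity and of the oracle's compatibility simply spells out details the paper leaves implicit.
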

\begin{proof}
Let $\mathcal{F}$ be the set of flows of value at least $\tau$ partially routing $D$. Apply \Cref{thm:boosting-template}.
\end{proof}

\section{Constant-Approximate Multi-commodity Flow}
\label{sec:general flow}

This section concludes with our constant-approximate algorithms for $k$-commodity flow in $(m+k)^{1+\eps}$ time. The main results are stated in \Cref{thm:constapprox-conc-nonconc-flow}. 
Below, we first implement the oracle for flow boosting in \Cref{sec:oracle-for-boosting} and then apply the oracle to the flow boosting framework in \Cref{sec:apply-boosting} to obtain our results.

\subsection{Oracle for Flow Boosting}
\label{sec:oracle-for-boosting}
In this subsection we construct the oracle required for \Cref{thm:boosting}.



\begin{thm}[Oracle for Flow-Boosting]\label{thm:fast-unboosted-flow}
Let $G$ be a graph with integral edge lengths $\ell \geq 1$ and capacities $u \geq 1$, $D$ an integral $\deg_{G}$-respecting demand, $\tau$ an integral required flow amount such that there exists a value-$\tau$ capacity-respecting flow partially routing $D$ and $\epsilon \in (\log^{-c} N, 1)$ for a small enough constant $c$ a tradeoff parameter. Then, $\textsc{UnboostedFlow}(G, D, \tau, \eps)$ (\Cref{alg:fast-unboosted-flow}) returns a flow oracle $\mathcal{O}_F$ for a flow $F$ of value $\mathrm{val}(F) \geq \tau$ partially routing $D$, with
\begin{enumerate}
\item Length slack $s$: $\sum_{e \in E} F(e) \cdot \ell(e) \leq s \cdot \sum_{e \in E} F^{*}(e) \cdot \ell(e)$ for any capacity-respecting flow $F^{*}$ of value $\mathrm{val}(F^{*}) \geq \tau$ partially routing $D$, and
\item Congestion Slack $\kappa$: $F(e) \leq \kappa u(e)$ for all $e \in E$, i.e., $F(e) / \kappa$ is capacity-respecting,
\end{enumerate}
for length slack $s = \exp(\poly(1 / \eps))$ and congestion slack $\kappa = N^{\poly(\eps)}$.

The algorithm has depth and the flow oracle has query depth $N^{\poly(\eps)}$, and the algorithm has work and the flow oracle query work $(|E| + \mathrm{supp}(D)) \cdot N^{\poly(\eps)}$. The produced flow has support size $(|E| + \mathrm{supp}(D)) N^{\poly(\eps)}$.
\end{thm}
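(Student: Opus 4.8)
The plan is to reduce to running the low‑step multi‑commodity flow algorithm of \Cref{lem:polyh-nonconcflow} \emph{on a low‑step emulator} of $G$ (so that the $\poly(t)$ factor in its running time collapses to $N^{\poly\eps}$, since on the emulator $t=\exp(\poly 1/\eps)=N^{\poly\eps}$ in the regime $\eps>\log^{-c}N$), and then to map the resulting flow back to $G$ with $\textsc{FlowMap}$. Concretely, first compute $G'=\textsc{LowStepEmu}(G,\eps)$ from \Cref{thm:emu algo}: a $t$‑step emulator of $A:=\deg_G$ with length slack $s_E=\exp(\poly 1/\eps)$, congestion slack $\kappa_E=N^{\poly\eps}$, and $|E(G')|\le mN^{\poly\eps}$, for $t=\exp(\poly 1/\eps)$. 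Let $\widehat G'$ be $G'$ with all capacities multiplied by $\kappa_E$ (and, by the standing integrality convention, rescale so lengths and capacities are positive integers). For each guess $T\in\{1,2,4,\ldots\}$ up to a generous bound on the maximum possible total length of a capacity‑respecting flow times $s_E$, run $F'_T:=\textsc{LowStepNonConcFlow}(\widehat G',D,t,T,\eps)$; among the guesses with $\val(F'_T)\ge\tau$, keep the one with smallest $T$, call it $F'$, and store it as the explicit path decomposition returned by that algorithm. The oracle $\mathcal O_F$ represents the flow $F$ in $G$ obtained from $F'$ by expanding each emulator edge along the (integral) embedding $\Pi_{G'\to G}$; on a query $S$ it forms the subflow $F'_S$ of $F'$ on flow‑paths with both endpoints in $S$, computes $\flow_{F'_S}$, and returns $\textsc{FlowMap}(G',\flow_{F'_S})$.

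For correctness I would argue as follows. Since $D$ is $\deg_G$‑respecting, any capacity‑respecting flow partially routing $D$ routes an $A$‑respecting demand with congestion $1$; let $F^*$ be one of minimum total length $L^*:=\totlen_G(F^*)$ among those of value $\ge\tau$. By \Cref{def:emu} there is a flow $F'_*$ in $G'$ routing the same demand with $\congest_{G',F'_*}\le\kappa_E$, $\step_{F'_*}\le t$, and a path‑mapping from $F^*$ to $F'_*$ of length slack $s_E$; summing the path‑mapping bound over flow‑paths gives $\totlen_{G'}(F'_*)\le s_E L^*$. Hence, viewed in $\widehat G'$, the flow $F'_*$ is capacity‑respecting, has step $\le t$, value $\val(F'_*)=\val(F^*)\ge\tau$, and total length $\le s_E L^*$. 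Thus for every guess $T\ge s_E L^*$ the flow $F'_*$ is a feasible competitor in \Cref{lem:polyh-nonconcflow}, so $\val(F'_T)\ge\tau$; the smallest such power‑of‑two $T^\dagger$ satisfies $T^\dagger\le 2s_EL^*$, so the chosen $F'$ has $\val(F')\ge\tau$, $\congest_{\widehat G',F'}\le N^{\poly\eps}$, and $\totlen_{G'}(F')=\totlen_{\widehat G'}(F')\le T^\dagger\cdot\exp(\poly 1/\eps)\le 2s_EL^*\cdot\exp(\poly 1/\eps)$. Finally $G'$ embeds into $G$ with congestion $1$ and length slack $1$, so $F$ routes the same demand as $F'$ (hence $\val(F)\ge\tau$ and $F$ partially routes $D$), has $\congest_{G,F}\le\congest_{G',F'}\le\kappa_E\cdot N^{\poly\eps}=N^{\poly\eps}$, and $\totlen_G(F)\le\totlen_{G'}(F')\le\exp(\poly 1/\eps)\cdot L^*$; since $L^*$ is the minimum total length over all capacity‑respecting flows of value $\ge\tau$ partially routing $D$, this gives $\sum_eF(e)\ell(e)\le s\cdot\sum_eF^*(e)\ell(e)$ for every such $F^*$ with $s=\exp(\poly 1/\eps)$, and $\kappa=N^{\poly\eps}$, as required.

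For efficiency and the oracle: $\textsc{LowStepEmu}$ costs $mN^{\poly\eps}$ work and $N^{\poly\eps}$ depth; each call to $\textsc{LowStepNonConcFlow}$ on $\widehat G'$ costs $(|E(\widehat G')|+\supp(D))\cdot\poly(t)N^{\poly\eps}=(m+\supp(D))N^{\poly\eps}$ work and $\poly(t)N^{\poly\eps}=N^{\poly\eps}$ depth, and there are $N^{\poly\eps}$ guesses, so construction is within budget. The flow $F'$ has support size $(|E(\widehat G')|+\supp(D))N^{\poly\eps}=(m+\supp(D))N^{\poly\eps}$ by \Cref{lem:polyh-nonconcflow}, and expanding its $\le ts=N^{\poly\eps}$‑step flow‑paths along the integral bounded‑length embedding $\Pi_{G'\to G}$ yields a flow $F$ of support size $(m+\supp(D))N^{\poly\eps}$. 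For a query $S$: forming $F'_S$ and $\flow_{F'_S}$ is $(m+\supp(D))N^{\poly\eps}$ work and $\tilde O(1)$ depth, and $\textsc{FlowMap}(G',\flow_{F'_S})$ costs $mN^{\poly\eps}$ work and $\tilde O(1/\poly\eps)$ depth; because $\textsc{FlowMap}$ expands emulator edges edge‑by‑edge (linearly in the input edge representation, as in the proof of \Cref{lem:fast flow map}), its output is exactly $\flow_{F_S}$ for $F_S=\sum_{(a,b)\in S}F_{(a,b)}$ — the per‑commodity structure is preserved since the demand routed by $F'_S$, hence by its image, is supported on $S$. Thus query work is $(m+\supp(D))N^{\poly\eps}$ and query depth $N^{\poly\eps}$.

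I expect the \textbf{main obstacle} to be the correctness argument: exhibiting the competitor $F'_*$ inside the emulator via the \emph{existential} path‑mapping of \Cref{def:emu}, and checking that the length slack $s_E$ of the path‑mapping, the step/total‑length slack $s$ of $\textsc{LowStepNonConcFlow}$, and the length‑slack‑$1$ embedding $G'\to G$ compose into a single $\exp(\poly 1/\eps)$ total‑length slack while the value guarantee $\val(F)\ge\tau$ survives all three maps. A secondary technical point is verifying that $\textsc{FlowMap}$ is ``commodity‑linear'' enough to serve as the query routine, which follows from its edge‑by‑edge expansion description; and a routine point is confirming that the rescaling of $\widehat G'$ to integral lengths/capacities and the choice of guessing range keep the length function within the range tolerated downstream.
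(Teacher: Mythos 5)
Your proposal is correct and follows essentially the same route as the paper: build the low-step emulator via \Cref{thm:emu algo}, scale its capacities up by the congestion slack, search over the total-length budget $T$ while invoking \Cref{lem:polyh-nonconcflow} (the paper binary-searches over $[0,N]$ where you geometrically guess powers of two — an immaterial difference costing only a factor $2$ in $T$), and answer queries by restricting the emulator flow and applying $\textsc{FlowMap}$. The correctness argument — exhibiting the competitor $F'_*$ in the emulator via the path-mapping, composing the length slacks, and mapping back through the congestion-$1$, length-slack-$1$ embedding — matches the paper's proof.
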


\begin{algorithm}[h]
\caption{Fast Unboosted Flow: $\textsc{UnboostedFlow}(G, D, \tau, \eps)$ \label{alg:fast-unboosted-flow}}

\textbf{Input:} Graph $G$ with integral edge lengths $\ell \geq 1$ and capacities $u \geq 1$, integral $\deg_{G}$-respecting demand $D$, integral required flow amount $\tau$ such that there exists a value-$\tau$ capacity-respecting flow partially routing $D$, tradeoff parameter $\epsilon$.\\
\textbf{Output:} A flow oracle $\mathcal{O}_F$ for a flow $F$ of value $\mathrm{val}(F) \geq \tau$ partially routing $D$ with length slack $s = \exp(\poly(1 / \eps))$, congestion slack $\kappa = N^{\poly(\eps)}$ and support size $(|E| + \mathrm{supp}(D)) N^{\poly(\eps)}$.
\begin{enumerate}
\item Let $t, G' = \textsc{LowStepEmu}(G, \eps)$.
\item Let $\kappa'$ be the congestion slack of $G'$, and let $G'' = \kappa' G'$ be $G'$ with capacities scaled up by $\kappa'$. 
\item Let $T_{\mathrm{low}} = 0$ and $T_{\mathrm{high}} = N$.
\item Let $F_{\mathrm{res}} = 0$.
\item While $T_{\mathrm{low}} \leq T_{\mathrm{high}}$:
\begin{enumerate}
    \item Let $T = \left\lfloor \frac{T_{\mathrm{low}} + T_{\mathrm{high}}}{2} \right\rfloor$.
    \item Let $F' = \textsc{LowStepNonConcFlow}(G'', D, t, T, \eps)$.
    \item If $\mathrm{val}(F') \geq \tau$, \label{line:value-check}
    \begin{itemize}
        \item Set $T_{\mathrm{high}} = T - 1$.
        \item Set $F_{\mathrm{res}} = F'$.
    \end{itemize}
    \item Else, set $T_{\mathrm{low}} = T + 1$.
\end{enumerate}
\item Return $\mathcal{O}_F = (G', F_{\mathrm{res}})$ with query function $\mathcal{O}_{F}(S) = \textsc{FlowMap}(G', F'_{S})$.
\end{enumerate}
\end{algorithm}

\begin{proof}[Proof of \Cref{thm:fast-unboosted-flow}]
Let $F^{*}$ be the capacity-respecting flow of value at least $\tau$ routing a subdemand of $D$ on $G$ of minimum total length $T^{*}$. \Cref{thm:emu algo} guarantees that $G'$ is a $t$-step emulator for $\deg_{G}$ with length slack $s' = \exp(\poly(1 / \eps))$ and congestion slack $\kappa' = N^{\poly(\eps)}$. Since $D$ is $\deg_{G}$-respecting, there is a flow $F^{*\prime}$ on $G'$ routing the same demand as $F^{*}$ of step-length $t$, congestion at most $\kappa'$ and total length $T^{*} s'$. As $G''$ is simply $G'$ with capacities scaled up by $\kappa'$, the flow $F^{*\prime}$ on $G''$ is a capacity-respecting flow partially routing $D$ of step length $t$, total length $T^{*} s'$, and value at least $\tau$.

\Cref{alg:avglen-lowstep-nonconcflow} guarantees that for input $(G'', D, t, T, \eps)$, for every capacity-respecting flow $F^{*}$ partially routing $D$ of step length $t$ and total length $T$, the returned flow $F$ has value $\val(F) \geq \val(F^{*})$. Thus, notably when $T \geq T^{*} s'$, the returned flow has value at least $\tau$. Let $T'$ be the minimum value of $T$ for which the returned flow $F_{\mathrm{res}}$ had value at least $\tau$ on line \ref{line:value-check}. By the above argument, $T' \leq \lceil T^{*} s' \rceil$.

\Cref{alg:avglen-lowstep-nonconcflow} guarantees that $F_{\mathrm{res}}$ on $G''$ is a congestion-$\kappa''$ flow partially routing $D$ of step length $ts''$, total length $T' s''$ and value at least $\tau$ for the congestion slack $\kappa'' = N^{\poly(\eps)}$ and length slack $s'' = \exp(\poly(1 / \eps))$ of the flow algorithm. The flow has the same step length, total length and value, but congestion $\kappa' \kappa'' = \kappa = N^{\poly(\eps)}$ on $G'$. Since $G'$ can be embedded into $G$ with congestion $1$ and length slack $1$, the flow $F$ that is $F_{\mathrm{res}}$ mapped from $G'$ to $G$ by the embedding is a congestion-$\kappa = \kappa' \kappa''$ flow partially routing $D$ of value at least $\tau$ of total length $T^{*} s = \lceil T^{*} s' \rceil s'' = T' s''$ for length slack $s = \exp(\poly(1 / \eps))$ and congestion slack $\kappa = N^{\poly(\eps)}$\footnote{Note that since $\tau$ is a integer and $\ell \geq 1$, $T^{*} \geq 1$ and taking ceil does not affect the asymptotic congestion slack}, as desired. As $|E(G')| = |E| N^{\poly(\eps)}$ and the embedding from $G'$ to $G$ maps edges to paths, thus not increasing the flow support size, the flow support size is $(|E| + \supp(D)) \cdot N^{\poly(\eps)}$ as desired.

The algorithm consists of one call to \textsc{LowStepEmu} and $\tilde{O}(1)$ calls to $\textsc{LowStepNonConcFlow}$ on a $|E| N^{\poly(\eps)}$-edge graph, thus its work and depth are $(|E| + \supp(D)) \cdot N^{\poly(\eps)}$ and $N^{\poly(\eps)}$ respectively.
\end{proof}

\subsection{Constant-Approximate Min Cost Multi-Commodity Flow}
\label{sec:apply-boosting}

The following theorem is a generalisation of concurrent and non-concurrent flow, which gives concurrent flow when $\tau = |D|$ and non-concurrent flow when $\tau = 1$.


\begin{theorem}[Constant-Approximate Multi-Commodity Flow]\label{thm:const-approx-mcmcflow}
    Let $G = (V, E, u, b)$ be a connected graph with edge capacities $u \geq 1$ and costs $b \geq 0$. Let $B \geq 0$ be the cost budget. Let $D : V \times V \rightarrow \mathbb{N}$ be a integral flow demand, $\tau \in [|D|]$ a integral required flow amount and $\eps \in (\log^{-c} N, 1)$ for some sufficiently small constant $c$ be a tradeoff parameter. Then, $\textsc{MCMCFlow}(G, B, D, \tau, \eps)$ returns a flow oracle $\mathcal{O}_F$ for a flow $F$ of value $\val(F) \geq \tau$ partially routing $D$ and a value $\lambda \geq 0$, such that
    \begin{itemize}
        \item Feasibility: the flow $\lambda F$ is capacity-respecting with cost at most $B$, and
        \item Approximation factor $1 / \exp(\poly(1 / \eps))$: let $\lambda^{*}$ be the maximum value such that there exists a flow $F^{*}$ of value $\val(F^{*}) \geq \tau$ partially routing $D$ where $\lambda^{*} F^{*}$ is capacity-respecting with cost at most $B$. Then, $\lambda \geq \frac{1}{\exp(\poly(1 / \eps))} \lambda^{*}$.
    \end{itemize}
    The algorithm has work and the flow oracle query work $(|E| + \supp(D)) N^{\poly(\eps)}$, and the algorithm has depth and the flow oracle query depth $N^{\poly(\eps)}$. The produced flow $F$ has support size $(|E| + \supp(D)) N^{\poly(\eps)}$.
\end{theorem}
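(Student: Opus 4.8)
This theorem is a composition of two earlier results: the unboosted flow oracle of \Cref{thm:fast-unboosted-flow} (which does all the real work — emulator construction plus low-step non-concurrent flow), and the thresholded flow-boosting machinery of \Cref{thm:boosting} (which upgrades a congestion-slack-$\kappa$ oracle into a genuine $O(s)$-approximation while respecting the cost budget). Concretely, I would define the oracle $\mathcal O$ demanded by \Cref{thm:boosting} by: on an integral edge length function $\ell$, set $\mathcal O(\ell) := \textsc{UnboostedFlow}((V,E,u,\ell),D,\tau,\eps)$, and to extract the edge representation of its flow $F$, query the returned flow oracle on the set of all vertex pairs. By \Cref{thm:fast-unboosted-flow}, this $F$ has value $\val(F)\ge\tau$, partially routes $D$, satisfies length slack $s=\exp(\poly(1/\eps))$ (i.e.\ $\sum_e F(e)\ell(e)\le s\sum_e F^*(e)\ell(e)$ for every capacity-respecting $F^*$ of value $\ge\tau$ partially routing $D$) and congestion slack $\kappa=N^{\poly(\eps)}$; moreover $\mathcal O$, and the flow oracle it outputs, run in work $(|E|+\supp(D))N^{\poly(\eps)}$ and depth $N^{\poly(\eps)}$, producing a flow of support size $(|E|+\supp(D))N^{\poly(\eps)}$. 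This matches exactly the oracle hypothesis of \Cref{thm:boosting}, with target value $\tau$, slacks $s,\kappa$ as above, and query parameters $Q_w=(|E|+\supp(D))N^{\poly(\eps)}$, $Q_d=N^{\poly(\eps)}$.

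\textbf{Reading off the conclusion.} Apply \Cref{thm:boosting} to the graph $(V,E,u,b)$, budget $B$, demand $D$, threshold $\tau$, and the oracle $\mathcal O$ (with a tradeoff parameter $\Theta(\eps)$, see below). It returns a flow $\bar F$ of value $\ge\tau$ partially routing $D$ and a scalar $\lambda\ge0$ with $\lambda\bar F$ capacity-respecting of cost $\le B$ (Feasibility) and $\lambda\ge\frac{1-O(\eps)}{s}\lambda^*$, where $\lambda^*$ is verbatim the quantity defined in the statement of \Cref{thm:const-approx-mcmcflow}. Since $s=\exp(\poly(1/\eps))$ and $1-O(\eps)\ge\tfrac12$ for small enough $\eps$, this is the claimed $\frac1{\exp(\poly(1/\eps))}$-approximation. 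The ``furthermore'' clause of \Cref{thm:boosting} turns the flow oracle output by $\mathcal O$ into a flow oracle $\mathcal O_{\bar F}$ for $\bar F$ with query work $\tilde O(\kappa\eps^{-2}Q_w)=(|E|+\supp(D))N^{\poly(\eps)}$ and query depth $\tilde O(Q_d)=N^{\poly(\eps)}$. Since $\bar F$ is a convex combination of $O(\kappa\eps^{-2}\log^2 n)=N^{\poly(\eps)}$ of the oracle's flows, each of support size $(|E|+\supp(D))N^{\poly(\eps)}$, we get $\supp(\bar F)\le(|E|+\supp(D))N^{\poly(\eps)}$. For running time, boosting makes $O(\kappa\eps^{-2}\log^2 n)=N^{\poly(\eps)}$ sequential calls to $\mathcal O$, each costing $(|E|+\supp(D))N^{\poly(\eps)}$ work and $N^{\poly(\eps)}$ depth, plus $\tilde O(\kappa\eps^{-2}m)$ extra work and $\tilde O(\kappa\eps^{-2})$ extra depth; these total to $(|E|+\supp(D))N^{\poly(\eps)}$ work and $N^{\poly(\eps)}$ depth, as required. (As noted in the statement, taking $\tau=|D|$ and $\tau=1$ recovers concurrent and non-concurrent flow respectively.)

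\textbf{Points needing care / main obstacle.} Two preconditions of \Cref{thm:fast-unboosted-flow} must be established. First, it requires $D$ to be $\deg_G$-respecting; I would handle this by a preprocessing step replacing $D$ with a $\deg_G$-respecting subdemand — legitimate because any capacity-respecting flow routes a degree-respecting demand, so this does not affect the relevant $F^*$'s or $\lambda^*$. Second, it (like \Cref{thm:boosting}, whose convex set $\mathcal F$ of flows of value $\ge\tau$ partially routing $D$ must contain a capacity-respecting flow) presumes a capacity-respecting value-$\tau$ flow partially routing $D$ exists; this is the same standing hypothesis and is simply inherited. The genuinely fiddly part — the main obstacle — is that \Cref{thm:boosting} feeds $\mathcal O$ integral length functions as large as $O(m^{1/\eps}N/\eps)=N^{O(1/\eps)}$, which is super-polynomial, whereas the running-time bookkeeping in \Cref{thm:fast-unboosted-flow} is stated under the polynomially-bounded-lengths assumption. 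This is resolved by observing that $\textsc{UnboostedFlow}$ first reduces (via the low-step emulator of \Cref{thm:emu algo}) to a low-step flow problem with step bound $t=\exp(\poly(1/\eps))$ independent of $n$, after which all length-dependent work interacts with the lengths only through $\log(\max_e\ell(e))=O(\log N/\eps)$; tracking this through the construction replaces the effective ``$N$'' by $N^{O(1/\eps)}$, and rescaling the internal tradeoff parameter to $\eps'=\Theta(\eps)$ (chosen large-degree enough that $N^{O(1/\eps)\cdot\poly(\eps')}=N^{\poly(\eps)}$ and $\exp(\poly(1/\eps'))=\exp(\poly(1/\eps))$) recovers all the stated bounds. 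Beyond this, the proof is a mechanical instantiation of two black boxes proved earlier.
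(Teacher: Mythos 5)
There is a genuine gap: both preconditions you flag at the end are handled incorrectly, and the single idea the paper uses to discharge them is missing. The paper's proof scales all capacities and the cost budget by $\gamma=|D|$ before invoking \Cref{thm:boosting} with oracle $\textsc{UnboostedFlow}$, and returns $\lambda/\gamma$ at the end. This one rescaling does two jobs at once: (a) with capacities $\gamma u\ge|D|$ on every edge of a connected graph, a capacity-respecting flow routing all of $D$ (hence of value $\ge\tau$) trivially exists, which is a precondition of both \Cref{thm:boosting} and \Cref{thm:fast-unboosted-flow}; and (b) $\deg_{G'}(v)\ge\gamma=|D|\ge\sum_w D(v,w)$, so $D$ is $\deg_{G'}$-respecting as \Cref{thm:fast-unboosted-flow} requires. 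The equivalence ``$\lambda F$ respects $u$ with cost $\le B$ iff $(\lambda\gamma)F$ respects $\gamma u$ with cost $\le\gamma B$'' then transfers the guarantee back.

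Your substitutes for these two steps do not work. First, you claim the existence of a capacity-respecting value-$\tau$ flow is ``the same standing hypothesis and is simply inherited,'' but \Cref{thm:const-approx-mcmcflow} makes no such hypothesis — it only assumes $\tau\in[|D|]$, and the theorem is meaningful even when $\lambda^*<1$. Second, replacing $D$ by ``a $\deg_G$-respecting subdemand'' is not well-defined and can change $\lambda^*$: the optimal $F^*$ in the definition of $\lambda^*$ is not itself capacity-respecting (only $\lambda^*F^*$ is), so $D_{F^*}$ need not be $\deg_G$-respecting, and even when it is, it need not be dominated by whichever fixed truncation you chose; restricting to a particular subdemand can therefore strictly shrink the feasible set and lower the benchmark you are compared against. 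Your concern about the super-polynomial lengths $O(m^{1/\eps}N/\eps)$ fed by the boosting loop is a fair observation that the paper does not belabor, but it is orthogonal to the missing scaling argument, which is the actual content of this proof.
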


\begin{algorithm}[H]
\caption{Constant-Approximate MCMC Flow: $\textsc{MCMCFlow}(G, B, D, \tau, \eps)$ \label{alg:mcmcflow}}

\textbf{Input:} Connected graph $G$ with edge capacities $u \geq 1$ and costs $b \geq 0$, integral demand $D$, integral required flow amount $\tau$, tradeoff parameter $\epsilon$.\\
\textbf{Output:} A flow oracle $\mathcal{O}_F$ for a flow $F$ of value $\mathrm{val}(F) \geq \tau$ partially routing $D$ and a value $\lambda \geq 0$, such that $\lambda F$ is capacity-respecting with cost at most $B$, and $\lambda \geq (1 / \exp(\poly(1 / \eps))) \lambda^{*}$ for the optimal $\lambda^{*}, F^{*}$ pair.
\begin{enumerate}
\item Let $\gamma = |D|$.
\item Let $\eps' = O(1)$ be small enough that $1 - O(\eps') \geq \frac{1}{2}$ in $\Cref{thm:boosting}$.
\item Let $\mathcal{O}_{F}$, $\lambda$ be the flow-scalar pair returned by \Cref{thm:boosting} on graph $G$ with capacities $\gamma u$, costs $b$, total cost budget $\gamma B$, demand $D$, required flow amount $\tau$, parameters $(\eps', \exp(\poly(1 / \eps)), N^{\poly(\eps)})$ and flow oracle $\textsc{UnboostedFlow}(G, D, \tau, \eps)$.
\item Return $\mathcal{O}_{F}$, $\lambda / \gamma$.
\end{enumerate}
\end{algorithm}

\begin{proof}[Proof of \Cref{thm:const-approx-mcmcflow}]
For a flow $F$, parameter $\lambda$ and value $\gamma \geq 0$, the following are equivalent:
\begin{itemize}
    \item $\lambda F$ is capacity-respecting with capacities $u$ and has cost at most $B$,
    \item $(\lambda \gamma) F$ is capacity-respecting with capacities $\gamma u$ and has cost at most $\gamma B$,
\end{itemize}
but scaling all capacities and the cost bound by $\gamma = |D|$ guarantees that there exists a capacity-respecting flow partially routing $D$ of value at least $\tau$, which is required by \Cref{thm:boosting}.

$\textsc{UnboostedFlow}(G, D, \tau, \eps)$ is a oracle function of the kind required for \Cref{thm:boosting} for length slack $s = \exp(\poly(1 / \eps))$ and congestion slack $\kappa = N^{\poly(\eps)}$. Note that $D$ is $\deg_{G'}$-respecting for graph $G'$ that is $G$ with capacities $\gamma u$ as $u \geq 1$.

The flow $F$ returned as a flow oracle $\mathcal{O}_F$ and value $\lambda$ returned by \Cref{thm:boosting} are guaranteed to satisfy the required properties: $F$ partially routes $D$ and has value at least $\tau$, $\lambda F$ is capacity-respecting with cost at most $B$, and for the maximum value $\lambda^*$ such that there exists a flow $F^{*}$ of value $\val(F^{*}) \geq \tau$ partially routing $D$ where $\lambda^{*} F^{*}$ is capacity-respecting with cost at most $B$, $\lambda \geq \frac{1 - O(\eps')}{s} \lambda^* \geq \frac{1}{2s} \lambda^{*} = \frac{1}{\exp(\poly(1 / \eps))} \lambda^{*}$.

The flow support size is at most the product of the support size $(|E| + |\supp(D)|) N^{\poly(\eps)}$ of flows returned by \Cref{thm:fast-unboosted-flow} and the number of oracle calls $O(\kappa \eps^{-2} \log^2 n) = N^{\poly(\eps)}$, giving the desired bound of $(|E| + |\supp(D)|) N^{\poly(\eps)}$.

For work and depth, \Cref{thm:boosting} makes $O(\kappa \eps'^{-2} \log^2 n) = N^{\poly(\eps)}$ calls to the oracle, which has work $(|E| + |\supp(D)|) N^{\poly(\eps)}$ and depth $N^{\poly(\eps)}$, and takes $\tilde{O}(\kappa \eps'^{-2} m) = |E| \cdot N^{\poly(\eps)}$ work and has depth $\tilde{O}(\kappa \eps'^{-2}) = N^{\poly(\eps)}$ outside the oracle calls. This gives total work $(|E| + |\supp(D)|) N^{\poly(\eps)}$ and depth $N^{\poly(\eps)}$ as desired. 
\end{proof}

\subsection{Constant-Approximate Concurrent/Non-Concurrent Flow}

Finally, we prove a formal version of \Cref{thm:informal-constapprox-cnc-flow}, first giving formal definitions for concurrent and non-concurrent flow.

\begin{definition}[Concurrent Flow Problem]\label{def:concurrent-flow}
    Let $G$ be a connected graph with edge capacities $u \geq 1$ and $D : V \times V \mapsto \mathbb{N}$ an integral demand. The \textit{concurrent flow problem} asks to find a capacity-respecting flow $F$ routing $\lambda D$ for maximum $\lambda$. An algorithm is a $C$-approximation for the concurrent flow problem if it always produces a capacity-respecting flow $F$ routing $\lambda D$, where $\lambda \geq C \lambda^{*}$ and $\lambda^{*}$ is the maximum value for which there exists a capacity-respecting $F^{*}$ routing $\lambda^{*} D$.

    In the concurrent flow problem with costs, each edge has a cost $b \geq 0$ and there is a total cost budget $B$: the produced flow $F$ must additionally have total cost $\sum_{P} F(P) \sum_{e \in P} b(e) \leq B$. An algorithm is a $C$-approximation for the concurrent flow problem with costs if it always produces a capacity-respecting flow $F$ of total cost at most $B$ routing $\lambda D$, where $\lambda \geq C \lambda^{*}$ and $\lambda^{*}$ is the maximum value for which there exists a capacity-respecting $F^{*}$ of total cost at most $B$ routing $\lambda^{*} D$.
\end{definition}

\begin{definition}[Non-Concurrent Flow Problem]\label{def:nonconcurrent-flow}
    Let $G$ be a connected graph with edge capacities $u \geq 1$ and $S$ a set of vertex pairs. The \textit{non-concurrent flow problem} asks to find a capacity-respecting flow $F$ routing flow between vertex pairs in $S$, i.e. $\supp(D_{F}) \subseteq S$ of maximum value. An algorithm is a $C$-approximation for the non-concurrent flow problem if it always produces a capacity-respecting flow $F$ routing flow between vertex pairs in $S$ of value $\val(F) \geq C \val(F^{*})$, where $F^{*}$ is the maximum-value capacity-respecting flow routing flow between vertex pairs in $S$.

    In the non-concurrent flow problem with costs, each edge has a cost $b \geq 0$ and there is a total cost budget $B$: the produced flow $F$ must additionally have total cost $\sum_{P} F(P) \sum_{e \in P} b(e) \leq B$. An algorithm is a $C$-approximation for the concurrent flow problem with costs if it always produces a capacity-respecting flow $F$ of total cost at most $B$ routing flow between vertex pairs in $S$ of value $\val(F) \geq C \val(F^{*})$, where $F^{*}$ is the maximum-value capacity-respecting flow of total cost at most $B$ routing flow between vertex pairs in $S$.
\end{definition}

\begin{theorem}[Constant Approximate Concurrent/Non-Concurrent Flow]\label{thm:constapprox-conc-nonconc-flow}
    For every tradeoff parameter $\eps \in (\log^{-c} N, 1)$ for some sufficiently small constant $c$, for both concurrent and non-concurrent multi-commodity flow with costs, there exists a $(m + k)^{1 + \poly(\eps)}$-work $(m + k)^{\poly(\eps)}$-depth $O(2^{-1 / \eps})$-approximate algorithm that returns a flow oracle $\mathcal{O}_F$ for the flow $F$.
\end{theorem}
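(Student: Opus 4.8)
The plan is to obtain both results as direct specializations of the min-cost multi-commodity flow primitive $\textsc{MCMCFlow}$ from \Cref{thm:const-approx-mcmcflow}, which already delivers the flow oracle, the congestion feasibility, the cost feasibility, and a $1/\exp(\poly(1/\eps))$ approximation. Since $\textsc{MCMCFlow}$ takes an arbitrary integral demand $D$ and an integral target value $\tau\in[|D|]$ and returns an oracle for a flow $F$ of value $\ge\tau$ partially routing $D$ together with a scalar $\lambda$ such that $\lambda F$ is capacity-respecting with cost $\le B$ and $\lambda\ge\tfrac1{\exp(\poly(1/\eps))}\lambda^*$, the only work left is to choose $D$ and $\tau$ so that the induced $\lambda^*$ matches the optimum of the target problem, to strip out the edge-length function input to $\textsc{MCMCFlow}$ (which it handles internally), and to rescale the tradeoff parameter so that the guarantee reads as $O(2^{-1/\eps})$. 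Throughout, ``with costs'' is the general case: the cost-free variants follow by taking $b\equiv 0$ and $B=0$, which makes the cost constraint vacuous.

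For \emph{concurrent flow with costs}, given $G=(V,E,u,b)$, budget $B$, and the integral demand $D$ on $k$ pairs, I would call $\textsc{MCMCFlow}(G,B,D,\tau,\eps')$ with $\tau:=|D|$ and $\eps'$ a fixed power of $\eps$ determined below. The returned $F$ has $\val(F)\ge|D|$ and partially routes $D$, hence routes exactly $D$; similarly every capacity-respecting, cost-$\le B$ flow of value $\ge|D|$ partially routing $D$ routes exactly $D$, so the $\lambda^*$ of \Cref{thm:const-approx-mcmcflow} equals precisely the concurrent optimum (the largest $\lambda$ for which $\lambda D$ is routable within capacities and budget). Outputting $(\mathcal O_F,\lambda)$ then gives $\lambda F$ capacity-respecting of cost $\le B$ routing $\lambda D$ with $\lambda\ge\tfrac1{\exp(\poly(1/\eps'))}\lambda^*$, as needed.

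For \emph{non-concurrent flow with costs}, given $G$, $B$, and a set $S$ of $k$ vertex pairs, I would set $M:=\sum_{e\in E}u(e)$ — an integral, polynomially bounded upper bound on the value of any capacity-respecting flow, hence on $D_{F}(a,b)$ for any such flow $F$ — define $D(a,b):=M$ for $(a,b)\in S$ and $0$ otherwise, and call $\textsc{MCMCFlow}(G,B,D,1,\eps')$. Any flow partially routing $D$ has support inside $S$; conversely the optimal non-concurrent flow $F^{**}$ of value $v^{**}$ partially routes $D$ (as $D_{F^{**}}(a,b)\le v^{**}\le M$), and rescaling $F^{**}$ to value $1$ witnesses $\lambda^*\ge v^{**}$. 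Thus the returned $\lambda F$ is capacity-respecting with cost $\le B$, has support in $S$, and value $\val(\lambda F)=\lambda\,\val(F)\ge\lambda\ge\tfrac1{\exp(\poly(1/\eps'))}\lambda^*\ge\tfrac1{\exp(\poly(1/\eps'))}v^{**}$; the degenerate cases ($v^{**}=0$, e.g. $B$ too small to afford any flow) are vacuously fine. A query on a subset of commodities is just the $\mathcal O_F$-query on the corresponding subset of demand pairs.

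Finally, I would handle the parameters: picking $\eps'=c'\eps^{1/c_0}$ for appropriate constants makes $\exp(\poly(1/\eps'))=2^{O(1/\eps)}$, giving an $\Omega(2^{-1/\eps})$ approximation and $N^{\poly(\eps')}=N^{\poly(\eps)}$. In both reductions $\supp(D)\le k$, and since $G$ is connected $n\le m+1$, so $N=\poly(n)\le\poly(m+k)$ and $N^{\poly(\eps)}=(m+k)^{\poly(\eps)}$; the polynomial bound on capacities/lengths survives the internal rescalings of \Cref{thm:const-approx-mcmcflow}. Plugging in, the algorithm takes $(|E|+\supp(D))N^{\poly(\eps)}=(m+k)^{1+\poly(\eps)}$ work and $N^{\poly(\eps)}=(m+k)^{\poly(\eps)}$ depth, and $\mathcal O_F$ inherits query work $(m+k)^{1+\poly(\eps)}$ and query depth $(m+k)^{\poly(\eps)}$. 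The only places that require genuine care are verifying that \emph{partial} routing of the inflated demand in the non-concurrent reduction captures exactly the feasible region — i.e. that $M$ is chosen large enough and that $\lambda^*\ge v^{**}$ — and the $\eps$-rescaling bookkeeping; there is no real obstacle beyond recognizing that \Cref{thm:const-approx-mcmcflow} is already stated generally enough (arbitrary integral demand, arbitrary integral $\tau$) that both classical problems fall out by specialization.
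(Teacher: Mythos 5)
Your proposal is correct and takes essentially the same approach as the paper: both cases are obtained by specializing \Cref{thm:const-approx-mcmcflow} with $\tau=|D|$ (concurrent) resp.\ $\tau=1$ (non-concurrent) and rescaling $\eps$ so that $\exp(\poly(1/\eps'))=2^{O(1/\eps)}$. The only difference is that in the non-concurrent reduction the paper uses the unit demand $D=\mathbb{I}[(a,b)\in S]$ rather than your inflated demand $M\cdot\mathbb{I}[(a,b)\in S]$; both choices give $\lambda^*\ge v^{**}$ via the same rescaled witness $F^{**}/v^{**}$ (which need not be capacity-respecting), so this is a cosmetic difference.
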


\begin{proof} We can select $\epsilon' = \poly(\epsilon)$ such that $\exp(\poly(1 / \eps'))^{-1} = O(2^{-1 / \eps})$. Both the concurrent flow and non-concurrent flow algorithm are direct consequences of applying \Cref{thm:const-approx-mcmcflow} with this $\epsilon'$ and differing $D, \tau$ and returning $\lambda \mathcal{O}_F$:
    \begin{itemize}
        \item Concurrent Flow: apply \Cref{thm:const-approx-mcmcflow} with $\tau = |D|$, return $\lambda \mathcal{O}_F$.
        \item Non-Concurrent Flow: apply \Cref{thm:const-approx-mcmcflow} with $D(a, b) = \mathbb{I}[(a, b) \in S]$, $\tau = 1$, return $\lambda \mathcal{O}_F$.
    \end{itemize}
\end{proof}

\bibliographystyle{alpha}
\bibliography{arxivver/ref,arxivver/import-global-lc-expander-bib}

\newcommand{\etalchar}[1]{$^{#1}$}
\begin{thebibliography}{CKL{\etalchar{+}}22}

\bibitem[CK09]{chuzhoy2009polynomial}
Julia Chuzhoy and Sanjeev Khanna.
\newblock Polynomial flow-cut gaps and hardness of directed cut problems.
\newblock {\em Journal of the ACM (JACM)}, 56(2):1--28, 2009.

\bibitem[CKL{\etalchar{+}}22]{chen2022maximum}
Li~Chen, Rasmus Kyng, Yang~P Liu, Richard Peng, Maximilian~Probst Gutenberg,
  and Sushant Sachdeva.
\newblock Maximum flow and minimum-cost flow in almost-linear time.
\newblock {\em arXiv preprint arXiv:2203.00671}, 2022.

\bibitem[CLRS22]{cormen2022introduction}
Thomas~H Cormen, Charles~E Leiserson, Ronald~L Rivest, and Clifford Stein.
\newblock {\em Introduction to algorithms}.
\newblock MIT press, 2022.

\bibitem[Coh00]{cohen2000polylog}
Edith Cohen.
\newblock Polylog-time and near-linear work approximation scheme for undirected
  shortest paths.
\newblock {\em Journal of the ACM (JACM)}, 47(1):132--166, 2000.

\bibitem[GK07]{garg2007faster}
Naveen Garg and Jochen K{\"o}nemann.
\newblock Faster and simpler algorithms for multicommodity flow and other
  fractional packing problems.
\newblock {\em SIAM Journal on Computing}, 37(2):630--652, 2007.

\bibitem[GRST21]{goranci2021expander}
Gramoz Goranci, Harald R{\"a}cke, Thatchaphol Saranurak, and Zihan Tan.
\newblock The expander hierarchy and its applications to dynamic graph
  algorithms.
\newblock In {\em Proceedings of the 2021 ACM-SIAM Symposium on Discrete
  Algorithms (SODA)}, pages 2212--2228. SIAM, 2021.

\bibitem[HHG22]{haeupler2022cut}
Bernhard Haeupler, Jonas Huebotter, and Mohsen Ghaffari.
\newblock A cut-matching game for constant-hop expanders.
\newblock {\em arXiv preprint arXiv:2211.11726}, 2022.

\bibitem[HHS23]{Haeupler2023lenboundflow}
Bernhard Haeupler, D.~Ellis Hershkowitz, and Thatchaphol Saranurak.
\newblock Maximum length-constrained flows and disjoint paths: Distributed,
  deterministic, and fast.
\newblock In Barna Saha and Rocco~A. Servedio, editors, {\em Proceedings of the
  55th Annual {ACM} Symposium on Theory of Computing, {STOC} 2023, Orlando, FL,
  USA, June 20-23, 2023}, pages 1371--1383. {ACM}, 2023.

\bibitem[HHT23]{haeupler2023parallel}
Bernhard Haeupler, D~Ellis Hershkowitz, and Zihan Tan.
\newblock Parallel greedy spanners.
\newblock {\em arXiv preprint arXiv:2304.08892}, 2023.

\bibitem[HHT24]{haeupler2023length}
Bernhard Haeupler, D~Ellis Hershkowitz, and Zihan Tan.
\newblock New structures and algorithms for length-constrained expander
  decompositions.
\newblock {\em arXiv preprint arXiv:2404.13446}, 2024.

\bibitem[HRG22]{haeupler2022hop}
Bernhard Haeupler, Harald R{\"a}cke, and Mohsen Ghaffari.
\newblock Hop-constrained expander decompositions, oblivious routing, and
  distributed universal optimality.
\newblock In {\em Proceedings of the 54th Annual ACM SIGACT Symposium on Theory
  of Computing}, pages 1325--1338, 2022.

\bibitem[LR99]{leighton1999multicommodity}
Tom Leighton and Satish Rao.
\newblock Multicommodity max-flow min-cut theorems and their use in designing
  approximation algorithms.
\newblock {\em Journal of the ACM (JACM)}, 46(6):787--832, 1999.

\bibitem[Pel00]{peleg2000distributed}
David Peleg.
\newblock {\em Distributed computing: a locality-sensitive approach}.
\newblock SIAM, 2000.

\bibitem[Rac02]{racke2002minimizing}
Harald Racke.
\newblock Minimizing congestion in general networks.
\newblock In {\em The 43rd Annual IEEE Symposium on Foundations of Computer
  Science, 2002. Proceedings.}, pages 43--52. IEEE, 2002.

\bibitem[RST14]{racke2014computing}
Harald R{\"a}cke, Chintan Shah, and Hanjo T{\"a}ubig.
\newblock Computing cut-based hierarchical decompositions in almost linear
  time.
\newblock In {\em Proceedings of the twenty-fifth annual ACM-SIAM symposium on
  Discrete algorithms}, pages 227--238. SIAM, 2014.

\bibitem[She17]{sherman2017area}
Jonah Sherman.
\newblock Area-convexity, $l_\infty$ regularization, and undirected
  multicommodity flow.
\newblock In {\em Proceedings of the 49th Annual ACM SIGACT Symposium on Theory
  of Computing}, pages 452--460, 2017.

\bibitem[vdBZ23]{van2023faster}
Jan van~den Brand and Daniel Zhang.
\newblock Faster high accuracy multi-commodity flow from single-commodity
  techniques.
\newblock In {\em FOCS}, 2023.

\end{thebibliography}

\appendix

\section{Derivation of \Cref{lem:lengthbound-lowcom-flow}}\label{sec:derivation-lengthbound-lowcom-flow}

In this appendix section, we give the derivation of \Cref{lem:lengthbound-lowcom-flow} from \Cref{thm:lengthbound-lowcom-cutmatch} (of \cite{haeupler2023length}), a low-support-size version of the flow algorithms of \cite{Haeupler2023lenboundflow}. The theorem gives an algorithm for efficiently computing a cutmatch, defined below.

\textbf{Cutmatch.} Given a graph $G = (V, E)$ with length $\ell$, an $h$-length $\phi$-sparse cutmatch of congestion $\gamma$ between disjoint node weighting pairs $\{(A_j, A_j)\}_{j \in [k]}$ consists of, for each $i$, a partition of the node-weightings $A_j = M_j + U_j$ and $A_j' = M_j' + U_j'$ where $M_j, M_j'$ and $U_j, U_j'$ are the "matched" and "unmatched" parts respectively and $|M_j| = |M_j'|$ and
\begin{itemize}
    \item A $h$-length flow $F = \sum_{j} F_j$ in $G$ with lengths $\ell$ of congestion $\gamma$, such that, for each $j \in [k]$, $F_j$ is a complete $M_j$ to $M_j'$ -flow.
    \item A $h$-length moving cut $C$ in $G$, such that for all $j \in [k]$, $\supp(U_j)$ and $\supp(U_j')$ are at least $h$-far in $G - C$, and $C$ has size at most
    \begin{equation*}
        |C| \leq \phi \cdot \left(\left(\sum_{j} |A_j|\right) - \val(F)\right)
    \end{equation*}
\end{itemize}

\begin{theorem}[\cite{haeupler2023length}]\label{thm:lengthbound-lowcom-cutmatch}
Let $G = (V, E)$ be a graph on $m$ edges with edge lengths $\ell \geq 1$ and capacities $u \geq 1$. Then, for any $h \geq 1$, $\phi \leq 1$, there is an algorithm that, given node-weighting pairs $\{(A_j, A_j')\}_{j \in [k]}$, outputs a multi-commodity $h$-length $\phi$-sparse cutmatch $(F, C)$ of congestion $\gamma$ where $\gamma = \tilde{O}\left(\frac{1}{\phi}\right)$. This algorithm has depth $\tilde{O}(k \cdot \poly(h))$ and work $\tilde{O}(|(E(G)| + \sum_j |\supp(A_j + A_j')|) \cdot k \cdot \poly(h))$. Moreover, $|\supp(F)| \leq \tilde{O}(|E(G)| + k + \sum_{j} |\supp(A_j + A_j')|)$.
\end{theorem}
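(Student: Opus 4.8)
The plan is to obtain the cutmatch as the primal--dual output of a width-independent multiplicative-weights procedure that generalizes the single-commodity length-constrained maximum-flow / minimum-moving-cut algorithms of \cite{Haeupler2023lenboundflow,haeupler2023length} to $k$ commodities run in parallel, and then to argue separately that the flow it produces admits a path decomposition of total size $\tO(|E(G)|+k+\sum_j|\supp(A_j+A_j')|)$. The first part is essentially the known length-constrained flow/cutmatch machinery; the second part, which forbids any per-commodity blow-up, is the genuinely technical ingredient and is where we would invoke the low-support-size construction of \cite{haeupler2023length}.

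\textbf{Setup and algorithm.} Interpret the $j$-th pair $(A_j,A_j')$ as a single-commodity instance that wants to route as much flow as possible from $\supp(A_j)$ to $\supp(A_j')$ along paths of $\ell$-length at most $h$, subject to the node capacities $A_j,A_j'$. Maintain an edge length function $\ell'\colon E\to\reals_{>0}$, initialized to $\ell'(e)=\ell(e)+\delta/u(e)$ for a tiny $\delta=N^{-\Theta(1)}$ (as in \cite{garg2007faster}), and a running flow $F=\sum_j F_j$ stored only in edge representation. Proceed in $R=\poly(h)\cdot\tO(1)$ rounds. In each round, for every commodity $j$ \emph{in parallel}, use the $h$-length shortest-path / blocking-flow machinery to push a near-blocking amount of flow from the still-unsatisfied part of $A_j$ to the still-unsatisfied part of $A_j'$ along $\ell'$-shortest, $\ell$-length-$\le h$ paths, and multiplicatively raise $\ell'(e)$ on each used edge in proportion to the congestion it just received; freeze commodity $j$ once the $\ell'$-distance between its remaining endpoints exceeds $h$, so that no admissible path remains. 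After the last round, set $M_j,M_j'$ to be the routed sub-weightings, $U_j=A_j-M_j$, $U_j'=A_j'-M_j'$ the leftovers, and read off the moving cut as (a rescaling to a multiple of $1/h$ of) $\ell'-\ell$, truncated at $1$.

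\textbf{Analysis of the four guarantees.} (i) \emph{Congestion:} choosing the MWU step so that $\ell'(e)$ grows by a $(1+\Theta(\phi))$ factor per $u(e)$ units routed, the standard width-independent argument shows that once $\ell'(e)\ge 1$ no more flow is routed on $e$, whence $\congest_F=\tO(1/\phi)=\gamma$, and all flow paths have $\ell$-length $\le h$ by construction. (ii) \emph{Separation:} at termination the $\ell'$-distance between $\supp(U_j)$ and $\supp(U_j')$ exceeds $h$ for every $j$, and since $C$ is essentially $\ell'-\ell$ this is exactly $h$-farness in $G-C$. (iii) \emph{Cut size:} by the charging/duality argument of \cite{Haeupler2023lenboundflow}, the total increase of length-mass $\sum_e u(e)(\ell'(e)-\ell(e))$ is $\Theta(\phi)$ times the flow pushed while also accounting, at rate $\ge 1/\phi$ per unit, for the separated (unrouted) demand; rearranging yields $|C|\le\phi\cdot((\textstyle\sum_j|A_j|)-\val(F))$. (iv) \emph{Work/depth:} each round runs $k$ parallel $h$-length flow computations, each of work $\tO((|E(G)|+|\supp(A_j+A_j')|)\poly(h))$ and depth $\tO(\poly(h))$; over $R=\poly(h)\tO(1)$ rounds plus the final assembly of the $k$ per-commodity outputs this is $\tO((|E(G)|+\sum_j|\supp(A_j+A_j')|)\cdot k\cdot\poly(h))$ work and $\tO(k\cdot\poly(h))$ depth, as claimed.

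\textbf{Main obstacle.} The crux is the \emph{additive-in-$k$} support bound $|\supp(F)|\le\tO(|E(G)|+k+\sum_j|\supp(A_j+A_j')|)$; a naive per-commodity path decomposition of the accumulated edge-flow only gives $\sum_j|\supp(F_j)|\le k\cdot|E(G)|$. To beat this one must never decompose commodities independently: within each round all $k$ commodities must be routed through a \emph{shared} low-complexity object (a small family of $h$-bounded shortest-path DAGs, equivalently the router structure underlying the length-constrained machinery), so that the entire round's flow decomposes along only $\tO(|E(G)|+k+\sum_j|\supp|)$ paths --- the $|E(G)|$ coming from DAG edges, the $+k$ from one interface path per commodity --- and one must further show that the cumulative support over all $R$ rounds does not pick up a factor $R$, because repeated rounds reuse the same DAG edges and only $\tO(1)$ of them contribute genuinely new paths. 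Reconciling this sharing with the blocking-flow progress guarantee and with the $h$-length constraint is exactly the content of the low-support-size cutmatch of \cite{haeupler2023length}, which we would invoke, or re-derive along the lines sketched above, to conclude.
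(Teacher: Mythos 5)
This statement is not proved in the paper at all: it is imported verbatim from \cite{haeupler2023length} as a black box, and the appendix only uses it to derive \Cref{lem:lengthbound-lowcom-flow}. So there is no in-paper argument to match your sketch against; the only question is whether your proposal constitutes an independent proof. It does not, and you essentially say so yourself.

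The standard parts of your outline are plausible and consistent with how the length-constrained flow/cutmatch machinery works: the width-independent multiplicative-weights routing giving congestion $\tilde O(1/\phi)$, reading the moving cut off $\ell'-\ell$, the separation property at termination, and the duality/charging bound $|C|\le\phi\bigl(\sum_j|A_j|-\val(F)\bigr)$. The genuine gap is exactly where you place it: the additive-in-$k$ support bound $|\supp(F)|\le\tilde O(|E(G)|+k+\sum_j|\supp(A_j+A_j')|)$. Your argument for it is (a) that all $k$ commodities in a round should be routed through a shared low-complexity DAG contributing $|E(G)|$ paths plus one interface path per commodity, and (b) that the cumulative support over all $R=\poly(h)\tilde O(1)$ rounds "does not pick up a factor $R$ because repeated rounds reuse the same DAG edges." Neither claim is substantiated. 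In particular, (b) is false as stated for a generic blocking-flow procedure: each round recomputes shortest-path structure under the updated lengths $\ell'$, so the DAGs change from round to round, and a fresh path decomposition per round generically yields $R\cdot\tilde O(|E|+k+\sum_j|\supp|)$ paths unless one does careful per-round path-count bookkeeping and merging. That bookkeeping is precisely the content of the cited theorem, and at this point your proposal explicitly falls back on "invoking the low-support-size cutmatch of \cite{haeupler2023length}" --- i.e., it invokes the statement it is trying to prove. As a standalone proof the proposal is therefore circular at the decisive step; as a reading of how the result is obtained, it is a reasonable but incomplete reconstruction of the external reference.
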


To obtain \Cref{lem:lengthbound-lowcom-flow}, we use the fact that routers have no sparse cuts; selecting $\phi$ and $h$ as twice the router's parameters, the multicommodity flow produced by the cutmatch must completely satisfy the demand.

\lengthboundlowcomflow*

\begin{proof}
We may assume without loss of generality that the supports of $A_j$ and $A_j'$ are disjoint for all $j$; if they are not, we can add a flow path of length $0$ from that vertex to itself of value equal to the minimum of the two. Now, let $(F, C)$ be a $2t$-length $\left(\frac{1}{2\gamma}\right)$-sparse cutmatch $(F, C)$ between $\{(A_j, A_j')\}_{j \in [k]}$. Then, for each $j \in [k]$, $F_j$ is a complete $A_j$ to $A_j'$ -flow. Thus, calling \Cref{thm:lengthbound-lowcom-cutmatch} with length $2t$ and sparsity $\left(\frac{1}{2 \gamma}\right)$ suffices.

We prove the claim. Assume the contrary; then, $\sum_j |U_j| = \sum_j |U_j'| > 0$. For each $j$, let $D_j$ be an arbitrary demand such that $\mathrm{load}(D_j) = U_j + U_j'$, and let $D = \sum_j D_j$. Then, $D$ is $A$-respecting, and there exists a $t$-step $\gamma$-congestion flow on $G$ satisfying $D$. Let $F^{*}$ be that flow. We have
\begin{equation*}
    |C| = \sum_{e \in G} C(e) u(e) \leq \left(\frac{1}{2 \gamma}\right) |D|,
\end{equation*}
but
\begin{itemize}
    \item for every path $P \in F^{*}$, $\sum_{e \in P} C(e) > \frac{1}{2}$, as $C$ is a length-$2t$ cut and every flow-path had its length increased by more than $t$, and
    \item $\frac{1}{\gamma} \sum_{P \in F^{*} : e \in P} F^{*}(P) \leq u(e)$, as $F^{*}$ has congestion $\gamma$,
\end{itemize}
thus
\begin{equation*}
    \left(\frac{1}{2 \gamma}\right) |D| \geq \sum_{e \in G} C(e) u(e) \geq \frac{1}{\gamma} \sum_{P \in F^{*}} F^{*}(P) \sum_{e \in P} C(e) > \frac{1}{2 \gamma} |F^{*}|,
\end{equation*}
a contradiction. Thus, the unmatched part of the cutmatch and the cut are empty.
\end{proof}

\end{document}